\def\limp{\Rightarrow}
\def\N{\mathbb{N}}            
\def\R{\mathbb{R}}            
\def\C{\mathbb{C}}            
\def\Pow{\mathfrak{P}}        
\def\Powfin{\Pow_{\text{fin}}}  
\def\X{\mathcal{X}}           
\def\Val{\mathrm{V}}          
\def\pto{\rightharpoonup}     
\def\Span{\mathrm{span}}      
\def\Sph{\mathcal{S}_1}       
\def\scal#1#2{\langle{#1}~|~{#2}\rangle}
\def\bigscal#1#2{\bigl\langle{#1}~\bigm|~{#2}\bigr\rangle}
\def\<{\langle}
\def\>{\rangle}
\def\Void{*} 
\def\Pair#1#2{(#1,#2)} 
\def\Lam#1#2{\lambda#1\,{.}\,#2} 
\def\letkeyword{\texttt{let}}
\def\inkeyword{\texttt{in}}
\def\LetV#1#2{\letkeyword~\Void=#1~\inkeyword~#2}
\def\LetP#1#2#3#4{\letkeyword~\Pair{#1}{#2}=#3~\inkeyword~#4}
\def\inleftkeyword{\texttt{inl}}
\def\Inl#1{\inleftkeyword(#1)}
\def\inrightkeyword{\texttt{inr}}
\def\Inr#1{\inrightkeyword(#1)}
\def\matchkeyword{\texttt{match}}
\def\Match#1#2#3#4#5{\matchkeyword~#1~%
  \{\Inl{#2}\mapsto#3~|~\Inr{#4}\mapsto#5\}}
\def\tt{\mathtt{t\!t}}
\def\ff{\mathtt{f\!f}}
\def\ifkeyword{\mathtt{i{\mskip-1mu}f}}
\def\If#1#2#3{\ifkeyword~#1~\{#2\mid#3\}}
\def\bigIf#1#2#3{\ifkeyword~#1~\bigl\{#2\bigm|#3\bigr\}}
\def\BigIf#1#2#3{\ifkeyword~#1~\Bigl\{#2\Bigm|#3\Bigr\}}
\def\dom{\mathrm{dom}}
\def\sdom{\mathrm{dom}^{\sharp}}
\def\weight{\varpi}
\def\FV{\textit{FV}}
\def\evalat{\mathrel{\triangleright}}
\def\nevalat{\mathrel{\not\triangleright}}
\def\evalone{\succ}
\def\eval{\mathrel{{\succ}\mskip-6mu{\succ}}}
\def\Unit{\mathbb{U}}
\def\Bool{\mathbb{B}}
\def\arr{\rightarrow}
\def\Arr{\Rightarrow}
\def\sem#1{\llbracket#1\rrbracket}
\def\semr#1{\{{\real}~#1\}}
\def\SUB#1#2{#1\le#2}
\def\NSUB#1#2{#1\not\le#2}
\def\EQV#1#2{#1\simeq#2}
\def\TYP#1#2#3{#1~{\vdash}~#2~{:}~#3}
\def\SORTH#1#2#3#4{#1~{\vdash}~#2\perp#3~{:}~#4}
\def\ORTH#1#2#3#4#5#6{#1~{\vdash}~(#2~{\vdash}~#3)\perp(#4~{\vdash}~#5)~{:}~#6}
\def\rnam#1{\textsc{\small\upshape(#1)}}
\def\snam#1{\textsc{\scriptsize\upshape(#1)}}
\def\real{\Vdash}
\outer\long\def\COUIC#1{}
\def\sqrthalf{{\textstyle\frac{1}{\sqrt{2}}}}
\newtheorem{theorem}{Theorem}[section]
\newtheorem{proposition}[theorem]{Proposition}
\newtheorem{fact}[theorem]{Fact}
\newtheorem{lemma}[theorem]{Lemma}
\newtheorem{corollary}[theorem]{Corollary}
\newtheorem{definition}[theorem]{Definition}
\newtheorem{remark}[theorem]{Remark}
\newtheorem{remarks}[theorem]{Remarks}
\newtheorem{example}[theorem]{Example}
\newtheorem{convention}[theorem]{Convention}
\newcommand\Let[3]{\mathsf{let}\ {#1}={#2}\ \mathsf{in}\ {#3}}
\newcommand{\ttrue}{\mathtt{t\!t}}
\newcommand{\ffalse}{\mathtt{f\!f}}
\newcommand\s[1]{\mathsf{#1}}
\newcommand\ket[1]{|#1\rangle}
\newcommand\trad[1]{\llparenthesis{#1}\rrparenthesis}
\newcommand\B{\mathbb B}
\newcommand{\Cx}{\mathbb{C}}
\newcommand\xrecap[4]{\noindent {\bf #1 \ref{#3}} (#2){\bf.} \emph{#4}}
\newcommand\recap[3]{\noindent {\bf #1 \ref{#2}.} \emph{#3}}
\begin{document}
\title{Realizability in the Unitary Sphere}

\author{\IEEEauthorblockN{Alejandro
    D\'iaz-Caro\IEEEauthorrefmark{1}\IEEEauthorrefmark{2},
Mauricio Guillermo\IEEEauthorrefmark{3},
Alexandre Miquel\IEEEauthorrefmark{3}, and
Beno\^it Valiron\IEEEauthorrefmark{4}
}
\IEEEauthorblockA{\IEEEauthorrefmark{1}Universidad Nacional de Quilmes, Bernal, Buenos Aires, Argentina}
\IEEEauthorblockA{\IEEEauthorrefmark{2}Instituto de Ciencias de la Computaci\'on
  (UBA-CONICET), Buenos Aires, Argentina\\
 Email: \url{adiazcaro@icc.fcen.uba.ar}}
\IEEEauthorblockA{\IEEEauthorrefmark{3}Facultad de Ingenier\'ia, Universidad de la Rep\'ublica, Montevideo, Uruguay\\
 Email: \url{{mguille,amiquel}@fing.edu.uy}}
\IEEEauthorblockA{\IEEEauthorrefmark{4}LRI, CentraleSup\'elec, Universit\'e
  Paris-Saclay, Orsay, France\\
 Email: \url{benoit.valiron@lri.fr}}
    \thanks{A.~D\'iaz-Caro and B.~Valiron have been partially supported
      by PICT 2015-1208, ECOS-Sud A17C03, and the French-Argentinian
      International Laboratory SINFIN. B.~Valiron has been partially
      supported by the French National Research Agency
      (ANR) under the research project SoftQPRO ANR-17-CE25-0009-02,
      and by the DGE of the French Ministry of Industry under the research
      project PIA-GDN/QuantEx P163746-484124.
      M. Guillermo and A.~Miquel have been partially supported by the
      Uruguayan National Research \& Innovation Agency (ANII) under the
      research project ``Realizability, Forcing and Quantum Computing'',
      FCE\_1\_2014\_1\_104800.
    } 
}

\maketitle

\begin{abstract}
In this paper we present a semantics for a linear algebraic
lambda-calculus based on realizability. This semantics
characterizes a notion of unitarity in the system, answering a long
standing issue. We derive from the semantics a set of typing rules for
a simply-typed linear algebraic lambda-calculus, and show how it
extends both to classical and quantum lambda-calculi.
\end{abstract}

\IEEEpeerreviewmaketitle

\section{Introduction}

The linear-algebraic lambda calculus
(Lineal)~\cite{ArrighiDowekRTA08,ArrighiDowekLMCS17,Valiron13} is an
extension of the lambda calculus where lambda terms are closed under
linear combinations over a semiring~$K$.  For instance, if $t$ and $r$
are two lambda terms, then so is $\alpha.t+\beta.r$ with
$\alpha,\beta\in K$.  The original motivation
of~\cite{ArrighiDowekRTA08} for such a calculus was to set the basis
for a future quantum calculus, where $\alpha.t+\beta.r$ could be seen
as the generalization of the notion of quantum superposition to the
realm of programs (in which case $K$ is the field $\C$ of complex
numbers).

In quantum computation, data is encoded in the state of a set of
particles governed by the laws of quantum mechanics. The mathematical
formalization postulates that quantum data is modeled as a unit vector
in a Hilbert space.
The quantum analogue to a Boolean value is the \emph{quantum bit},
that is a linear combination of the form 
$\phi = \alpha\ket0 + \beta\ket1$, where $\ket0$ and $\ket1$
respectively correspond to ``true'' and ``false'', and where
$|\alpha|^2+|\beta|^2 = 1$.
In other words, the state $\phi$ is a linear combination of the
Boolean values ``true'' and ``false'', of $l_2$-norm equal to $1$: it
is a unit-vector in the Hilbert space $\C^2$.

A quantum memory consists in a list of registers holding quantum
bits. The canonical model for interacting with a quantum memory is the
QRAM model \cite{KnillTR96}. A fixed set of elementary operations
are allowed on each quantum register. Mathematically, these operations
are modeled with unitary maps on the corresponding Hilbert spaces,
that is: linear maps preserving the $l_2$-norm and the orthogonality.
These operations, akin to Boolean gates, are referred to as quantum
gates, and they can be combined into linear sequences called quantum 
circuits. Quantum algorithms make use of a quantum memory to solve a
particular classical problem. Such an algorithm therefore consists in
particular in the description of a quantum circuit.

Several existing languages for describing quantum algorithms such as
Quipper~\cite{GreenLeFanulumsdaineRossSelingerValironPLDI13} and QWIRE~\cite{PaykinRandZdacewicPOPL17}
are purely functional and based on the lambda calculus.
However, they only provide \emph{classical control}: the quantum
memory and the allowed operations are provided as black boxes.
These languages are mainly circuit description languages using opaque
high-level operations on circuits.
They do not feature \emph{quantum control}, in the sense that the
operations on quantum data are not programmable.

A lambda calculus with linear combinations of terms made ``quantum''
would allow to program those ``black boxes'' explicitly, and provide
an operational meaning to quantum control.
However, when trying to identify quantum data with linear combinations
of lambda terms, the problem arises from the norm condition on quantum
superpositions.
To be quantum-compatible, one cannot have \emph{any} linear
combination of programs.
Indeed, programs should at the very least yield valid quantum
superpositions, that is: linear combinations whose $l_2$-norm
equals~$1$---a property which turns out to be very difficult to
preserve along the reduction of programs.

So far, the several attempts at accommodating linear algebraic lambda
calculi with the $l_2$-norm have failed.
At one end of the spectrum, \cite{vanTonderSIAM04} stores lambda terms
directly in the quantum memory, and encodes the reduction process as a
purely quantum process.
Van Tonder shows that this forces all lambda terms in superposition to
be mostly equivalent.
At the other end of the spectrum, the linear algebraic approaches
pioneered by Arrighi and Dowek consider a constraint-free calculus and
try to recover quantum-like behavior by adding ad-hoc term reductions
\cite{ArrighiDowekRTA08} or type systems
\cite{Diazcaro11,ArrighiDiazcaroValironIC17,ArrighiDiazcaroLMCS12}.
But if these approaches yield very expressive models of computations,
none of them is managing to precisely characterize linear
combinations of terms of unit $l_2$-norm, or equivalently, the
unitarity of the representable maps.

This paper answers this question by presenting an
algebraic lambda calculus together with a type system that enforces
unitarity. For that, we use semantic techniques coming from
\emph{realizability}~\cite{Kle45} to decide on the unitarity of
terms.

Since its creation by Kleene as a semantics for Heyting arithmetic,
realizability has evolved to become a versatile toolbox, that can be
used both in logic and in functional programming.
Roughly speaking, realizability can be seen as a generalization of the
notion of typing where the relation between a term and its type is not
defined from a given set of inference rules, but from the very
operational semantics of the calculus, via a computational
interpretation of types seen as specifications.
Types are first defined as sets of terms verifying certain properties,
and then, valid typing rules are derived from these properties rather
than set up as axioms.

The main feature of our realizability model is that types are not
interpreted as arbitrary sets of terms or values, but as subsets of
the \emph{unit sphere} of a particular \emph{weak} vector
space~\cite{Valiron13}, whose vectors are \emph{distributions}
(i.e. weak linear combinations) of ``pure'' values.
So that by construction, all functions that are correct w.r.t.\ this
semantics preserve the $\ell_2$-norm.
As we shall see, this interpretation of types is not
only compatible with the constructions of the simply typed lambda
calculus (with sums and pairs), but it also allows us to distinguish
pure data types (such as the type~$\Bool$ of pure Booleans) from
quantum data types (such as the type~$\sharp\Bool$ of quantum
Booleans).
Thanks to these constraints, the type system we obtain naturally
enforces that the realizers of the type $\sharp\Bool\arr\sharp\Bool$
are precisely the functions representing unitary operators of $\C^2$.

This realizability model is therefore answering a hard
problem~\cite{BadescuPanangadenQPL15}: it provides a unifying
framework able to express not only \emph{classical control}, with
the presence of ``pure'' values, but also \emph{quantum control}, with
the possibility to interpret quantum data-types as (weak) linear
combinations of classical ones.

\subsection{Contributions}

(1) We propose a realizability semantics based on
a linear algebraic lambda calculus capturing a notion of unitarity
through the use of a $l_2$-norm. As far as we know, such a construction is novel.

(2) The semantics provides a \emph{unified} model for both classical and
quantum control. Strictly containing the simply-typed lambda calculus,
it does not only serve as a model for a quantum circuit-description
language, but it also provides a natural interpretation of quantum control.

(3) In order to exemplify the expressiveness of the model, we show how
a circuit-description language in the style of
QWIRE~\cite{PaykinRandZdacewicPOPL17} can be naturally interpreted in the
model.
Furthermore, we discuss how one can give within the model an
\emph{operational semantics} to a high-level operation on circuits
usually provided as a black box in circuit-description languages:
the control of a circuit.

\subsection{Related Works}
Despite its original motivations, \cite{ArrighiDiazcaroLMCS12} showed
that Lineal can handle the $l_1$-norm.
This can be used for example to represent probabilistic distributions
of terms.
Also, a simplification of Lineal, without scalars, can serve as a
model for non-deterministic computations~\cite{DiazcaroPetitWoLLIC12}.
And, in general, if we consider the standard values of the lambda
calculus as the basis, then linear combinations of those form a vector
space, which can be characterized using
types~\cite{ArrighiDiazcaroValironIC17}.
In~\cite{DiazcaroDowekTPNC17} a similar distinction between
classical bits ($\B$) and qbits ($\sharp\B$) has been also studied.
However, without unitarity, it is impossible to obtain a calculus that
could be compiled onto a quantum machine.
Finally, a concrete categorical semantics for such a calculus has
been recently given in~\cite{DiazcaroMalherbe18}.

An alternative approach for capturing unitarity (of data superpositions
and functions) consists to change the language.
Instead of starting with a lambda calculus, \cite{SVV18} defines and
extends a reversible language to express quantum computation.

Lambda calculi with vectorial structures are not specific to quantum
computation.
Vaux \cite{VauxMSCS09} independently developed the
algebraic lambda calculus (where linear combinations of terms
are also terms), initially to study a fragment of
the differential lambda calculus of~\cite{EhrhardRegnierTCS03}.
Unlike its quantum-inspired cousin Lineal, the algebraic lambda
calculus is morally call-by-name, and
\cite{AssafDiazcaroPerdrixTassonValironLMCS14} shows the formal
connection with Lineal.

Designing an (unconstrainted) algebraic lambda calculus (in
call-by-name~\cite{VauxMSCS09} or in
call-by-value~\cite{ArrighiDowekRTA08}) raises the problem of how
to enforce the confluence of reduction.
Indeed, if the semi-ring $K$ is a ring, since $0\cdot t=\vec{0}$, it
is possible to design a term $Y_t$ reducing both to~$t$ and the empty
linear combination $\vec{0}$.
A simple solution to recover consistency is to weaken the vectorial
structure and remove the equality
$0\cdot{t}=\vec{0}$~\cite{Valiron13}.
The vector space of terms becomes a \emph{weak} vector space.
This approach is the one we shall follow in our construction.

This paper is concerned with modeling quantum higher-order programming
languages. If the use of realizability techniques is novel, several
other techniques have been used, based on positive matrices and
categorical tools. For first-order quantum languages,
\cite{SelingerMSCS04} constructs a fully complete semantics based on
superoperators. To model a strictly linear quantum lambda-calculus,
\cite{selinger06fully} shows that the compact closed category CPM
based on completely positive maps forms a fully abstract
model. Another approach has been taken in \cite{MalherbeSS13}, with
the use of a presheaf model on top of the category of
superoperators. To accomodate duplicable data,
\cite{PaganiSelingerValironPOPL14} extends CPM using techniques
developed for quantitative models of linear logic.
Finally, a categorical semantics of circuit-description languages
has been recently designed using linear-non-linear models by
\cite{RiosS17,lindenhovius18}.

\subsection{Outline}

Section~\ref{s:Syntax} presents the linear algebraic calculus and its
weak vector space structure.
Section~\ref{s:Eval} discusses the evaluation of term distributions.
Section~\ref{s:Realiz} introduces the realizability semantics and the
algebra of types spawning from it.
At the end of this section, Theorem~\ref{t:ReprUnitary} and
Corollary~\ref{c:ReprUnitary} express that the type of maps from
quantum bits to quantum bits only contains unitary functions.
Section~\ref{s:Typing} introduces a notion of typing judgment and
derives a set of valid typing rules from the semantics.
Section~\ref{s:simply-typed} discusses the inclusion of the
simply-typed lambda calculus in this unitary semantics.
Finally, Section~\ref{s:lambdaq} describes a small quantum
circuit-description language and shows how it lives inside the unitary
semantics.


\begin{table*}[tb]
  \centering
  \[\begin{array}{@{}l@{\quad}rrl@{}}
    \textbf{Pure values}&v,w&::=& x \mid \Lam{x}{\vec{s}}  \mid \Void \mid
                                  \Pair{v_1}{v_2}\mid 
                                  \Inl{v} \mid \Inr{v}\\[6pt]
     \textbf{Pure terms}&s,t&::=&
                                 v \mid  s\,t \mid  t;\vec{s} 
     \mid \LetP{x_1}{x_2}{t}{\vec{s}}
    \mid\Match{t}{x_1}{\vec{s}_1}{x_2}{\vec{s}_2}\\[6pt]
    \textbf{Value distributions}&\vec{v},\vec{w}&::=&
    \vec{0} \mid  v \mid \vec{v}+\vec{w} \mid 
    \alpha\cdot\vec{v}\qquad\hfill(\alpha\in\C)\\[6pt]
    \textbf{Term distributions}&\vec{s},\vec{t}&::=&
    \vec{0} \mid  t \mid \vec{s}+\vec{t} \mid 
    \alpha\cdot\vec{t}\qquad\hfill(\alpha\in\C)
   \end{array}
   \]
  \caption{Syntax of the calculus}
  \label{tab:Syntax}\vspace{-6pt}
\end{table*}

\section{Syntax of the calculus}
\label{s:Syntax}

This section presents the calculus upon which our realizability model
will be designed.
It is a lambda-calculus extended with linear combinations of lambda-terms,
but with a subtelty: terms form a \emph{weak vector space}.

\subsection{Values, terms and distributions}

The language is made up of four syntactic categories:
\emph{pure values}, \emph{pure terms}, \emph{value distributions}
and \emph{term distributions} (Table~\ref{tab:Syntax}).
As usual, the expressions of the language are built from a fixed
denumerable set of \emph{variables}, written $\X$.

In this language, a \emph{pure value} is either
a variable~$x$,
a $\lambda$-abstraction $\Lam{x}{\vec{s}}$ (whose body is an
  arbitrary term distribution~$\vec{s}$),
the void object~$\Void$,
a pair of pure values $\Pair{v_1}{v_2}$, or
one the two variants $\Inl{v}$ and $\Inr{v}$ (where~$v$ is pure
  value).
A \emph{pure term} is either a pure value~$v$ or a destructor, that
is:
an application $s\,t$,
a sequence\ \ $t;\vec{s}$\ \ for destructing the void object
  in~$t$%
  \footnote{Note the asymmetry: $t$ is a pure term whereas $\vec{s}$
    is a term distribution.
    As a matter of fact, the sequence $t;\vec{s}$ (that could also be
    written $\LetV{t}{\vec{s}}$) is the nullary version of the
    pair destructing let\ \ $\LetP{x_1}{x_2}{t}{\vec{s}}$.}, 
a let-construct\ \ $\LetP{x_1}{x_2}{t}{\vec{s}}$\ \
  for destructing a pair in~$t$, or
a match-construct\ \
  $\Match{t}{x_1}{\vec{s}_1}{x_2}{\vec{s}_2}$
(where $\vec{s}$, $\vec{s}_1$ and $\vec{s}_2$ are arbitrary term
distributions).
A \emph{term distribution} is simply a formal $\C$-linear combination
of pure terms, whereas a \emph{value distribution} is a term
distribution that is formed only from pure values.
We also define Booleans 
using the following abbreviations:
$\tt:=\Inl{\Void}$, $\ff:=\Inr{\Void}$, and, finally,
$\If{t}{\vec{s}_1}{\vec{s}_2}:=\Match{t}{x_1}{x_1;\vec{s}_1}{x_2}{x_2;\vec{s}_2}$.

The notions of free and bound (occurrences of) variables are defined
as expected, and in what follows, we shall consider pure values, pure
terms, value distributions and term distributions up to
$\alpha$-conversion, silently renaming bound variables whenever
needed.
The set of all pure terms (resp.\ of all pure values) is
written $\Lambda(\X)$ (resp.\ $\Val(\X)$), whereas the set of all term
distributions (resp.\ of all value distributions) is written
$\vec\Lambda(\X)$ (resp.\ $\vec\Val(\X)$).
So that we have the inclusions:
$$\begin{array}{ccc}
  \Lambda(\X)&\subset&\vec\Lambda(\X)\\
  {\cup}&&{\cup}\\
  \Val(\X)&\subset&\vec\Val(\X)
\end{array}$$

\subsection{Distributions as weak linear combinations}
\label{ss:Distributions}

Formally, the set $\vec\Lambda(\X)$ of term distributions is equipped
with a congruence $\equiv$ that is generated from the 
7~rules of Table~\ref{tab:cong-term}.
\begin{table}[ht]
  \[
    \begin{array}{c}
      \begin{array}{c@{\qquad}c@{\qquad}c}
        \vec{t}+\vec{0}~\equiv~\vec{t}
        &1\cdot\vec{t}~\equiv~\vec{t}
        &\alpha\cdot(\beta\cdot\vec{t})~\equiv~
          \alpha\beta\cdot\vec{t}
      \end{array}\\[1.1ex]
      \begin{array}{r@{}l@{\qquad}r@{}l}
        \vec{t}_1+\vec{t}_2
        &~\equiv~\vec{t}_2+\vec{t}_1
        &
          (\vec{t}_1+\vec{t}_2)+\vec{t}_3
        &~\equiv~
          \vec{t}_1+(\vec{t}_2+\vec{t}_3)\\[1.1ex]
        (\alpha+\beta)\cdot\vec{t}
        &~\equiv~
          \alpha\cdot\vec{t}+\beta\cdot\vec{t}
        &
        \alpha\cdot(\vec{t}_1+\vec{t}_2)
        &~\equiv~
          \alpha\cdot\vec{t}_1+\alpha\cdot\vec{t}_2
      \end{array}
    \end{array}
  \]
\caption{Congruence rules on term distributions}
\label{tab:cong-term}
\end{table}
We assume that the congruence $\equiv$ is shallow, in the sense that
it only goes through sums ($+$) and scalar multiplications ($\cdot$),
and stops at the level of pure terms.
So that
$\vec{t}+(\vec{s}_1+\vec{s}_2)~\equiv~\vec{t}+(\vec{s}_2+\vec{s}_1)$
but
$\Lam{x}{\vec{s}_1+\vec{s}_2}~\not\equiv~\Lam{x}{\vec{s}_2+\vec{s}_1}$.
(This important design choice will be justified in
Section~\ref{ss:TypingJudgment}, Remark~\ref{r:ShallowCongruence}).
We easily check that:
\begin{lemma}
  For all $\alpha\in\C$, we have
  $\alpha\cdot\vec{0}\equiv\vec{0}$.
\end{lemma}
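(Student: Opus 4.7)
My plan is to reduce the goal to the special case $0 \cdot \vec{0} \equiv \vec{0}$ by using the associativity of scalar multiplication together with the arithmetic identity $\alpha \cdot 0 = 0$ in $\C$. The key subtlety to anticipate is that the naive attack, namely writing $\alpha \cdot \vec{0} \equiv \alpha \cdot (\vec{0}+\vec{0}) \equiv \alpha\cdot\vec{0}+\alpha\cdot\vec{0}$ and trying to cancel $\alpha\cdot\vec{0}$ from both sides, is bound to fail: cancellation in this shallow congruence amounts precisely to the missing axiom $0\cdot v\equiv\vec{0}$, which was deliberately removed in passing to a weak vector space.

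The auxiliary step I would do first is to prove $\vec{0} \equiv 0\cdot\vec{0}$. This does not contradict the weakness of the calculus because $\vec{0}$ is not a pure value. The chain I have in mind is
\[
 \vec{0}\ \equiv\ 1\cdot\vec{0}\ \equiv\ (1+0)\cdot\vec{0}\ \equiv\ 1\cdot\vec{0} + 0\cdot\vec{0}\ \equiv\ \vec{0} + 0\cdot\vec{0}\ \equiv\ 0\cdot\vec{0},
\]
invoking, in order: the rule $1\cdot\vec{t}\equiv\vec{t}$ used right-to-left, the numerical fact $1+0=1$ (so that the two expressions are literally the same term), distributivity of scalars over sums, the rule $1\cdot\vec{t}\equiv\vec{t}$ again inside the sum (legitimate since $\equiv$ is a congruence through $+$), and finally commutativity of $+$ followed by $\vec{t}+\vec{0}\equiv\vec{t}$ to absorb the neutral summand.

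With this equivalence in hand, the lemma follows in one breath:
\[
 \alpha\cdot\vec{0}\ \equiv\ \alpha\cdot(0\cdot\vec{0})\ \equiv\ (\alpha\cdot 0)\cdot\vec{0}\ =\ 0\cdot\vec{0}\ \equiv\ \vec{0},
\]
using congruence under scalar multiplication, the rule $\alpha\cdot(\beta\cdot\vec{t})\equiv\alpha\beta\cdot\vec{t}$, the identity $\alpha\cdot 0 = 0$ in $\C$ (so the intermediate terms are literally equal), and a final application of the auxiliary equivalence. The only nontrivial move is recognizing that the proof must be routed through $0\cdot\vec{0}$ rather than attempted directly by arithmetic on $\alpha$; once one notices that $\vec{0}\equiv 0\cdot\vec{0}$ is itself derivable from the seven axioms without appealing to the forbidden identity on values, every remaining step is a mechanical rewrite.
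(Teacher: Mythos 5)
Your proof is correct and follows essentially the same route as the paper's: both first establish $0\cdot\vec{0}\equiv\vec{0}$ by a chain through $(0+1)\cdot\vec{0}$ (yours just runs the same rewrites in the opposite direction), and then conclude via $\alpha\cdot\vec{0}\equiv\alpha\cdot(0\cdot\vec{0})\equiv(\alpha 0)\cdot\vec{0}=0\cdot\vec{0}\equiv\vec{0}$. Your remark that direct cancellation would fail is a good observation, but no further changes are needed.
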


\begin{proof}
  From
  $0\cdot\vec{0}
  \equiv 0\cdot\vec{0}+\vec{0}
  \equiv 0\cdot\vec{0}+1\cdot\vec{0}
  \equiv(0+1)\cdot\vec{0}=1\cdot\vec{0}
  \equiv\vec{0}$,
  we get
  $\alpha\cdot\vec{0}
  \equiv\alpha\cdot(0\cdot\vec{0})
  \equiv(0\alpha)\cdot\vec{0}
  =0\cdot\vec{0}\equiv\vec{0}$.
\end{proof}

On the other hand, the relation $0\cdot\vec{t}\equiv\vec{0}$ cannot be
derived from the rules of Table~\ref{tab:cong-term} as we shall see below
(Proposition~\ref{p:DistrDecomp} and Example~\ref{ex:DistrDecomp}).
As a matter of fact, the congruence $\equiv$ implements the equational
theory of a restricted form of linear combinations---which we shall
call \emph{distributions}---that is intimately related to the notion
of \emph{weak vector space}~\cite{Valiron13}.

\begin{definition}[Weak vector space]
  A \emph{weak vector space} (over a given field~$K$) is a commutative
  monoid $(V,{+},\vec{0})$ equipped with a scalar multiplication
  $({\cdot}):K\times V\to V$ such that for all $u,v\in V$, $\alpha,\beta\in K$,
  we have
  $1\cdot u=u$,
  $\alpha\cdot(\beta\cdot u)=\alpha\beta\cdot u$,
  $(\alpha+\beta)\cdot u=\alpha\cdot u+\beta\cdot u$, and
  $\alpha\cdot(u+v)=\alpha\cdot u+\alpha\cdot v$.
\end{definition}
\begin{remark}
The notion of weak vector space differs from the traditional notion of
vector space in that the underlying additive structure
$(V,{+},\vec{0})$ may be an arbitrary commutative monoid, whose
elements do not necessarily have an an additive inverse.
So that in a weak vector space, the vector $(-1)\cdot u$ is in
general \emph{not} the additive inverse of~$u$, and the product $0\cdot u$
does not simplify to $\vec{0}$.
\end{remark}

Weak vector spaces naturally arise in functional analysis as the
spaces of \emph{unbounded operators}.
Historically, the notion of unbounded operator was introduced by von
Neumann to give a rigorous mathematical definition to the operators
that are used in quantum mechanics.
Given two (usual) vector spaces~$\mathscr{E}$ and~$\mathscr{F}$
(over the same field~$K$), recall that an \emph{unbounded operator}
from~$\mathscr{E}$ to~$\mathscr{F}$ is a linear 
map $f:D(f)\to\mathscr{F}$ that is defined on a sub-vector space
$D(f)\subseteq\mathscr{E}$, called the \emph{domain} of~$f$.
The sum of two unbounded operators $f,g:\mathscr{E}\pto\mathscr{F}$ is
defined by:
$D(f+g):=D(f)\cap D(g)$, 
$
(f+g)(x):=f(x)+g(x)$ (for all $x\in D(f+g)$),
whereas the product of an unbounded operator
$f:\mathscr{E}\pto\mathscr{F}$ by a scalar $\alpha\in K$ is defined by:
$D(\alpha\cdot f):=D(f)$, $(\alpha\cdot f)(x):=\alpha\cdot f(x)$ (for all $x\in
D(\alpha\cdot f)$).

\begin{example}
  The space $\textrm{\L}(\mathscr{E},\mathscr{F})$ of all unbounded
  operators from~$\mathscr{E}$ to~$\mathscr{F}$ is a weak vector
  space, whose null vector is the (totally defined) null function.

Indeed, we observe that an unbounded operator
$f\in\textrm{\L}(\mathscr{E},\mathscr{F})$ has an additive inverse if and
only~$f$ is total, that is: if and only if $D(f)=\mathscr{E}$---and
in this case, the additive inverse of~$f$ is the operator $(-1)\cdot f$.
In particular, it should be clear to the reader that
$0\cdot f~({=}~\vec{0}_{\restriction{D(f)}})\neq\vec{0}$ as
soon as $D(f)\neq\mathscr{E}$.
\end{example}

We can now observe that, by construction:
\begin{proposition}
  The space $\vec\Lambda(\X)/{\equiv}$ of all term distributions
  (modulo the congruence $\equiv$) is the free weak $\C$-vector space
  generated by the set $\Lambda(\X)$ of all pure terms%
  \footnote{The same way as the space of linear combinations over a
    given set~$X$ is the free vector space generated by~$X$.}.
  \qed
\end{proposition}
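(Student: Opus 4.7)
The plan is to verify the universal property characterizing free weak $\C$-vector spaces: for every weak $\C$-vector space $V$ and every set-theoretic function $f \colon \Lambda(\X) \to V$, there exists a unique homomorphism of weak $\C$-vector spaces $\tilde f \colon \vec\Lambda(\X)/{\equiv} \to V$ whose composition with the canonical map $t \mapsto [t]_\equiv$ is~$f$.

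First I would check that $\vec\Lambda(\X)/{\equiv}$ itself inherits the structure of a weak $\C$-vector space. The operations $+$ and $\cdot$ on raw distributions descend to the quotient because $\equiv$ is, by construction, a congruence for these operations; and the seven axioms of Table~\ref{tab:cong-term} are exactly the equational axioms of a weak vector space (neutrality of $\vec 0$, unit and associativity of scalar multiplication, commutativity and associativity of $+$, and the two distributivity laws). So $\vec\Lambda(\X)/{\equiv}$ is a weak $\C$-vector space with $[{\cdot}]_\equiv$ as the canonical inclusion from $\Lambda(\X)$.

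Next I would construct $\tilde f$. Define it on raw distributions by structural induction: send $\vec 0$ to the zero of~$V$, send a pure term $t$ to $f(t)$, send $\vec s + \vec t$ to $\tilde f(\vec s) +_V \tilde f(\vec t)$, and send $\alpha \cdot \vec t$ to $\alpha \cdot_V \tilde f(\vec t)$. The crucial step is to check that this map factors through $\equiv$, and the verification is rule-by-rule: each of the seven congruence axioms of Table~\ref{tab:cong-term} is sent to the corresponding identity among the weak vector space axioms of~$V$. This part is routine but is the heart of the proof. Uniqueness of $\tilde f$ then follows by the same induction on the shape of a representative: any homomorphism of weak vector spaces extending $f$ is forced to coincide with $\tilde f$ on each of the four syntactic constructors $\vec 0$, pure term, sum, and scalar multiplication.

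The main subtlety to flag is that $V$ is only assumed to be a \emph{weak} vector space, so the argument cannot use $0 \cdot u = \vec 0$ or the existence of additive inverses. Table~\ref{tab:cong-term} deliberately omits such rules, and as the paper emphasizes (and will later make precise via Proposition~\ref{p:DistrDecomp}), $0 \cdot \vec t \equiv \vec 0$ is not derivable. The fit is exact: $\equiv$ is generated by precisely those identities that hold in every weak vector space, and no more. This is what makes $\vec\Lambda(\X)/{\equiv}$ the \emph{free weak} $\C$-vector space on $\Lambda(\X)$, rather than the free genuine vector space; if we had included more relations, the universal property would fail against those $V$ in which the extra relations do not hold.
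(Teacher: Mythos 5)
Your proof is correct and is exactly the argument the paper leaves implicit: the proposition is stated with no proof beyond the phrase ``by construction,'' and what that phrase means is precisely the standard universal-algebra fact you spell out, namely that the seven rules of Table~\ref{tab:cong-term} coincide with the equational axioms of weak $\C$-vector spaces, so the term algebra over $\Lambda(\X)$ modulo the generated congruence satisfies the universal property of the free object. Your closing remark correctly identifies the only delicate point --- that $0\cdot\vec t\equiv\vec 0$ must \emph{not} be derivable, which is why the construction yields the free \emph{weak} vector space --- and this is exactly what the paper confirms separately via Proposition~\ref{p:DistrDecomp} and Example~\ref{ex:DistrDecomp}.
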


Again, the notion of distribution (or weak linear combination) differs
from the standard notion of linear combination in that the summands of
the form $0\cdot t$ cannot be erased, so that the distribution
$t_1+(-3)\cdot t_2$ is not equivalent to the distribution
$t_1+(-3)\cdot t_2+0\cdot t_3$ (provided $t_3\not\equiv t_1,t_2$).
In particular, the distribution
$(-1)\cdot t_1+3\cdot t_2$ is not the additive inverse of
$t_1+(-3)\cdot t_2$, since
$\bigl(t_1+(-3)\cdot t_2\bigr)+
\bigl((-1)\cdot t_1+3\cdot t_2\bigr)
~\equiv~0\cdot t_1+0\cdot t_2~\not\equiv~\vec{0}\,.$
However, the equivalence of term distributions can be simply
characterized as follows:
\begin{proposition}[Canonical form of a distribution]%
  \label{p:DistrDecomp}
  Each term distribution $\vec{t}$ can be written
  $\textstyle\vec{t}~\equiv~\sum_{i=1}^n\alpha_i\cdot t_i\,,$
  where $\alpha_1,\ldots,\alpha_n\in\C$ are arbitrary scalars
  (possibly equal to~$0$), and where $t_1,\ldots,t_n$ ($n\ge 0$)
  are pairwise distinct pure terms.
  This writing---which is called the \emph{canonical form
    of~$\vec{t}$}---is unique, up to a permutation of the
  summands $\alpha_i\cdot t_i$ ($i=1..n$).\qed
\end{proposition}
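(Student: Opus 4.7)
The plan is to prove existence and uniqueness separately, via a common semantic device: each term distribution $\vec{t}$ will be interpreted as a pair $(S_{\vec{t}}, c_{\vec{t}})$ where $S_{\vec{t}}$ is a finite subset of $\Lambda(\X)$ (a ``declared support'') and $c_{\vec{t}}:S_{\vec{t}}\to\C$ assigns a coefficient to each of its elements. I define this by structural induction: $\vec{0}\mapsto(\emptyset,\emptyset)$; a pure term $t$ maps to $(\{t\},\,t\mapsto 1)$; a sum $\vec{s}+\vec{t}$ maps to $(S_{\vec{s}}\cup S_{\vec{t}},\,c_{\vec{s}}+c_{\vec{t}})$ where missing entries are treated as $0$ in the sum; and $\alpha\cdot\vec{t}$ maps to $(S_{\vec{t}},\,\alpha\cdot c_{\vec{t}})$. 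The point is that a distribution in canonical form $\sum_{i=1}^{n}\alpha_i\cdot t_i$, with the $t_i$ pairwise distinct, has interpretation $(\{t_1,\dots,t_n\},\,t_i\mapsto\alpha_i)$, so once I know that this interpretation is sound for $\equiv$, uniqueness of canonical forms is immediate.

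Existence I would handle by structural induction on $\vec{t}$. The cases $\vec{0}$ (empty sum, $n=0$) and pure $t$ (using $1\cdot t\equiv t$) are immediate. For $\vec{t}_1+\vec{t}_2$ I take the canonical forms of the summands, then use commutativity and associativity of $+$ to group occurrences of the same pure term and apply distributivity read right-to-left, $\alpha\cdot t+\beta\cdot t\equiv(\alpha+\beta)\cdot t$, to merge duplicates. For $\alpha\cdot\vec{t}$ I distribute $\alpha$ across the canonical sum using $\alpha\cdot(\vec{t}_1+\vec{t}_2)\equiv\alpha\cdot\vec{t}_1+\alpha\cdot\vec{t}_2$, and then use $\alpha\cdot(\beta_i\cdot t_i)\equiv\alpha\beta_i\cdot t_i$; the $t_i$ remain pairwise distinct, so the result is in canonical form.

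For uniqueness, the key technical step is to check that the map $\vec{t}\mapsto(S_{\vec{t}},c_{\vec{t}})$ descends to $\vec\Lambda(\X)/{\equiv}$, i.e.\ that each of the seven rules of Table~\ref{tab:cong-term} is soundly validated. Commutativity and associativity of $+$ translate directly into those of union and pointwise sum; $\vec{t}+\vec{0}\equiv\vec{t}$ holds because $S_{\vec{0}}=\emptyset$; the $\alpha\beta$ and $1\cdot$ axioms are immediate from the scalar case; and the two distributivity rules follow by noting that scalar multiplication preserves the declared support, so that $(\alpha+\beta)\cdot\vec{t}$ and $\alpha\cdot\vec{t}+\beta\cdot\vec{t}$ share the same underlying set $S_{\vec{t}}$ with coefficients agreeing pointwise. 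Once this soundness is in place, any two canonical forms with the same interpretation must list the same pure terms with the same coefficients, differing only by a permutation of summands.

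The main obstacle I expect is precisely this well-definedness: it is crucial to track the declared support separately from the ``effective'' support $\{t\in S_{\vec{t}}\mid c_{\vec{t}}(t)\neq 0\}$, because conflating the two would identify $0\cdot t$ with $\vec{0}$ and thereby defeat the very distinction the proposition is meant to capture. Everything else is essentially bookkeeping that rides on the weak-vector-space structure established earlier.
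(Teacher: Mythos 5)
Your proof is correct. The paper in fact gives no argument for this proposition at all (it is stated with an immediate \qed, as a consequence ``by construction'' of the preceding claim that $\vec\Lambda(\X)/{\equiv}$ is the free weak $\C$-vector space over $\Lambda(\X)$ --- which is itself left unproved), so what you have done is supply the missing construction. Your invariant $(S_{\vec{t}},c_{\vec{t}})$ --- a finite \emph{declared} support together with a coefficient function, with $0$-coefficients retained --- is precisely the concrete presentation of the free weak vector space generated by $\Lambda(\X)$, so your argument simultaneously justifies both propositions: the seven rules of Table~\ref{tab:cong-term} are validated by the interpretation (and compositionality handles the congruence closure under $+$ and $\cdot$), existence follows by the normalization you describe, and uniqueness follows because a canonical form $\sum_{i=1}^{n}\alpha_i\cdot t_i$ is recovered exactly, up to permutation, from its interpretation. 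Your closing remark is also the right one to emphasize: keeping the declared support distinct from the effective support is what prevents the interpretation from collapsing $0\cdot t$ to $\vec{0}$, which is the whole point of the weak structure (cf.\ Example~\ref{ex:DistrDecomp}).
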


\begin{example}\label{ex:DistrDecomp}
  Given distinct pure terms $t_1$ and $t_2$, we consider the term
  distributions $\vec{t}:=3\cdot t_1$ and
  $\vec{t}':=3\cdot t_1+0\cdot t_2$.
  We observe that the distributions~$\vec{t}$ and $\vec{t}'$ (that are
  given in canonical form) do not have the same number of summands,
  hence they are not equivalent: $\vec{t}\not\equiv\vec{t}'$.
\end{example}

\begin{corollary}
  The congruence $\equiv$ is trivial on pure terms:~
  $t\equiv t'$ iff $t=t'$, for all $t,t'\in\Lambda(\X)$.\qed
\end{corollary}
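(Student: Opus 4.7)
The plan is to deduce this corollary as an immediate consequence of the uniqueness part of Proposition~\ref{p:DistrDecomp}. The direction $t = t' \Rightarrow t \equiv t'$ is trivial by reflexivity of $\equiv$, so the real content is the converse.

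For the converse, I would first observe that any pure term $t \in \Lambda(\X)$, viewed as a term distribution, already admits a canonical form in the sense of Proposition~\ref{p:DistrDecomp}: using the congruence rule $1 \cdot \vec{t} \equiv \vec{t}$ we have $t \equiv 1 \cdot t$, which is a sum $\sum_{i=1}^n \alpha_i \cdot t_i$ with $n = 1$, $\alpha_1 = 1$, and the single pure term $t_1 = t$ (trivially pairwise distinct).

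Now suppose $t \equiv t'$ with $t, t' \in \Lambda(\X)$. Then $1 \cdot t$ and $1 \cdot t'$ are two canonical forms of the same equivalence class. By the uniqueness part of Proposition~\ref{p:DistrDecomp}, these canonical forms must agree up to a permutation of summands. Since both have exactly one summand, the only such permutation is the identity, forcing $\alpha_1 = 1$ to match on both sides (which it does) and $t_1 = t = t' = t'_1$ as pure terms. Hence $t = t'$ as syntactic objects.

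The only subtle point — and it is essentially trivial once one unpacks the definitions — is to notice that equality of pure terms in the uniqueness clause of Proposition~\ref{p:DistrDecomp} means syntactic equality (up to $\alpha$-conversion, as declared earlier in Section~\ref{s:Syntax}), not equivalence modulo $\equiv$; otherwise the statement of the corollary would be circular. Given that the canonical form decomposition is defined at the level of pure terms themselves and not equivalence classes thereof, there is no circularity, and the argument goes through in two lines.
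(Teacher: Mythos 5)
Your argument is correct and is exactly the route the paper intends: the corollary is stated with no written proof precisely because it follows immediately from the uniqueness clause of Proposition~\ref{p:DistrDecomp} applied to the one-summand canonical forms $1\cdot t$ and $1\cdot t'$. Your closing remark that equality of the $t_i$ in the uniqueness clause is syntactic (up to $\alpha$-conversion) rather than modulo $\equiv$ is the right thing to check and is consistent with the paper's stipulation that $\equiv$ is shallow and stops at the level of pure terms.
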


Thanks to Proposition~\ref{p:DistrDecomp}, we can associate to each term
distribution $\vec{t}\equiv\sum_{i=1}^n\alpha_i\cdot{t_i}$ (written in
canonical form) its \emph{domain}
$\dom(\vec{t}\,):=\{t_1,\ldots,t_n\}$%
\footnote{Note that the domain of a distribution
  $\vec{t}\equiv\sum_{i=1}^n\alpha_i\cdot{t_i}$ gathers all pure
  terms~$t_i$ ($i=1..n$), including those affected with a coefficient
  $\alpha_i=0$.
  So that the domain of a distribution should not be mistaken with its
  support.}
and its \emph{weight} $\weight(\vec{t}\,):=\sum_{i=1}^n\alpha_i$.
Note that the weight function\ \
$\weight:\vec{\Lambda}(\X)/{\equiv}\to\C$\ \ is a linear function from
the weak $\C$-vector space of term distributions to~$\C$, whereas the
domain function\ \
$\dom:\vec{\Lambda}(\X)/{\equiv}\to\Powfin(\Lambda(\X))$\ \ is a 
morphism of commutative monoids from
$(\vec{\Lambda}(\X)/{\equiv},{+},\vec{0})$ to
$(\Powfin(\Lambda(\X)),{\cup},\varnothing)$, since we have%
\footnote{Actually, the function\ \
  $\dom:\vec{\Lambda}(\X)/{\equiv}\to\Powfin(\Lambda(\X))$\ \
  is even \emph{linear}, since the commutative (and idempotent) monoid
  $(\Powfin(\Lambda(\X)),{\cup},\varnothing)$ has a natural structure
  of weak $\C$-vector space whose (trivial) scalar multiplication is
  defined by $\alpha\cdot X=X$ for all $\alpha\in\C$ and
  $X\in\Powfin(\Lambda(\X))$.}:
  $\dom(\vec{0})=\varnothing$, 
  $\dom(\vec{t}_1+\vec{t}_2)=
  \dom(\vec{t}_1)\cup\dom(\vec{t}_2)$,
  $\dom(t)=\{t\}$ and 
  $\dom(\alpha\cdot\vec{t}\,)=\dom(\vec{t}\,)$
for all $t\in\Lambda(\X)$, $\vec{t}_1,\vec{t}_2\in\vec{\Lambda}(\X)$
and $\alpha\in\C$.

\begin{remark}
In practice, one of the main difficulties of working with
distributions is that addition is not regular, in the sense that
the relation $\vec{t}+\vec{t}_1\equiv\vec{t}+\vec{t}_2$ does not
necessarily imply that $\vec{t}_1\equiv\vec{t}_2$.
However, for example if $\vec t=\alpha.s$, we can deduce that 
    $\vec{t}_1\equiv\vec{t}_2$ or
    $\vec{t}_1\equiv\vec{t}_2+0\cdot s$ or
    $\vec{t}_2\equiv\vec{t}_1+0\cdot s$.
\end{remark}

To simplify the notation, we shall adopt the following:
\begin{convention}\label{convention}
  From now on, we consider term distributions modulo the
  congruence~$\equiv$, and simply write $\vec{t}=\vec{t}'$ for
  $\vec{t}\equiv\vec{t}'$.
  This convention does not affect \emph{inner}---or
  \emph{raw}---distributions (which occur within a pure term, for
  instance in the body of an abstraction), that are still considered
  only up to $\alpha$-conversion%
  \footnote{Intuitively, a distribution that appears in the body of an
    abstraction (or in the body of a let-construct, or in a branch
    of a match-construct) does not represent a real superposition, but
    it only represents \emph{machine code} that will produce later a
    particular superposition, after some substitution has been
    performed.}.
  The same convention holds for value distributions.
\end{convention}

To sum up, we now consider that
$\vec{s}_1+\vec{s}_2=\vec{s}_2+\vec{s}_1$ (as a top-level distribution),
but:
$$\begin{array}{r@{\,{}\,}c@{\,{}\,}l}
  \Lam{x}{\vec{s}_1+\vec{s}_2}&{\neq}&
  \Lam{x}{\vec{s}_2+\vec{s}_1}\\
  t;(\vec{s}_1+\vec{s}_2)&{\neq}&
  t;(\vec{s}_2+\vec{s}_1)\\
  \LetP{x}{y}{t}{\vec{s}_1+\vec{s}_2}&{\neq}&
  \LetP{x}{y}{t}{\vec{s}_2+\vec{s}_1}\\
    \multicolumn{3}{l}{\Match{t}{x}{\vec{s}_1+\vec{s}_2}{y}{\vec{s}}\qquad}\\
    \multicolumn{3}{r}{\neq\Match{t}{x}{\vec{s}_2+\vec{s}_1}{y}{\vec{s}}}\\
    \multicolumn{3}{l}{\Match{t}{x}{\vec{s}}{y}{\vec{s}_1+\vec{s}_2}\qquad}\\
    \multicolumn{3}{r}{\neq\Match{t}{x}{\vec{s}}{y}{\vec{s}_2+\vec{s}_1}}
\end{array}$$

\subsection{Extending syntactic constructs by linearity}
\label{ss:ExtLin}

Pure terms and term distributions are intended to be evaluated
according to the \emph{call-by-basis} strategy (Section~\ref{s:Eval}),
that can be seen as the declination of the \emph{call-by-value}
strategy in a computing environment where all functions are
\emph{linear by construction}.
Keeping this design choice in mind, it is natural to extend the
syntactic constructs of the language by linearity, proceeding as
follows:
  for all value distributions $\vec{v}=\sum_{i=1}^n\alpha_i\cdot{v_i}$
  and $\vec{w}=\sum_{j=1}^m\beta_j\cdot{w_j}$, and for all term distributions $\vec{s}_1, \vec{s}_2$, $\vec{t}=\sum_{k=1}^p\gamma_k\cdot t_k$ and $\vec{s}=\sum_{\ell=1}^q\delta_\ell \cdot s_\ell$ we have: 
  \begin{gather*}
  \begin{aligned}
\textstyle    \Pair{\vec{v}}{\vec{w}}&\textstyle := \sum_{i=1}^n\sum_{j=1}^k\alpha_i\beta_j\cdot\Pair{v_i}{w_j}\\
\textstyle    \Inl{\vec{v}}&\textstyle :=\sum_{i=1}^n\alpha_i\cdot\Inl{v_i}\\
\textstyle    \Inr{\vec{v}}&\textstyle :=\sum_{i=1}^n\alpha_i\cdot\Inr{v_i}\\
\textstyle    \vec{t}\,\vec{s}&\textstyle := \sum_{k=1}^p\sum_{\ell=1}^q\gamma_k\delta_\ell\cdot t_ks_\ell\\
\textstyle    \vec{t};\vec{s}&\textstyle := \sum_{k=1}^p\gamma_k\cdot(t_k;\vec{s})\\
\textstyle    \LetP{x}{y}{\vec{t}}{\vec{s}}&\textstyle := \sum_{k=1}^p\gamma_k\cdot\big(\LetP{x}{y}{t_k}{\vec{s}}\big)
\end{aligned}\\
    \textstyle\Match{\vec{t}}{x_1}{\vec{s}_1}{x_2}{\vec{s}_2} :=\\
    \textstyle\sum_{k=1}^p\gamma_k\cdot \big(\Match{t_k}{x_1}{\vec{s}_1}{x_2}{\vec{s}_2}\big)
    \end{gather*}
  The value distribution $\Pair{\vec{v}}{\vec{w}}$ will be sometimes
  written $\vec{v}\otimes\vec{w}$ as well.

\subsection{Substitutions}\label{ss:PureSubst}\label{ss:BilinSubst}
Given a variable~$x$ and a pure value~$w$, we define an operation of
\emph{pure substitution},
written $[x:=w]$, that associates to each pure value~$v$ (resp.\ to
each pure term~$t$, to each raw value distribution $\vec{v}$, to each
raw term distribution $\vec{t}$) a pure value $v[x:=w]$ (resp.\ a pure
term $t[x:=w]$, a raw value distribution $\vec{v}[x:=w]$, a raw term
distribution $\vec{t}[x:=w]$).
The four operations $v[x:=w]$, $t[x:=w]$, $\vec{v}[x:=w]$ and
$\vec{t}[x:=w]$ are defined by mutual recursion as expected.

Although the operation $\vec{t}[x:=w]$ is primarily defined on
\emph{raw} term distributions (i.e.\ by recursion on the tree
structure of~$\vec{t}$, without taking into account the congruence
$\equiv$), it is compatible with the congruence~$\equiv$, in the sense
that
  if $\vec{t}\equiv\vec{t}'$, then
  $\vec{t}[x:=w]\equiv\vec{t}'[x:=w]$ for all pure values~$w$.
In other words, the operation of pure substitution is compatible with
Convention~\ref{convention}.
It is also clear that, by construction, the operation
$\vec{t}\,[x:=w]$ is linear w.r.t.\ $\vec{t}$, so that
$\textstyle\vec{t}\,[x:=w]$ is 
$\sum_{i=1}^n\alpha_i\cdot t_i[x:=w]$
for all term distributions $\vec{t}=\sum_{i=1}^n\alpha_i\cdot{t_i}$.
(The same observations hold for the operation $\vec{v}[x:=w]$).

Moreover, the operation of pure substitution behaves well with the
linear extension of the syntactic constructs of the language
(cf.~Appendix~\ref{app:proof:th:adequacyQ}).
And we have the expected substitution lemma:
  For all term distributions $\vec{t}$ and for all pure values~$v$
  and~$w$,
  provided $x\neq y$ and $x\notin\FV(w))$,
  we have
 $\vec{t}\,[x:=v][y:=w]~:=~\vec{t}\,[y:=w][x:=v[y:=w]]$.
 We extend the notation to parallel substitution in the usual manner
 (cf.~Remark~\ref{r:ParallelSubst} in Appendix~\ref{app:proof:th:adequacyQ}).

From the operation of pure substitution $[x:=w]$, we define an
operation of \emph{bilinear substitution} $\<x:=\vec{w}\,\>$ that is
defined for all term distributions
$\vec{t}=\sum_{i=1}^n\alpha_i\cdot{t_i}$ and for all value
distributions $\vec{w}=\sum_{j=1}^m\beta_j\cdot{w_j}$, letting
$\textstyle\vec{t}\<x:=\vec{w}\,\>~:=~
\sum_{j=1}^m\beta_j\cdot\vec{t}\,[x:=w_j]~=~
\sum_{i=1}^n\sum_{j=1}^m\alpha_i\beta_j\cdot t_i[x:=w_j]\,.$
By construction, the generalized operation of substitution
$\vec{t}\<x:=\vec{w}\>$ is bilinear---which is consistent with the
bilinearity of application (Section~\ref{ss:ExtLin}).
But beware! The bilinearity of the operation $\vec{t}\<x:=\vec{w}\>$
also makes its use often counter-intuitive, so that this notation
should always be used with the greatest caution.
Indeed, while $\Inl{\vec{v}}\<x:=\vec{w}\>=\Inl{\vec{v}\<x:=\vec{w}\>}$, 
    $\Pair{v_1}{v_2}\<x:=\vec{w}\>\neq\Pair{v_1\<x:=\vec{w}\>}{v_2\<x:=\vec{w}\>}$.
    Lemma~\ref{l:BilinSubstProp}, in Appendix~\ref{app:p:TypingRulesValid} gives the valid identities.
  In addition, bilinear
  substitution 
  is not
  (completely) canceled when $x\notin\FV(\vec{t})$, in which case
  $\vec{t}\<x:=\vec{w}\>
  =\varpi(\vec{w})\cdot\vec{t}\quad\neq\quad\vec{t}$.
  where $\varpi(\vec{w}):=
  \sum_{j=1}^m\beta_j$ is the weight of~$\vec{w}$
  (cf Section~\ref{ss:Distributions}).


\section{Evaluation}
\label{s:Eval}

The set of term distributions is equipped with a relation of
\emph{evaluation} $\vec{t}\eval\vec{t}'$ that is defined in three
steps as follows.

\subsection{Atomic evaluation}\label{ss:AtomEval}
First we define an asymmetric relation of \emph{atomic evaluation}
$t\evalat\vec{t}'$ (between a pure term~$t$ and a term distribution
$\vec{t}'$) from the inference rules of
Table~\ref{tab:AtomicEval}.
\begin{table*}
  \[
  \begin{array}{c}
     \infer{(\Lam{x}{\vec{t}}\,)\,v~\evalat~\vec{t}\,[x:=v]}{}
     \qquad
     \infer{\Void;\vec{s}~\evalat~\vec{s}}{}
     \qquad
     \infer{\LetP{x}{y}{\Pair{v}{w}}{\vec{s}}~\evalat~\vec{s}[x:=v,y:=w]}{}
     \\[8pt]
     \infer{\Match{\Inl{v}}{x_1}{\vec{s}_1}{x_2}{\vec{s}_2}~\evalat~\vec{s}_1[x_1:=v]}{}
     \qquad
     \infer{\Match{\Inr{v}}{x_1}{\vec{s}_1}{x_2}{\vec{s}_2}~\evalat~\vec{s}_2[x_2:=v]}{}
     \\[5pt]
     \infer{s\,t~\evalat~s\,\vec{t}'}{t~\evalat~\vec{t}'}
     \qquad
     \infer{t\,v~\evalat~\vec{t}'\,v}{t~\evalat~\vec{t}'}
     \qquad
     \infer{t;\vec{s}~\evalat~\vec{t}';\vec{s}}{t~\evalat~\vec{t}'}
     \qquad
     \infer{\LetP{x}{y}{t}{\vec{s}}~\evalat~\LetP{x}{y}{\vec{t}'}{\vec{s}}}{t~\evalat~\vec{t}'}
     \\[5pt]
     \infer{\Match{t}{x_1}{\vec{s}_1}{x_2}{\vec{s}_2}~\evalat~\Match{\vec{t}'}{x_1}{\vec{s}_1}{x_2}{\vec{s}_2}}{t~\evalat~\vec{t}'}
   \end{array}
  \]
  \caption{Inference rules of the relation of
    atomic evaluation $t\evalat\vec{t}'$}
  \label{tab:AtomicEval}\vspace{-6pt}
\end{table*}

These rules basically implement a deterministic call-by-value
strategy, where function arguments are evaluated from the right to the
left.
(The argument of an application is always evaluated before the
function%
\footnote{This design choice is completely arbitrary, and we could
  have proceeded the other way around.}).
Also notice that no reduction is ever performed in the body of an
abstraction, in the second argument of a sequence, in the body of a
let-construct, or in a branch of a match-construct.
Moreover, atomic evaluation is substitutive:
  If $t\evalat\vec{t}'$, then
  $t[x:=w]\evalat\vec{t}'[x:=w]$ for all pure values~$w$.

\subsection{One step evaluation}\label{ss:OneStepEval}
The relation of \emph{one step evaluation} $\vec{t}\evalone\vec{t}'$
is defined as follows:
\begin{definition}[One step evaluation]\label{d:OneStepEval}
  Given two term distributions $\vec{t}$ and $\vec{t}'$, we say
  that $\vec{t}$ \emph{evaluates in one step} to $\vec{t}'$ and write
  $\vec{t}\evalone\vec{t}'$ when there exist a scalar $\alpha\in\C$,
  a pure term $s$ and two term distributions $\vec{s'}$ and
  $\vec{r}$ such that
  $\vec{t}=\alpha\cdot s+\vec{r}$,
  $\vec{t}'=\alpha\cdot\vec{s'}+\vec{r}$, and
  $s\evalat\vec{s'}$.
\end{definition}

Notice that the relation of one step evaluation is also substitutive.
In addition, the strict determinism of the relation of atomic evaluation
$t\evalat\vec{t}'$ implies that the relation of one step
evaluation fulfills the following weak diamond property:
\begin{lemma}[Weak diamond]\label{l:Eval1Diamond}
  If $\vec{t}\evalone\vec{t}'_1$ and $\vec{t}\evalone\vec{t}'_2$, then
  one of the following holds: either $\vec{t}'_1=\vec{t}'_2$;
  either $\vec{t}'_1\evalone\vec{t}'_2$ or
    $\vec{t}'_2\evalone\vec{t}'_1$;
  either $\vec{t}'_1\evalone\vec{t''}$ and
    $\vec{t}'_2\evalone\vec{t''}$ for some $\vec{t''}$.\qed
\end{lemma}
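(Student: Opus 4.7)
The plan is to unfold both hypotheses with Definition~\ref{d:OneStepEval}, obtaining decompositions $\vec{t}=\alpha_i\cdot s_i+\vec{r_i}$ and $\vec{t}'_i=\alpha_i\cdot\vec{s'_i}+\vec{r_i}$ with $s_i\evalat\vec{s'_i}$ for $i=1,2$, and to case-split on whether the two chosen pure redexes $s_1$ and $s_2$ coincide. A necessary preliminary is the determinism of atomic evaluation: a routine structural induction on the rules of Table~\ref{tab:AtomicEval} shows that every pure term atomically reduces to at most one distribution, since the base rules fire only at fully-reduced argument positions and the congruence rules propagate reduction along a unique call-by-value path. In particular, $s_1=s_2$ forces $\vec{s'_1}=\vec{s'_2}$.

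For the case $s_1=s_2=:s$ with $\vec{s'_1}=\vec{s'_2}=:\vec{s'}$, I would invoke Proposition~\ref{p:DistrDecomp} to enumerate the decompositions of $\vec{t}$ of the shape $\alpha\cdot s+\vec{r}$: letting $\gamma$ denote the canonical coefficient of $s$ in $\vec{t}$, every such decomposition satisfies $\alpha+[\vec{r}]_s=\gamma$, where $[\vec{r}]_s$ is either the coefficient of $s$ in $\vec{r}$ or, if $s\notin\dom(\vec{r})$, the value $0$ (the latter option carrying with it an implicit ghost $0\cdot s$ summand). Comparing the two decompositions then shows that $\vec{t}'_1$ and $\vec{t}'_2$ differ by at most a single residual summand $\beta\cdot s$ (possibly with $\beta=0$). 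A further one-step reduction applied to that residual summand, using the same atomic rule $s\evalat\vec{s'}$, rewrites it to $\beta\cdot\vec{s'}$, which is then absorbed by the existing $\vec{s'}$ contribution and reconciles the two reducts, delivering one of the three disjuncts. For the symmetric case $s_1\neq s_2$, both pure terms lie in $\dom(\vec{t})$, so canonical form supplies a common decomposition $\vec{t}=\alpha_1\cdot s_1+\alpha_2\cdot s_2+\vec{r}$; the two reductions then act on disjoint summands and commute outright, so reducing $s_2$ inside $\vec{t}'_1$ and $s_1$ inside $\vec{t}'_2$ produces the common successor $\alpha_1\cdot\vec{s'_1}+\alpha_2\cdot\vec{s'_2}+\vec{r}$.

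The main obstacle is the non-regularity of addition in a weak vector space flagged just before Convention~\ref{convention}: a single distribution admits many decompositions $\alpha\cdot s+\vec{r}$ that differ in $0$-coefficient summands, so even when the same redex is chosen on both sides, the two reducts need not be literally equal. This is precisely what forces the \emph{weak} form of the diamond, in which one extra one-step reduction on either side may be needed to absorb ghost summands. Careful bookkeeping of these summands through the canonical form is the crux of the argument; once that is in hand, the three alternatives of the statement fall out by direct comparison.
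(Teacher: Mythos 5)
Your proposal is correct and follows essentially the same route as the paper: unfold the two one-step decompositions, use determinism of atomic evaluation, split on whether the chosen redexes coincide, and close the diamond by one extra step that absorbs the residual $\beta\cdot s$ summand forced by the non-regularity of addition. The only cosmetic difference is that the paper packages the canonical-form bookkeeping into a separate ``simplifying equalities'' lemma (with the same-redex case further split by whether the two coefficients agree), whereas you derive the needed case analysis inline from Proposition~\ref{p:DistrDecomp}.
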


\begin{remark}
  In the decomposition\ \
  $\vec{t}=\alpha\cdot{s}+\vec{r}$\ \ of
  Definition~\ref{d:OneStepEval}, we allow that $s\in\dom(\vec{r})$.
  So that for instance, we have the following. Let $t:=(\Lam xx)\,y$. Then,
  \begin{align*}
    t&= 1\cdot(\Lam{x}{x})\,y\evalone y\\
    t&=\frac12\cdot(\Lam{x}{x})\,y+\frac12\cdot(\Lam{x}{x})\,y \evalone \frac12\cdot y~+~\frac12\cdot(\Lam{x}{x})\,y \\
    t&=7\cdot(\Lam{x}{x})\,y+(\scalebox{0.75}[1.0]{\( - \)}6)\cdot(\Lam{x}{x})\,y \evalone 7\cdot y+(\scalebox{0.75}[1.0]{\( - \)}6)\cdot(\Lam{x}{x})\,y 
  \end{align*}
\end{remark}

\begin{remark}
  Given a pure term~$t$, we write\ \
  $Y_t:=(\Lam{x}{t+xx})(\Lam{x}{t+xx})$,\ \
  so that we have $Y_t\evalat t+Y_t$ by construction.
  Then we observe that
  for all $\alpha\in\C$, we have
  $$0\cdot Y_t~=~\alpha\cdot Y_t+(-\alpha)\cdot Y_t
  ~\evalone~\alpha\cdot(t+Y_t)+(-\alpha)\cdot Y_t
  ~=~\alpha\cdot t+0\cdot Y_t$$
  This example does not jeopardize the confluence of evaluation,
  since we also have
  $$\alpha\cdot t+0\cdot Y_t
  ~\evalone~\alpha\cdot t+((-\alpha)\cdot t+0\cdot Y_t)
  ~=~0\cdot t+0\cdot Y_t$$
\end{remark}

\subsection{Evaluation}\label{ss:GrandEval}
Finally, the relation of evaluation $\vec{t}\eval\vec{t}'$ is defined
as the reflexive-transitive closure of the relation of one step
evaluation $\vec{t}\evalone\vec{t}'$, that is:\quad
$({\eval}):=({\evalone})^*$.

\begin{proposition}[Linearity of evaluation]\label{l:LinEval}
  The relation $\vec{t}\eval\vec{t}'$ is linear, in the sense that:
  \begin{enumerate}
  \item $\vec{0}\eval\vec{0}$
  \item If\ \ $\vec{t}\eval\vec{t}'$,\ \ then\ \
    $\alpha\cdot\vec{t}~\eval~\alpha\cdot\vec{t}'$\ \
    for all $\alpha\in\C$.
  \item If\ \ $\vec{t}_1\eval\vec{t}'_1$\ \ and\ \
    $\vec{t}_2\eval\vec{t}'_2$,\ \ then\ \
    $\vec{t}_1+\vec{t}_2~\eval~\vec{t}'_1+\vec{t}'_2$.
    \qed
  \end{enumerate}
\end{proposition}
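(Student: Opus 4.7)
The plan is to prove each of the three items by peeling back the reflexive-transitive closure to a single step, showing that the one-step relation $\evalone$ is itself compatible with scaling and addition, and then iterating. Item~(1) is immediate: $\vec{0}\eval\vec{0}$ holds because $\eval$ is the reflexive-transitive closure of $\evalone$, so reflexivity suffices.

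For item~(2), I would first establish a one-step version: if $\vec{t}\evalone\vec{t}'$ then $\alpha\cdot\vec{t}\evalone\alpha\cdot\vec{t}'$. By Definition~\ref{d:OneStepEval} one can write $\vec{t}=\beta\cdot s+\vec{r}$ and $\vec{t}'=\beta\cdot\vec{s'}+\vec{r}$ with $s\evalat\vec{s'}$. Applying Convention~\ref{convention} and distributing $\alpha$ (which is legal via the congruence rules of Table~\ref{tab:cong-term}), we get $\alpha\cdot\vec{t}=\alpha\beta\cdot s+\alpha\cdot\vec{r}$ and $\alpha\cdot\vec{t}'=\alpha\beta\cdot\vec{s'}+\alpha\cdot\vec{r}$, witnessing a single one-step reduction with scalar $\alpha\beta$ and residue $\alpha\cdot\vec{r}$. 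The full statement then follows by induction on the length of the derivation $\vec{t}\eval\vec{t}'$: if $\vec{t}\evalone\vec{t''}\eval\vec{t}'$, the one-step case gives $\alpha\cdot\vec{t}\evalone\alpha\cdot\vec{t''}$ and the induction hypothesis gives $\alpha\cdot\vec{t''}\eval\alpha\cdot\vec{t}'$, which chain.

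For item~(3), the key auxiliary lemma is: if $\vec{t}\evalone\vec{t}'$, then $\vec{t}+\vec{u}\evalone\vec{t}'+\vec{u}$ for every distribution $\vec{u}$. This again follows directly from Definition~\ref{d:OneStepEval}: writing $\vec{t}=\alpha\cdot s+\vec{r}$ and $\vec{t}'=\alpha\cdot\vec{s'}+\vec{r}$, we immediately have $\vec{t}+\vec{u}=\alpha\cdot s+(\vec{r}+\vec{u})$ and $\vec{t}'+\vec{u}=\alpha\cdot\vec{s'}+(\vec{r}+\vec{u})$, exhibiting the required one-step reduction with residue $\vec{r}+\vec{u}$. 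Iterating, I obtain $\vec{t}\eval\vec{t}'\Rightarrow\vec{t}+\vec{u}\eval\vec{t}'+\vec{u}$ by induction on length. The symmetric version follows by commutativity of~$+$. Given both $\vec{t}_1\eval\vec{t}'_1$ and $\vec{t}_2\eval\vec{t}'_2$, I chain the two monotonicity steps:
\[
\vec{t}_1+\vec{t}_2~\eval~\vec{t}'_1+\vec{t}_2~\eval~\vec{t}'_1+\vec{t}'_2,
\]
and transitivity of $\eval$ closes the argument.

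The only subtlety worth flagging is the compatibility of the decomposition $\vec{t}=\alpha\cdot s+\vec{r}$ with the congruence~$\equiv$, and in particular with Convention~\ref{convention}: one must verify that multiplying by $\alpha$ or adding $\vec{u}$ on both sides of an equation from Table~\ref{tab:cong-term} preserves the witnessing decomposition. This is straightforward once one recalls that $\alpha\cdot(\beta\cdot s+\vec{r})\equiv\alpha\beta\cdot s+\alpha\cdot\vec{r}$ and that addition is associative/commutative at the top level, so there is no real obstacle—only some bookkeeping to make sure the data required by Definition~\ref{d:OneStepEval} is recovered after these rearrangements.
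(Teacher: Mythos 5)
Your proof is correct; the paper states this proposition with no proof (the \qed marks it as immediate), and your argument — establishing the one-step compatibility of $\evalone$ with scalar multiplication and with addition of a fixed distribution directly from the decomposition in Definition~\ref{d:OneStepEval}, then iterating and chaining — is exactly the routine argument the authors leave implicit. The point you flag about the decompositions being taken modulo the congruence of Table~\ref{tab:cong-term} is the only real content, and you handle it correctly.
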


\begin{example}\label{ex:H}
  In our calculus, the Hadamard operator $H:\C^2\to\C^2$, whose matrix
  is given by 
  $\mathrm{Mat}(H):=\sqrthalf\left(
  \begin{smallmatrix}1&\hphantom{+}1\\1&-1\end{smallmatrix}\right)$,
  is computed by the term
  $$H~:=~\Lam{x}{\BigIf{x}
    {~\sqrthalf\cdot\tt+\sqrthalf\cdot\ff}
    {~\sqrthalf\cdot\tt+(-\sqrthalf)\cdot\ff}}\,.$$
  Indeed, for all $\alpha,\beta\in\C$, we have
  \begin{align*}
    &H\,(\alpha\cdot\tt+\beta\cdot\ff)
    =\alpha\cdot H\,\tt+\beta\cdot H\,\ff\\
    &\eval\alpha\cdot\bigIf{\tt}{~\sqrthalf\cdot\tt+\sqrthalf\cdot\ff} {\sqrthalf\cdot\tt+\bigl(-\sqrthalf\bigr)\cdot\ff}+{}\\
    &\hspace{6mm}\beta\cdot\bigIf{\ff}{~\sqrthalf\cdot\tt+\sqrthalf\cdot\ff} {\sqrthalf\cdot\tt+\bigl(-\sqrthalf\bigr)\cdot\ff}\\
    &\eval\alpha\cdot
    \bigl(\sqrthalf\cdot\tt+\sqrthalf\cdot\ff\bigr)~+~ \beta\cdot \bigl(\sqrthalf\cdot\tt+\bigl(-\sqrthalf\bigr)\cdot\ff\bigr)\\
    &=\sqrthalf(\alpha+\beta)\cdot\tt+ \sqrthalf(\alpha-\beta)\cdot\ff
  \end{align*}
\end{example}

\begin{theorem}[Confluence of evaluation]
  If\ \ $\vec{t}\eval\vec{t}'_1$\ \ and\ \
  $\vec{t}\eval\vec{t}'_2$,\ \
  then there is a term distribution~$\vec{t''}$ such that\ \
  $\vec{t}'_1\eval\vec{t''}$\ \ and\ \
  $\vec{t}'_2\eval\vec{t''}$.
\end{theorem}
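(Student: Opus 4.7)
The plan is to deduce confluence from the weak diamond property (Lemma~\ref{l:Eval1Diamond}) via a classical tiling argument. The first observation is that the three disjuncts in the conclusion of Lemma~\ref{l:Eval1Diamond} can be packaged uniformly: in every case there exists a term distribution $\vec{t}''$ such that both $\vec{t}'_1$ and $\vec{t}'_2$ reach $\vec{t}''$ in at most one step. Concretely, letting $R$ denote the reflexive closure of $\evalone$ (i.e.\ $\vec{t}\,R\,\vec{t}'$ iff $\vec{t}=\vec{t}'$ or $\vec{t}\evalone\vec{t}'$), the weak diamond lemma upgrades to the \emph{strong} diamond property for $R$: whenever $\vec{t}\,R\,\vec{t}'_1$ and $\vec{t}\,R\,\vec{t}'_2$, there is $\vec{t}''$ with $\vec{t}'_1\,R\,\vec{t}''$ and $\vec{t}'_2\,R\,\vec{t}''$. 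The cases where one or both of the initial $R$-steps are reflexive are immediate; the genuinely non-trivial case is exactly Lemma~\ref{l:Eval1Diamond}, where e.g.\ the ``line'' case $\vec{t}'_1\evalone\vec{t}'_2$ is handled by taking $\vec{t}'':=\vec{t}'_2$.

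The second step is the standard lemma that if a relation $R$ satisfies the diamond property, then its reflexive-transitive closure $R^{*}$ is confluent. This is proved by a two-step tiling argument: first a \emph{strip lemma} is established by induction on the length of one reduction (using the diamond to close each elementary cell), and then a second induction on the length of the other reduction tiles the whole rectangle with strips. I would simply instantiate this well-known argument to our relation $R$; no use of the algebraic structure of $\vec{\Lambda}(\X)/{\equiv}$ is required beyond what has already been built into Definition~\ref{d:OneStepEval}.

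Finally, one concludes by identifying the closures: since $R=\evalone^{=}$, we have $R^{*}=\evalone^{*}=\eval$ (the reflexive closure adds no new reachability to the transitive-reflexive closure). Hence confluence of $R^{*}$ is precisely the statement of the theorem: from $\vec{t}\eval\vec{t}'_1$ and $\vec{t}\eval\vec{t}'_2$ one obtains a common $\vec{t}''$ with $\vec{t}'_1\eval\vec{t}''$ and $\vec{t}'_2\eval\vec{t}''$.

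The main obstacle I anticipate is bookkeeping rather than conceptual: one must be careful that the ``line'' and ``equal'' disjuncts of Lemma~\ref{l:Eval1Diamond} are genuinely absorbed into the $R$-diamond (which they are, by choosing $\vec{t}''$ appropriately), and that the presence of ``garbage'' summands of the form $0\cdot Y_t$ — which can be created along reduction, as illustrated by the $Y_t$ remark following Definition~\ref{d:OneStepEval} — does not break the diamond. Since those summands are preserved componentwise by the linear decomposition in Definition~\ref{d:OneStepEval}, they pose no obstruction, and the argument reduces to the textbook case.
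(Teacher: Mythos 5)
Your proof is correct and follows essentially the same route as the paper: both pass to the reflexive closure of $\evalone$, observe that Lemma~\ref{l:Eval1Diamond} gives it the (strong) diamond property, and conclude by the standard fact that the transitive closure of a diamond relation is confluent, noting that this closure coincides with $\eval$. The extra care you take with the ``line'' and ``equal'' disjuncts is exactly the right bookkeeping and matches what the paper leaves implicit.
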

\begin{proof}
  Writing $({\evalone^?})$ the reflexive closure of $({\evalone})$, it
  is clear from Lemma~\ref{l:Eval1Diamond} that $({\evalone^?})$
  fulfills the diamond property.
  Therefore, $({\eval})=({\evalone})^*=({\evalone^?})^+$
  fulfills the diamond property.
\end{proof}

\subsection{Normal forms}
From what precedes, it is clear that the \emph{normal forms}
of the relation of evaluation $\vec{t}\eval\vec{t}'$ are the term
distributions of the form
$\textstyle\vec{t}~=~\sum_{i=1}^n\alpha_i\cdot{t_i}$
where $t_i\nevalat$ for each $i=1..n$.
In particular, all value distributions $\vec{v}$ are normal forms (but
they are far from being the only normal forms in the calculus).
From the property of confluence, it is also clear that when a term
distribution~$\vec{t}$ reaches a normal form~$\vec{t}'$, then this
normal form is unique.

In what follows, we shall be more particularly interested in the
closed term distributions~$\vec{t}$ that reach a (unique) closed
value distribution $\vec{v}$ through the process of evaluation.


\section{A semantic type system}
\label{s:Realiz}

In this section, we present the type system associated with the
(untyped) language presented in Section~\ref{s:Syntax} as well as
the corresponding realizability semantics.

\subsection{Structuring the space of value distributions}
In what follows, we write:
$\Lambda$ the set of all closed pure terms;
$\vec{\Lambda}$ the space of all closed term distributions;
$\Val~({\subseteq}~\Lambda)$ the set of all closed pure values,
which we shall call \emph{basis vectors};
and
$\vec{\Val}~({\subseteq}~\vec{\Lambda})$
  the space of all closed value distributions, which we shall call
  \emph{vectors}.
  
The space $\vec{\Val}$ formed by all closed value distributions
(i.e.\ vectors) is equipped with the inner product
$\scal{\vec{v}}{\vec{w}}$ and the pseudo-$\ell_2$-norm $\|\vec{v}\,\|$
that are defined by
\begin{align*}
\textstyle\scal{\vec{v}}{\vec{w}}&:=\textstyle\sum_{i=1}^n\sum_{j=1}^m\overline{\alpha_i}\,\beta_j\,\delta_{v_i,w_j}\\
\textstyle\|\vec{v}\,\|&:=\textstyle\sqrt{\scal{\vec{v}}{\vec{v}\,}}=\textstyle\sqrt{\sum_{i=1}^n|\alpha_i|^2}
\end{align*}
where $\vec{v}=\sum_{i=1}^n\alpha_i\cdot v_i$ and
$\vec{w}=\sum_{j=1}^m\beta_j\cdot w_j$ (both in canonical form), and
where $\delta_{v_i,w_j}$ is the Kronecker delta such that it is $1$ if
$v_i=w_j$ and $0$ otherwise.  Let us observe that the inner product
behaves well with term constructors, so that e.g.
$\scal{\Inl{\vec{v}_1}}{\Inl{\vec{v}_2}}=
\scal{\vec{v}_1}{\vec{v}_2}$, and that values built from distinct term
constructors are orthogonal, so that
e.g. $\scal{\Inl{\vec{v}_1}}{\Inr{\vec{w}_2}}=0$.  We can also infer
that for all $\vec{v},\vec{w}\in\vec\Val$, we have
$\|\Inl{\vec{v}}\|=\|\Inr{\vec{v}}\|=\|\vec{v}\|$\ \ and\ \
$\|\Pair{\vec{v}}{\vec{w}}\|=\|\vec{v}\|\,\|\vec{w}\|$.

Most of the constructions we shall perform hereafter will take
place in the \emph{unit sphere} $\Sph\subseteq\vec{\Val}$, that is
defined by
$\Sph:=\{\vec{v}\in\vec{\Val}~:~\|\vec{v}\,\|=1\}$.
It is clear that for all
$\vec{v},\vec{w}\in\Sph$, we have
$\Inl{\vec{v}}\in\Sph$, $\Inr{\vec{w}}\in\Sph$ and
$\Pair{\vec{v}}{\vec{w}}\in\Sph$.

Given a set of vectors $X\subseteq\vec{\Val}$, we also write
$\Span(X)$ the \emph{span} of~$X$, defined by
  $\Bigl\{\sum_{i=1}^n\alpha_i\cdot\vec{v}_i~:~n\ge 0,~
  \alpha_1,\ldots,\alpha_n\in\C,~
  \vec{v}_1,\ldots,\vec{v}_n\in X\Bigr\}\subseteq\vec{\Val}$, 
  and $\flat{X}$ the \emph{basis} of~$X$, defined by
  $\bigcup_{\vec{v}\in X}\dom(\vec{v}\,)
  \subseteq~\Val$.
  
Note that by construction, $\Span(X)$ is the smallest (weak)
sub-vector space of $\vec{\Val}$ such that $X\subseteq\Span(X)$,
whereas $\flat{X}$ is the smallest set of basis vectors such that
$X\subseteq\Span(\flat{X})$.

\subsection{The notion of unitary type}
\label{ss:UnitaryTypes}
\begin{definition}[Unitary types]
  A \emph{unitary type} (or a \emph{type}, for short) is defined
  by a \emph{notation}~$A$, together with a set of unitary vectors
  $\sem{A}\subseteq\Sph$, called the \emph{unitary semantics of~$A$}.
\end{definition}

\begin{definition}[Realizability predicate]
  To each type~$A$ we associate a \emph{realizability predicate}
  $\vec{t}\real A$ (where~$\vec{t}$ ranges over $\vec\Lambda$) that is
  defined by 
  $\vec{t}\real A$ if and only if $\vec{t}\eval\vec{v}$ for some $\vec{v}\in\sem{A}$.
  The set or \emph{realizers} of~$A$, written $\semr{A}$, is then
  defined by
  $\{\vec{t}\in\vec\Lambda~:~\vec{t}\real A\}$, that is, 
  $\{\vec{t}\in\vec\Lambda~:~\exists\vec{v}\in\sem{A},~
  \vec{t}\eval\vec{v}\}$.
\end{definition}

\begin{lemma}\label{l:RealizCapVal}
  For all types~$A$, we have $\sem{A}=\semr{A}\cap\vec\Val$.\qed
\end{lemma}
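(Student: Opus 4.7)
The plan is to prove the two inclusions $\sem{A} \subseteq \semr{A} \cap \vec\Val$ and $\semr{A} \cap \vec\Val \subseteq \sem{A}$ separately, each using only basic facts already established in the excerpt.

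For the first inclusion, I would start from an arbitrary $\vec{v} \in \sem{A}$. Since a unitary type is defined so that $\sem{A} \subseteq \Sph \subseteq \vec{\Val}$, we have immediately $\vec{v} \in \vec\Val$. Then, since $\eval$ is the reflexive-transitive closure of $\evalone$ (Section~\ref{ss:GrandEval}), reflexivity gives $\vec{v} \eval \vec{v}$, and with the witness $\vec{v} \in \sem{A}$ this establishes $\vec{v} \real A$, i.e.\ $\vec{v} \in \semr{A}$. Combining these yields $\vec{v} \in \semr{A} \cap \vec\Val$.

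For the reverse inclusion, suppose $\vec{v} \in \semr{A} \cap \vec\Val$. Unfolding the definition of $\semr{A}$, there is some $\vec{w} \in \sem{A}$ with $\vec{v} \eval \vec{w}$. Here I would invoke the fact noted in the "Normal forms" subsection that every value distribution is a normal form for $\eval$. Hence $\vec{v}$ admits no proper one-step reduction, and by confluence the only term distribution reachable from $\vec{v}$ via $\eval$ is $\vec{v}$ itself. Therefore $\vec{w} = \vec{v}$, so $\vec{v} \in \sem{A}$.

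No step is truly an obstacle: the only subtlety is to remember that $\vec{v} \eval \vec{w}$ together with $\vec{v}$ being a normal form forces $\vec{v} = \vec{w}$, which relies on value distributions being $\evalone$-irreducible (and thus $\evalat$-irreducible in every summand), a fact that follows directly from the atomic evaluation rules of Table~\ref{tab:AtomicEval} since none of their conclusions has a pure value on the left.
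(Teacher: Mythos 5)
Your proof is correct and follows essentially the same route as the paper: the forward inclusion is immediate from reflexivity of $\eval$ and the fact that $\sem{A}\subseteq\Sph\subseteq\vec\Val$, and the reverse inclusion uses that value distributions are normal forms, forcing $\vec{w}=\vec{v}$. (The appeal to confluence is unnecessary here --- a normal form reaches only itself under the reflexive-transitive closure simply because no first step exists --- but this does not affect correctness.)
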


\subsection{Judgments, inference rules and derivations}
\label{ss:TheMeaningOfLife}

\begin{definition}[Judgments]
  A \emph{judgment} is a notation $J$ expressing some assertion,
  together with a \emph{criterion of validity}, that defines whether
  the judgment~$J$ is valid or not.
\end{definition}

For instance, given any two types~$A$ and~$B$, we can consider the
following two judgments:
\begin{itemize}
\item The judgment $\SUB{A}{B}$ (`$A$ is a subtype of~$B$'), that is
  valid when $\sem{A}\subseteq\sem{B}$.
\item The judgment $\EQV{A}{B}$ (`$A$ is equivalent to~$B$'), that is
  valid when $\sem{A}=\sem{B}$.
\end{itemize}
(In Section~\ref{ss:TypingJudgment} below, we shall also introduce a
\emph{typing judgment} written $\TYP{\Gamma}{\vec{t}}{A}$).
From the definition of both judgments $\SUB{A}{B}$ and $\EQV{A}{B}$,
it is clear that the judgment $\EQV{A}{B}$ is valid if and only if
both judgments $\SUB{A}{B}$ and $\SUB{B}{A}$ are valid.
Moreover:
\begin{lemma}\label{l:judgements}
  Given any two types~$A$ and~$B$:
  \begin{enumerate}
  \item $\SUB{A}{B}$ is valid if and only if
    $\semr{A}\subseteq\semr{B}$.
  \item $\EQV{A}{B}$ is valid if and only if
    $\semr{A}=\semr{B}$.
    \qed
  \end{enumerate}
\end{lemma}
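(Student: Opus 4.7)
The plan is to leverage Lemma~\ref{l:RealizCapVal}, which asserts that $\sem{A} = \semr{A} \cap \vec\Val$ for every type $A$. This identity is the bridge between the two characterizations, and it reduces both items essentially to elementary set theory.

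For item~(1), I would prove the two implications separately. The forward direction ($\sem{A}\subseteq\sem{B}$ implies $\semr{A}\subseteq\semr{B}$) is immediate by unfolding the definition of $\semr{\cdot}$: if $\vec{t}\in\semr{A}$, then $\vec{t}\eval\vec{v}$ for some $\vec{v}\in\sem{A}$, and by hypothesis $\vec{v}\in\sem{B}$, whence $\vec{t}\real B$. For the converse, assuming $\semr{A}\subseteq\semr{B}$, I would intersect both sides with $\vec\Val$ and apply Lemma~\ref{l:RealizCapVal} twice to conclude
\[
\sem{A} \;=\; \semr{A}\cap\vec\Val \;\subseteq\; \semr{B}\cap\vec\Val \;=\; \sem{B},
\]
which is precisely the validity of $\SUB{A}{B}$.

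For item~(2), I would simply chain the observation made just before the lemma statement, namely that $\EQV{A}{B}$ is valid iff both $\SUB{A}{B}$ and $\SUB{B}{A}$ are valid, with item~(1) applied in both directions: $\EQV{A}{B}$ valid iff $\semr{A}\subseteq\semr{B}$ and $\semr{B}\subseteq\semr{A}$, iff $\semr{A}=\semr{B}$.

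There is no real obstacle here: the content of the lemma is entirely encapsulated in Lemma~\ref{l:RealizCapVal}, which in turn rests on the fact that closed value distributions are normal forms and hence are their own (unique) evaluation targets. The only thing to be mildly careful about is to apply Lemma~\ref{l:RealizCapVal} uniformly on both sides of the inclusion when passing from $\semr{\cdot}$ back to $\sem{\cdot}$; once that is done, the argument is purely set-theoretic.
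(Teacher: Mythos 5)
Your proof is correct and follows exactly the paper's (much terser) argument: the forward directions by unfolding the definition of $\semr{\cdot}$, and the converse directions by intersecting with $\vec\Val$ and invoking Lemma~\ref{l:RealizCapVal}. Nothing to add.
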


More generally, we call an \emph{inference rule} any pair formed by a
finite set of judgments $J_1,\ldots,J_n$, called the \emph{premises}
of the rule, and a judgment $J_0$, called the \emph{conclusion}:
$$\infer{J_0}{J_1&\cdots&J_n}$$
We say that an inference rule\ \ $\frac{J_1~\cdots~J_n}{J_0}$\ \ is
\emph{valid} when the joint validity of the premises $J_1,\ldots,J_n$
implies the validity of the conclusion~$J_0$.
As usual, inference rules can be assembled into derivations, and we
shall say that a derivation is \emph{valid} when all the inference
rules that are used to build this derivation are valid.
It is clear that when all the premises of a valid derivation are
valid, then so is its conclusion.
In particular, when a judgment has a valid derivation without
premises, then this judgment is valid.

\subsection{A simple algebra of types}

In this section, we design a simple algebra of unitary types whose
notations (i.e.\ the syntax) are given in Table~\ref{tab:SynTypes} and
whose unitary semantics are given in Table~\ref{tab:SemTypes}.

  The choice we make in this paper follows from
the structure of the calculus: each set of standard
constructor/destructor canonically yields a type constructor: this
gives :
  $\Unit$, the \emph{unit type}, that is inhabited by the sole
  vector~$\Void$ ; 
 $A+B$, the \emph{simple sum} of~$A$ and~$B$ ;
 $A\times B$, the \emph{simple product} of~$A$ and~$B$;
 $A\arr B$, the space of all pure functions mapping~$A$
 to~$B$.

 The next natural choice of type constructor is derived from the
 existence of linear combinations of terms. First, 
 $\flat{A}$ is the \emph{basis} of~$A$, that is: the minimal set
 of basis vectors that generate all vectors of type~$A$ by (weak)
 linear combinations. 
 Note that in general, $\flat{A}$ is not a subtype of~$A$.
 Then,
 $\sharp{A}$ is the \emph{unitary span} of~$A$, that is:
 the type of all unitary vectors that can be formed as a (weak)
 linear combination of vectors of type~$A$.
 Note that $A$ is always a subtype of $\sharp{A}$.

 The last non-trivial type is $A\Arr B$: the space of all unitary
 function distributions mapping~$A$ to~$B$. As lambda-terms are not
 distributives over linear combinations, this type is distinct from
 $\sharp{(A\arr B)}$ (see next remark for a discussion). However, 
 by construction, $A\to B$ is always a subtype of $A\Arr B$.

 Finally, we provide some syntactic sugar: the type of Booleans, the
 direct sum and the tensor product are defined by
 $\Bool:=\Unit+\Unit$, $A\oplus B:=\sharp(A+B)$, and
 $A\otimes B:=\sharp(A\times B)$.

 The type  $\sharp\Bool=\sharp(\Unit+\Unit)=\Unit\oplus\Unit$
 will be called the type of \emph{unitary Booleans}. Notice that its
 semantics is given by the definition
 $\sem{\sharp\Bool}=\Span(\{\tt,\ff\})\cap\Sph$, that is, the set
 $\bigl\{\alpha\cdot\tt:|\alpha|=1\bigr\}\cup \bigl\{\beta\cdot\ff:|\beta|=1\bigr\}
 \cup\bigl\{\alpha\cdot\tt+\beta\cdot\ff:
 |\alpha|^2+|\beta|^2=1\bigr\}$.
 \begin{table}
   \[
    A,B\quad ::=\quad \Unit \mid\flat{A} \mid\sharp{A} 
    \mid A+B \mid A\times B \mid A\arr B\mid A\Arr B
  \]
  \caption{Syntax of unitary types}
  \label{tab:SynTypes}
\end{table}
\begin{table}
  \begin{align*}
    \sem{\Unit}&:=\{\Void\}\qquad
    \sem{\flat{A}}:=\flat{\sem{A}}\qquad
    \sem{\sharp{A}}:=\Span(\sem{A})\cap\Sph \\
    \sem{A+B}&:=\bigl\{\Inl{\vec{v}}:\vec{v}\in\sem{A}\bigr\}\cup \bigl\{\Inr{\vec{w}}:\vec{w}\in\sem{B}\bigr\}\\
    \sem{A\times B}&:=\bigl\{\Pair{\vec{v}}{\vec{w}}: \vec{v}\in\sem{A},~\vec{w}\in\sem{B}\bigr\}\\
    \sem{A\arr B}&:= \bigl\{\Lam{x}{\vec{t}}~:~\forall\vec{v}\in\sem{A},~ \vec{t}\,\<x:=\vec{v}\,\>\real B\bigr\} \\
    \sem{A\Arr B}&:=\textstyle \bigl\{\bigl(\sum_{i=1}^n\alpha_i\cdot\Lam{x}{\vec{t}_i}\bigr)\in\Sph: \forall\vec{v}\in\sem{A},\\
    &\hspace{4cm}\textstyle\bigl(\sum_{i=1}^n\alpha_i\cdot \vec{t}_i\<x:=\vec{v}\,\>\bigr)\real B\bigr\} 
  \end{align*}
  \caption{Unitary semantics of types}
  \label{tab:SemTypes}\vspace{-6pt}
\end{table}

\begin{remarks}\label{rmk:Idempotent}~
  \begin{enumerate}
  \item The type constructors $\flat$ and $\sharp$ are monotonic and idempotent: $\EQV{\flat\flat{A}}{\flat{A}}$ and $\EQV{\sharp\sharp{A}}{\sharp{A}}$.
  \item\label{i:counterexample} We always have the inclusion $\SUB{A}{\sharp{A}}$, but the inclusion $\SUB{\flat{A}}{A}$ does not hold in general. For instance, given any type~$A$, we easily check that $\frac{3}{5}\cdot\bigl(\Lam{x}{\frac{5}{6}\cdot x}\bigr)+ \frac{4}{5}\cdot\bigl(\Lam{x}{\frac{5}{8}\cdot x}\bigr) ~\in~\sem{A\Arr A}\,,$ so that $(\Lam{x}{\frac{5}{6}\cdot x}), (\Lam{x}{\frac{5}{8}\cdot x})\in \flat\sem{A\Arr A}=\sem{\flat(A\Arr A)}$. On the other hand, it is also clear that $(\Lam{x}{\frac{5}{6}\cdot x}), (\Lam{x}{\frac{5}{8}\cdot x})\notin\sem{A\Arr A}$ (unless $\sem{A}=\varnothing$). Therefore, $\NSUB{\flat(A\Arr A)}{A\Arr A}$.
  \item We have the equivalence $\EQV{\flat\sharp{A}}{\flat{A}}$, but only the
    inclusion $\SUB{A}{\sharp\flat{A}}$. More generally, the type constructor
    $\flat$ commutes with $+$ and $\times$: 
    $
    \EQV{\flat(A+B)}{\flat{A}+\flat{B}}$ and $\EQV{\flat(A\times
      B)}{\flat{A}\times\flat{B}} 
    $
    but the type constructor~$\sharp$ does not, since we only have the inclusions
    $
      \SUB{\sharp{A}+\sharp{B}}{\sharp(A+B)}$ and 
      $\SUB{\sharp{A}\times\sharp{B}}{\sharp(A\times B)} 
    $
  \item 
  The inclusions\ \ $\SUB{A\Arr B}{\sharp(A\Arr B)}$\ \ and\ \
  $\SUB{\sharp(A\arr B)}{\sharp(A\Arr B)}$\ \ are strict in general
  (unless the type $A\Arr B$ is empty).
  As a matter of fact, the two types $\sharp(A\arr B)$ and
  $\sharp(A\Arr B)$ have no interesting properties---for instance,
  they are \emph{not} subtypes of $\sharp{A}\Arr\sharp{B}$.
  In practice, the type
  constructor~$\sharp$ is only used on top of an algebraic
  type, constructed using one of~$\Unit$, $+$, or $\times$.
  \end{enumerate}
\end{remarks}

\subsubsection{Pure types and simple types}\label{sss:PureSimpleTypes}
In what follows, we shall say that a type~$A$ is \emph{pure} when its
unitary semantics only contains pure values, that is: when
$\sem{A}\subseteq\Val$.
Equivalently, a type~$A$ is pure when the type equivalence
$\EQV{\flat{A}}{A}$ is valid (or when $\SUB{A}{\flat{B}}$ for some
type~$B$).
We easily
check that:
\begin{lemma}\label{l:PureTypes}
  For all types~$A$ and~$B$:
  \begin{enumerate}
  \item The types $\Unit$, $\flat{A}$ and $A\arr B$ are pure.
  \item If $A$ and $B$ are pure, then so are 
    $A+B$ and $A\times B$.
  \item 
    $\sharp{A}$ and $A\limp B$ are not pure, unless they are empty.
    \qed
  \end{enumerate}
\end{lemma}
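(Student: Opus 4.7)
My plan is to dispatch each of the three items by unfolding the unitary semantics in Table~\ref{tab:SemTypes}. Items (1) and (2) reduce to routine bookkeeping: for (1), $\sem{\Unit}=\{\Void\}$ contains only the pure value $\Void$; $\sem{\flat A}=\bigcup_{\vec v\in\sem A}\dom(\vec v)$ is a union of sets of pure values by the very definition of $\dom$; and every element of $\sem{A\arr B}$ is a plain $\lambda$-abstraction $\Lam x{\vec t}$. For (2), if $\sem A,\sem B\subseteq\Val$, then every $\vec v\in\sem A$ and $\vec w\in\sem B$ is itself a pure value, so the linearly extended constructions $\Inl{\vec v}$, $\Inr{\vec w}$ and $\Pair{\vec v}{\vec w}$ collapse to the pure values $\Inl v$, $\Inr w$, $\Pair v w$. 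I will concentrate my care on item (3), where each of the constructors $\sharp A$ and $A\Arr B$ requires producing an explicit witness that fails to be a pure value.

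For $\sharp A$, I note that $\Span(\varnothing)\cap\Sph=\varnothing$ forces $\sem A$ to be non-empty whenever $\sem{\sharp A}$ is; I then pick $\vec v\in\sem A\subseteq\sem{\sharp A}$ and observe that $(-1)\cdot\vec v$ is likewise in $\sem{\sharp A}$, since $\Span(\sem A)$ is closed under scalar multiplication and $\|(-1)\cdot\vec v\|=1$. At most one of $\vec v$ and $(-1)\cdot\vec v$ can be a pure value (otherwise comparing their canonical forms would force $-1=1$), so the other one is the desired non-pure-value witness in $\sem{\sharp A}$.

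For $A\Arr B$, assume $\sem{A\Arr B}\neq\varnothing$ and write an element in canonical form $\sum_{i=1}^n\alpha_i\cdot\Lam x{\vec t_i}$. If $n\geq 2$, or if $n=1$ with $\alpha_1\neq 1$, this element is already not a pure value and we are done. The delicate case, which I expect to be the main obstacle, is when every element of $\sem{A\Arr B}$ is a plain $\lambda$-abstraction $\Lam x{\vec t}$. For any such $\Lam x{\vec t}\in\sem{A\Arr B}$ I will adapt Remark~\ref{rmk:Idempotent}(\ref{i:counterexample}) and set
\[
\vec g \;:=\; \tfrac 3 5 \cdot \Lam x{\tfrac 5 6 \cdot \vec t} \;+\; \tfrac 4 5 \cdot \Lam x{\tfrac 5 8 \cdot \vec t}.
\]
The bodies $\tfrac 5 6\cdot\vec t$ and $\tfrac 5 8\cdot\vec t$ are distinct raw distributions, so Convention~\ref{convention} keeps the two $\lambda$-abstractions apart and $\vec g$ is a genuine two-summand distribution, hence not a pure value. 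Its membership in $\sem{A\Arr B}$ then boils down to two arithmetic checks: $(3/5)^2+(4/5)^2=1$ for the norm, and $\tfrac 3 5\tfrac 5 6+\tfrac 4 5\tfrac 5 8=1$, which makes the post-substitution sum collapse to $\vec t\<x:=\vec v\,\>$, realizing $B$ by hypothesis. The conceptual key is that the shallowness of $\equiv$ inside abstraction bodies is precisely what prevents the two $\lambda$-abstractions from being identified, turning the arithmetic coincidence into a genuine non-pure-value witness.
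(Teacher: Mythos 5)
Your proof is correct and follows exactly the route the paper intends for this ``easily checked'' lemma: items (1) and (2) by unfolding Table~\ref{tab:SemTypes}, and item (3) via the sign-flip witness $(-1)\cdot\vec v$ for $\sharp{A}$ and the $\frac35/\frac45$ distribution of abstractions that the paper itself exhibits in Remark~\ref{rmk:Idempotent}(\ref{i:counterexample}) for $A\Arr B$. The only point worth keeping explicit, which you do handle, is that the shallowness of $\equiv$ is what makes $\Lam{x}{\frac56\cdot\vec t}$ and $\Lam{x}{\frac58\cdot\vec t}$ distinct pure terms, so that $\vec g$ genuinely has a two-summand canonical form.
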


A particular case of pure types are the \emph{simple types}, that are
syntactically defined from the following sub-grammar of the grammar of
Table~\ref{tab:SynTypes}:
$$A,B::=\Unit\mid A+B\mid A\times B
\mid A\arr B$$
It is clear from Lemma~\ref{l:PureTypes} that all simple types are
pure types.
The converse is false, since the type $\sharp\Unit\arr\sharp\Unit$ is
pure, although it is not generated from the above grammar.

\subsubsection{Pure arrow vs unitary arrow}
The pure arrow $A\arr B$ and the unitary arrow $A\Arr B$ only differ
in the shape of the functions which they contain: the pure arrow
$A\arr B$ only contains pure abstractions whereas the unitary arrow
$A\Arr B$ contains arbitrary unitary distributions of abstractions
mapping values of type~$A$ to realizers of type~$B$.
However, the functions that are captured by both sets
$\sem{A\arr B}\subseteq\Val$ and $\sem{A\Arr B}\subseteq\Sph$ are
extensionally the same:
\begin{proposition}
  For all unitary distributions of abstractions
  $\bigl(\sum_{i=1}^n\alpha_i\cdot\Lam{x}{\vec{t}_i}\bigr)\in\Sph$,
  one has:
  \begin{align*}
  \textstyle\bigl(\sum_{i=1}^n\alpha_i\cdot
  \Lam{x}{\vec{t}_i}\bigr)&\in\sem{A\Arr B}\\
\textstyle
  \text{iff}\quad
  \Lam{x}{\bigl(\sum_{i=1}^n\alpha_i\cdot\vec{t}_i\bigr)}
  &\in\sem{A\arr B}\,.
    \tag*{\qed}
  \end{align*}
\end{proposition}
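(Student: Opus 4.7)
The plan is to directly unfold both semantic definitions and rely on the linearity in the first argument of the bilinear substitution $\vec{t}\<x:=\vec{v}\,\>$, which was established in Section~\ref{ss:BilinSubst}. Specifically, since the pure substitution $\vec{s}[x:=w]$ is linear in $\vec{s}$ and bilinear substitution is defined as $\vec{t}\<x:=\vec{v}\,\>=\sum_j\beta_j\cdot\vec{t}\,[x:=w_j]$ (where $\vec{v}=\sum_j\beta_j\cdot w_j$), one immediately gets
$$\Big(\sum_{i=1}^n\alpha_i\cdot\vec{t}_i\Big)\<x:=\vec{v}\,\>~=~\sum_{i=1}^n\alpha_i\cdot\vec{t}_i\<x:=\vec{v}\,\>$$
as an equality of term distributions (modulo~$\equiv$).

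With this identity in hand, the two sides of the proposition become literally the same statement after unfolding. Indeed, by the definition of $\sem{A\Arr B}$ in Table~\ref{tab:SemTypes}, the assumption $\bigl(\sum_i\alpha_i\cdot\Lam{x}{\vec{t}_i}\bigr)\in\Sph$ combined with membership in $\sem{A\Arr B}$ is equivalent to: for every $\vec{v}\in\sem{A}$, the distribution $\sum_i\alpha_i\cdot\vec{t}_i\<x:=\vec{v}\,\>$ realizes $B$. On the other side, $\Lam{x}{\bigl(\sum_i\alpha_i\cdot\vec{t}_i\bigr)}\in\sem{A\arr B}$ unfolds (by the definition of $\sem{A\arr B}$) to the requirement that for every $\vec{v}\in\sem{A}$, the distribution $\bigl(\sum_i\alpha_i\cdot\vec{t}_i\bigr)\<x:=\vec{v}\,\>$ realizes $B$. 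The linearity equation above shows these two realizability conditions are identical, which proves the equivalence.

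One small thing worth checking is that the lambda $\Lam{x}{\bigl(\sum_i\alpha_i\cdot\vec{t}_i\bigr)}$ on the right is a well-formed pure value, which it is: the body of an abstraction is permitted to be an arbitrary term distribution (Table~\ref{tab:Syntax}), and Convention~\ref{convention} allows us to freely replace that inner raw distribution by any $\equiv$-equivalent writing. I would not expect any real obstacle here; the proposition is essentially the observation that, under bilinear substitution, the syntactic distinction between a top-level distribution of abstractions sharing a common binder and a single abstraction over the corresponding body-distribution is semantically invisible, and the membership in $\Sph$ hypothesis is imposed on both sides so no norm issue arises.
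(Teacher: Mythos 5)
Your proof is correct and is essentially the argument the paper intends (the proposition is stated as immediate): by linearity of $\<x:=\vec{v}\,\>$ in its first argument, both memberships unfold to the single condition that $\sum_{i=1}^n\alpha_i\cdot\vec{t}_i\<x:=\vec{v}\,\>\real B$ for all $\vec{v}\in\sem{A}$, and the $\Sph$ hypothesis is imposed on both sides. One small correction to your closing aside: Convention~\ref{convention} explicitly does \emph{not} identify $\equiv$-equivalent writings of the raw body of an abstraction (the congruence is shallow), but this is harmless here because the realizability condition depends only on the $\equiv$-class of the substituted body, which is a top-level distribution.
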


\subsection{Representation of unitary operators}
\label{s:BoolProj}
\label{ss:ExamplesBoolSharpBool}
Recall that
the type of \emph{unitary Booleans} is defined as
$\sharp\Bool=\sharp(\Unit+\Unit)=\Unit\oplus\Unit$,
so that for all closed term distributions $\vec{t}$, we have
$\vec{t}\real\sharp\Bool$ iff
$$\begin{array}[t]{@{}lll@{}}
  \vec{t}\eval\alpha\cdot\tt& \text{for some }\alpha\in\C\text{\,s.t.\,}|\alpha|=1,&\text{or}\\
  \vec{t}\eval\beta\cdot\ff& \text{for some }\beta\in\C\text{\,s.t.\,}|\beta|=1,&\text{or} \\
  \vec{t}\eval\alpha\cdot\tt+\beta\cdot\ff& \multicolumn{2}{l}{\text{for some
                                            }\alpha,\beta\in\C\text{\,s.t.\,}|\alpha|^2\,{+}\,|\beta|^2=1\,.}
\end{array}$$
  We can observe that the unitary semantics of the type
  $\sharp\Bool$ simultaneously contains the vectors $\alpha\cdot\tt$ and
$\alpha\cdot\tt+0\cdot\ff$, that can be considered as ``morally''
equivalent (although they are not according to the
congruence~$\equiv$).
To identify such vectors, it is convenient to introduce the
\emph{Boolean projection}
$\pi_{\Bool}:\Span(\{\tt,\ff\})\to\C^2$ defined by
$$
\begin{array}{c}
\pi_{\Bool}(\alpha\cdot\tt)~=~(\alpha,0),\qquad
  \pi_{\Bool}(\beta\cdot\ff)~=~(0,\beta),\\
  \text{and}\qquad
  \pi_{\Bool}(\alpha\cdot\tt+\beta\cdot\ff)~=~(\alpha,\beta)
  \end{array}$$
for all $\alpha,\beta\in\C$.
By construction, the function
$\pi_{\Bool}:\Span(\{\tt,\ff\})\to\C^2$ is linear, surjective,
and neglects the difference between $\alpha\cdot\tt+0\cdot\ff$ and
$\alpha\cdot\tt$ (and between $0\cdot\tt+\beta\cdot\ff$ and
$\beta\cdot\ff$).
Moreover, the map $\pi_{\Bool}:\Span(\{\tt,\ff\})\to\C^2$
preserves the inner product, in the sense that
for all $\vec{v},\vec{w}\in\Span(\{\tt,\ff\})$, we have
$$\scal{\pi_{\Bool}(\vec{v})}{\pi_{\Bool}(\vec{w})}_{\C^2}~=~
\scal{\vec{v}}{\vec{w}}_{\vec{\Val}}$$

\begin{definition}
  We say that a closed term distribution $\vec{t}$ \emph{represents} a
  function $F:\C^2\to\C^2$ when for all
  $\vec{v}\in\Span(\{\tt,\ff\})$, there exists
  $\vec{w}\in\Span(\{\tt,\ff\})$ such that
  $$\vec{t}\,\vec{v}~\eval~\vec{w}\qquad\text{and}\qquad
  \pi_{\Bool}(\vec{w})~=~F(\pi_{\Bool}(\vec{v}\,))\,.$$
\end{definition}

\begin{remark}
  From the bilinearity of application, it is clear that each function
  $F:\C^2\to\C^2$ that is represented by a closed term distribution is
  necessarily linear.
\end{remark}

Recall that an operator $F:\C^2\to\C^2$ is \emph{unitary} when it
preserves the inner product of~$\C^2$, in the sense that
$\scal{F(u)}{F(v)}=\scal{u}{v}$ for all $u,v\in\C^2$.
Equivalently, an operator $F:\C^2\to\C^2$ is unitary if and only if
$\|F(1,0)\|_{\C^2}=\|F(0,1)\|_{\C^2}=1$ and
$\scal{F(1,0)}{F(0,1)}_{\C^2}=0$.
The following propositions expresses that the types
$\sharp\Bool\arr\sharp\Bool$ and $\sharp\Bool\Rightarrow\sharp\Bool$ capture unitary operators:
\begin{proposition}\label{p:CharacSharpBoolEndo}
  Given a closed $\lambda$-abstraction $\Lam{x}{\vec{t}}$, we have
  $\Lam{x}{\vec{t}}\in\sem{\sharp\Bool\arr\sharp\Bool}$ if and
  only if there are two value distributions
  $\vec{v}_1,\vec{v}_2\in\sem{\sharp\Bool}$ such that we have
  $\vec{t}\,[x:=\tt]\eval\vec{v}_1$,
  $\vec{t}\,[x:=\ff]\eval\vec{v}_2$ and 
  $\scal{\vec{v}_1}{\vec{v}_2}=0$.\qed
\end{proposition}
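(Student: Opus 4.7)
The plan is to prove both directions using suitable probe vectors in $\sem{\sharp\Bool}$, relying on the bilinearity of the substitution $\<x:=\vec{v}\,\>$ and the linearity of evaluation (Proposition~\ref{l:LinEval}).

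For the forward direction, I would start by noting that $\tt,\ff\in\sem{\sharp\Bool}$, so membership in $\sem{\sharp\Bool\arr\sharp\Bool}$ directly gives $\vec{t}\,[x:=\tt]\real\sharp\Bool$ and $\vec{t}\,[x:=\ff]\real\sharp\Bool$, producing the required $\vec{v}_1,\vec{v}_2\in\sem{\sharp\Bool}$. To extract the orthogonality $\scal{\vec{v}_1}{\vec{v}_2}=0$, I would test $\Lam{x}{\vec{t}}$ on two more probe vectors: $\vec{u}_+:=\sqrthalf\cdot\tt+\sqrthalf\cdot\ff$ and $\vec{u}_i:=\sqrthalf\cdot\tt+\isqrthalf\cdot\ff$, both easily checked to lie in $\sem{\sharp\Bool}$. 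By bilinearity of substitution and linearity of evaluation, $\vec{t}\,\<x:=\vec{u}_+\>\eval\sqrthalf\cdot\vec{v}_1+\sqrthalf\cdot\vec{v}_2$, and this result must again have norm~$1$. Expanding $\|\sqrthalf\cdot\vec{v}_1+\sqrthalf\cdot\vec{v}_2\|^2=1+\mathrm{Re}\scal{\vec{v}_1}{\vec{v}_2}$ forces $\mathrm{Re}\scal{\vec{v}_1}{\vec{v}_2}=0$; the analogous computation with $\vec{u}_i$ yields $\|\sqrthalf\cdot\vec{v}_1+\isqrthalf\cdot\vec{v}_2\|^2=1-\mathrm{Im}\scal{\vec{v}_1}{\vec{v}_2}$, forcing $\mathrm{Im}\scal{\vec{v}_1}{\vec{v}_2}=0$. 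Combined, $\scal{\vec{v}_1}{\vec{v}_2}=0$.

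For the backward direction, I would take an arbitrary $\vec{v}\in\sem{\sharp\Bool}$ and write it (in canonical form) as $\alpha\cdot\tt+\beta\cdot\ff$, $\alpha\cdot\tt$, or $\beta\cdot\ff$, with $|\alpha|^2+|\beta|^2=1$ in the first case and $|\alpha|=1$, $|\beta|=1$ respectively in the degenerate ones. By bilinearity of substitution,
\[
\vec{t}\,\<x:=\vec{v}\,\>~=~\alpha\cdot\vec{t}\,[x:=\tt]+\beta\cdot\vec{t}\,[x:=\ff],
\]
which by linearity of evaluation reduces to $\vec{w}:=\alpha\cdot\vec{v}_1+\beta\cdot\vec{v}_2$ (with the appropriate degeneration). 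Since $\vec{v}_1,\vec{v}_2\in\Span(\{\tt,\ff\})$, so does $\vec{w}$; and using $\|\vec{v}_1\|=\|\vec{v}_2\|=1$ together with the hypothesis $\scal{\vec{v}_1}{\vec{v}_2}=0$, one computes $\|\vec{w}\|^2=|\alpha|^2+|\beta|^2=1$. Hence $\vec{w}\in\sem{\sharp\Bool}$, which means $\vec{t}\,\<x:=\vec{v}\,\>\real\sharp\Bool$, establishing $\Lam{x}{\vec{t}}\in\sem{\sharp\Bool\arr\sharp\Bool}$.

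The only subtle point is ensuring that the evaluation really respects the decomposition — this is precisely what Proposition~\ref{l:LinEval} guarantees, so no real obstacle remains. If anything, the most delicate bookkeeping is handling canonical forms when $\alpha$ or $\beta$ vanishes (where the distribution $\alpha\cdot\vec{v}_1+\beta\cdot\vec{v}_2$ may not be in canonical form), but since $\sem{\sharp\Bool}$ is characterized by norm and span conditions rather than by syntactic shape, this presents no difficulty.
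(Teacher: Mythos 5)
Your proposal is correct and follows essentially the same route as the paper's proof: extract $\vec{v}_1,\vec{v}_2$ from the pure probes $\tt,\ff$, then force $\mathrm{Re}\scal{\vec{v}_1}{\vec{v}_2}=0$ and $\mathrm{Im}\scal{\vec{v}_1}{\vec{v}_2}=0$ by testing two unit superpositions (the paper uses $\alpha=\beta=\sqrthalf$ and $\alpha=i\sqrthalf,\beta=\sqrthalf$, which is the same computation with the $i$ placed on the other coefficient), and conversely verify the norm condition case by case on the canonical forms of elements of $\sem{\sharp\Bool}$ using bilinearity of substitution and linearity of evaluation. The minor points you flag (uniqueness of normal forms identifying the evaluate of $\vec{t}\<x:=\vec{u}_+\>$ with $\sqrthalf\cdot\vec{v}_1+\sqrthalf\cdot\vec{v}_2$, and the degenerate cases $\alpha=0$ or $\beta=0$) are handled the same way in the paper.
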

\begin{theorem}[Characterization of the values of type
    $\sharp\Bool\arr\sharp\Bool$]\label{t:ReprUnitary}
  A closed $\lambda$-abstraction $\Lam{x}{\vec{t}}$ is a value of type
  $\sharp\Bool\arr\sharp\Bool$ if and only if it represents a unitary
  operator $F:\C^2\to\C^2$.\qed
\end{theorem}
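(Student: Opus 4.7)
The plan is to reduce the theorem to Proposition~\ref{p:CharacSharpBoolEndo} in both directions, leaning on bilinearity of application (Section~\ref{ss:ExtLin}), linearity of evaluation (Proposition~\ref{l:LinEval}), and inner-product preservation of the Boolean projection $\pi_{\Bool}$; confluence is then what ties computed normal forms to the witnesses supplied by Proposition~\ref{p:CharacSharpBoolEndo}.

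For the forward direction, assume $\Lam{x}{\vec{t}} \in \sem{\sharp\Bool\arr\sharp\Bool}$. Proposition~\ref{p:CharacSharpBoolEndo} delivers $\vec{v}_1, \vec{v}_2 \in \sem{\sharp\Bool}$ with $\vec{t}\,[x:=\tt] \eval \vec{v}_1$, $\vec{t}\,[x:=\ff] \eval \vec{v}_2$, and $\scal{\vec{v}_1}{\vec{v}_2} = 0$. I define $F : \C^2 \to \C^2$ to be the unique linear map with $F(1,0) = \pi_{\Bool}(\vec{v}_1)$ and $F(0,1) = \pi_{\Bool}(\vec{v}_2)$. Unitarity follows from inner-product preservation of $\pi_{\Bool}$: $\|F(1,0)\| = \|\vec{v}_1\| = 1$, $\|F(0,1)\| = \|\vec{v}_2\| = 1$, and $\scal{F(1,0)}{F(0,1)} = \scal{\vec{v}_1}{\vec{v}_2} = 0$. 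To check that $\Lam{x}{\vec{t}}$ represents $F$, take an arbitrary $\vec{v} = \alpha\cdot\tt + \beta\cdot\ff \in \Span(\{\tt,\ff\})$ (the degenerate shapes $\alpha\cdot\tt$, $\beta\cdot\ff$, $\vec{0}$ and $0\cdot\tt + 0\cdot\ff$ are handled analogously). Bilinearity of application together with the atomic rule $(\Lam{x}{\vec{t}})\,v \evalat \vec{t}\,[x:=v]$ and Proposition~\ref{l:LinEval} yield
\[
(\Lam{x}{\vec{t}})\,\vec{v} ~=~ \alpha\cdot(\Lam{x}{\vec{t}})\tt + \beta\cdot(\Lam{x}{\vec{t}})\ff ~\eval~ \alpha\cdot\vec{v}_1 + \beta\cdot\vec{v}_2 ~=:~ \vec{w}\,,
\]
with $\vec{w} \in \Span(\{\tt,\ff\})$, and linearity of $\pi_{\Bool}$ gives $\pi_{\Bool}(\vec{w}) = \alpha\,\pi_{\Bool}(\vec{v}_1) + \beta\,\pi_{\Bool}(\vec{v}_2) = F(\alpha,\beta) = F(\pi_{\Bool}(\vec{v}))$, as required.

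For the converse, assume $\Lam{x}{\vec{t}}$ represents a unitary $F : \C^2 \to \C^2$. Instantiating the definition at $\vec{v} = \tt$ and $\vec{v} = \ff$ produces $\vec{w}_1, \vec{w}_2 \in \Span(\{\tt,\ff\})$ such that $(\Lam{x}{\vec{t}})\tt \eval \vec{w}_1$, $(\Lam{x}{\vec{t}})\ff \eval \vec{w}_2$, $\pi_{\Bool}(\vec{w}_1) = F(1,0)$, and $\pi_{\Bool}(\vec{w}_2) = F(0,1)$. Inner-product preservation of $\pi_{\Bool}$ together with unitarity of $F$ then gives $\|\vec{w}_i\| = 1$ and $\scal{\vec{w}_1}{\vec{w}_2} = 0$, hence $\vec{w}_1, \vec{w}_2 \in \sem{\sharp\Bool}$. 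Since $(\Lam{x}{\vec{t}})\tt \evalat \vec{t}\,[x:=\tt]$ by atomic evaluation, and $\vec{w}_1$ is a value distribution (hence a normal form), confluence forces $\vec{t}\,[x:=\tt] \eval \vec{w}_1$; symmetrically, $\vec{t}\,[x:=\ff] \eval \vec{w}_2$. Proposition~\ref{p:CharacSharpBoolEndo} then concludes $\Lam{x}{\vec{t}} \in \sem{\sharp\Bool\arr\sharp\Bool}$.

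The only delicate point I anticipate is the weakness of the vector structure: expressions like $0\cdot\tt + \beta\cdot\ff$ and $\beta\cdot\ff$ are distinct in $\vec{\Val}$ but identified by $\pi_{\Bool}$, so the statement ``$\Lam{x}{\vec{t}}$ represents $F$'' holds only modulo this identification. This is precisely what $\pi_{\Bool}$ is designed to quotient out, and since the whole argument stays inside $\Span(\{\tt,\ff\})$ where $\pi_{\Bool}$ is linear and inner-product preserving, the slack causes no real difficulty.
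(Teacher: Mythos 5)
Your proof is correct and follows essentially the same route as the paper's: both directions reduce to Proposition~\ref{p:CharacSharpBoolEndo}, with $F$ defined via $\pi_{\Bool}$ of the images of $\tt$ and $\ff$, unitarity transferred through the inner-product preservation of $\pi_{\Bool}$, and confluence used to reconcile $(\Lam{x}{\vec{t}})\,\tt\eval\vec{w}_1$ with $\vec{t}\,[x:=\tt]\eval\vec{w}_1$. Your version merely spells out the linearity computation that the paper dispatches with ``from the properties of linearity of the calculus, it is clear,'' which is a harmless (and welcome) elaboration.
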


\begin{corollary}[Characterization of the values of type
    $\sharp\Bool\Arr\sharp\Bool$]\label{c:ReprUnitary}
  A unitary distribution of abstractions
  $\bigl(\sum_{i=1}^n\alpha_i\cdot\Lam{x}{\vec{t}_i}\bigr)\in\Sph$ is
  a value of type $\sharp\Bool\Arr\sharp\Bool$ if and only if it
  represents a unitary operator $F:\C^2\to\C^2$.\qed
\end{corollary}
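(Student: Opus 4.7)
The plan is to reduce the corollary to Theorem~\ref{t:ReprUnitary} via the Proposition relating $\sem{A\Arr B}$ to $\sem{A\arr B}$. Given a unitary distribution $\vec{F}:=\sum_{i=1}^n\alpha_i\cdot\Lam{x}{\vec{t}_i}\in\Sph$, set $\vec{T}:=\sum_{i=1}^n\alpha_i\cdot\vec{t}_i$. By that Proposition, $\vec{F}\in\sem{\sharp\Bool\Arr\sharp\Bool}$ iff $\Lam{x}{\vec{T}}\in\sem{\sharp\Bool\arr\sharp\Bool}$, and by Theorem~\ref{t:ReprUnitary} this is equivalent to $\Lam{x}{\vec{T}}$ representing some unitary $F:\C^2\to\C^2$. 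It therefore suffices to show that $\vec{F}$ and $\Lam{x}{\vec{T}}$ represent exactly the same functions $\C^2\to\C^2$.

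To prove this last point, I would take an arbitrary $\vec{v}=\sum_{j=1}^m\beta_j\cdot v_j\in\Span(\{\tt,\ff\})$ (so $v_j\in\{\tt,\ff\}$) and compute both applications using bilinearity of application (Section~\ref{ss:ExtLin}) and Proposition~\ref{l:LinEval}. On one side, $\vec{F}\,\vec{v}=\sum_{i,j}\alpha_i\beta_j\cdot(\Lam{x}{\vec{t}_i})\,v_j$, and one step of atomic evaluation on each summand yields $\vec{F}\,\vec{v}\eval\sum_{i,j}\alpha_i\beta_j\cdot\vec{t}_i[x:=v_j]$, which by definition of bilinear substitution equals $\sum_{i}\alpha_i\cdot\vec{t}_i\<x:=\vec{v}\,\>$. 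On the other side, $(\Lam{x}{\vec{T}})\,\vec{v}=\sum_j\beta_j\cdot(\Lam{x}{\vec{T}})\,v_j\eval\sum_j\beta_j\cdot\vec{T}[x:=v_j]$, which equals $\vec{T}\<x:=\vec{v}\,\>=\sum_{i,j}\alpha_i\beta_j\cdot\vec{t}_i[x:=v_j]$ by linearity of pure substitution. So both applications reduce to the same term distribution, hence (by confluence) yield the same normal form whenever one exists. In particular, given any $\vec{w}\in\Span(\{\tt,\ff\})$, we have $\vec{F}\,\vec{v}\eval\vec{w}$ iff $(\Lam{x}{\vec{T}})\,\vec{v}\eval\vec{w}$, so $\vec{F}$ and $\Lam{x}{\vec{T}}$ represent the same function.

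Putting the three ingredients together: $\vec{F}$ represents a unitary operator iff $\Lam{x}{\vec{T}}$ does (by the equality just established), iff $\Lam{x}{\vec{T}}\in\sem{\sharp\Bool\arr\sharp\Bool}$ (by Theorem~\ref{t:ReprUnitary}), iff $\vec{F}\in\sem{\sharp\Bool\Arr\sharp\Bool}$ (by the Proposition). This proves the corollary.

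The only non-routine step is the chain of equalities establishing that $\vec{F}\,\vec{v}$ and $(\Lam{x}{\vec{T}})\,\vec{v}$ reduce to the same distribution; the main subtlety is that the bilinearity of application interacts with the bilinearity of substitution to produce precisely the same double sum on both sides, so the fact that the abstraction is \emph{distributed} over the scalars in $\vec{F}$ but \emph{factored out} in $\Lam{x}{\vec{T}}$ is invisible once the application is reduced. No new argument beyond linearity of evaluation and the bilinear substitution identity is needed.
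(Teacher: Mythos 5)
Your proof is correct and follows essentially the same route as the paper's: the chain of equivalences through the Proposition on pure vs.\ unitary arrows and Theorem~\ref{t:ReprUnitary}, closed by the extensional equivalence of $\sum_i\alpha_i\cdot\Lam{x}{\vec{t}_i}$ and $\Lam{x}{\sum_i\alpha_i\cdot\vec{t}_i}$. The only difference is that you spell out the extensional-equivalence step (both applications reducing to $\sum_{i,j}\alpha_i\beta_j\cdot\vec{t}_i[x:=v_j]$, then invoking confluence and normality of value distributions), which the paper merely asserts; that added detail is accurate.
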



\section{Typing Judgements}
\label{s:Typing}

In Section~\ref{s:Realiz}, we introduced a simple type algebra
(Table~\ref{tab:SynTypes}) together with the corresponding unitary
semantics (Table~\ref{tab:SemTypes}).
We also introduced the two judgments $\SUB{A}{B}$ and $\EQV{A}{B}$.
Now, it is time to introduce the typing judgment
$\TYP{\Gamma}{\vec{t}}{A}$ together with the corresponding notion of
validity.

\subsection{Typing Rules}
\label{ss:TypingJudgment}

As usual, we call a \emph{typing context} (or a \emph{context}) any
finite function from the set of variables to the set of types.
Contexts $\Gamma$ are traditionally written
$\Gamma~=~x_1:A_1,\ldots,x_{\ell}:A_{\ell}$
where $\{x_1,\ldots,x_{\ell}\}=\dom(\Gamma)$ and where
$A_i=\Gamma(x_i)$ for all $i=1..\ell$.
The empty context is written~$\emptyset$, and the concatenation of
two contexts $\Gamma$ and $\Delta$ such that
$\dom(\Gamma)\cap\dom(\Delta)=\varnothing$ is defined by
$\Gamma,\Delta:=\Gamma\cup\Delta$ (that is: as the union of the
underlying functions).

Similarly, we call a \emph{substitution} any finite function from
the set of variables to the set~$\vec{\Val}$ of closed value
distributions.
Substitutions $\sigma$ are traditionally written
$\sigma~=~\{x_1:=\vec{v}_1,\ldots,x_{\ell}:=\vec{v}_{\ell}\}$
where $\{x_1,\ldots,x_{\ell}\}=\dom(\sigma)$ and where
$\vec{v}_i=\sigma(x_i)$ for all $i=1..\ell$.
The empty substitution is written~$\emptyset$, and the concatenation
of two substitutions $\sigma$ and $\tau$ such that
$\dom(\sigma)\cap\dom(\tau)=\varnothing$ is defined by
$\sigma,\tau:=\sigma\cup\tau$ (that is: as the union of the
underlying functions).
Given an open term distribution $\vec{t}$ and a substitution
$\sigma=\{x_1:=\vec{v}_1,\ldots,x_{\ell}:=\vec{v}_{\ell}\}$, we write
$\vec{t}\,\<\sigma\>~:=~
\vec{t}\,\<x_1:=\vec{v}_1\>\cdots\<x_{\ell}:=\vec{v}_{\ell}\>\,.$
Note that since the value distributions
$\vec{v}_1,\ldots,\vec{v}_{\ell}$ are closed, the order in
which the (closed) bilinear substitutions $\<x_i:=\vec{v}_i\>$
($i=1..\ell$) are applied to~$\vec{t}$ is irrelevant.

\begin{definition}[Unitary semantics of a typing
    context]\label{d:SemContext}
  Given a typing context $\Gamma$, we call the \emph{unitary
    semantics} of~$\Gamma$ and write $\sem{\Gamma}$ the set of
  substitutions defined by
  \begin{align*}
    \sem{\Gamma}&:=\bigl\{\sigma~\text{substitution}~:~
  \dom(\sigma)=\dom(\Gamma)~{}\\
  &\qquad\text{and}~{}~
    \forall x\in\dom(\sigma),~\sigma(x)\in\sem{\Gamma(x)}\bigr\}\,.
    \end{align*}
\end{definition}

Finally, we call the \emph{strict domain} of a context~$\Gamma$ and
write $\sdom(\Gamma)$ the set
$$\sdom(\Gamma)~:=~\{x\in\dom(\Gamma)~:~
\sem{\Gamma(x)}\neq\flat\sem{\Gamma(x)}\}\,.$$
Intuitively, the elements of the set $\sdom(\Gamma)$ are the variables
of the context~$\Gamma$ whose type is not a type of pure values.
As we shall see below, these variables are the variables that must
occur in all the term distributions that are well-typed in the
context~$\Gamma$.
(This restriction is essential to ensure the validity of the
rule $\rnam{UnitLam}$, Table~\ref{tab:TypingRules}).

\begin{definition}[Typing judgments]
  A \emph{typing judgment} is a triple $\TYP{\Gamma}{\vec{t}}{A}$
  formed by a typing context~$\Gamma$, a (possibly open) term
  distribution~$\vec{t}$ and a type~$A$.
  This judgment is valid when:
  \begin{enumerate}
  \item $\sdom(\Gamma)\subseteq\FV(\vec{t}\,)\subseteq\dom(\Gamma)$;
    and
  \item $\vec{t}\<\sigma\>\real A$ for all $\sigma\in\sem{\Gamma}$.
  \end{enumerate}
\end{definition}

\begin{proposition}\label{p:TypingRulesValid}
  The typing rules of Table~\ref{tab:TypingRules} are valid.\qed
\end{proposition}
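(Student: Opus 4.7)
The plan is to prove Proposition~\ref{p:TypingRulesValid} rule by rule: for each rule in Table~\ref{tab:TypingRules}, I would assume that every premise is a valid judgment (in the semantic sense of Section~\ref{ss:TypingJudgment}) and then derive that the conclusion is valid. Validity of a typing judgment $\TYP{\Gamma}{\vec{t}}{A}$ unfolds into two obligations: the free-variable side condition $\sdom(\Gamma)\subseteq\FV(\vec{t}\,)\subseteq\dom(\Gamma)$, and the semantic obligation that $\vec{t}\<\sigma\>\real A$ for every $\sigma\in\sem{\Gamma}$. The first obligation is a straightforward syntactic bookkeeping check for each rule, using the definition of $\sdom$ and Lemma~\ref{l:PureTypes} (since the variables in $\dom(\Gamma)\setminus\sdom(\Gamma)$ are those carrying a pure type, for which bilinear substitution is inert modulo the weight factor $\varpi$). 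The second obligation is the real content.

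For the semantic obligation, I would pick an arbitrary $\sigma\in\sem{\Gamma}$ and reduce the problem to evaluating $\vec{t}\<\sigma\>$. The main toolkit is: the linearity of evaluation (Proposition~\ref{l:LinEval}), the linear extension of syntactic constructs (Section~\ref{ss:ExtLin}), the substitution lemma for pure substitution together with its bilinear extension (Section~\ref{ss:BilinSubst} and Lemma~\ref{l:BilinSubstProp} in the appendix), and the fact that each destructor $t\evalat\vec{t}'$ fires exactly at basis vectors of the correct shape (Table~\ref{tab:AtomicEval}). Concretely:
\begin{itemize}
\item For the variable rule $\TYP{x:A}{x}{A}$, the only $\sigma\in\sem{A\textnormal{'s context}}$ sends $x$ to some $\vec{v}\in\sem{A}$, and $x\<x:=\vec{v}\,\>=\vec{v}\real A$.
\item For each introduction rule ($\Void$, $\Pair{\cdot}{\cdot}$, $\Inl{\cdot}$, $\Inr{\cdot}$), apply the substitution distributing over the constructor and conclude by the clauses of Table~\ref{tab:SemTypes}.
\item For each destructor (application, sequence, let on pairs, match), unfold the semantics of the head type to obtain a vector of the expected shape, then use the corresponding atomic reduction together with Proposition~\ref{l:LinEval} to bring $\vec{t}\<\sigma\>$ to a realizer of~$A$.
\item For subtyping, use Lemma~\ref{l:judgements}: $\SUB{A}{B}$ gives $\semr{A}\subseteq\semr{B}$, so a realizer of~$A$ is a realizer of~$B$.
\item For the $\sharp$-introduction and $\flat$-rules, use the definitions $\sem{\sharp A}=\Span(\sem{A})\cap\Sph$ and $\sem{\flat A}=\flat\sem{A}$, together with the linearity of evaluation to handle formal linear combinations of realizers.
\end{itemize}

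The main obstacle is the rule introducing the unitary arrow, $\rnam{UnitLam}$, which forms a vector $\sum_i\alpha_i\cdot\Lam{x}{\vec{t}_i}$ of type $A\Arr B$ from premises about the $\vec{t}_i$. Here three difficulties stack: first, I must verify the norm condition $\sum_i|\alpha_i|^2=1$ required by $\sem{A\Arr B}\subseteq\Sph$, which must come from a hypothesis built into the rule; second, I must show that for every $\vec{v}\in\sem{A}$ the combined distribution $\sum_i\alpha_i\cdot\vec{t}_i\<x:=\vec{v}\,\>$ realizes~$B$, which requires the premises to hold uniformly in $\sigma$ and $\vec{v}$; and third, I need the side condition $x\in\FV(\vec{t}_i)$ (enforced by the $\sdom$ constraint in the premise), because otherwise bilinear substitution would multiply the distribution by $\varpi(\vec{v}\,)$ which need not equal~$1$ on the unit sphere, breaking the realizability computation. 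This is exactly the point flagged in Section~\ref{ss:TypingJudgment} for the definition of $\sdom(\Gamma)$, and keeping that invariant intact through each rule is the delicate part of the proof. The corresponding rule for the pure arrow $A\arr B$ is by contrast straightforward, since pure abstractions do not carry the norm or weight constraints.

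Once all rules are checked individually, the proposition follows, and the remaining typed examples (Booleans, Hadamard, the representations of unitary operators via Theorem~\ref{t:ReprUnitary} and Corollary~\ref{c:ReprUnitary}) can be derived using the validated rules.
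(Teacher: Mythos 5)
Your overall plan — rule-by-rule verification, splitting each typing judgment into the free-variable bookkeeping condition and the semantic obligation, then discharging the latter with linearity of evaluation, the bilinear-substitution identities, and the atomic reduction rules — is exactly the paper's strategy, and your treatment of the axiom, constructor, pure-destructor, and subtyping cases would go through. But there are two gaps. The larger one is that your plan has no provision for the rule $\rnam{LetTens}$, which is where most of the technical work of the proposition actually lives. There the head $\vec{t}$ has type $A\otimes B=\sharp(A\times B)$, so $\vec{t}\<\sigma_\Gamma\>$ evaluates to a genuine superposition $\sum_i\alpha_i\cdot\Pair{\vec{u}_i}{\vec{v}_i}$ of norm $1$; after the let fires summand by summand one obtains $\sum_i\alpha_i\cdot\vec{w}_i$ with each $\vec{w}_i\in\sem{\sharp C}$, and the obligation is that this combination again has norm $1$. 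That is not automatic: it requires showing $\scal{\vec{w}_i}{\vec{w}_j}=\scal{\vec{u}_i}{\vec{u}_j}\scal{\vec{v}_i}{\vec{v}_j}$, i.e.\ that evaluation of the open body preserves inner products of the substituted values. The paper proves this through a chain of lemmas (renormalization of combinations of unit vectors, the polarization identity, and Lemmas~\ref{l:EvalOrth}--\ref{l:EvalOrth2}), and nothing in your "unfold the head type, fire the destructor, conclude by linearity" template produces this; a plan that treats $\rnam{LetTens}$ like $\rnam{LetPair}$ will get stuck at the norm computation.

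The second, smaller issue is in your analysis of $\rnam{UnitLam}$. The conclusion types a single abstraction $\Lam{x}{\vec{t}}$; the distribution $\sum_{i}\alpha_i\cdot\Lam{x}{\vec{t}_i}$ only appears after applying $\sigma\in\sem{\Gamma}$, because the context values distribute bilinearly. The delicate point is therefore not the occurrence of the bound variable $x$ in the body, but the occurrence of the \emph{context} variables of $\sdom(\Gamma)$: one must show that the abstractions $\Lam{x}{\vec{t}_i}$ are pairwise distinct pure values, for otherwise coefficients collapse additively and $\bigl\|\sum_i\alpha_i\cdot\Lam{x}{\vec{t}_i}\bigr\|$ is no longer $\sqrt{\sum_i|\alpha_i|^2}=1$. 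The paper gets distinctness by locating the first occurrence of a context variable $x_k\in\sdom(\Gamma)\subseteq\FV(\vec{t}\,)$ in the raw body and observing that distinct multi-indices substitute distinct pure values there (this is also where the shallowness of $\equiv$ is used). Your $\varpi(\vec{w})$ observation is the degenerate instance of this phenomenon, so you have the right intuition, but as stated your argument is aimed at the wrong variable and omits the distinctness step that actually closes the norm condition.
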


\begin{remark}
  In the rule $\rnam{PureLam}$, the notation
  $\EQV{\flat{\Gamma}}{\Gamma}$ refers to the conjunction of
  premises
  $\EQV{\flat{A_1}}{A_1}~\&~\cdots~\&~\EQV{\flat{A_{\ell}}}{A_{\ell}}$,
  where $A_1,\ldots,A_{\ell}$ are the types occurring in the
  context~$\Gamma$.
\end{remark}

\begin{table*}
  $$\begin{array}{c}
    \infer[\snam{Axiom}]{\TYP{x:A}{x}{A}}{}\qquad
    \infer[\snam{Sub}]{\TYP{\Gamma}{\vec{t}}{A'}}{
      \TYP{\Gamma}{\vec{t}}{A} & \SUB{A}{A'}
    } \qquad
    \infer[\snam{PureLam}]{
      \TYP{\Gamma}{\Lam{x}{\vec{t}}}{A\arr B}
    }{\TYP{\Gamma,x:A}{\vec{t}}{B}&
      \EQV{\flat{\Gamma}}{\Gamma}}\qquad
    \infer[\snam{UnitLam}]{
      \TYP{\Gamma}{\Lam{x}{\vec{t}}}{A\Arr B}
    }{\TYP{\Gamma,x:A}{\vec{t}}{B}}\\
    \noalign{\medskip}
    \infer[\snam{App}]{\TYP{\Gamma,\Delta}{\vec{s}\,\vec{t}}{B}}{
      \TYP{\Gamma}{\vec{s}}{A\Arr B} & \TYP{\Delta}{\vec{t}}{A}
    } \qquad
    \infer[\snam{Void}]{\TYP{}{\Void}{\Unit}}{}\qquad
    \infer[\snam{Seq}]{\TYP{\Gamma,\Delta}{\vec{t};\vec{s}}{A}}{
      \TYP{\Gamma}{\vec{t}}{\Unit}&
      \TYP{\Delta}{\vec{s}}{A}
    } \qquad
    \infer[\snam{SeqSharp}]{\TYP{\Gamma,\Delta}{\vec{t};\vec{s}}{\sharp{A}}}{
      \TYP{\Gamma}{\vec{t}}{\sharp\Unit}&
      \TYP{\Delta}{\vec{s}}{\sharp{A}}
    } \\
    \noalign{\medskip}
    \infer[\snam{Pair}]{\TYP{\Gamma,\Delta}
      {\Pair{\vec{v}}{\vec{w}}}{A\times B}}{
      \TYP{\Gamma}{\vec{v}}{A}&\TYP{\Delta}{\vec{w}}{B}
    }\qquad
    \infer[\snam{LetPair}]{\TYP{\Gamma,\Delta}
      {\LetP{x}{y}{\vec{t}}{\vec{s}}}{C}}{
      \TYP{\Gamma}{\vec{t}}{A\times B}&
      \TYP{\Delta,x:A,y:B}{\vec{s}}{C}
    } \qquad
    \infer[\snam{LetTens}]{\TYP{\Gamma,\Delta}
      {\LetP{x}{y}{\vec{t}}{\vec{s}}}{\sharp{C}}}{
      \TYP{\Gamma}{\vec{t}}{A\otimes B}&
      \TYP{\Delta,x:\sharp{A},y:\sharp{B}}{\vec{s}}{\sharp{C}}
    } \\
    \noalign{\medskip}
    \infer[\snam{InL}]{\TYP{\Gamma}{\Inl{\vec{v}}}{A+B}}{
      \TYP{\Gamma}{\vec{v}}{A}
    }\qquad
    \infer[\snam{InR}]{\TYP{\Gamma}{\Inr{\vec{w}}}{A+B}}{
      \TYP{\Gamma}{\vec{w}}{B}
    }\qquad
    \infer[\snam{PureMatch}]{\TYP{\Gamma,\Delta}
      {\Match{\vec{t}}{x_1}{\vec{s}_1}{x_2}{\vec{s}_2}}{C}}{
      \TYP{\Gamma}{\vec{t}}{A+B}&
      \TYP{\Delta,x_1:A}{\vec{s}_1}{C}&
      \TYP{\Delta,x_2:B}{\vec{s}_2}{C}
    } \\
    \noalign{\medskip}
    \infer[\snam{Weak}]{\TYP{\Gamma,x:A}{\vec{t}}{B}}{
      \TYP{\Gamma}{\vec{t}}{B}&\EQV{\flat{A}}{A}
    }\qquad
    \infer[\snam{Contr}]{\TYP{\Gamma,x:A}{\vec{t}\,[y:=x]}{B}}{
      \TYP{\Gamma,x:A,y:A}{\vec{t}}{B}&
      \EQV{\flat{A}}{A}
    }
  \end{array}$$
  \caption{Some valid typing rules}
  \label{tab:TypingRules}
\end{table*}

\begin{remark}\label{r:ShallowCongruence}
  The proof of validity of the typing rule $\rnam{UnitLam}$ crucially
  relies on the fact that the body $\vec{t}$ of the abstraction
  $\Lam{x}{\vec{t}}$ is a \emph{raw} distribution (i.e.\ an expression
  that is considered only up to $\alpha$-conversion, and not $\equiv$). 
  This is the reason why we endowed term distributions
  (Section~\ref{ss:Distributions}) with the congruence $\equiv$ that is
  shallow, in the sense that it does not propagate in the bodies of
  abstractions, in the bodies of let-constructs, or in the branches of
  match-constructs.
\end{remark}

\subsection{Simply-typed lambda-calculus}
\label{s:simply-typed}

  Recall that simple types (Section~\ref{sss:PureSimpleTypes}) are
  generated from the following sub-grammar of the grammar of
  Table~\ref{tab:SynTypes}:
  $$A,B::=\Unit\mid A+B\mid A\times B
  \mid A\arr B
  $$
  By construction, all simple types~$A$ are pure types, in the sense
  that $\EQV{\flat{A}}{A}$.
  Since pure types allow the use of weakening and contraction, it is
  a straightforward exercise to check that any typing judgment\ \
  $\TYP{\Gamma}{t}{A}$\ \ that is derivable in the simply-typed
  $\lambda$-calculus with sums and products is also derivable from
  the typing rules of Table~\ref{tab:TypingRules}.

\subsection{Typing Church numerals}
  Let us recall that Church numerals $\bar{n}$ are defined for all $n\in\N$ by
  $\bar{n}~:=~\Lam{f}{\Lam{x}{f^n\,x}}$.
  From the typing rules of Table~\ref{tab:TypingRules}, we easily
  derive that
  $\TYP{}{\bar{n}~}{~(\Bool\arr\Bool)\arr(\Bool\arr\Bool)}$
  (by simple typing) and even that
  $\TYP{}{\bar{n}~}{~(\sharp\Bool\arr\sharp\Bool)\arr
    (\sharp\Bool\arr\sharp\Bool)}$,
  using the fact that $\sharp\Bool\arr\sharp\Bool$ is a pure type,
  that is subject to arbitrary weakenings and contractions.
  On the other hand, since we cannot use weakening or contraction for
  the non pure type $\sharp\Bool\Arr\sharp\Bool$, we cannot derive the
  judgments
  $\TYP{}{\bar{n}~}{~(\sharp\Bool\Arr\sharp\Bool)\arr
    (\sharp\Bool\Arr\sharp\Bool)}$ and
  $\TYP{}{\bar{n}~}{~(\sharp\Bool\Arr\sharp\Bool)\Arr
  (\sharp\Bool\Arr\sharp\Bool)}$ but for $n=1$.
  (cf.~Fact~\ref{f:church} in Appendix~\ref{a:church}).

\subsection{Orthogonality as a Typing Rule}
\label{ss:ortho}

The typing rules of Table~\ref{tab:TypingRules} allow us to derive
that the terms $I:=\Lam{x}{x}$, $K_{\tt}:=\Lam{x}{\tt}$,
$K_{\ff}:=\Lam{x}{\ff}$ and $N:=\Lam{x}{\If{x}{\ff}{\tt}}$ have type
$\Bool\arr\Bool$; they even allow us to derive that $I$ has type
$\sharp\Bool\arr\sharp\Bool$, but they do not allow us (yet) to derive
that the Boolean negation $N$ or the Hadamard $H$ have type
$\sharp\Bool\arr\sharp\Bool$.
For that, we need to introduce a new form of judgment:
\emph{orthogonality judgments}.

\begin{definition}[Orthogonality judgments]\label{d:orthojud}
  An \emph{orthogonality judgment} is a sextuple
  $$\ORTH{\Gamma}{\Delta_1}{\vec{t}_1}{\Delta_2}{\vec{t}_2}{A}$$
  formed by three typing contexts~$\Gamma$, $\Delta_1$ and~$\Delta_2$,
  two (possibly open) term distributions~$\vec{t}_1$, $\vec{t}_2$ and
  a type~$A$. 
  This judgment is valid when:
  \begin{enumerate}
  \item both judgments $\TYP{\Gamma,\Delta_1}{\vec{t}_1}{A}$ and
    $\TYP{\Gamma,\Delta_2}{\vec{t}_2}{A}$ are valid; and
  \item for all $\sigma\in\sem{\Gamma}$,
    $\sigma_1\in\sem{\Delta_1}$ and
    $\sigma_2\in\sem{\Delta_2}$, if
    $\vec{t}_1\<\sigma,\sigma_1\>\eval\vec{v}_1$ and
    $\vec{t}_2\<\sigma,\sigma_2\>\eval\vec{v}_2$,\\
    then $\scal{\vec{v}_1}{\vec{v}_2}=0$.
  \end{enumerate}
  When both contexts $\Delta_1$ and $\Delta_2$ are empty, the
  orthogonality judgment
  $\ORTH{\Gamma}{\Delta_1}{\vec{t}_1}{\Delta_2}{\vec{t}_2}{A}$ is
  simply written $\SORTH{\Gamma}{\vec{t}_1}{\vec{t}_2}{A}$.
\end{definition}

With this definition, we can prove a new typing rule, which can be used to type Hadamard:
\begin{proposition}\label{p:orthotyp}
  The rule \rnam{UnitaryMatch} given below is valid.

  \hspace{-5mm}
  \scalebox{0.85}{
    $
  \infer
  {\TYP{\Gamma,\Delta}
    {\Match{\vec{t}}{x_1}{\vec{s}_1}{x_2}{\vec{s}_2}}{\sharp C}}
  {
    \TYP{\Gamma}{\vec{t}}{A_1\oplus A_2}&
    \ORTH{\Delta}{x_1:\sharp A_1}{\vec{s}_1}{x_2:\sharp A_2}{\vec{s}_2}{\sharp C}
  }$}
  \qed
\end{proposition}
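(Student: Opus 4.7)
Fix an arbitrary $\rho = \sigma \cup \tau \in \sem{\Gamma, \Delta}$, with $\sigma \in \sem{\Gamma}$ and $\tau \in \sem{\Delta}$. The first premise yields a value $\vec{v} \in \sem{\sharp(A_1+A_2)} = \Span(\sem{A_1+A_2}) \cap \Sph$ such that $\vec{t}\<\sigma\> \eval \vec{v}$. Since $\Inl$ and $\Inr$ commute with sums and scalar multiplication, have orthogonal ranges, and preserve norms, I decompose $\vec{v} = \Inl{\vec{v}_1} + \Inr{\vec{v}_2}$ with $\vec{v}_i \in \Span(\sem{A_i})$ and $\lambda^2 + \mu^2 = 1$, where $\lambda := \|\vec{v}_1\|$ and $\mu := \|\vec{v}_2\|$. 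The linear extension of the match construct, together with the linearity of evaluation (Proposition~\ref{l:LinEval}) and the atomic match rule applied to each $\Inl$ or $\Inr$ basis value in the domain of $\vec{v}$, lets me reduce the whole match as
\[
    \Match{\vec{t}}{x_1}{\vec{s}_1}{x_2}{\vec{s}_2}\<\rho\> \; \eval \; \vec{s}_1\<\tau\>\<x_1 := \vec{v}_1\> + \vec{s}_2\<\tau\>\<x_2 := \vec{v}_2\>.
\]

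In the main case, where $\lambda, \mu > 0$, I set $\hat{\vec{v}}_i := \vec{v}_i/\|\vec{v}_i\|$, which lies in $\sem{\sharp A_i}$. Bilinearity of substitution gives $\vec{s}_i\<\tau\>\<x_i := \vec{v}_i\> = \|\vec{v}_i\| \cdot \vec{s}_i\<\tau\>\<x_i := \hat{\vec{v}}_i\>$. The typing clauses packaged inside the orthogonality judgment imply that each $\vec{s}_i\<\tau\>\<x_i := \hat{\vec{v}}_i\>$ evaluates to some $\vec{w}_i \in \sem{\sharp C}$, while the orthogonality clause itself provides $\scal{\vec{w}_1}{\vec{w}_2} = 0$. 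Hence the match evaluates to $\lambda \vec{w}_1 + \mu \vec{w}_2$. By Pythagoras $\|\lambda \vec{w}_1 + \mu \vec{w}_2\|^2 = \lambda^2 + \mu^2 = 1$, and as a linear combination of elements of $\Span(\sem{C})$ it remains in $\Span(\sem{C})$, so it belongs to $\sem{\sharp C}$ as required.

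The principal technical obstacle is the degenerate subcase where one of $\lambda, \mu$ vanishes, say $\lambda = 0$. Then the canonical form of $\vec{v}_1$ reads $\sum_i 0 \cdot v_{1,i}$ with possibly nonempty domain, so $\vec{v}_1 \notin \sem{\sharp A_1}$ and the typing hypothesis cannot be invoked directly on the first branch. The resolution is that bilinearity forces $\vec{s}_1\<\tau\>\<x_1 := \vec{v}_1\>$ to be a distribution all of whose coefficients equal zero, hence of zero norm and orthogonal to every other vector; meanwhile the second branch, with $\mu = 1$ and $\vec{v}_2 \in \sem{\sharp A_2}$, evaluates to some $\vec{w}_2 \in \sem{\sharp C}$, and the sum still has norm $1$ and lies in $\Span(\sem{C})$, hence in $\sem{\sharp C}$. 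Finally, the variable condition $\sdom(\Gamma, \Delta) \subseteq \FV(\Match{\vec{t}}{x_1}{\vec{s}_1}{x_2}{\vec{s}_2}) \subseteq \dom(\Gamma, \Delta)$ follows routinely from the corresponding conditions on $\vec{t}$, $\vec{s}_1$, and $\vec{s}_2$ supplied by the two premises.
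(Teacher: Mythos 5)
Your overall strategy is the same as the paper's: decompose the value of the scrutinee into its $\Inl{\cdot}$ and $\Inr{\cdot}$ components, push the match through by linearity, feed each branch its (normalized) component using the two clauses of the orthogonality judgment, and conclude with a Pythagoras computation using $\scal{\vec{w}_1}{\vec{w}_2}=0$. Your main case ($\lambda,\mu>0$) is correct and in fact spelled out more explicitly than in the paper's own write-up.

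The degenerate case, however, contains a genuine gap. When $\lambda=0$ with $\dom(\vec{v}_1)\neq\varnothing$, you argue that $\vec{s}_1\<\tau\>\<x_1:=\vec{v}_1\>=\sum_i 0\cdot\vec{s}_1\<\tau\>[x_1:=v_{1,i}]$ has all-zero coefficients and is therefore ``of zero norm and orthogonal to every other vector''. But the norm and the inner product are only defined on \emph{value} distributions, and to conclude that the match realizes $\sharp C$ you must first show that this term distribution \emph{evaluates to} a value distribution. Having all-zero coefficients does not guarantee termination: zero-coefficient summands are not erased by the congruence and must still be reduced to normal form (compare the non-terminating $0\cdot Y_t$ in Section~\ref{ss:OneStepEval}). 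The pure values $v_{1,i}\in\dom(\vec{v}_1)$ belong only to $\flat\sem{A_1}$, which is in general \emph{not} contained in $\sem{\sharp A_1}$ (Remark~\ref{rmk:Idempotent}), so the typing clause packaged in the orthogonality premise gives no termination guarantee for $\vec{s}_1\<\tau\>[x_1:=v_{1,i}]$. The repair is precisely the paper's normalization lemma (Lemma~\ref{l:CombNormalize}, iterated over the summands): a zero-norm element of $\Span(\sem{A_1})$ with nonempty domain can be rewritten as $0\cdot\vec{u}$ for a genuine unit vector $\vec{u}\in\sem{\sharp A_1}$ with the same domain, whence $\vec{s}_1\<\tau\>\<x_1:=\vec{v}_1\>=0\cdot\vec{s}_1\<\tau\>\<x_1:=\vec{u}\>\eval 0\cdot\vec{w}_1$ for some $\vec{w}_1\in\sem{\sharp C}$, and your norm computation then goes through unchanged.
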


\begin{example}\label{ex:tt-perp-ff}
  We have $\SORTH{}{\tt}{\ff}{\B}$.
  Consider the terms $\ket +=\frac 1{\sqrt 2}\cdot\tt+\frac 1{\sqrt 2}\cdot\ff$ and $\ket -=\frac 1{\sqrt 2}\cdot\tt+(-\frac 1{\sqrt 2})\cdot\ff$. Then we can prove that
  $
    \SORTH{}{\ket +}{\ket -}{\sharp\B}.
  $
  \label{ex:+-perp--}\label{ex:typingHad}

  We can also prove that
  \[
    \ORTH{}{x:\sharp\Unit}{x;\ket +}{y:\sharp\Unit}{y;\ket -}{\sharp\B}
  \]
  Using this fact, 
  and the rule \rnam{UnitaryMatch} from Proposition~\ref{p:orthotyp}, we can
  derive the type $\sharp\B\arr\sharp\B$ for the Hadamard gate $H$ defined in Example~\ref{ex:H}.
  Recall that $\sharp\B = \sharp(\Unit+\Unit)=\Unit\oplus\Unit$.
\end{example}


\section{Unifying Model of Classical and Quantum Control}
\label{s:lambdaq}

We showed in Section~\ref{s:simply-typed} that the unitary linear
algebraic lambda calculus strictly contains the simply-typed lambda
calculus. With Theorem~\ref{t:ReprUnitary} and
Corollary~\ref{c:ReprUnitary} we expressed how the ``only'' valid
functions were unitary maps, and in Section~\ref{ss:ortho} we
hinted at how to type orthogonality with the model.
This section is devoted to showing how the model can be used as a
model for quantum computation, with the model providing an
\emph{operational semantics} to a high-level operation on circuits: the
control of a circuit.

\subsection{A Quantum Lambda-Calculus}

The language we consider, called $\lambda_Q$, is a circuit-description
language similar to QWIRE~\cite{PaykinRandZdacewicPOPL17} or
Proto-Quipper~\cite{protoquipper}.
Formally, the types of $\lambda_Q$ are defined from the following grammar:
\begin{align*}
  A,B &::= \Unit \mid A\to B \mid A\times B \mid \s{bit} \mid A_Q\multimap B_Q \\
  A_Q,B_Q &::= \s{qbit}\mid A_Q\otimes B_Q
\end{align*}
The types denoted by $A$, $B$ are the usual simple types, which we
call \emph{classical types}.
(Note that they contain a type $\s{bit}$ of classical bits, that
corresponds to the type $\Unit+\Unit$ in our model.)
The types denoted by $A_Q$, $B_Q$ are the \emph{quantum types}; they
basically consist in tensor products of the type $\s{qbit}$ of quantum
bits.
As the former types are duplicable while the latter are
non-duplicable, we define a special (classical) function-type
$A_Q\multimap B_Q$ between quantum types.

The term syntax for $\lambda_Q$ is defined from the following grammar:
\begin{align*}
  t,r,s &::= x \mid \Void \mid \lambda x.t\mid t\,r
  \mid \Pair tr \mid \pi_1(t) \mid \pi_2(t)\\
  &\mid
    \ttrue \mid \ffalse \mid \If{t}{r}{s}\\
  & \mid t\otimes r \mid \Let{x\otimes y}tr\\
  &  \mid \s{new}(t)
    \mid  U(t) \mid \lambda^Qx.t \mid  t\MVAt r
\end{align*}
The first two lines of the definition describe the usual constructions
of the simply-typed lambda calculus with (ordinary) pairs.
The last two lines adds the quantum specificities: a tensor for dealing
with systems of several quantum bits (together with the corresponding
destructor), an operator $\s{new}$ to create a new quantum bit, and a
family of operators $U(t)$ to apply a given unitary operator on~$t$.
We also provide a special lambda abstraction $\lambda^Q$ to make a
closure out of a quantum computation, as well as a special application
to apply such a closure.
Note that for simplicity, we only consider unary quantum
operators---that is: operators on the type $\s{qbit}$---, but this can
be easily extended to quantum operators acting on tensor products of
the form $\s{qbit}^{{\otimes}n}$.
Also note that we do not consider measurements, for our realizability
model does not natively support it.

The language $\lambda_Q$ features two kinds of typing judgments:
a \emph{classical judgment} $\Delta\vdash_C t:A$, where~$\Delta$ is a
typing context of classical types and where~$A$ is a classical type,
and a \emph{quantum judgment} $\Delta|\Gamma\vdash_Q t:A_Q$,
where~$\Delta$ is a typing context of classical types, $\Gamma$ a
typing context of quantum types, and where $A_Q$ is a quantum type.
An empty typing context is always denoted by~$\emptyset$.
As usual, we write $\Gamma,\Delta$ for $\Gamma\cup\Delta$ (when
$\Gamma\cap\Delta=\emptyset$), and we use the notation
$\FV(t):\s{qbit}$ to represent the quantum context
$x_1:\s{qbit},\ldots,x_n:\s{qbit}$ made up of the finite set
$\FV(t)=\{x_1,\ldots,x_n\}$.

The typing rules for classical judgements are standard and are given in the Appendix~\ref{t:TypingQStandard}. 
Rules for quantum judgements are given in the Table~\ref{t:TypingQ}. The last three rules allows to navigate between
classical and quantum judgments.
\begin{table}
  \centering
$$\begin{array}{c}
  \infer{\Delta|x:A_Q\vdash_Q x:A_Q}{}\qquad
  \infer{\Delta|\Gamma_1,\Gamma_2\vdash_Q s\otimes t:A_Q\otimes B_Q}{
    \Delta|\Gamma_1\vdash_Q s : A_Q&
    \Delta|\Gamma_2\vdash_Q t : B_Q
  }\\
  \noalign{\medskip}
  \infer{\Delta|\Gamma\vdash_Q U(t):\s{qbit}}{
    \Delta|\Gamma\vdash_Q t:\s{qbit}}\\
  \noalign{\medskip}
  \infer{\Delta|\Gamma_1,\Gamma_2\vdash_Q\Let{x\otimes y}st:C_Q}{
    \Delta|\Gamma_1\vdash_Q s : A_Q\otimes B_Q&
    \Delta|\Gamma_2,x:A_Q,y:B_Q\vdash_Q t:C_Q
  }\\
  \noalign{\medskip}
  \infer{\Delta|\emptyset\vdash_Q\s{new}(t):\s{qbit}}{\Delta\vdash_C
    t:\s{bit}}
  \qquad
  \infer{\Delta\vdash_C\lambda^Q x.t:A_Q\multimap B_Q}{
    \Delta|x:A_Q\vdash_Q t:B_Q
  }
  \\
  \noalign{\medskip}
  \infer{\Delta|\Gamma\vdash_Q s\MVAt t:B_Q}{
    \Delta\vdash_C s:A_Q\multimap B_Q
    &
    \Delta|\Gamma\vdash_Q t:A_Q
  }
\end{array}$$
  \caption{Typing rules for $\lambda_Q$}
  \label{t:TypingQ}
\end{table}
Note that in the above rules, classical variables (declared in the
$\Delta$'s) can be freely duplicated whereas quantum variables
(declared in the $\Gamma$'s) cannot.
Also note that in $\lambda_Q$, pure quantum computations are
essentially first-order.

The first of the last three rules makes a qbit out of a bit, the second rule makes a
closure out of a quantum computation, while the third rule opens a
closure  containing a quantum computation.
These last two operations give a hint of higher-order to quantum
computations in $\lambda_Q$.

A \emph{value} is a term belonging to the grammar:
\[
  u,v~{}~::=~{}~x\mid \lambda x.t\mid \lambda^Q x.t\mid \Pair uv\mid \Void\mid  u\otimes v\,.
\]

The language $\lambda_Q$ is equipped with the standard operational
semantics presented in~\cite{SelingerValironMSCS06}: the quantum
environment is separated from the term, in the spirit of the QRAM
model of \cite{KnillTR96}.
Formally, a \emph{program} is defined as a triplet $[Q,L,t]$ where~$t$
is a term, $L$ is a bijection from $FV(t)$ to $\{1,\dots,n\}$ and~$Q$
is an $n$-quantum bit system: a normalized vector in the
$2^n$-dimensional vector space $(\Cx^{2})^{\otimes n}$.
We say that a program $[Q,L,t]$ is well-typed of type
$A_Q$ when the judgment $\emptyset|\FV(t):\s{qbit}\vdash_Q t:A_Q$
is derivable.
In particular, well-typed programs correspond to \emph{quantum}
typing judgements, closed with respect to classically-typed
term-variables.

The operational semantics is call-by-value and relies on applicative
contexts, that are defined as follows:
\begin{align*}
  C\{\cdot\} &::= \{\cdot\} \mid  C\{\cdot\}u \mid  rC\{\cdot\} \mid \Pair{C\{\cdot\}}{r} \mid \Pair u{C\{\cdot\}} \\
             &\mid \pi_1(C\{\cdot\}) \mid \pi_2(C\{\cdot\})\mid \If{C\{\cdot\}}tr \mid  {C\{\cdot\}}\otimes{r} \\
  &\mid u\otimes{C\{\cdot\}}\mid \Let{x\otimes y}{C\{\cdot\}}t \mid\s{new}(C\{\cdot\}) \\
    &\mid  U(C\{\cdot\}) \mid C\{\cdot\}\MVAt u \mid  r\MVAt C\{\cdot\}
\end{align*}
The operational semantics of the calculus is formally defined from the
rules given in Table~\ref{tab:opsemQ}.
\begin{table}
\begin{align*}
  [Q,L,C\{(\lambda x.t)u\}] &\to [Q,L,C\{t[x:=u]\}]\\
  [Q,L,C\{(\lambda^Q x.t)\MVAt u\}] &\to [Q,L,C\{t[x:=u]\}]\\
  [Q,L,C\{\pi_1\Pair uv\}] &\to [Q,L,C\{u\}]\\
  [Q,L,C\{\pi_2\Pair uv\}] &\to [Q,L,C\{v\}]\\
  [Q,L,C\{\If\ttrue tr\}] &\to [Q,L,C\{t\}]\\
  [Q,L,C\{\If\ffalse tr\}] &\to [Q,L,C\{r\}]\\
  [Q,L,C\{\Let{x\,{\otimes}\,y}{u\,{\otimes}\,v}s\}] &\to [Q,L,C\{s[x:=u,y:=v]\}]\\
  [Q,L,C\{\s{new}(\ttrue)\}] &\to[Q\otimes\ket 0,L\cup\{x\mapsto n{+}1\},C\{x\}]\\
  [Q,L,C\{\s{new}(\ffalse)\}] &\to[Q\otimes\ket 1,L\cup\{x\mapsto n{+}1\},C\{x\}]\\
  [Q,L,C\{U(x)\}] &\to [Q',L,C\{x\}]\\
  \omit\rlap{\textrm{where }Q'\textrm{ is obtained by applying
  }U\textrm{ to the quantum bit }L(x)}
\end{align*}
\caption{Operational sematics of $\lambda_Q$}
\label{tab:opsemQ}
\end{table}
The language $\lambda_Q$ satisfies the usual safety properties,
proved as in~\cite{SelingerValironMSCS06}.
\begin{theorem}[Safety properties]
  If $[Q,L,t]:A_Q$ and $[Q,L,t]\to[Q',L',r]$, then
  $[Q',L',r]:A_Q$. Moreover, whenever a program $[Q,L,t]$ is well-typed,
  either $t$ is already a value or it reduces to some other program.
  \qed
\end{theorem}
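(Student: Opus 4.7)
The plan is to follow the standard subject reduction plus progress template used for the linear quantum lambda calculus of Selinger and Valiron, adapting it to the present syntax with the classical/quantum split of judgments. I would first establish the usual structural ingredients: a weakening lemma for the classical context $\Delta$ (which is stable under duplication since its types are duplicable), a substitution lemma stating that if $\Delta|\Gamma_1\vdash_Q u:A_Q$ and $\Delta|\Gamma_2,x:A_Q\vdash_Q t:B_Q$ with $\Gamma_1\cap\Gamma_2=\emptyset$ then $\Delta|\Gamma_1,\Gamma_2\vdash_Q t[x:=u]:B_Q$, and an analogous substitution lemma for classical variables. I would also record a canonical forms lemma: a quantum value of type $A_Q\otimes B_Q$ is necessarily a pair $u\otimes v$; a classical value of type $A\to B$ is a $\lambda$-abstraction; a classical value of type $A_Q\multimap B_Q$ is a $\lambda^Q$-abstraction; a classical value of type $\s{bit}$ is $\ttrue$ or $\ffalse$; and a quantum value of type $\s{qbit}$ is a variable (since the only way to introduce $\s{qbit}$ is through a variable binding, $\s{new}$ reducing to one, or $U$ preserving one).

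For subject reduction I would proceed by case analysis on the reduction rule, using the fact that reductions happen inside applicative contexts $C\{\cdot\}$ together with a standard context-replacement lemma: if $\Delta|\Gamma\vdash_Q C\{u\}:B_Q$ is derivable, then the derivation decomposes into a typing of $u$ at some type and a typing of $C\{\cdot\}$ as a hole-filler preserving typing, so that substituting any other $u'$ of the same type yields a derivable judgment. The $\beta$-like rules (ordinary application, $\MVAt$, projections, conditionals, and the $\Let{x\otimes y}{u\otimes v}s$ rule) then reduce to applications of the substitution lemmas. The cases for $\s{new}(\ttrue)$ and $\s{new}(\ffalse)$ require verifying that extending $Q$ by $\ket 0$ or $\ket 1$, extending $L$ by $x\mapsto n+1$, and replacing $\s{new}(b)$ by $x$ preserves well-typedness: the hole previously typed $\s{qbit}$ is now filled by the variable $x$, which is typable at $\s{qbit}$ in the enriched quantum context, matching the new $\FV$ of the resulting term. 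For the $U(x)$ rule the term changes from $U(x)$ to $x$, both of type $\s{qbit}$, and the underlying vector $Q'$ is still normalized since $U$ is unitary, so the program remains well-typed.

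For progress I would argue by induction on the derivation of $\emptyset|\FV(t){:}\s{qbit}\vdash_Q t:A_Q$. If $t$ is a value we are done. Otherwise, in each introduction/elimination case either one of the immediate subterms is reducible (so we reduce under an applicative context), or all immediate subterms are values, in which case the canonical forms lemma guarantees a head redex: an application of a $\lambda$ or $\lambda^Q$, a projection of a pair, an $\s{if}$ on a Boolean, a $\Let$ on a tensor pair, a $\s{new}$ applied to a Boolean value, or a $U$ applied to a $\s{qbit}$ variable (which is always present since all free variables are quantum). In the last case the presence of the entry $L(x)$ is ensured by the typing assumption $\FV(t):\s{qbit}$, so the rule for $U(x)$ fires.

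The main obstacle is the $\s{new}$ case of subject reduction: the size of the quantum register grows, a fresh name is added to~$L$, and the free variables of the term change, so one must show carefully that the global invariant $L:\FV(t)\xrightarrow{\sim}\{1,\dots,n\}$ is preserved together with the quantum typing context $\FV(t):\s{qbit}$. Everything else amounts to bookkeeping that is essentially identical to the proof in~\cite{SelingerValironMSCS06}; the unitary preservation of the norm of $Q$ under $U$ and the trivial norm preservation when tensoring by $\ket 0$ or $\ket 1$ handle the quantum side.
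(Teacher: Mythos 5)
The paper gives no proof of this theorem: it states it with a \qed and simply remarks that the safety properties are ``proved as in~\cite{SelingerValironMSCS06}''. Your outline is precisely that standard subject-reduction-plus-progress argument (substitution lemmas, canonical forms, case analysis on the redex inside an applicative context, with the only $\lambda_Q$-specific care needed for $\s{new}$ growing $Q$, $L$ and the quantum context, and for $U(x)$ preserving the norm of $Q$), so it matches the intended proof; the only mismatch is with the paper's own value grammar, which omits $\ttrue$ and $\ffalse$ even though your canonical-forms reading of them as the values of type $\s{bit}$ is clearly what is meant.
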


\subsection{Modelling $\lambda_Q$}

The realizability model based on the unitary linear-algebraic
lambda-calculus is a model for the quantum lambda-calculus
$\lambda_Q$. We write $\trad{t}$ for the translation of a term of
$\lambda_Q$ into its model.  The model can indeed not only accomodate
classical features, using pure terms, but also quantum states, using
linear combinations of terms.

We map $\s{qbit}$ to $\sharp\B$ and $\s{bit}$ to $\B$. This makes
$\s{bit}$ a subtype of $\s{qbit}$: the model captures the intuition
that booleans are ``pure'' quantum bits. Classical arrows $\to$ are
mapped to $\to$ and classical product $\times$ is mapped to the
product of the model, in the spirit of the encoding of simply-typed
lambda-calculus. Finally, the tensor of $\lambda_Q$ is mapped to the
tensor of the model.

The interesting type is $A_Q\multimap B_Q$. We need this type to be
both classical \emph{and} capture the fact that a term of this type is
a pure quantum computation from $A_Q$ to $B_Q$, that is, a unitary
map. The encoding we propose consists in using ``thunk'', as proposed
by~\cite{ingerman1961way}.
Formally, the translation of types is as follows:
$\trad{\s{bit}} =\B$,
$\trad{A\times B} =\trad A\times\trad B$,
$\trad{A\to B} =\trad A\to\trad B$,
$\trad{A_Q\multimap B_Q} =\Unit\to(\trad {A_Q}\Rightarrow\trad {B_Q})$,
$\trad{\s{qbit}} =\sharp\B$,
$\trad{A_Q\otimes B_Q} =\trad {A_Q}\otimes\trad {B_Q} = \sharp(\trad {A_Q}\times\trad
{B_Q})$, and
$\trad{\Unit} =\Unit$.

\begin{lemma}
  \label{l:translationFlat}
  For all classical types $A$, $\flat\trad{A}\simeq\trad{A}$.\qed
\end{lemma}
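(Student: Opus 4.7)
The plan is to prove the statement by structural induction on the classical type~$A$, with all the heavy lifting already done by Lemma~\ref{l:PureTypes}. Recall that the equivalence $\EQV{\flat A}{A}$ is exactly the statement that $A$ is a pure type in the sense of Section~\ref{sss:PureSimpleTypes}, so it suffices to check that $\trad{A}$ is pure for every classical~$A$.

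I would first dispose of the two leaf cases. For $A=\Unit$, $\trad{\Unit}=\Unit$ is pure by the first item of Lemma~\ref{l:PureTypes}. For $A=\s{bit}$, $\trad{\s{bit}}=\B=\Unit+\Unit$ is a sum of pure types, hence pure by items (1) and (2) of Lemma~\ref{l:PureTypes}.

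For the inductive cases, let me treat the three compound formers of the classical grammar. If $A=A_1\times A_2$, the induction hypothesis gives that $\trad{A_1}$ and $\trad{A_2}$ are pure; purity of $\trad{A_1\times A_2}=\trad{A_1}\times\trad{A_2}$ then follows from item (2) of Lemma~\ref{l:PureTypes}. If $A=A_1\to A_2$, then $\trad{A_1\to A_2}=\trad{A_1}\arr\trad{A_2}$ is a pure arrow type, and hence pure directly by item (1) of Lemma~\ref{l:PureTypes} (note we do not even need the induction hypothesis here, since pure-arrow types are \emph{always} pure regardless of their domain and codomain). Finally, if $A=A_Q\multimap B_Q$, then $\trad{A_Q\multimap B_Q}=\Unit\arr(\trad{A_Q}\Arr\trad{B_Q})$ is again a pure arrow type, and the same appeal to item (1) of Lemma~\ref{l:PureTypes} closes the case.

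There is no real obstacle: the subtlety is only notational, namely to remember that $\multimap$ on the $\lambda_Q$ side is translated into a pure arrow $\arr$ wrapping a unitary arrow $\Arr$, so that its translation ends up pure despite the non-pure inner codomain. Everything else is just the routine unfolding of the translation table against Lemma~\ref{l:PureTypes}.
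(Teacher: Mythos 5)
Your proof is correct and follows essentially the same route as the paper: structural induction on the classical type grammar, with each case discharged by the observation that $\Unit$, sums and products of pure types, and pure-arrow types are all pure (the paper invokes named subtyping rules where you invoke Lemma~\ref{l:PureTypes}, but the content is identical). Your remark that the $\arr$ and $\multimap$ cases need no induction hypothesis, since pure arrows are pure regardless of domain and codomain, matches the paper's treatment exactly.
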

\begin{lemma}
  \label{l:translationSharp}
  For all qbit types $A_Q$, $\sharp\trad{A_Q}\simeq\trad{A_Q}$.\qed
\end{lemma}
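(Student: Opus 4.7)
The plan is to proceed by a straightforward structural induction on the qbit type $A_Q$, using as the essential ingredient the idempotence of the $\sharp$ constructor that was established in Remark~\ref{rmk:Idempotent}(1), namely $\EQV{\sharp\sharp A}{\sharp A}$.

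The key observation---which I would make up front---is that by inspection of the two translation clauses for qbit types, $\trad{A_Q}$ is always a $\sharp$-headed type. Indeed, $\trad{\s{qbit}} = \sharp\B$, and $\trad{A_Q\otimes B_Q} = \sharp(\trad{A_Q}\times\trad{B_Q})$. Thus, regardless of the shape of $A_Q$, we can write $\trad{A_Q}\equiv\sharp B$ for some type $B$, and then $\sharp\trad{A_Q} = \sharp\sharp B \simeq \sharp B = \trad{A_Q}$ by idempotence.

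Concretely, the two cases of the induction go as follows. For the base case $A_Q = \s{qbit}$, we have $\sharp\trad{\s{qbit}} = \sharp\sharp\B \simeq \sharp\B = \trad{\s{qbit}}$. For the inductive case $A_Q = B_Q\otimes C_Q$, we have $\sharp\trad{B_Q\otimes C_Q} = \sharp\sharp(\trad{B_Q}\times\trad{C_Q}) \simeq \sharp(\trad{B_Q}\times\trad{C_Q}) = \trad{B_Q\otimes C_Q}$. Notice that the induction hypothesis is not even needed in the inductive step: the outermost $\sharp$ produced by the translation already absorbs the extra $\sharp$ by idempotence.

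There is no real obstacle here: the whole content of the lemma reduces to the syntactic fact that the translation of every qbit type is already $\sharp$-saturated, together with the previously established idempotence of $\sharp$. The companion lemma (Lemma~\ref{l:translationFlat}) is slightly more delicate because $\flat$ only commutes with $+$ and $\times$ (not with $\arr$ or $\sharp$ in general), but the present lemma is essentially immediate once the $\sharp$-headedness of $\trad{A_Q}$ is noted.
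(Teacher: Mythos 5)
Your proof is correct and follows essentially the same route as the paper's: both observe that $\trad{A_Q}$ is always $\sharp$-headed and conclude by the idempotence $\EQV{\sharp\sharp A}{\sharp A}$, case by case on the two qbit type formers. Your additional remark that the induction hypothesis is never actually used is accurate and also holds of the paper's own argument.
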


The classical structural term constructs of $\lambda_Q$ are translated
literally: $\trad x = x$, $\trad\Void =\Void$,
$\trad{\lambda x.t} =\lambda x.\trad t$, $\trad{tr} =\trad t\trad r$,
$\trad{\Pair tr} = \Pair{\trad t}{\trad r}$,
$\trad{\If trs} = \Match {\trad t}{z_1}{z_1;{\trad r}}{z_2}{z_2;{\trad s}}$ with
$z_1$ and $z_2$ fresh variables,
$\trad{\ttrue} = \Inl\Void$, $\trad{\ffalse} = \Inr\Void$,
$\trad{\pi_i(t)} = \Let{\Pair {x_1}{x_2}}{\trad t}{x_i}$.
Finally, the term constructs related to quantum bits make use of the algebraic aspect of
the language. First, $\s{new}$ is simply the identity, since
booleans are subtypes of quantum bits:
$\trad{\s{new}(t)} =\trad{t}$. Then, the translation of the unitary
operators is done with the construction already encountered in
e.g. Example~\ref{ex:H}: $\trad{U(t)} = \bar U\trad t$ where $\bar U$ is
defined as follows. If $U =
(\begin{smallmatrix}
  a & b\\
  c & d
\end{smallmatrix})
$, then
\(
  \bar U = \lambda x.\Match x{x_1}{a\cdot\Inl{x_1}+c\cdot\Inr{x_1}}{x_2}{b\cdot\Inl{x_2}+d\cdot\Inr{x_2}}
  \).
  
Then, the tensor is defined with the pairing construct, which is
distributive:
$\trad{t\otimes r} = \Pair{\trad t}{\trad r}$ and $\trad{\Let{x\otimes
    y}{s}{t}} = \Let{\Pair xy}{\trad s}{\trad t}$.
Finally, the quantum closure and applications are defined by
remembering the use of the thunk: $\trad{\lambda^Qx.t} = \lambda
zx.\trad{t}$, where $z$ is a fresh variable, and $\trad{t\MVAt r} =
(\trad{t}\Void)\trad{r}$: one first ``open'' the thunk before applying
the function.

We also define the translation of typing contexts as follows:
if $\Gamma=\{x_i:A_i\}_i$, we write $\trad\Gamma$ for
$\{x_i:\trad{A_i}\}_i$, and we write $\trad{\Delta|\Gamma}$ for $\trad{\Delta},\trad{\Gamma}$.
Finally, a program is translated as follows:
\(
  \trad{[\sum_{i=1}^m\alpha_i.\ket{y_1^i,\dots,y_n^i},\{x_1:=p(1),\dots,x_n:=p(n)\},t]}
  = \sum_{i=1}^m\alpha_i\cdot\trad t[x_1:=\bar y_{p(1)}^i,\dots,x_n:=\bar y_{p(n)}^i]
\)
where $p$ is a permutation of $n$ and $\bar 0=\tt$ and $\bar 1=\ff$.

\begin{example}
  Let $P$ be the program
  $[\alpha\ket{00}+\beta\ket{11},\{x:=1,y:=2\},(x\otimes y)]$. It consists on
  a pair of the two quantum bits given in the quantum context on the
  first component of the triple. The translation of this program is as
  follows.
  \(
    \trad{P}
   \ =\ \alpha\cdot (x,y)[x:=\tt,y:=\tt]+\beta\cdot(x,y)[x:=\ff,y:=\ff]
   \ =\ \alpha\cdot(\tt,\tt)+\beta\cdot(\ff,\ff)
  \).
\end{example}

The translation is compatible with typing and rewriting. This is to be
put in reflection with Theorem~\ref{t:ReprUnitary}: not only the realizability
model captures unitarity, but it is expressive enough to comprehend
a higher-order quantum programming language.

\begin{theorem}\label{th:translationTypability}
  Translation preserves typeability:
  \begin{enumerate}
  \item
    If $\Gamma\vdash_Q t:A_Q$ then $\trad\Gamma\vdash\trad t:\trad {A_Q}$.
  \item
    If $\Delta|\Gamma\vdash_C t:A$ then $\trad\Delta,\trad\Gamma\vdash\trad t:\trad A$.
  \item
    If $[Q,L,t]:A$ then $\vdash\trad{[Q,L,t]}:\trad A$.\qed
  \end{enumerate}
\end{theorem}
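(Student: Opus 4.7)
The plan is to establish items~(1) and~(2) simultaneously by structural induction on the source typing derivation, and then deduce~(3) by a linearity and normalization argument. Two preliminary facts are invoked throughout: Lemma~\ref{l:translationFlat}, which ensures that every translated classical context is of pure type (enabling weakening/contraction and discharging the side-condition of \rnam{PureLam}), and Lemma~\ref{l:translationSharp}, which lets us freely absorb spurious $\sharp$'s arising from rules such as \rnam{LetTens} or from the subtyping $\SUB{A\times B}{\sharp(A\times B)}$. When a source rule has no direct syntactic counterpart in Table~\ref{tab:TypingRules}, I fall back on the semantic definition of $\TYP{\Gamma}{\vec{t}}{A}$ (Section~\ref{ss:TypingJudgment}).

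The classical structural cases (variable, $\Void$, $\lambda$, application, pairs, projections, Booleans, $\If$) follow from the embedding of the simply-typed $\lambda$-calculus recalled in Section~\ref{s:simply-typed}. On the quantum side, the variable, the tensor introduction (via \rnam{Pair} followed by $\SUB{A\times B}{\sharp(A\times B)}$ and Lemma~\ref{l:translationSharp}), the let-tensor destructor (via \rnam{LetTens} and Lemma~\ref{l:translationSharp}), and $\s{new}$ (via the subtyping $\SUB{\B}{\sharp\B}$) all follow directly from the induction hypothesis.

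The genuinely interesting cases are $U(t)$, $\lambda^Q x.t$ and $t\MVAt r$. For $U(t)$ the goal is to type $\bar{U}:\sharp\B\arr\sharp\B$. Applying \rnam{PureLam} reduces this to typing the match body under $x:\sharp\B=\Unit\oplus\Unit$; the body is then handled by \rnam{UnitaryMatch} (Proposition~\ref{p:orthotyp}), which introduces two fresh branches under $x_i:\sharp\Unit$ and demands an orthogonality judgment whose semantic content, after substituting $x_i\mapsto\alpha_i\cdot\Void$, reduces to $\overline{\alpha_1}\alpha_2(\overline{a}b+\overline{c}d)=0$---precisely the orthogonality of the column pair of $U$, given by the unitarity hypothesis. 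Typing $\bar{U}\,\trad{t}:\sharp\B$ then follows by \rnam{App} after the coercion $\SUB{A\arr B}{A\Arr B}$. For $\lambda^Q x.t$, I apply \rnam{UnitLam} to type $\lambda x.\trad{t}:\trad{A_Q}\Arr\trad{B_Q}$, then weaken by the pure variable $z:\Unit$, and finally apply \rnam{PureLam} to wrap the thunk, the pure-context side-condition being provided by Lemma~\ref{l:translationFlat}. The case $t\MVAt r$ is then two applications of \rnam{App}, the first one forcing the thunk by supplying $\Void:\Unit$.

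For item~(3), decompose $\trad{[Q,L,t]}=\sum_i\alpha_i\cdot\trad{t}[\vec{x}:=\vec{\bar{y}}^i]$, where each component $\bar{y}\in\{\tt,\ff\}$ realizes $\B$ and hence $\sharp\B=\trad{\s{qbit}}$ by subtyping; item~(1) then gives $\trad{t}[\vec{x}:=\vec{\bar{y}}^i]$ as a realizer of $\trad{A_Q}$ for each~$i$. The quantum-state normalization $\sum_i|\alpha_i|^2=1$ puts the combination in the unit sphere, so it realizes $\sharp\trad{A_Q}\simeq\trad{A_Q}$ by Lemma~\ref{l:translationSharp}. I expect the main obstacle to be the $U(t)$ case: one must propagate orthogonality carefully through \rnam{UnitaryMatch} with its $\sharp\Unit$-typed (rather than $\Unit$-typed) branch variables, compute the inner product using the bilinear substitution mechanics of Section~\ref{ss:BilinSubst}, and remain mindful that the body of the $\lambda$-abstraction is treated as a raw distribution (Remark~\ref{r:ShallowCongruence}), so that the orthogonality condition applies to the distributions exactly as written rather than up to $\equiv$.
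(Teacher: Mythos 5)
Your proposal is correct and, apart from one case, follows the paper's own proof: items (1) and (2) by simultaneous induction on the source derivation, with Lemma~\ref{l:translationFlat} supplying purity of translated classical contexts (hence weakening/contraction and the side condition of \rnam{PureLam}), Lemma~\ref{l:translationSharp} absorbing the spurious $\sharp$'s in the tensor cases, the thunked treatment of $\lambda^Q$ and $\MVAt$ via \rnam{UnitLam}/\rnam{PureLam}/\rnam{App}, and item (3) by unfolding the semantic definition of the judgment on each component $\trad{t}[\sigma_i]$ and renormalizing into $\sharp\trad{A_Q}\simeq\trad{A_Q}$. The genuine divergence is the case $U(t)$: you derive $\TYP{}{\bar{U}}{\sharp\B\arr\sharp\B}$ syntactically, via \rnam{PureLam} followed by \rnam{UnitaryMatch} (Proposition~\ref{p:orthotyp}), discharging an orthogonality judgment between the two branches whose content, after the bilinear substitution $x_i:=\alpha_i\cdot\Void$, is $\overline{\alpha_1}\alpha_2(\overline{a}b+\overline{c}d)=0$ --- exactly the route of Example~\ref{ex:typingHad} for Hadamard. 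The paper instead appeals directly to Proposition~\ref{p:CharacSharpBoolEndo}: a closed abstraction is in $\sem{\sharp\B\arr\sharp\B}$ iff its instantiations at $\tt$ and $\ff$ evaluate to orthogonal unit vectors, which for $\bar{U}$ are the columns $a\cdot\tt+c\cdot\ff$ and $b\cdot\tt+d\cdot\ff$. Both arguments verify the same semantic fact; the paper's is shorter because the characterization is already available, while yours stays inside the derived rule system and scales more naturally to gates typed by iterated matches. One shared soft spot worth noting: in item (3) both you and the paper pass from $\sum_i|\alpha_i|^2=1$ and $\trad{t}[\sigma_i]\real\trad{A_Q}$ to $\sum_i\alpha_i\cdot\trad{t}[\sigma_i]\real\sharp\trad{A_Q}$, which tacitly uses that the normal forms of the components form an orthonormal family (a Lemma~\ref{l:EvalOrth}-style argument); this is not a gap specific to your write-up.
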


\begin{theorem}[Adequacy]\label{th:adequacyQ}
  If $[Q,L,t]\to[Q',L',r]$, then $\trad{[Q,L,t]}\eval\trad{[Q',L',r]}$.
  \qed
\end{theorem}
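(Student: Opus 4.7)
The plan is to proceed by case analysis on the single-step reduction rule of Table~\ref{tab:opsemQ} used to derive $[Q,L,t]\to[Q',L',r]$. Writing $Q=\sum_{i=1}^m\alpha_i\cdot\ket{y_1^i,\dots,y_n^i}$ throughout, the very definition of $\trad{[Q,L,t]}$ yields $\sum_i\alpha_i\cdot\trad{t}\,[x_1{:=}\bar y_{p(1)}^i,\dots,x_n{:=}\bar y_{p(n)}^i]$, and Proposition~\ref{l:LinEval} then lets me reduce the proof summand by summand, reassembling via linearity of evaluation.

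Before the case analysis I would prove three preparatory lemmas. First, a \emph{context compatibility lemma}: each applicative context $C\{\cdot\}$ of $\lambda_Q$ translates into an evaluation-context-like construction $\trad{C}\{\cdot\}$ of the target calculus such that $\trad{C\{t\}}$ agrees with plugging $\trad{t}$ into $\trad{C}\{\cdot\}$; some care is needed because pairing and tensoring become bilinear in the target. Second, a \emph{substitution lemma}: for any $\lambda_Q$-term $t$ and $\lambda_Q$-value $u$, $\trad{t[x:=u]} = \trad{t}\,[x:=\trad{u}]$, where the right-hand side is the pure substitution of Section~\ref{ss:PureSubst} (this works because $\trad{u}$ is a pure value of the target). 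Third, a \emph{reduction-under-context lemma}: if $\vec{s}\eval\vec{s}'$ then $\trad{C}\{\vec{s}\}\eval\trad{C}\{\vec{s}'\}$, even when the context sits under the summation coming from the basis-state expansion.

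For the classical reduction rules ($\beta$, projections, \texttt{if}, \texttt{let}-pair, \texttt{let}-tensor) the result then follows from a single atomic-evaluation step of Table~\ref{tab:AtomicEval} combined with the substitution lemma, applied under the translated context. The $\s{new}$ cases are nearly trivial since $\trad{\s{new}(t)} = \trad{t}$ on terms, while on the state side extending $Q$ with $\ket{0}$ (resp.~$\ket{1}$) and $L$ with $x\mapsto n{+}1$ corresponds, up to definition unfolding of the program translation, to substituting $\tt$ (resp.~$\ff$) for $x$ throughout. The quantum application $(\lambda^Q x.t)\MVAt u$ unfolds via the thunk encoding $\lambda zx.\trad{t}$ applied to $(\trad{u}\,\Void)$, and reduces in two classical $\beta$ steps, handled by the first case.

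The main obstacle is the unitary rule $[Q,L,C\{U(x)\}]\to[Q',L,C\{x\}]$. Here I must show that the match inside $\bar U\,\trad{x}$, with $\bar U$ built from the entries $a,b,c,d$ of~$U$, exactly realises the action of $U$ on the $L(x)$-th coordinate of~$Q$. Concretely, expanding $\trad{[Q,L,C\{U(x)\}]}$ into the sum over basis states, each summand places either $\tt$ or $\ff$ into the argument of $\bar U$ according to the bit $y_{L(x)}^i$; the atomic evaluation of the match (rule for $\Inl$ or $\Inr$) then produces the superposition $a\cdot\tt+c\cdot\ff$ or $b\cdot\tt+d\cdot\ff$, which distributes over the remaining substitutions in the summand via the linear extension of the syntactic constructs from Section~\ref{ss:ExtLin}. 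Regrouping the summands that now share the same list of basis kets gives precisely the amplitudes of $U$ applied to coordinate $L(x)$ of $Q$, i.e.\ $\trad{[Q',L,C\{x\}]}$. The delicate part is the index bookkeeping in this regrouping, and the verification that equality of distributions under the shallow congruence~$\equiv$ (not just ``morally'') is reached; this is where working with canonical forms (Proposition~\ref{p:DistrDecomp}) and the domain/weight discipline of Section~\ref{ss:Distributions} does the real work.
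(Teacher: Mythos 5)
Your proposal matches the paper's proof in all essentials: both proceed by case analysis on the reduction rules of Table~\ref{tab:opsemQ}, work summand-by-summand over the basis-state expansion of $Q$ using linearity of evaluation, rely on the substitution lemma $\trad{t[x:=u]}=\trad{t}[x:=\trad{u}]$ together with the commutation of pure substitution with the linearly extended constructs, and settle the unitary case by evaluating the match inside $\bar U$ and regrouping coefficients. If anything you are more explicit than the paper, which dispatches the context cases as ``simple calls to the induction hypothesis'' and only writes out the single-qubit instances of the $\s{new}$ and $U(x)$ rules.
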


\subsection{A Circuit-Description Language}

Quantum algorithms do not only manipulate quantum bits: they also
manipulate \emph{circuits}. A quantum circuit is a sequence of
elementary operations that are buffered before being sent to the
quantum memory. If one can construct a quantum circuit by
concatenating elementary operations, several high-level operations on
circuits are allowed for describing quantum algorithms: repetition,
control (discussed in Section~\ref{s:control}), inversion, \emph{etc}.

In recent years, several quantum programming languages have been
designed to allow the manipulation of circuits: Quipper \cite{GreenLeFanulumsdaineRossSelingerValironPLDI13}
and its variant ProtoQuipper \cite{protoquipper}, QWIRE \cite{PaykinRandZdacewicPOPL17},
\emph{etc}. These languages share a special function-type
$\s{Circ}(A,B)$ standing for the type of circuits from wires of type
$A$ to wires of type $B$. Two built-in constructors are used to go
back and forth between circuits and functions acting on quantum bits:
\begin{itemize}
\item $\s{box}:(A_Q\,{\multimap}\,B_Q)\to\s{Circ}(A_Q,B_Q)$. Its operational
  semantics is to evaluate the input function on a phantom element of
  type $A$, collect the list of elementary quantum operations to be
  performed and store them in the output circuit.
\item $\s{unbox}:\s{Circ}(A_Q,B_Q)\to(A_Q\multimap B_Q)$. This operator is the
  dual: it takes a circuit --- a list of elementary operations --- and
  return a concrete function.
\end{itemize}
The advantage of distinguishing between functions and circuits is that
a circuit is a concrete object: it is literally a list of operations
that can be acted upon. A function is a suspended computation: it
is \emph{a priori} not possible to inspect its body.

The language $\lambda_Q$ does not technically possess a type
constructor for circuits: the typing construct $\multimap$ is really a
lambda-abstraction. However, it is very close to being a circuit: one
could easily add a typing construct $\s{Circ}$ in the classical type
fragment and implement operators $\s{box}$ and $\s{unbox}$, taking
inspiration for the operational semantics on what has been done
by~\cite{protoquipper} for \textsc{ProtoQuipper}.

How would this be reflected in the realizability model? We claim that
the translation of the type $\s{Circ}(A_Q,B_Q)$ can be taken to be the
same as the translation of $A_Q\multimap B_Q$, the operator $\s{box}$
and $\s{unbox}$ simply being the identity.  The realizability model is
then rich enough to express several high-level operations on circuits:
this permits to extend the language $\lambda_Q$. The fact that the
model ``preserves unitarity'' (Theorem~\ref{t:ReprUnitary}) ensuring
the soundness of the added
constructions.

In what follows, by abuse of notation, we identify
$\s{Circ}(A_Q,B_Q)$ and $A_Q\multimap B_Q$.

\subsection{Control Operator}
\label{s:control}

Suppose that we are given a closed term $t$ of $\lambda_Q$ with type
$\s{qbit}\multimap\s{qbit}$. This function corresponds to a unitary
matrix $U=(\begin{smallmatrix}a&b\\c&d\end{smallmatrix})$, sending
$\ket{0}$ to $a\ket0+c\ket1$ and $\ket1$ to $b\ket0 + d\ket1$.  We
might want to write $\s{ctl}(t)$ of type
$(\s{qbit}\otimes\s{qbit})\multimap(\s{qbit}\otimes\s{qbit})$
behaving as the control of U, whose behavior is to send
$\ket0\otimes\phi$ to $\ket0\otimes\phi$ and $\ket1\otimes\phi$ to
$\ket1\otimes(U\phi)$: if the first input quantum bit is in state
$\ket0$, control-U acts as the identity. If the first input quantum bit is in
state $\ket1$, control-U performs U on the second quantum bit.

This is really a ``quantum test''~\cite{AltenkirchGrattageLICS05}. It has been formalized
in the context of linear algebraic lambda-calculi by~\cite{ArrighiDowekRTA08}. It
can be ported to the unitary linear algebraic lambda-calculus as follows:
\begin{align*}
  &\overline{\s{ctl}}~~:=\lambda f.\lambda z.
    \begin{aligned}[t]
      &\letkeyword(\Pair xy)=z~\inkeyword\\
      &\matchkeyword~x~
      \begin{aligned}[t]
        \{&\Inl{z_1}\mapsto{(\Inl{z_1},fy)}\\
        |&\Inr{z_2}\mapsto{(\Inr{z_2},y)}\}
        \end{aligned}
    \end{aligned}
\end{align*}
and $\overline{\s{ctl}}$ can be given the type
$$
(\sharp A \Rightarrow\sharp B)
\to
((\B\otimes A) \Rightarrow(\B\otimes B)).
$$
Note how the definition is very semantical: the control operation is
literally defined as a test on the first quantum bit.

We can then add an opaque term construct $\s{ctl}(s)$ to $\lambda_Q$
with typing rule
\[
  \infer{\Delta\vdash_C \s{ctl}(t):(\s{qbit}\otimes A_Q) \multimap
    (\s{qbit}\otimes B_Q)
  }{
    \Delta\vdash_C t:A_Q \multimap B_Q
  }.
\]
The translation of this new term construct is then
$\trad{\s{ctl}(t)} = \lambda z.(\overline{\s{ctl}}({\trad t}\Void))$
with $z$ a fresh variable, and Theorem~\ref{th:adequacyQ} still holds.


\section{Conclusions}

In this paper we have presented a language based on
Lineal~\cite{ArrighiDowekRTA08,ArrighiDowekLMCS17}. Then, we have given a set of
unitary types and proposed a realizability semantics associating terms and
types.

The main result of this paper can be pinpointed to
Theorem~\ref{t:ReprUnitary} and Corollary~\ref{c:ReprUnitary}, which,
together with normalization, progress, and subject reduction of the calculus
(which are axiomatic properties in realizability models), imply that every term
of type $\sharp B\arr\sharp B$ represent a unitary operator. In addition, the
Definition~\ref{d:orthojud} of orthogonal judgements led to
Proposition~\ref{p:orthotyp} proving rule~\rnam{UnitaryMatch}. Indeed, one of
the main historic drawbacks for considering a calculus with quantum control has
been to define the notion of orthogonality needed to encode unitary gates (cf.,
for example,~\cite{AltenkirchGrattageLICS05}).

Finally, as an example to show the expressiveness of the language, we have
introduced $\lambda_Q$ and showed that the calculus presented in this paper can
be considered as a denotational semantics of it.


\balance
\bibliographystyle{IEEEtran}

\bibliography{paper}

\newpage
\onecolumn
\appendix

\subsection{Proofs related to Section~\ref{s:Eval}}
\begin{lemma}[Simplifying equalities]\label{l:DistrSimpl}
  Let scalars $\alpha_1,\alpha_2\in\C$, pure terms $t_1$, $t_2$ and
  term distributions $\vec{s}_1$, $\vec{s}_2$ such that
  $\alpha_1\cdot t_1+\vec{s}_1\equiv\alpha_2\cdot t_2+\vec{s}_2$.
  \begin{enumerate}
  \item If $t_1=t_2=t$ and $\alpha_1=\alpha_2$, then:\quad
    $\vec{s}_1\equiv\vec{s}_2$\quad or\quad
    $\vec{s}_1\equiv\vec{s}_2+0\cdot t$\quad or\quad
    $\vec{s}_2\equiv\vec{s}_1+0\cdot t$.
  \item If $t_1=t_2=t$ but $\alpha_1\neq\alpha_2$, then:\quad
    $\vec{s}_1\equiv\vec{s}_2+(\alpha_2-\alpha_1)\cdot t$\quad
    or\quad $\vec{s}_2\equiv\vec{s}_1+(\alpha_1-\alpha_2)\cdot t$.
  \item If $t_1\neq t_2$, then:\quad
    $\vec{s}_1\equiv\vec{s}_3+\alpha_2\cdot t_2$\quad and\quad
    $\vec{s}_2\equiv\vec{s}_3+\alpha_1\cdot t_1$\quad
    for some distribution $\vec{s}_3$.
  \end{enumerate}
  (All the above disjunctions are inclusive).
  \qed
\end{lemma}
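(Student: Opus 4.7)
The plan is to reduce everything to the uniqueness of canonical forms (Proposition~\ref{p:DistrDecomp}) and perform a case analysis on whether $t_1 \in \dom(\vec{s}_1)$ and whether $t_2 \in \dom(\vec{s}_2)$. First I would fix canonical representations $\vec{s}_1 \equiv \sum_{u \in D_1} \beta_u \cdot u$ and $\vec{s}_2 \equiv \sum_{w \in D_2} \gamma_w \cdot w$. The canonical form of the left-hand side $\alpha_1 \cdot t_1 + \vec{s}_1$ is then obtained either by merging $\alpha_1$ into the existing $t_1$-coefficient (if $t_1 \in D_1$) or by adding a fresh summand $\alpha_1 \cdot t_1$ (if $t_1 \notin D_1$), and symmetrically on the right. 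By uniqueness, the two resulting canonical forms must have the same domain and the same coefficients pointwise.

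For part~(1), the four sub-cases on whether $t \in D_1$ and/or $t \in D_2$ collapse to the three stated conclusions: the ``both-in'' and ``neither-in'' sub-cases yield $\vec{s}_1 \equiv \vec{s}_2$ directly; the asymmetric sub-cases force the lone $t$-coefficient to be $0$, producing $\vec{s}_1 \equiv \vec{s}_2 + 0\cdot t$ or its mirror. Part~(2) proceeds analogously, except that the ``neither-in'' sub-case is now impossible (it would force $\alpha_1 = \alpha_2$), and the remaining sub-cases produce the two asymmetric disjuncts $\vec{s}_1 \equiv \vec{s}_2 + (\alpha_2 - \alpha_1)\cdot t$ or $\vec{s}_2 \equiv \vec{s}_1 + (\alpha_1 - \alpha_2)\cdot t$ after reading off the $t$-coefficient difference. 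For part~(3), since $t_1 \neq t_2$, matching canonical forms forces both $t_1$ and $t_2$ to lie in the common domain; I would define $\vec{s}_3$ as the sub-distribution over pure terms other than $t_1, t_2$, together with the $t_1$-part inherited from $\vec{s}_2$ (whose coefficient must be $\alpha_1 + \beta_{t_1}^{(1)}$ for the two sides to agree at $t_1$) and the $t_2$-part inherited from $\vec{s}_1$, and then read off the stated decomposition.

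The main obstacle will not be any single step but the combinatorial bookkeeping of the sub-cases together with the \emph{inclusive} disjunction in the statement: in the ``both-in'' sub-cases of parts~(1) and~(2), several disjuncts actually hold simultaneously, so one must be careful not to mistake the disjunction for an exclusive one. A secondary subtlety is the convention, established via Example~\ref{ex:DistrDecomp}, that $\vec{s} + 0 \cdot t \not\equiv \vec{s}$ whenever $t \notin \dom(\vec{s})$; without this distinction, the asymmetric sub-cases of part~(1) would collapse to the trivial conclusion $\vec{s}_1 \equiv \vec{s}_2$ and the lemma would lose its content.
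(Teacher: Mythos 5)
The paper states this lemma with the proof omitted, so there is nothing to compare against line by line; your strategy---reduce to uniqueness of canonical forms (Proposition~\ref{p:DistrDecomp}) and case-split on membership of $t_1$ in $\dom(\vec{s}_1)$ and $t_2$ in $\dom(\vec{s}_2)$---is clearly the intended one, and your treatment of parts~(1) and~(2), including the observation that the ``neither-in'' sub-case of~(2) is contradictory and that the disjunctions are inclusive, is correct.

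There is, however, a concrete error in your construction of $\vec{s}_3$ in part~(3): you have the inheritance backwards. Since $\vec{s}_1\equiv\vec{s}_3+\alpha_2\cdot t_2$ perturbs $\vec{s}_3$ only at $t_2$, the distribution $\vec{s}_3$ must agree with $\vec{s}_1$ at $t_1$; dually, $\vec{s}_2\equiv\vec{s}_3+\alpha_1\cdot t_1$ forces $\vec{s}_3$ to agree with $\vec{s}_2$ at $t_2$. So the $t_1$-part of $\vec{s}_3$ is inherited from $\vec{s}_1$ (coefficient $\beta^{(1)}_{t_1}=\gamma^{(2)}_{t_1}-\alpha_1$) and the $t_2$-part from $\vec{s}_2$ (coefficient $\gamma^{(2)}_{t_2}=\beta^{(1)}_{t_2}-\alpha_2$), exactly the opposite of what you wrote; with your literal $\vec{s}_3$ the verification fails unless $\alpha_1=\alpha_2=0$. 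A secondary imprecision: matching the full canonical domains only forces $t_2\in\dom(\vec{s}_1)$ and $t_1\in\dom(\vec{s}_2)$, not that $t_1\in\dom(\vec{s}_1)$ or $t_2\in\dom(\vec{s}_2)$; in the sub-case $t_1\notin\dom(\vec{s}_1)$ the coefficient matching gives $\gamma^{(2)}_{t_1}=\alpha_1$ and one must take $t_1\notin\dom(\vec{s}_3)$ rather than speak of a ``$t_1$-part.'' Both issues are local and repairable, but as written part~(3) does not go through.
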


\xrecap{Lemma}{Weak diamond}{l:Eval1Diamond}{
  if $\vec{t}\evalone\vec{t}'_1$ and $\vec{t}\evalone\vec{t}'_2$, then
  one of the following holds: either $\vec{t}'_1=\vec{t}'_2$;
  either $\vec{t}'_1\evalone\vec{t}'_2$ or
    $\vec{t}'_2\evalone\vec{t}'_1$;
  either $\vec{t}'_1\evalone\vec{t''}$ and
    $\vec{t}'_2\evalone\vec{t''}$ for some $\vec{t''}$.
}
\begin{proof}[Proof of Lemma~\ref{l:Eval1Diamond}]
  Since $\vec{t}\evalone\vec{t}'_1$ and $\vec{t}\evalone\vec{t}'_2$,
  there are decompositions
  $$\begin{array}{r@{~{}~}c@{~{}~}l@{\qquad}r@{~{}~}c@{~{}~}l}
    \vec{t}&=&\alpha_1\cdot s_1+\vec{r}_1 &
    \vec{t}'_1&=&\alpha_1\cdot\vec{s'}_1+\vec{r}_1 \\
    \vec{t}&=&\alpha_2\cdot s_2+\vec{r}_2 &
    \vec{t}'_2&=&\alpha_2\cdot\vec{s'}_2+\vec{r}_2 \\
  \end{array}\eqno\begin{array}{r@{}}
  \text{where}~s_1\evalat\vec{s'}_1\\
  \text{where}~s_2\evalat\vec{s'}_2\\
  \end{array}$$
  We distinguish three cases:
  \begin{itemize}
  \item Case where $s_1=s_2=s$ and $\alpha_1=\alpha_2=\alpha$.\quad
    In this case, we have $\vec{s'}_1=\vec{s'}_2=\vec{s'}$ since
    atomic evaluation is deterministic.
    And by Lemma~\ref{l:DistrSimpl}~(1), we deduce that:
    \begin{itemize}
    \item Either $\vec{r}_1=\vec{r}_2$, so that:\quad
      $\vec{t}'_1~=~\alpha\cdot\vec{s'}+\vec{r}_1
      ~=~\alpha\cdot\vec{s'}+\vec{r}_2~=~\vec{t}'_2$.
    \item Either $\vec{r}_1=\vec{r}_2+0\cdot s$, so that:
      $$\begin{array}{r@{~{}~}l@{~{}~}l}
        \vec{t}'_1~=~\alpha\cdot\vec{s'}+\vec{r}_1
        &=&\alpha\cdot\vec{s'}+\vec{r}_2+0\cdot s\\
        &\evalone&\alpha\cdot\vec{s'}+\vec{r}_2+0\cdot\vec{s'}
        ~=~(\alpha+0)\cdot\vec{s'}+\vec{r}_2~=~\vec{t}'_2\,.\\
      \end{array}$$
    \item Either $\vec{r}_2=\vec{r}_1+0\cdot s$, so that:
      $$\begin{array}{r@{~{}~}l@{~{}~}l}
        \vec{t}'_2~=~\alpha\cdot\vec{s'}+\vec{r}_2
        &=&\alpha\cdot\vec{s'}+\vec{r}_1+0\cdot s\\
        &\evalone&\alpha\cdot\vec{s'}+\vec{r}_1+0\cdot\vec{s'}
        ~=~(\alpha+0)\cdot\vec{s'}+\vec{r}_1~=~\vec{t}'_1\,.\\
      \end{array}$$
    \end{itemize}
  \item Case where $s_1=s_2=s$, but $\alpha_1\neq\alpha_2$.\quad
    In this case, we have $\vec{s'}_1=\vec{s'}_2=\vec{s'}$ since
    atomic evaluation is deterministic.
    And by Lemma~\ref{l:DistrSimpl}~(2), we deduce that:
    \begin{itemize}
    \item Either $\vec{r}_1=\vec{r}_2+(\alpha_2-\alpha_1)\cdot s$,
      so that:\quad
      $$\begin{array}{@{\qquad}r@{~{}~}l@{~{}~}l}
        \vec{t}'_1~=~\alpha_1\cdot\vec{s'}+\vec{r}_1
        &=&\alpha_1\cdot\vec{s'}+\vec{r}_2
        +(\alpha_2-\alpha_1)\cdot s\\
        &\evalone&\alpha_1\cdot\vec{s'}+\vec{r}_2
        +(\alpha_2-\alpha_1)\cdot\vec{s'}
        ~=~\alpha_2\cdot\vec{s'}+\vec{r}_2~=~\vec{t}'_2\,.\\
      \end{array}$$
    \item Either $\vec{r}_2=\vec{r}_1+(\alpha_1-\alpha_2)\cdot s$,
      so that:\quad
      $$\begin{array}{@{\qquad}r@{~{}~}l@{~{}~}l}
        \vec{t}'_2~=~\alpha_2\cdot\vec{s'}+\vec{r}_2
        &=&\alpha_2\cdot\vec{s'}+\vec{r}_1
        +(\alpha_1-\alpha_2)\cdot s\\
        &\evalone&\alpha_2\cdot\vec{s'}+\vec{r}_1
        +(\alpha_1-\alpha_2)\cdot\vec{s'}
        ~=~\alpha_1\cdot\vec{s'}+\vec{r}_1~=~\vec{t}'_1\,.\\
      \end{array}$$
    \end{itemize}
  \item Case where $s_1\neq s_2$.\quad
    In this case, we know by Lemma~\ref{l:DistrSimpl}~(3) that
    $\vec{r}_1=\vec{r}_3+\alpha_2\cdot s_2$ and
    $\vec{r}_2=\vec{r}_3+\alpha_1\cdot s_1$ for some $\vec{r}_3$.
    Writing
    $\vec{t''}=\alpha_1\cdot\vec{s'}_1+\alpha_2\cdot\vec{s'}_2+\vec{r}_3$,
    we conclude that
    $$\begin{array}[b]{l@{\hskip-7mm}}
      \vec{t}'_1~=~\alpha_1\cdot\vec{s'}_1+\vec{r_1}
      ~=~\alpha_1\cdot\vec{s'}_1+\alpha_2\cdot s_2+\vec{r_3}
      ~\evalone~\alpha_1\cdot\vec{s'}_1+\alpha_2\cdot\vec{s'}_2+\vec{r_3}
      ~=~\vec{t''}\\
      \vec{t}'_2~=~\alpha_2\cdot\vec{s'}_2+\vec{r_2}
      ~=~\alpha_1\cdot s_1+\alpha_2\cdot\vec{s'}_2+\vec{r_3}
      ~\evalone~\alpha_1\cdot\vec{s'}_1+\alpha_2\cdot\vec{s'}_2+\vec{r_3}
      ~=~\vec{t''}\\
    \end{array}\eqno\mbox{\qedhere}$$
  \end{itemize}
\end{proof}

\subsection{Proofs related to Section~\ref{s:Realiz}}

\begin{proposition}\label{p:ScalInlInrPair}
  For all value distributions $\vec{v}_1,\vec{v}_2,\vec{w}_1,\vec{w}_2$,
  we have:
  \begin{align*}
    \scal{\Inl{\vec{v}_1}}{\Inl{\vec{v}_2}}&= \scal{\vec{v}_1}{\vec{v}_2}\\
    \scal{\Inr{\vec{w}_1}}{\Inr{\vec{w}_2}}&= \scal{\vec{w}_1}{\vec{w}_2}\\
    \scal{\Pair{\vec{v}_1}{\vec{w}_1}}{\Pair{\vec{v}_2}{\vec{w}_2}} &=\scal{\vec{v}_1}{\vec{v}_2}\,\scal{\vec{w}_1}{\vec{w}_2}\\
    \scal{\Inl{\vec{v}_1}}{\Inr{\vec{w}_2}}&=0\\
    \scal{\Inl{\vec{v}_1}}{\Pair{\vec{v}_2}{\vec{w}_2}}&=0\\
    \scal{\Inr{\vec{w}_1}}{\Pair{\vec{v}_2}{\vec{w}_2}}&=0
                                                         \tag*{\qed}
  \end{align*}
\end{proposition}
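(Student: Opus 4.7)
The overall plan is to reduce each of the six identities to a direct computation from the definition of the inner product, once the value distributions have been put in canonical form and the constructors have been pushed through by linearity. Concretely, I write $\vec{v}_1 = \sum_i \alpha_i\cdot v_{1,i}$, $\vec{v}_2 = \sum_k \alpha'_k\cdot v_{2,k}$, $\vec{w}_1 = \sum_j \beta_j\cdot w_{1,j}$, $\vec{w}_2 = \sum_\ell \beta'_\ell\cdot w_{2,\ell}$ (all in canonical form), and then use the linear extensions from Section~\ref{ss:ExtLin} to obtain
\[
  \Inl{\vec{v}_1} = \sum_i \alpha_i\cdot\Inl{v_{1,i}}, \quad \Inr{\vec{w}_1} = \sum_j \beta_j\cdot\Inr{w_{1,j}}, \quad \Pair{\vec{v}_1}{\vec{w}_1} = \sum_{i,j}\alpha_i\beta_j\cdot\Pair{v_{1,i}}{w_{1,j}},
\]
and similarly for the distributions with subscript~$2$.

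The key syntactic observation is that, on pure values, the constructors $\inleftkeyword$, $\inrightkeyword$, and pairing are injective and have pairwise disjoint images. This gives us the Kronecker identities
\[
  \delta_{\Inl{v},\Inl{v'}} = \delta_{v,v'}, \quad \delta_{\Inr{w},\Inr{w'}} = \delta_{w,w'}, \quad \delta_{\Pair{v}{w},\Pair{v'}{w'}} = \delta_{v,v'}\,\delta_{w,w'},
\]
together with $\delta_{\Inl{v},\Inr{w}} = \delta_{\Inl{v},\Pair{v'}{w'}} = \delta_{\Inr{w},\Pair{v'}{w'}} = 0$. A small sanity check is needed to see that, e.g., $\sum_{i,j}\alpha_i\beta_j\cdot\Pair{v_{1,i}}{w_{1,j}}$ really is in canonical form, i.e.\ that the pure values $\Pair{v_{1,i}}{w_{1,j}}$ are pairwise distinct; this is immediate from the injectivity of pairing on pure values since the $v_{1,i}$ and the $w_{1,j}$ were already pairwise distinct. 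The same check, even more trivial, justifies the $\inleftkeyword$ and $\inrightkeyword$ cases.

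With this in hand, each of the six identities collapses to a one-line computation: for example,
\[
  \scal{\Inl{\vec{v}_1}}{\Inl{\vec{v}_2}} = \sum_{i,k} \overline{\alpha_i}\,\alpha'_k\,\delta_{\Inl{v_{1,i}},\Inl{v_{2,k}}} = \sum_{i,k}\overline{\alpha_i}\,\alpha'_k\,\delta_{v_{1,i},v_{2,k}} = \scal{\vec{v}_1}{\vec{v}_2},
\]
and the identity for pairs factors as
\[
  \sum_{i,j,k,\ell}\overline{\alpha_i\beta_j}\,\alpha'_k\beta'_\ell\,\delta_{v_{1,i},v_{2,k}}\,\delta_{w_{1,j},w_{2,\ell}} = \Bigl(\sum_{i,k}\overline{\alpha_i}\alpha'_k\delta_{v_{1,i},v_{2,k}}\Bigr)\Bigl(\sum_{j,\ell}\overline{\beta_j}\beta'_\ell\delta_{w_{1,j},w_{2,\ell}}\Bigr).
\]
The three orthogonality identities are even simpler, since every Kronecker delta that appears is identically zero. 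There is no real obstacle here: the only mildly delicate point is being careful that the pushed-through sum is genuinely a canonical form, so that the inner-product formula applies literally; once that is verified the rest is algebraic bookkeeping.
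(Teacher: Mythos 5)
Your proposal is correct and follows essentially the same route as the paper's proof: expand both arguments in canonical form, push the constructors through by linearity, and reduce everything to the Kronecker-delta identities for $\inleftkeyword$, $\inrightkeyword$ and pairing, factoring the quadruple sum in the pair case. Your explicit check that the pushed-through sums remain in canonical form is a detail the paper leaves implicit (it instead invokes sesquilinearity of the inner product), but the computation is the same.
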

\begin{proof}
  Let us write
  $\vec{v}_1=\sum_{i_1=1}^{n_1}\alpha_{1,i_1}\cdot{v_{1,i_1}}$,
  $\vec{v}_2=\sum_{i_2=1}^{n_2}\alpha_{2,i_2}\cdot{v_{2,i_1}}$,
  $\vec{w}_1=\sum_{j_1=1}^{m_1}\beta_{1,j_1}\cdot{w_{1,j_1}}$ and
  $\vec{w}_2=\sum_{j_2=1}^{m_2}\beta_{2,j_2}\cdot{w_{2,j_1}}$ (all
  in canonical form).
  Writing $\delta_{v,v'}=1$ when $v=v'$ and $\delta_{v,v'}=0$ when
  $v\neq v'$ (Kronecker symbol), we observe that:
  $$\begin{array}{@{}r@{~{}~}c@{~{}~}l@{}}
    \scal{\Inl{v}_1}{\Inl{v}_2}
    &=&\bigscal{\sum_{i_1=1}^{n_1}\alpha_{1,i_1}\cdot\Inl{v_{1,i_1}}}{
      \sum_{i_2=1}^{n_2}\alpha_{2,i_2}\cdot\Inl{v_{2,i_2}}}\\
    &=&\sum_{i_1=1}^{n_1}\sum_{i_2=1}^{n_2}
    \overline{\alpha_{1,i_1}}\,\alpha_{2,i_2}\,
    \scal{\Inl{v_{1,i_1}}}{\Inl{v_{2,i_2}}}\\
    &=&\sum_{i_1=1}^{n_1}\sum_{i_2=1}^{n_2}
    \overline{\alpha_{1,i_1}}\,\alpha_{2,i_2}\,
    \delta_{\Inl{v_{1,i_1}},\Inl{v_{2,i_2}}}\\
    &=&\sum_{i_1=1}^{n_1}\sum_{i_2=1}^{n_2}
    \overline{\alpha_{1,i_1}}\,\alpha_{2,i_2}\,
    \delta_{v_{1,i_1},v_{2,i_2}}\\
    &=&\sum_{i_1=1}^{n_1}\sum_{i_2=1}^{n_2}
    \overline{\alpha_{1,i_1}}\,\alpha_{2,i_2}\,
    \scal{v_{1,i_1}}{v_{2,i_2}}
    ~=~\scal{\vec{v}_1}{\vec{v}_2}\\
  \end{array}$$
  $$\begin{array}{@{}r@{~{}~}c@{~{}~}l@{}}
    \scal{\Inl{v}_1}{\Inr{w}_2}
    &=&\bigscal{\sum_{i_1=1}^{n_1}\alpha_{1,i_1}\cdot\Inr{v_{1,i_1}}}{
      \sum_{j_2=1}^{m_2}\beta_{2,j_2}\cdot\Inl{w_{2,j_2}}}\\
    &=&\sum_{i_1=1}^{n_1}\sum_{j_2=1}^{m_2}
    \overline{\alpha_{1,i_1}}\,\beta_{2,j_2}\,
    \scal{\Inl{v_{1,i_1}}}{\Inr{w_{2,j_2}}}\\
    &=&\sum_{i_1=1}^{n_1}\sum_{j_2=1}^{m_2}
    \overline{\alpha_{1,i_1}}\,\beta_{2,j_2}\,
    \delta_{\Inl{v_{1,i_1}},\Inr{w_{2,j_2}}}\\
    &=&\sum_{i_1=1}^{n_1}\sum_{j_2=1}^{m_2}
    \overline{\alpha_{1,i_1}}\,\beta_{2,j_2}\times 0~=~0\\
  \end{array}$$
  $$\begin{array}{@{}r@{~{}~}c@{~{}~}l@{}}
    \scal{\Pair{\vec{v}_1}{\vec{w}_1}}{\Pair{\vec{v}_2}{\vec{w}_2}}
    &=&\bigscal{\sum_{i_1=1}^{n_1}\sum_{j_1=1}^{m_1}
      \alpha_{1,i_1}\beta_{1,j_1}\cdot\Pair{v_{1,i_1}}{w_{1,j_1}}}{
      \sum_{i_2=1}^{n_2}\sum_{j_2=1}^{m_2}
      \alpha_{2,i_2}\beta_{2,j_2}\cdot\Pair{v_{2,i_2}}{w_{2,j_2}}}\\
    &=&\sum_{i_1=1}^{n_1}\sum_{j_1=1}^{m_1}\sum_{i_2=1}^{n_2}\sum_{j_2=1}^{m_2}
    \overline{\alpha_{1,i_1}\beta_{1,j_1}}\alpha_{2,i_2}\beta_{2,j_2}
    \scal{\Pair{v_{1,i_1}}{w_{1,j_1}}}{\Pair{v_{2,i_2}}{w_{2,j_2}}}\\
    &=&\sum_{i_1=1}^{n_1}\sum_{j_1=1}^{m_1}\sum_{i_2=1}^{n_2}\sum_{j_2=1}^{m_2}
    \overline{\alpha_{1,i_1}\beta_{1,j_1}}\alpha_{2,i_2}\beta_{2,j_2}
    \delta_{\Pair{v_{1,i_1}}{w_{1,j_1}},\Pair{v_{2,i_2}}{w_{2,j_2}}}\\
    &=&\sum_{i_1=1}^{n_1}\sum_{i_2=1}^{n_2}\sum_{j_1=1}^{m_1}\sum_{j_2=1}^{m_2}
    \overline{\alpha_{1,i_1}}\,\alpha_{2,i_2}\,
    \overline{\beta_{1,j_1}}\,\beta_{2,j_2}\,
    \delta_{v_{1,i_1},v_{2,i_2}}\,\delta_{w_{1,j_1},w_{2,j_2}}\\
    &=&\bigl(\sum_{i_1=1}^{n_1}\sum_{i_2=1}^{n_2}
    \overline{\alpha_{1,i_1}}\,\alpha_{2,i_2}\,
    \delta_{v_{1,i_1},v_{2,i_2}}\bigr)
    \bigl(\sum_{j_1=1}^{m_1}\sum_{j_2=1}^{m_2}
    \overline{\beta_{1,j_1}}\,\beta_{2,j_2}\,
    \delta_{w_{1,j_1},w_{2,j_2}}\bigr)\\
    &=&\bigl(\sum_{i_1=1}^{n_1}\sum_{i_2=1}^{n_2}
    \overline{\alpha_{1,i_1}}\,\alpha_{2,i_2}\,
    \scal{v_{1,i_1}}{v_{2,i_2}}\bigr)
    \bigl(\sum_{j_1=1}^{m_1}\sum_{j_2=1}^{m_2}
    \overline{\beta_{1,j_1}}\,\beta_{2,j_2}\,
    \scal{w_{1,j_1}}{w_{2,j_2}}\bigr)\\
    &=&\scal{\vec{v}_1}{\vec{v}_2}\,\scal{\vec{w}_1}{\vec{w}_2}\\
  \end{array}$$
  The other equalities are proved similarly.
\end{proof}

\recap{Lemma}{l:RealizCapVal}{For all types~$A$, we have $\sem{A}=\semr{A}\cap\vec\Val$.}
\begin{proof}
  The inclusion $\sem{A}\subseteq\semr{A}\cap\vec\Val$ is clear from
  the definition of $\semr{A}$.
  Conversely, suppose that $\vec{v}\in\semr{A}\cap\vec\Val$.
  From the definition of the set $\semr{A}$, we know that
  $\vec{v}\eval\vec{v'}$ for some $\vec{v'}\in\sem{A}$.
  But since $\vec{v}$ is a normal form, we deduce that
  $\vec{v}=\vec{v'}\in\sem{A}$.
\end{proof}

\recap{Lemma}{l:judgements}{Given any two types~$A$ and~$B$:
  \begin{enumerate}
  \item $\SUB{A}{B}$ is valid if and only if
    $\semr{A}\subseteq\semr{B}$.
  \item $\EQV{A}{B}$ is valid if and only if
    $\semr{A}=\semr{B}$.
  \end{enumerate}
}
\begin{proof}
  The direct implications are obvious from the definition of
  $\semr{A}$, and the converse implications immediately follow from
  Lemma~\ref{l:RealizCapVal}.
\end{proof}




\recap{Proposition}{p:CharacSharpBoolEndo}{
  Given a closed $\lambda$-abstraction $\Lam{x}{\vec{t}}$, we have
  $\Lam{x}{\vec{t}}\in\sem{\sharp\Bool\arr\sharp\Bool}$ if and
  only if there are two value distributions
  $\vec{v}_1,\vec{v}_2\in\sem{\sharp\Bool}$ such that
  $$\vec{t}\,[x:=\tt]\eval\vec{v}_1,\quad
  \vec{t}\,[x:=\ff]\eval\vec{v}_2,\quad\text{and}\ 
  \scal{\vec{v}_1}{\vec{v}_2}=0\,.$$
}
\begin{proof}
  \emph{The condition is necessary.}\quad
  Suppose that $\Lam{x}{\vec{t}}\in\sem{\sharp\Bool\arr\sharp\Bool}$.
  Since $\tt,\ff\in\sem{\sharp\Bool}$, there are
  $\vec{v}_1,\vec{v}_2\in\sem{\sharp\Bool}$ such that
  $\vec{t}\,[x:=\tt]\eval\vec{v}_1$ and
  $\vec{t}\,[x:=\ff]\eval\vec{v}_2$.
  It remains to prove that $\scal{\vec{v}_1}{\vec{v}_2}=0$.
  For that, consider $\alpha,\beta\in\C$ such that
  $|\alpha|^2+|\beta|^2=1$.
  By linearity, we observe that
  $$\vec{t}\,\<x:=\alpha\cdot\tt+\beta\cdot\ff\>
  ~=~\alpha\cdot\vec{t}\,[x:=\tt]+\beta\cdot\vec{t}\,[x:=\ff]
  ~\eval~\alpha\cdot\vec{v}_1+\beta\cdot\vec{v}_2\,.$$
  But since $\alpha\cdot\tt+\beta\cdot\ff\in\sem{\sharp\Bool}$,
  we must have
  $\alpha\cdot\vec{v}_1+\beta\cdot\vec{v}_2\in\sem{\sharp\Bool}$ too,
  and in particular $\|\alpha\cdot\vec{v}_1+\beta\cdot\vec{v}_2\|=1$.
  From this, we get
  $$\begin{array}{r@{~{}~}c@{~{}~}l}
    1~=~\|\alpha\cdot\vec{v}_1+\beta\cdot\vec{v}_2\|^2
    &=&\scal{\alpha\cdot\vec{v}_1+\beta\cdot\vec{v}_2}
    {\alpha\cdot\vec{v}_1+\beta\cdot\vec{v}_2}\\[3pt]
    &=&|\alpha|^2\scal{\vec{v}_1}{\vec{v}_1}+
    \bar{\alpha}\beta\,\scal{\vec{v}_1}{\vec{v}_2}+
    \alpha\bar{\beta}\,\scal{\vec{v}_2}{\vec{v}_1}+
    |\beta|^2\scal{\vec{v}_2}{\vec{v}_2}\\[3pt]
    &=&|\alpha|^2+|\beta|^2+
    \bar{\alpha}\beta\,\scal{\vec{v}_1}{\vec{v}_2}+
    \overline{\bar{\alpha}\beta\,\scal{\vec{v}_1}{\vec{v}_2}}
    ~=~1+2\mathrm{Re}\bigl(\alpha\bar{\beta}\,
    \scal{\vec{v}_1}{\vec{v}_2}\bigr)\\
  \end{array}$$
  and thus
  $\mathrm{Re}(\bar{\alpha}\beta\,
  \scal{\vec{v}_1}{\vec{v}_2})=0$.
  Taking $\alpha=\beta=\sqrthalf$, we deduce that
  $\mathrm{Re}(\scal{\vec{v}_1}{\vec{v}_2})=0$.
  And taking $\alpha=i\sqrthalf$ and $\beta=\sqrthalf$, we deduce that
  $\mathrm{Im}(\scal{\vec{v}_1}{\vec{v}_2})=0$.
  Therefore: $\scal{\vec{v}_1}{\vec{v}_2}=0$.
  \smallbreak\noindent
  \emph{The condition is sufficient.}\quad
  Suppose that there are $\vec{v}_1,\vec{v}_2\in\sem{\sharp\Bool}$
  such that $\vec{t}\,[x:=\tt]\eval\vec{v}_1$,
  $\vec{t}\,[x:=\ff]\eval\vec{v}_2$ and
  $\scal{\vec{v}_1}{\vec{v}_2}=0$.
  In particular, we have $\vec{v}_1,\vec{v}_2\in\Span(\{\tt,\ff\})$
  and $\|\vec{v}_1\|=\|\vec{v}_2\|=1$.
  Now, given any $\vec{v}\in\sem{\sharp\Bool}$, we distinguish three
  cases:
  \begin{itemize}
  \item Either $\vec{v}=\alpha\cdot\tt$, where $|\alpha|=1$.\quad
    In this case, we observe that
    $$\vec{t}\,\<x:=\vec{v}\,\>~=~
    \alpha\cdot\vec{t}\,[x:=\tt]~\eval~
    \alpha\cdot\vec{v}_1~\in~\sem{\sharp\Bool}\,,$$
    since $\alpha\cdot\vec{v}_1\in\Span(\{\tt,\ff\})$
    and $\|\alpha\cdot\vec{v}_1\|=|\alpha|\,\|\vec{v}_1\|=1$.
  \item Either $\vec{v}=\beta\cdot\ff$, where $|\beta|=1$.\quad
    In this case, we observe that
    $$\vec{t}\,\<x:=\vec{v}\,\>~=~
    \beta\cdot\vec{t}\,[x:=\ff]~\eval~
    \beta\cdot\vec{v}_2~\in~\sem{\sharp\Bool}\,,$$
    since $\beta\cdot\vec{v}_2\in\Span(\{\tt,\ff\})$
    and $\|\beta\cdot\vec{v}_2\|=|\beta|\,\|\vec{v}_2\|=1$.
  \item Either $\vec{v}=\alpha\cdot\tt+\beta\cdot\ff$,
    where $|\alpha|^2+|\beta|^2=1$.\quad
    In this case, we observe that
    $$\vec{t}\,\<x:=\vec{v}\,\>~=~
    \alpha\cdot\vec{t}\,[x:=\tt]+\beta\cdot\vec{t}\,[x:=\ff]~\eval~
    \alpha\cdot\vec{v}_1+\beta\cdot\vec{v}_2
    ~\in~\sem{\sharp\Bool}\,,$$
    since $\alpha\cdot\vec{v}_1+\beta\cdot\vec{v}_2
    \in\Span(\{\tt,\ff\})$ and
    $$\begin{array}{rcl}
      \|\alpha\cdot\vec{v}_1+\beta\cdot\vec{v}_2\|^2
      &=&\scal{\alpha\cdot\vec{v}_1+\beta\cdot\vec{v}_2}
      {\alpha\cdot\vec{v}_1+\beta\cdot\vec{v}_2}\\
      &=&|\alpha|^2\scal{\vec{v}_1}{\vec{v}_1}+
      \alpha\bar{\beta}\,\scal{\vec{v}_1}{\vec{v}_2}+
      \bar{\alpha}\beta\,\scal{\vec{v}_2}{\vec{v}_1}+
      |\beta|^2\scal{\vec{v}_2}{\vec{v}_2}\\
      &=&|\alpha|^2\|\vec{v}_1\|^2+0+0+|\beta|^2\|\vec{v}_2\|^2
      ~=~|\alpha|^2+|\beta|^2~=~1\,.\\
    \end{array}$$
  \end{itemize}
  We have thus shown that $\vec{t}\,\<x:=\vec{v}\>\real\sharp\Bool$
  for all $\vec{v}\in\sem{\sharp\Bool}$.
  Therefore $\Lam{x}{\vec{t}}\in\sem{\sharp\Bool\arr\sharp\Bool}$.
\end{proof}

\xrecap{Theorem}{Characterization of the values of type $\sharp\Bool\arr\sharp\Bool$}{t:ReprUnitary}{
  A closed $\lambda$-abstraction $\Lam{x}{\vec{t}}$ is a value of type
  $\sharp\Bool\arr\sharp\Bool$ if and only if it represents a unitary
  operator $F:\C^2\to\C^2$.
}
\begin{proof}
  \emph{The condition is necessary.}\quad
  Suppose that $\Lam{x}{\vec{t}}\in\sem{\sharp\Bool\arr\sharp\Bool}$.
  From Prop.~\ref{p:CharacSharpBoolEndo}, there are
  $\vec{v}_1,\vec{v}_2\in\sem{\sharp{\Bool}}$ such that
  $\vec{t}\,[x:=\tt]\eval\vec{v}_1$,
  $\vec{t}\,[x:=\ff]\eval\vec{v}_2$ and
  $\scal{\vec{v}_1}{\vec{v}_2}=0$.
  Let $F:\C^2\to\C^2$ be the operator defined by
  $F(1,0)=\pi_{\Bool}(\vec{v}_1)$ and
  $F(0,1)=\pi_{\Bool}(\vec{v}_2)$.
  From the properties of linearity of the calculus, it is clear that
  the abstraction $\Lam{x}{\vec{t}}$ represents the operator
  $F:\C^2\to\C^2$.
  Moreover, the operator~$F$ is unitary since
  $\|\pi_{\Bool}(\vec{v}_1)\|_{\C^2}=\|\pi_{\Bool}(\vec{v}_2)\|_{\C^2}=1$
  and $\scal{\pi_{\Bool}(\vec{v}_1)}{\pi_{\Bool}(\vec{v}_2)}_{\C^2}=0$.
  \smallbreak\noindent
  \emph{The condition is sufficient.}\quad
  Let us assume that the abstraction $\Lam{x}{\vec{t}}$ represents a
  unitary operator $F:\C^2\to\C^2$.
  From this, we deduce that:
  \begin{itemize}
  \item $(\Lam{x}{\vec{t}\,})\,\tt\eval\vec{v}_1$ for some
    $\vec{v}_1\in\Span(\{\tt,\ff\})$ such that
    $\pi_{\Bool}(\vec{v}_1)=F(\pi_{\Bool}(\tt))=F(1,0)$;
  \item $(\Lam{x}{\vec{t}\,})\,\ff\eval\vec{v}_2$ for some
    $\vec{v}_2\in\Span(\{\tt,\ff\})$ such that
    $\pi_{\Bool}(\vec{v}_2)=F(\pi_{\Bool}(\ff))=F(0,1)$.
  \end{itemize}
  Using the property of confluence, we deduce that
  \begin{itemize}
  \item$\vec{t}\,[x:=\tt]\eval\vec{v}_1\in\sem{\sharp\Bool}$,
    since $\|\vec{v}_1\|=\|F(1,0)\|_{\C^2}=1$;
  \item$\vec{t}\,[x:=\ff]\eval\vec{v}_2\in\sem{\sharp\Bool}$,
    since $\|\vec{v}_2\|=\|F(0,1)\|_{\C^2}=1$.
  \end{itemize}
  We deduce that
  $\Lam{x}{\vec{t}}\in\sem{\sharp\Bool\arr\sharp\Bool}$ by
  Prop.~\ref{p:CharacSharpBoolEndo}, since
  $\scal{\vec{v}_1}{\vec{v}_2}=\scal{F(1,0)}{F(0,1)}_{\C^2}=0$.
\end{proof}

\xrecap{Corollary}{Characterization of the values of type $\sharp\Bool\Arr\sharp\Bool$}{c:ReprUnitary}{
  A unitary distribution of abstractions
  $\bigl(\sum_{i=1}^n\alpha_i\cdot\Lam{x}{\vec{t}_i}\bigr)\in\Sph$ is
  a value of type $\sharp\Bool\Arr\sharp\Bool$ if and only if it
  represents a unitary operator $F:\C^2\to\C^2$.
  }
\begin{proof}
  Indeed, given
  $\bigl(\sum_{i=1}^n\alpha_i\cdot\Lam{x}{\vec{t}_i}\bigr)\in\Sph$,
  we have
  $$\begin{array}{c@{\quad}l}
    &\bigl(\sum_{i=1}^n\alpha_i\cdot\Lam{x}{\vec{t}_i}\bigr)
    \in\sem{\sharp\Bool\Arr\sharp\Bool}\\
    \text{iff}&
    \Lam{x}{\bigl(\sum_{i=1}^n\alpha_i\cdot\vec{t}_i\,\bigr)}
    \in\sem{\sharp\Bool\arr\sharp\Bool}\\
    \text{iff}&
    \Lam{x}{\bigl(\sum_{i=1}^n\alpha_i\cdot\vec{t}_i\,\bigr)}
    ~\text{represents a unitary operator}~F:\C^2\to\C^2\\
    \text{iff}&
    \bigl(\sum_{i=1}^n\alpha_i\cdot\Lam{x}{\vec{t}_i}\bigr)
    ~\text{represents a unitary operator}~F:\C^2\to\C^2\\
  \end{array}$$
  since both functions $\sum_{i=1}^n\alpha_i\cdot\Lam{x}{\vec{t}_i}$
  and $\Lam{x}{\bigl(\sum_{i=1}^n\alpha_i\cdot\vec{t}_i\,\bigr)}$
  are extensionally equivalent.
\end{proof}

\begin{lemma}\label{l:EvalMacro}
  For all term distributions $\vec{t}$, $\vec{t}'$, $\vec{s}$,
  $\vec{s}_1$, $\vec{s}_2$ and for all value distributions
  $\vec{v}$ and $\vec{w}$:
  \begin{enumerate}
  \item $(\Lam{x}{\vec{t}\,})\,\vec{v}~\eval~
    \vec{t}\,\<x:=\vec{v}\>$
  \item $\LetP{x}{y}{\Pair{\vec{v}}{\vec{w}}}{\vec{s}}
    ~\eval~\vec{s}\<x:=\vec{v}\>\<y:=\vec{w}\>$\quad
    (if $y\notin\FV(\vec{v})$)
  \item $\Match{\Inl{\vec{v}}}{x_1}{\vec{s}_1}{x_2}{\vec{s}_2}
    ~\eval~\vec{s}_1\<x_1:=\vec{v}\>$
  \item $\Match{\Inr{\vec{v}}}{x_1}{\vec{s}_1}{x_2}{\vec{s}_2}
    ~\eval~\vec{s}_2\<x_2:=\vec{v}\>$
  \item If\ \ $\vec{t}\eval\vec{t}'$,\ \ then\ \
    $\vec{s}\,\vec{t}\eval\vec{s}\,\vec{t}'$
  \item If\ \ $\vec{t}\eval\vec{t}'$,\ \ then\ \
    $\vec{t}\,\vec{v}\eval\vec{t}'\,\vec{v}$
  \item If\ \ $\vec{t}\eval\vec{t}'$,\ \ then\ \
    $\vec{t};\vec{s}\eval\vec{t}';\vec{s}$
  \item If\ \ $\vec{t}\eval\vec{t}'$,\ \ then\ \
    $\LetP{x_1}{x_2}{\vec{t}}{\vec{s}}~\eval~
    \LetP{x_1}{x_2}{\vec{t}'}{\vec{s}}$
  \item If\ \ $\vec{t}\eval\vec{t}'$,\ \ then\ \
    $\begin{array}[t]{@{}l@{}}
    \Match{\vec{t}}{x_1}{\vec{s}_1}{x_2}{\vec{s}_2}~\eval{}\\
    \qquad\Match{\vec{t}'}{x_1}{\vec{s}_1}{x_2}{\vec{s}_2}\\
    \end{array}$
  \item If\ \ $\vec{t}\eval\vec{t}'$,\ \ then\ \
    $\vec{t}\,\<x:=\vec{v}\,\>\eval\vec{t}'\,\<x:=\vec{v}\,\>$.
  \end{enumerate}
\end{lemma}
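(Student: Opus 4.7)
The plan is to dispatch the ten items in three groups according to their structure.

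Items~(1)--(4) are $\beta$-like reductions ``at the level of distributions''. Take item~(1) as the template: writing the value distribution in canonical form $\vec{v}=\sum_{j=1}^{m}\beta_j\cdot v_j$, the bilinear extension of application from Section~\ref{ss:ExtLin} gives $(\Lam{x}{\vec{t}})\,\vec{v}=\sum_{j=1}^{m}\beta_j\cdot(\Lam{x}{\vec{t}})\,v_j$. The atomic rule $(\Lam{x}{\vec{t}})\,v_j\evalat\vec{t}\,[x:=v_j]$ (Table~\ref{tab:AtomicEval}) then provides an atomic step for each summand; one-step evaluation (Definition~\ref{d:OneStepEval}) lets us reduce the $m$ summands one after the other, yielding $\sum_{j=1}^{m}\beta_j\cdot\vec{t}\,[x:=v_j]$, which by definition (Section~\ref{ss:BilinSubst}) is exactly $\vec{t}\,\<x:=\vec{v}\,\>$. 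Items~(2)--(4) follow the same template with the appropriate atomic rule from Table~\ref{tab:AtomicEval}; item~(2) additionally requires expanding $\Pair{\vec{v}}{\vec{w}}$ as $\sum_{i,j}\alpha_i\beta_j\cdot\Pair{v_i}{w_j}$ and iterating the pair-destructing atomic rule.

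Items~(5)--(9) are contextual closure properties. I would proceed by induction on the number of one-step evaluations in $\vec{t}\eval\vec{t}'$. The base case reduces to a single $\evalone$-step, so $\vec{t}=\alpha\cdot s+\vec{r}$ and $\vec{t}'=\alpha\cdot\vec{s}'+\vec{r}$ with $s\evalat\vec{s}'$. Taking item~(5) as the template: by bilinearity, $\vec{s}\,\vec{t}=\alpha\cdot\vec{s}\,s+\vec{s}\,\vec{r}$; writing $\vec{s}=\sum_k\sigma_k\cdot s_k$ we have $\vec{s}\,s=\sum_k\sigma_k\cdot s_k\,s$, and the atomic congruence rule $s_k\,s\evalat s_k\,\vec{s}'$ (valid since $s\evalat\vec{s}'$) yields, one $\evalone$-step per summand, $\vec{s}\,s\eval\sum_k\sigma_k\cdot s_k\,\vec{s}'=\vec{s}\,\vec{s}'$. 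Proposition~\ref{l:LinEval} then lifts this to $\vec{s}\,\vec{t}\eval\alpha\cdot\vec{s}\,\vec{s}'+\vec{s}\,\vec{r}=\vec{s}\,\vec{t}'$. The inductive step is transitivity. Items~(6)--(9) are analogous, each using the corresponding atomic congruence rule from Table~\ref{tab:AtomicEval}; item~(6) is a shade simpler since the right-hand operand $\vec{v}$ is already a value.

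Item~(10) combines substitutivity of atomic evaluation (Section~\ref{ss:AtomEval}) with the linearity of pure substitution (Section~\ref{ss:PureSubst}) and Proposition~\ref{l:LinEval}. By induction on the length of $\vec{t}\eval\vec{t}'$, it suffices to treat one $\evalone$-step: decompose $\vec{t}=\alpha\cdot s+\vec{r}$, $\vec{t}'=\alpha\cdot\vec{s}'+\vec{r}$, expand $\vec{v}=\sum_j\beta_j\cdot v_j$, and observe that $\vec{t}\,\<x:=\vec{v}\,\>=\sum_j\beta_j\cdot(\alpha\cdot s[x:=v_j]+\vec{r}\,[x:=v_j])$. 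Substitutivity of atomic evaluation gives $s[x:=v_j]\evalat\vec{s}'[x:=v_j]$ for each $j$, so every summand admits a one-step reduction, and Proposition~\ref{l:LinEval} combines them into $\vec{t}\,\<x:=\vec{v}\,\>\eval\vec{t}'\,\<x:=\vec{v}\,\>$.

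The main obstacle is purely bookkeeping: one has to verify that bilinear expansion of a context (application, sequence, let, match) produces precisely the distribution obtained by iteratively applying the corresponding atomic rule summand-by-summand, up to the congruence $\equiv$. A small subtlety is that after such expansion distinct indices may yield equal pure terms, so the decomposition required by Definition~\ref{d:OneStepEval} need not be in canonical form; but this is harmless because Definition~\ref{d:OneStepEval} explicitly permits the chosen summand $s$ to appear in $\dom(\vec{r})$, as stressed by the remark that follows it.
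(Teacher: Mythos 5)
Your proposal is correct and follows essentially the same route as the paper's proof: items (1)--(4) by expanding the value distributions in canonical form and firing the corresponding atomic rule summand by summand, and items (5)--(10) by reducing to a single $\evalone$-step via induction on the length of the evaluation and then using the atomic congruence (resp.\ substitutivity) rules together with linearity. Your closing remark about the decomposition in Definition~\ref{d:OneStepEval} not needing to be canonical correctly disposes of the only real subtlety.
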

\begin{proof}
  (1) Assume that $\vec{v}=\sum_{i=1}^n\alpha_i\cdot{v_i}$.
  Then we observe that
  $$\textstyle(\Lam{x}{\vec{t}\,})\,\vec{v}
  ~=~\sum_{i=1}^n\alpha_i\cdot(\Lam{x}{\vec{t}})\,v_i
  ~\eval~\sum_{i=1}^n\alpha_i\cdot\vec{t}\,[x:=v_i]
  ~=~\vec{t}\,\<x:=\vec{v}\,\>\,.$$
  (2) Assume that $\vec{v}=\sum_{i=1}^n\alpha_i\cdot{v_i}$ and
  $\vec{w}=\sum_{j=1}^m\beta_j\cdot{w_j}$.
  Then we observe that
  $$\begin{array}{r@{~{}~}c@{~{}~}l}
    \LetP{x}{y}{\Pair{\vec{v}}{\vec{w}}}{\vec{s}}
    &=&\LetP{x}{y}{\bigl(\sum_{i=1}^n\sum_{j=1}^m\alpha_j\beta_j\cdot
      \Pair{v_i}{w_j}\bigr)}{\vec{s}}\\
    &=&\sum_{i=1}^n\sum_{j=1}^m\alpha_i\beta_j\cdot
    \LetP{x}{y}{\Pair{v_i}{w_j}}{\vec{s}}\\
    &\eval&\sum_{i=1}^n\sum_{j=1}^m\alpha_i\beta_j\cdot
    \vec{s}[x:=v_i,y:=w_j]\\
    &&{=}~{}~\sum_{i=1}^n\sum_{j=1}^m\alpha_i\beta_j\cdot
    \vec{s}[x:=v_i][y:=w_j]\quad(\text{since}~y\notin\FV(\vec{v}))\\
    &&{=}~{}~\vec{s}\<x:=\vec{v}\>\<y:=\vec{w}\>\\
  \end{array}$$
  Items (3) and (4) are proved similarly as item~(2).
  Then, items (5), (6), (7), (8), (9) and (10) are all proved
  following the same pattern, first treating the case where
  $\vec{t}\evalone\vec{t}'$ (one step), and then deducing the general
  case by induction on the number of evaluation steps.
  Let us prove for instance (5), first assuming that
  $\vec{t}\evalone\vec{t}'$ (one step).
  This means that there exist a scalar $\alpha\in\R$, a pure
  term~$t_0$ and term distributions $\vec{t}'_0$ and $\vec{r}$ such
  that
  $$\vec{t}~=~\alpha\cdot t_0+\vec{r},\qquad
  \vec{t}'~=~\alpha\cdot\vec{t}'_0+\vec{r}\qquad\text{and}\qquad
  t_0\evalat\vec{t}'_0\,.$$
  So that for all term distributions
  $\vec{s}=\sum_{i=1}^n\beta_i\cdot{s_i}$, we have:
  $$\begin{array}{rcl}
    \vec{s}\,\vec{t}
    &=&\bigl(\sum_{i=1}^n\beta_i\cdot s_i\bigr)\,
    (\alpha\cdot t_0+\vec{r})
    ~=~\sum_{i=1}^n(\alpha\beta_i\cdot s_i\,t_0
    +\beta_i\cdot s_i\,\vec{r})\\
    &\eval&\sum_{i=1}^n(\alpha\beta_i\cdot s_i\,\vec{t}'_0
    +\beta_i\cdot s_i\,\vec{r})
    ~=~\bigl(\sum_{i=1}^n\beta_i\cdot s_i\bigr)\,
    (\alpha\cdot\vec{t}'_0+\vec{r})~=~\vec{s}\,\vec{t}'\\
  \end{array}$$
  observing that\ \ $s_i\,t_0\evalat s_i\,\vec{t}'_0$,\ \ hence\ \
  $\alpha\beta_i\cdot s_i\,t_0+\beta_i\cdot s_i\,\vec{r}~\evalone~
  \alpha\beta_i\cdot s_i\,\vec{t}'_0+\beta_i\cdot s_i\,\vec{r}$\ \
  for all $i=1..n$.
  Hence we proved that $\vec{t}\evalone\vec{t}'$ implies
  $\vec{s}\,\vec{t}\eval\vec{s}\,\vec{t}'$.
  By a straightforward induction on the number of evaluation steps, we
  deduce that $\vec{t}\eval\vec{t}'$ implies
  $\vec{s}\,\vec{t}\eval\vec{s}\,\vec{t}'$.
\end{proof}

\begin{lemma}[Application of realizers]\label{l:App}
  If $\vec{s}\real A\Arr B$ and $\vec{t}\real A$,
  then $\vec{s}\,\vec{t}\real B$\qed
\end{lemma}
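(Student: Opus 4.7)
The plan is to unpack the definitions and chain the reductions together using the linearity of evaluation (Proposition~\ref{l:LinEval}) together with the expansion rules from Lemma~\ref{l:EvalMacro}.

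First I would unpack the hypotheses. Since $\vec{t}\real A$, there is a value distribution $\vec{v}\in\sem{A}$ with $\vec{t}\eval\vec{v}$. Since $\vec{s}\real A\Arr B$, there is a value distribution $\vec{s}'\in\sem{A\Arr B}$ with $\vec{s}\eval\vec{s}'$; by definition of $\sem{A\Arr B}$, we may write $\vec{s}'=\sum_{i=1}^n\alpha_i\cdot\Lam{x}{\vec{t}_i}$ with $\bigl\|\vec{s}'\bigr\|=1$ and such that $\bigl(\sum_{i=1}^n\alpha_i\cdot\vec{t}_i\<x:=\vec{u}\>\bigr)\real B$ for every $\vec{u}\in\sem{A}$.

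Next I would reduce the application step by step to the point where the definition of $\sem{A\Arr B}$ can be invoked. Using items~(5) and~(6) of Lemma~\ref{l:EvalMacro}, the evaluations $\vec{t}\eval\vec{v}$ and $\vec{s}\eval\vec{s}'$ give
\[
\vec{s}\,\vec{t}~\eval~\vec{s}\,\vec{v}~\eval~\vec{s}'\,\vec{v}\,.
\]
Applying item~(1) of Lemma~\ref{l:EvalMacro} to each summand of $\vec{s}'$ and combining the results through the linearity of evaluation (Proposition~\ref{l:LinEval}), we obtain
\[
\vec{s}'\,\vec{v}~=~\sum_{i=1}^n\alpha_i\cdot(\Lam{x}{\vec{t}_i})\,\vec{v}
~\eval~\sum_{i=1}^n\alpha_i\cdot\vec{t}_i\<x:=\vec{v}\>\,.
\]

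Finally, since $\vec{v}\in\sem{A}$, the characterizing property of $\vec{s}'\in\sem{A\Arr B}$ gives $\bigl(\sum_{i=1}^n\alpha_i\cdot\vec{t}_i\<x:=\vec{v}\>\bigr)\real B$, i.e.\ this distribution evaluates to some $\vec{w}\in\sem{B}$. Concatenating all the evaluations yields $\vec{s}\,\vec{t}\eval\vec{w}\in\sem{B}$, hence $\vec{s}\,\vec{t}\real B$, as required. No step is really an obstacle here: the only mild subtlety is making sure that the two uses of linearity of evaluation (for distributing $(\Lam{x}{\vec{t}_i})\,\vec{v}$ over the sum indexed by~$i$, and for concatenating the reductions) are legitimate, which is precisely the content of Proposition~\ref{l:LinEval}.
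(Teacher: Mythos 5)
Your proof is correct and follows essentially the same route as the paper's: unpack both realizability hypotheses, reduce $\vec{s}\,\vec{t}$ to $\bigl(\sum_{i}\alpha_i\cdot\Lam{x}{\vec{t}_i}\bigr)\vec{v}$ via Lemma~\ref{l:EvalMacro}~(5), (6) and~(1), and conclude from the defining property of $\sem{A\Arr B}$. If anything, you are slightly more explicit than the paper in noting that the resulting distribution only \emph{realizes}~$B$ (so one more evaluation to a value in $\sem{B}$ must be concatenated), which is a welcome precision.
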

\begin{proof}
  Since $\vec{t}\real A$, we have $\vec{t}\eval\vec{v}$ for some
  vector $\vec{v}\in\sem{A}$.
  And since $\vec{s}\real A\Arr B$, we have
  $\vec{s}\eval\sum_{i=1}^n\alpha_i\cdot\Lam{x}{\vec{s}_i}$
  for some unitary distribution of abstractions
  $\sum_{i=1}^n\alpha_i\cdot\Lam{x}{\vec{s}_i}\in\sem{A\Arr B}$.
  Therefore, we get
  $$\textstyle
  \vec{s}\,\vec{t}~\eval~\vec{s}\,\vec{v}~\eval~
  (\sum_{i=1}^n\alpha_i\cdot\Lam{x}{\vec{s}_i})\,\vec{v}
  ~=~\sum_{i=1}^n\alpha_i\cdot(\Lam{x}{\vec{s}_i})\,\vec{v}
  ~\eval~\sum_{i=1}^n\alpha_i\cdot\vec{s}_i\<x:=\vec{v}\>
  ~\in~\sem{B}$$
  from Lemma~\ref{l:EvalMacro} (5), (6), (1) and from the
  definition of $\sem{A\Arr B}$.
\end{proof}


\subsection{Proofs related to Section~\ref{s:Typing}}
\label{app:p:TypingRulesValid}

\begin{lemma}\label{l:CombNormalize}
  Given a type $A$, two vectors
  $\vec{u}_1,\vec{u}_2\in\sem{\sharp{A}}$ and a scalar $\alpha\in\C$,
  there exists a vector $\vec{u}_0\in\sem{\sharp{A}}$ and a scalar
  $\lambda\in\C$ such that
  $\vec{u}_1+\alpha\cdot\vec{u}_2=\lambda\cdot\vec{u}_0$.
\end{lemma}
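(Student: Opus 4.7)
The plan is to set $\vec{w} := \vec{u}_1 + \alpha\cdot\vec{u}_2$ and exhibit a scalar $\lambda\in\C$ together with a unit vector $\vec{u}_0\in\sem{\sharp A}$ such that $\vec{w} = \lambda\cdot\vec{u}_0$. Since $\sem{\sharp A} = \Span(\sem A)\cap\Sph$ and $\Span(\sem A)$ is a weak sub-vector space of $\vec{\Val}$ containing $\vec{u}_1$ and $\vec{u}_2$, membership $\vec{w}\in\Span(\sem A)$ comes for free; the non-trivial task is to place $\vec{u}_0$ inside the unit sphere $\Sph$. I would split on whether $\|\vec{w}\|$ vanishes.

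When $\|\vec{w}\|\neq 0$ the usual normalization works: take $\lambda := \|\vec{w}\|\in\R_{>0}$ and $\vec{u}_0 := \lambda^{-1}\cdot\vec{w}$. Writing $\vec{w}$ in canonical form as $\sum_k\mu_k\cdot u_k$ with pairwise distinct $u_k$, the distribution $\vec{u}_0 = \sum_k(\mu_k/\lambda)\cdot u_k$ remains in canonical form, so $\|\vec{u}_0\|^2 = \sum_k|\mu_k|^2/|\lambda|^2 = 1$ and $\vec{u}_0\in\sem{\sharp A}$. The identity $\lambda\cdot\vec{u}_0 = \vec{w}$ then follows from the weak-vector-space axioms.

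The main obstacle is the degenerate case $\|\vec{w}\| = 0$, in which $\vec{w}$ need not equal $\vec{0}$ (it can be $\sum_k 0\cdot u_k$ with a non-empty domain), so a literal normalization is impossible. I would exploit the fact that in a weak vector space $0\cdot\vec{u}$ preserves the domain of $\vec{u}$, and take $\lambda := 0$ together with $\vec{u}_0 := \vec{u}_1 + 0\cdot\vec{u}_2$. Writing $\vec{u}_1 = \sum_i\beta_i\cdot v_i$ and $\vec{u}_2 = \sum_j\gamma_j\cdot w_j$ in canonical form, one checks that the canonical form of $\vec{u}_0$ is obtained from that of $\vec{u}_1$ by adding zero-coefficient summands on the pure values of $\dom(\vec{u}_2)\setminus\dom(\vec{u}_1)$; in particular $\|\vec{u}_0\|^2 = \sum_i|\beta_i|^2 = \|\vec{u}_1\|^2 = 1$, so $\vec{u}_0\in\sem{\sharp A}$.

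It remains to verify $0\cdot\vec{u}_0 = \vec{w}$. By distributivity the left-hand side rewrites as $0\cdot\vec{u}_1 + 0\cdot\vec{u}_2$, whose canonical form carries the coefficient $0$ on every element of $\dom(\vec{u}_1)\cup\dom(\vec{u}_2)$. On the right, $\dom(\vec{w}) = \dom(\vec{u}_1)\cup\dom(\vec{u}_2)$ by the monoid morphism property of $\dom$, and the assumption $\|\vec{w}\|^2 = 0$ forces every coefficient in the canonical form of $\vec{w}$ to vanish as well. The two distributions therefore agree term by term in canonical form, so Proposition~\ref{p:DistrDecomp} gives $\vec{w} = 0\cdot\vec{u}_0 = \lambda\cdot\vec{u}_0$, concluding the proof.
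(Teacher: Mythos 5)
Your proof is correct, and in the delicate case $\|\vec{u}_1+\alpha\cdot\vec{u}_2\|=0$ it takes a genuinely different route from the paper's. The paper first notes that $\alpha\neq 0$ (else $\|\vec{u}_1\|$ would vanish), then uses the triangle inequality $0<2|\alpha|\le\|\vec{u}_1+\alpha\cdot\vec{u}_2\|+\|\vec{u}_1+(-\alpha)\cdot\vec{u}_2\|$ to show that the \emph{other} combination $\vec{u}_1+(-\alpha)\cdot\vec{u}_2$ is normalizable, and takes $\vec{u}_0:=\frac{1}{\lambda'}\cdot(\vec{u}_1+(-\alpha)\cdot\vec{u}_2)$ as its witness. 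You instead take $\vec{u}_0:=\vec{u}_1+0\cdot\vec{u}_2$, which is automatically of norm $1$ because the $0\cdot\vec{u}_2$ summand only pads the canonical form of $\vec{u}_1$ with zero coefficients; this dispenses with both the observation $\alpha\neq 0$ and the triangle-inequality step. The two arguments then converge: each checks that $0\cdot\vec{u}_0$ and $\vec{u}_1+\alpha\cdot\vec{u}_2$ have the same canonical form, namely all-zero coefficients over the common domain $\dom(\vec{u}_1)\cup\dom(\vec{u}_2)$ (using that $\dom$ is invariant under scalar multiplication and turns $+$ into $\cup$, and that a vanishing pseudo-norm forces all coefficients to vanish), and concludes by uniqueness of canonical forms (Proposition~\ref{p:DistrDecomp}). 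Your witness is arguably the more economical of the two; the only caveat is that the claim ``$0\cdot\vec{u}$ preserves the domain of $\vec{u}$'' is a property of the particular weak vector space $\vec\Lambda(\X)/{\equiv}$ rather than of weak vector spaces in general, but that property is exactly what the paper establishes for $\dom$, so the step is sound.
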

\begin{proof}
  Let $\lambda:=\|\vec{u}_1+\alpha\cdot\vec{u}_2\|$.
  When $\lambda\neq 0$, we take
  $\vec{u}_0:=\frac{1}{\lambda}\cdot(\vec{u}_1+\alpha\cdot\vec{u}_2)
  \in\sem{\sharp{A}}$,
  and we are done.
  Let us now consider the (subtle) case where $\lambda=0$.
  In this case, we first observe that $\alpha\neq 0$, since
  $\alpha=0$ would imply that
  $\|\vec{u}_1+\alpha\cdot\vec{u}_2\|=\|\vec{u}_1\|=0$, which would be
  absurd, since $\|\vec{u}_1\|=1$.
  Moreover, since $\lambda=\|\vec{u}_1+\alpha\cdot\vec{u}_2\|=0$,
  we observe that all the coefficients of the distribution
  $\vec{u}_1+\alpha\cdot\vec{u}_2$ are zeros (when written in
  canonical form), which implies that
  $$\vec{u}_1+\alpha\cdot\vec{u}_2
  ~=~0\cdot(\vec{u}_1+\alpha\cdot\vec{u}_2)
  ~=~0\cdot\vec{u}_1+0\cdot\vec{u}_2\,.$$
  Using the triangular inequality, we also observe that
  $$0~<~2|\alpha|~=~\|2\alpha\cdot\vec{u}_2\|
  ~\le~\|\vec{u}_1+\alpha\cdot\vec{u}_2\|+
  \|\vec{u}_1+(-\alpha)\cdot\vec{u}_2\|
  ~=~\|\vec{u}_1+(-\alpha)\cdot\vec{u}_2\|\,,$$
  hence $\lambda':=\|\vec{u}_1+(-\alpha)\cdot\vec{u}_2\|\neq 0$.
  Taking
  $u_0:=\frac{1}{\lambda'}\cdot(\vec{u}_1+(-\alpha)\cdot\vec{u}_2)
  \in\sem{\sharp{A}}$,
  we easily see that
  $$\vec{u}_1+\alpha\cdot\vec{u}_2
  ~=~0\cdot\vec{u}_1+0\cdot\vec{u}_2
  ~=~0\cdot\bigl({\textstyle\frac{1}{\lambda'}}\cdot
  (\vec{u}_1+(-\alpha)\cdot\vec{u}_2)\bigr)
  ~=~\lambda\cdot\vec{u}_0\,.\eqno\mbox{\qedhere}$$
\end{proof}

\begin{proposition}[Polarisation identity]\label{p:PolId}
  For all value distributions $\vec{v}$ and $\vec{w}$, we have:
  \begin{align*}
    \scal{\vec{v}}{\vec{w}}
    &=\frac{1}{4}
      (\|\vec{v}+\vec{w}\|^2-\|\vec{v}+(-1)\cdot\vec{w}\|^2\\
    &\quad-i\|\vec{v}+i\cdot\vec{w}\|^2 +i\|\vec{v}+(-i)\cdot\vec{w}\|^2)\,.
      \tag*{\qed}
  \end{align*}
\end{proposition}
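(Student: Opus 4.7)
The plan is to expand each of the four squared norms on the right-hand side via the inner product, check that all cross terms collapse in the standard sesquilinear fashion, and then verify that the specific linear combination with coefficients $1,-1,-i,i$ isolates $\scal{\vec v}{\vec w}$. The first paragraph below addresses the only delicate point (sesquilinearity in the weak vector space), and the second does the actual computation.

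First I would establish the key identity
\[
\|\vec{v}+\lambda\cdot\vec{w}\|^2
~=~\|\vec{v}\|^2+\lambda\,\scal{\vec{v}}{\vec{w}}
+\bar{\lambda}\,\overline{\scal{\vec{v}}{\vec{w}}}
+|\lambda|^2\|\vec{w}\|^2
\qquad (\lambda\in\C).
\]
This is routine in an ordinary Hilbert space, but here one must be careful because $\vec{\Val}/{\equiv}$ is only a \emph{weak} vector space, so $0\cdot\vec{w}\neq\vec{0}$ and adding summands of the form $0\cdot u$ on a fresh basis vector changes the distribution. The way around this is to compute everything relative to the union of supports. Writing $\vec v$ and $\vec w$ in canonical form with coefficient functions $f:\dom(\vec v)\to\C$ and $g:\dom(\vec w)\to\C$, and extending them by zero to $\tilde f,\tilde g:D\to\C$ with $D:=\dom(\vec v)\cup\dom(\vec w)$, one observes that $\vec v+\lambda\cdot\vec w$ has canonical form with domain $D$ and coefficient function $\tilde f+\lambda\tilde g$. (Collisions at the intersection are collapsed by the congruence rule $\alpha\cdot t+\beta\cdot t\equiv(\alpha+\beta)\cdot t$.) Expanding $\|\vec v+\lambda\cdot\vec w\|^2=\sum_{u\in D}|\tilde f(u)+\lambda\tilde g(u)|^2$ and using $\sum_u|\tilde f(u)|^2=\|\vec v\|^2$, $\sum_u\bar{\tilde f}(u)\tilde g(u)=\scal{\vec v}{\vec w}$, and the conjugate sum $=\overline{\scal{\vec v}{\vec w}}$, yields exactly the displayed sesquilinear expansion.

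Then I would specialize to $\lambda\in\{1,-1,i,-i\}$. Letting $r:=\scal{\vec v}{\vec w}$ and $N:=\|\vec v\|^2+\|\vec w\|^2$, the expansion gives
\[
\begin{aligned}
\|\vec v+\vec w\|^2 &= N+r+\bar r = N+2\mathrm{Re}(r),\\
\|\vec v+(-1)\cdot\vec w\|^2 &= N-r-\bar r = N-2\mathrm{Re}(r),\\
\|\vec v+i\cdot\vec w\|^2 &= N+ir-i\bar r = N-2\mathrm{Im}(r),\\
\|\vec v+(-i)\cdot\vec w\|^2 &= N-ir+i\bar r = N+2\mathrm{Im}(r).
\end{aligned}
\]
Subtracting the second from the first gives $4\,\mathrm{Re}(r)$; multiplying the third and fourth by $-i$ and $i$ respectively and summing gives $4i\,\mathrm{Im}(r)$. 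Adding and dividing by~$4$ recovers $r=\scal{\vec v}{\vec w}$, which is the desired identity.

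The only point that could be an obstacle is the sesquilinearity step, because the standard textbook derivation implicitly uses $0\cdot\vec w=\vec 0$ and the existence of additive inverses; in the weak setting one must argue via the explicit canonical-form computation above rather than by formal bilinear manipulation. Once that lemma is in place the rest is purely arithmetic.
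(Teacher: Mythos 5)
Your proof is correct. The paper states this proposition without proof (it is the standard polarization identity, asserted with a terminal \qed), so there is no argument to compare against; your derivation is the usual one, and your extra care about sesquilinearity in the weak vector space is well placed and correctly resolved: since $\scal{\vec{v}}{\vec{w}}$ is defined via canonical forms and is insensitive to summands with zero coefficients, the expansion $\|\vec{v}+\lambda\cdot\vec{w}\|^2=\|\vec{v}\|^2+2\mathrm{Re}\bigl(\lambda\scal{\vec{v}}{\vec{w}}\bigr)+|\lambda|^2\|\vec{w}\|^2$ goes through exactly as you describe, and the arithmetic with $\lambda\in\{1,-1,i,-i\}$ checks out.
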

\begin{lemma}\label{l:EvalOrth}
  Given a valid typing judgment of the form\ \
  $\TYP{\Delta,x:\sharp{A}}{\vec{s}}{C}$,\ \
  a substitution $\sigma\in\sem{\Delta}$, and
  value distributions $\vec{u}_1,\vec{u}_2\in\sem{\sharp{A}}$,
  there are value distributions
  $\vec{w}_1,\vec{w}_2\in\sem{C}$ such that
  $$\vec{s}\<\sigma,x:=\vec{u}_1\>\eval\vec{w}_1,\qquad
  \vec{s}\<\sigma,x:=\vec{u}_2\>\eval\vec{w}_2\qquad
  \text{and}\qquad
  \scal{\vec{w}_1}{\vec{w}_2}=\scal{\vec{u}_1}{\vec{u}_2}\,.$$
\end{lemma}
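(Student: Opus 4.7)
The plan is to obtain $\vec{w}_1$ and $\vec{w}_2$ directly from the validity hypothesis and then derive the scalar product identity via the polarization formula of Proposition~\ref{p:PolId}. For the existence part, since $\sigma\in\sem{\Delta}$ and $\vec{u}_i\in\sem{\sharp A}$, the extended substitution $(\sigma,x:=\vec{u}_i)$ lies in $\sem{\Delta,x:\sharp A}$; validity of $\TYP{\Delta,x:\sharp A}{\vec{s}}{C}$ then yields some $\vec{w}_i\in\sem{C}$ with $\vec{s}\<\sigma,x:=\vec{u}_i\>\eval\vec{w}_i$ for $i\in\{1,2\}$.

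By Proposition~\ref{p:PolId}, in order to conclude $\scal{\vec{w}_1}{\vec{w}_2}=\scal{\vec{u}_1}{\vec{u}_2}$ it suffices to prove $\|\vec{w}_1+\alpha\cdot\vec{w}_2\|=\|\vec{u}_1+\alpha\cdot\vec{u}_2\|$ for each $\alpha\in\{1,-1,i,-i\}$. Fix such an $\alpha$ and compute the distribution $\vec{s}\<\sigma,x:=\vec{u}_1+\alpha\cdot\vec{u}_2\>$ in two different ways. On one hand, bilinearity of substitution together with linearity of evaluation (Proposition~\ref{l:LinEval}) give
\[
  \vec{s}\<\sigma,x:=\vec{u}_1+\alpha\cdot\vec{u}_2\>
  ~=~\vec{s}\<\sigma,x:=\vec{u}_1\>+\alpha\cdot\vec{s}\<\sigma,x:=\vec{u}_2\>
  ~\eval~\vec{w}_1+\alpha\cdot\vec{w}_2.
\]
On the other hand, Lemma~\ref{l:CombNormalize} furnishes $\vec{u}_0^{\alpha}\in\sem{\sharp A}$ and $\lambda_{\alpha}\in\C$ such that $\vec{u}_1+\alpha\cdot\vec{u}_2=\lambda_{\alpha}\cdot\vec{u}_0^{\alpha}$; applying the validity hypothesis at $\vec{u}_0^{\alpha}$ produces some $\vec{w}_0^{\alpha}\in\sem{C}$ with $\vec{s}\<\sigma,x:=\vec{u}_0^{\alpha}\>\eval\vec{w}_0^{\alpha}$, and hence
\[
  \vec{s}\<\sigma,x:=\vec{u}_1+\alpha\cdot\vec{u}_2\>
  ~=~\lambda_{\alpha}\cdot\vec{s}\<\sigma,x:=\vec{u}_0^{\alpha}\>
  ~\eval~\lambda_{\alpha}\cdot\vec{w}_0^{\alpha}.
\]

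Since both $\vec{w}_1+\alpha\cdot\vec{w}_2$ and $\lambda_{\alpha}\cdot\vec{w}_0^{\alpha}$ are value distributions, hence normal forms, confluence of evaluation forces them to coincide. Taking norms and using $\|\vec{w}_0^{\alpha}\|=\|\vec{u}_0^{\alpha}\|=1$ (both live in $\Sph$) yields $\|\vec{w}_1+\alpha\cdot\vec{w}_2\|=|\lambda_{\alpha}|=\|\vec{u}_1+\alpha\cdot\vec{u}_2\|$. Substituting these four norm equalities into the polarization identity concludes the proof.

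The only delicate point is the degenerate case $\lambda_{\alpha}=0$: the proof of Lemma~\ref{l:CombNormalize} still supplies a genuine unit vector $\vec{u}_0^{\alpha}$ (by falling back on $-\alpha$ via the triangular inequality), so the two-sided computation above remains valid, with both normal forms collapsing to the zero distribution and both norms being zero.
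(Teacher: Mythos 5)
Your proposal is correct and follows essentially the same route as the paper's proof: both obtain $\vec{w}_1,\vec{w}_2$ from the validity hypothesis, invoke Lemma~\ref{l:CombNormalize} four times to renormalize $\vec{u}_1+\alpha\cdot\vec{u}_2$ for $\alpha\in\{1,-1,i,-i\}$, use bilinearity of substitution, linearity of evaluation and uniqueness of normal forms to transfer the identities to the $\vec{w}$'s, and conclude via the polarization identity of Proposition~\ref{p:PolId}. The only difference is presentational (you phrase the key step as the four norm equalities $\|\vec{w}_1+\alpha\cdot\vec{w}_2\|=\|\vec{u}_1+\alpha\cdot\vec{u}_2\|$ before substituting into polarization, whereas the paper substitutes directly), and your explicit handling of the degenerate case $\lambda_\alpha=0$ is a welcome clarification.
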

\begin{proof}
  From the validity of the judgment\ \
  $\TYP{\Delta,x:\sharp{A}}{\vec{s}}{C}$,\ \
  we know that there are $\vec{w}_1,\vec{w}_2\in\sem{C}$ such that
  $\vec{s}\<\sigma,x:=\vec{u}_1\>\eval\vec{w}_1$ and
  $\vec{s}\<\sigma,x:=\vec{u}_2\>\eval\vec{w}_2$.
  In particular, we have $\|\vec{w}_1\|=\|\vec{w}_2\|=1$.
  Now applying Lemma~\ref{l:CombNormalize} four times, we know that
  there are vectors
  $\vec{u}_{0,1},\vec{u}_{0,2},\vec{u}_{0,3},\vec{u}_{0,4}\in\sem{\sharp{A}}$
  and scalars $\lambda_1,\lambda_2,\lambda_3,\lambda_4\in\C$ such that
  $$\begin{array}{r@{~{}~}c@{~{}~}l@{\qquad\qquad}r@{~{}~}c@{~{}~}l}
    \vec{u}_1+\vec{u}_2&=&\lambda_1\cdot\vec{u}_{0,1}&
    \vec{u}_1+i\cdot\vec{u}_2&=&\lambda_3\cdot\vec{u}_{0,3}\\
    \vec{u}_1+(-1)\cdot\vec{u}_2&=&\lambda_2\cdot\vec{u}_{0,2}&
    \vec{u}_1+(-i)\cdot\vec{u}_2&=&\lambda_4\cdot\vec{u}_{0,4}\\
  \end{array}$$
  From the validity of the judgment\ \
  $\TYP{\Delta,x:\sharp{A}}{\vec{s}}{C}$,\ \
  we also know that there are value distributions
  $\vec{w}_{0,1},\vec{w}_{0,2},\vec{w}_{0,3},\vec{w}_{0,4}\in\sem{C}$
  such that $\vec{s}\<\sigma,x:=\vec{u}_{0,j}\>\eval\vec{w}_{0,j}$
  for all $j=1..4$.
  Combining the linearity of evaluation with the uniqueness of normal
  forms, we deduce from what precedes that
  $$\begin{array}{r@{~{}~}c@{~{}~}l@{\qquad\qquad}r@{~{}~}c@{~{}~}l}
    \vec{w}_1+\vec{w}_2&=&\lambda_1\cdot\vec{w}_{0,1}&
    \vec{w}_1+i\cdot\vec{w}_2&=&\lambda_3\cdot\vec{w}_{0,3}\\
    \vec{w}_1+(-1)\cdot\vec{w}_2&=&\lambda_2\cdot\vec{w}_{0,2}&
    \vec{w}_1+(-i)\cdot\vec{w}_2&=&\lambda_4\cdot\vec{w}_{0,4}\\
  \end{array}$$
  Using the polarization identity (Prop.~\ref{p:PolId}), we conclude that:
  $$\begin{array}[b]{@{}r@{~{}~}c@{~{}~}l@{}}
    \scal{\vec{w}_1}{\vec{w}_2}
    &=&\frac{1}{4}\bigl(\|\vec{w}_1+\vec{w}_2\|^2
    -\|\vec{w}_1+(-1)\cdot\vec{w}_2\|^2
    -i\|\vec{w}_1+i\cdot\vec{w}_2\|^2
    +i\|\vec{w}_1+(-i)\cdot\vec{w}_2\|^2\bigr)\\[3pt]
    &=&\frac{1}{4}(\lambda_1^2\|\vec{w}_{0,1}\|^2
    -\lambda_2^2\|\vec{w}_{0,2}\|^2
    -i\lambda_3^2\|\vec{w}_{0,3}\|^2
    +i\lambda_4^2\|\vec{w}_{0,4}\|^2)
    ~=~\frac{1}{4}(\lambda_1^2-\lambda_2^2
    -i\lambda_3^2+i\lambda_4^2)\\[3pt]
    &=&\frac{1}{4}(\lambda_1^2\|\vec{u}_{0,1}\|^2
    -\lambda_2^2\|\vec{u}_{0,2}\|^2
    -i\lambda_3^2\|\vec{u}_{0,3}\|^2
    +i\lambda_4^2\|\vec{u}_{0,4}\|^2)\\[3pt]
    &=&\frac{1}{4}\bigl(\|\vec{u}_1+\vec{u}_2\|^2
    -\|\vec{u}_1+(-1)\cdot\vec{u}_2\|^2
    -i\|\vec{u}_1+i\cdot\vec{u}_2\|^2
    +i\|\vec{u}_1+(-i)\cdot\vec{u}_2\|^2\bigr)\\[3pt]
    &=&\scal{\vec{u}_1}{\vec{u}_2}\,.\\
  \end{array}\eqno\mbox{\qedhere}$$
\end{proof}

\begin{lemma}\label{l:EvalOrth2zero}
  Given a valid typing judgment of the form\ \
  $\TYP{\Delta,x:\sharp{A},y:\sharp{B}}{\vec{s}}{C}$,\ \
  a substitution $\sigma\in\sem{\Delta}$, and
  value distributions $\vec{u}_1,\vec{u}_2\in\sem{\sharp{A}}$
  and $\vec{v}_1,\vec{v}_2\in\sem{\sharp{B}}$ such that
  $\scal{\vec{u}_1}{\vec{u}_2}=\scal{\vec{v}_1}{\vec{v}_2}=0$,
  there are value distributions
  $\vec{w}_1,\vec{w}_2\in\sem{C}$ such that
  $$\vec{s}\<\sigma,x:=\vec{u}_j,y:=\vec{v}_j\>\eval\vec{w}_j\quad
  (j=1..2)\qquad\text{and}\qquad
  \scal{\vec{w}_1}{\vec{w}_2}=0\,.$$
\end{lemma}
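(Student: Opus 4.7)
The plan is to iterate Lemma~\ref{l:EvalOrth} over a one-parameter family of unit vectors that interpolate between $\vec{v}_1$ and $\vec{v}_2$, and then to use a polarisation argument to extract the off-diagonal inner product. For any $\lambda\in\C$, I set $\vec{v}_\lambda:=\frac{1}{\sqrt{1+|\lambda|^2}}(\vec{v}_1+\lambda\cdot\vec{v}_2)$; because $\|\vec{v}_1\|=\|\vec{v}_2\|=1$ and $\scal{\vec{v}_1}{\vec{v}_2}=0$, we have $\|\vec{v}_1+\lambda\cdot\vec{v}_2\|=\sqrt{1+|\lambda|^2}$, so that $\vec{v}_\lambda\in\sem{\sharp B}$.

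Repartitioning the typing context as $(\Delta,y{:}\sharp B),\,x{:}\sharp A$ and applying Lemma~\ref{l:EvalOrth} with substitution $\sigma\cup\{y:=\vec{v}_\lambda\}\in\sem{\Delta,y{:}\sharp B}$ yields values $\vec{w}_1^\lambda,\vec{w}_2^\lambda\in\sem{C}$ with $\vec{s}\<\sigma,x:=\vec{u}_j,y:=\vec{v}_\lambda\>\eval\vec{w}_j^\lambda$ and $\scal{\vec{w}_1^\lambda}{\vec{w}_2^\lambda}=\scal{\vec{u}_1}{\vec{u}_2}=0$. Specialising to $\lambda=0$ (so $\vec{v}_\lambda=\vec{v}_1$) and, separately, applying the same lemma with substitution $\sigma\cup\{y:=\vec{v}_2\}$, produces in particular vectors $\vec{w}_j^1,\vec{w}_j^2\in\sem{C}$ satisfying $\scal{\vec{w}_1^1}{\vec{w}_2^1}=\scal{\vec{w}_1^2}{\vec{w}_2^2}=0$. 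Now, by bilinearity of $\<y:=\,{-}\,\>$ (Section~\ref{ss:BilinSubst}) together with linearity of evaluation (Proposition~\ref{l:LinEval}), the distribution $\vec{s}\<\sigma,x:=\vec{u}_j,y:=\vec{v}_\lambda\>$ also evaluates to $\frac{1}{\sqrt{1+|\lambda|^2}}(\vec{w}_j^1+\lambda\cdot\vec{w}_j^2)$, and since both targets are value distributions (hence normal forms), confluence identifies them with $\vec{w}_j^\lambda$. Expanding the vanishing inner product $\scal{\vec{w}_1^\lambda}{\vec{w}_2^\lambda}=0$ by sesquilinearity and cancelling the diagonal terms leaves the identity $\lambda\scal{\vec{w}_1^1}{\vec{w}_2^2}+\bar\lambda\scal{\vec{w}_1^2}{\vec{w}_2^1}=0$, valid for every $\lambda\in\C$. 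Instantiating with $\lambda=1$ and then $\lambda=i$ yields a linear system whose only joint solution forces $\scal{\vec{w}_1^1}{\vec{w}_2^2}=0$, so that taking $\vec{w}_j:=\vec{w}_j^j$ delivers the statement.

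The step I expect to be the main obstacle is not the polarisation itself but the identification of the $\vec{w}_j^\lambda$ obtained from Lemma~\ref{l:EvalOrth} with the explicit combination $(\vec{w}_j^1+\lambda\cdot\vec{w}_j^2)/\sqrt{1+|\lambda|^2}$: this rests on the bilinear substitution distributing correctly over the combination $\vec{v}_\lambda$ and on invoking confluence to match the two normal forms. Once that identification is in place, the remainder is a verbatim instance of the polarisation trick already used inside the proof of Lemma~\ref{l:EvalOrth}.
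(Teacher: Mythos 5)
Your proof is correct, but it takes a genuinely different route from the paper's. The paper works with \emph{differences}: it invokes Lemma~\ref{l:CombNormalize} to renormalize $\vec{u}_2+(-1)\cdot\vec{u}_1$ and $\vec{v}_2+(-1)\cdot\vec{v}_1$, builds the full $3\times 3$ grid of normal forms $\vec{w}_{j,k}$ of $\vec{s}\<\sigma,x:=\vec{u}_j,y:=\vec{v}_k\>$, tracks the resulting weak-vector-space identities (complete with non-erasable $0\cdot$ ghost summands), and concludes by showing that $\scal{\vec{w}_1}{\vec{w}_2}$ equals both $-\scal{\vec{w}_{2,1}}{\vec{w}_{1,2}}$ and its conjugate, hence is real, and then reruns the argument with $i\cdot\vec{u}_2$ to force it to vanish. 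You instead work with \emph{sums} $\vec{v}_1+\lambda\cdot\vec{v}_2$, and here the hypothesis $\scal{\vec{v}_1}{\vec{v}_2}=0$ pays off: the norm is exactly $\sqrt{1+|\lambda|^2}>0$, so the normalized combination lands in $\sem{\sharp{B}}$ without any appeal to Lemma~\ref{l:CombNormalize} and without the degenerate-norm case that lemma has to handle. Your identification of $\vec{w}_j^\lambda$ with $(\vec{w}_j^1+\lambda\cdot\vec{w}_j^2)/\sqrt{1+|\lambda|^2}$ is exactly the combination of bilinearity of substitution, Proposition~\ref{l:LinEval} and uniqueness of normal forms that the paper itself uses for its grid identities, so that step is sound; the polarisation over $\lambda\in\{1,i\}$ then kills the cross term in two lines. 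The net effect is a shorter, cleaner argument; what the paper's heavier machinery buys is that it is the template that survives when the orthogonality hypotheses are dropped (Lemma~\ref{l:EvalOrth2}), where your normalization trick is no longer available. One small caution: at $\lambda=0$ you write $\vec{v}_\lambda=\vec{v}_1$, but in the weak vector space $1\cdot(\vec{v}_1+0\cdot\vec{v}_2)$ is \emph{not} equal to $\vec{v}_1$ (the summand $0\cdot\vec{v}_2$ cannot be erased, cf.\ Example~\ref{ex:DistrDecomp}). This is harmless, since the polarisation only needs $\lambda\in\{1,i\}$ and you can obtain $\vec{w}_j^1$ by applying Lemma~\ref{l:EvalOrth} directly with $y:=\vec{v}_1$, but the shortcut as written is not literally valid.
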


\begin{proof}
  From Lemma~\ref{l:CombNormalize}, we know that there are
  $\vec{u}_0\in\sem{\sharp{A}}$,
  $\vec{v}_0\in\sem{\sharp{B}}$ and $\lambda,\mu\in\C$ such that
  $$\vec{u}_2+(-1)\cdot\vec{u}_1~=~\lambda\cdot\vec{u}_0
  \qquad\text{and}\qquad
  \vec{v}_2+(-1)\cdot\vec{v}_1~=~\mu\cdot\vec{v}_0\,.$$
  For all $j,k\in\{0,1,2\}$, we have
  $\sigma,x:=\vec{u}_j,y:=\vec{v}_k
  \in\sem{\Delta,x:\sharp{A},y:\sharp{B}}$, hence there is
  $\vec{w}_{j,k}\in\sem{C}$ such that
  $\vec{s}\<\sigma,x:=\vec{u}_j,y:=\vec{v}_k\>\eval\vec{w}_{j,k}$.
  In particular, we can take $\vec{w}_1:=\vec{w}_{1,1}$ and
  $\vec{w}_2:=\vec{w}_{2,2}$.
  Now, we observe that
  \begin{enumerate}
  \item $\vec{u}_1+\lambda\cdot\vec{u}_0=
    \vec{u}_1+\vec{u}_2+(-1)\cdot\vec{u}_1=
    \vec{u}_2+0\cdot\vec{u}_1$,
    so that from the linearity of substitution, the linearity of
    evaluation and from the uniqueness of normal forms, we get
    \begin{center}\medbreak
      as well as\hfill
      $\begin{array}[b]{r@{~{}~}c@{~{}~}l}
        \vec{w}_{1,k}+\lambda\cdot\vec{w}_{0,k}&=&
        \vec{w}_{2,k}+0\cdot\vec{w}_{1,k}\\
        \vec{w}_{2,k}+(-\lambda)\cdot\vec{w}_{0,k}&=&
        \vec{w}_{1,k}+0\cdot\vec{w}_{2,k}\\
      \end{array}$\hfill
      (for all $k\in\{0,1,2\}$)\medbreak
    \end{center}
  \item $\vec{v}_1+\mu\cdot\vec{v}_0=
    \vec{v}_1+\vec{v}_2+(-1)\cdot\vec{v}_1=
    \vec{v}_2+0\cdot\vec{v}_1$,
    so that from the linearity of substitution, the linearity of
    evaluation and from the uniqueness of normal forms, we get
    \begin{center}\medbreak
      as well as\hfill
      $\begin{array}[b]{r@{~{}~}c@{~{}~}l}
        \vec{w}_{j,1}+\mu\cdot\vec{w}_{j,0}&=&
        \vec{w}_{j,2}+0\cdot\vec{w}_{j,1}\\
        \vec{w}_{j,2}+(-\mu)\cdot\vec{w}_{j,0}&=&
        \vec{w}_{j,1}+0\cdot\vec{w}_{j,2}\\
      \end{array}$\hfill
      (for all $j\in\{0,1,2\}$)\medbreak
    \end{center}
  \item $\scal{\vec{u}_1}{\vec{u}_2}=0$, so that from
    Lemma~\ref{l:EvalOrth} we get\quad
    $\scal{\vec{w}_{1,k}}{\vec{w}_{2,k}}~=~0$\hfill
    (for all $k\in\{0,1,2\}$)
  \item $\scal{\vec{v}_1}{\vec{v}_2}=0$, so that from
    Lemma~\ref{l:EvalOrth} we get\quad
    $\scal{\vec{w}_{j,1}}{\vec{w}_{j,2}}~=~0$\hfill
    (for all $j\in\{0,1,2\}$)
  \end{enumerate}
  From the above, we get:
  $$\begin{array}{r@{~{}~}c@{~{}~}l@{\qquad}r}
    \scal{\vec{w}_1}{\vec{w}_2}
    &=&\scal{\vec{w}_{1,1}}{\vec{w}_{2,2}}
    ~=~\scal{\vec{w}_{1,1}}{\vec{w}_{2,2}+0\cdot\vec{w}_{1,2}}\\
    &=&\scal{\vec{w}_{1,1}}{\vec{w}_{1,2}+\lambda\cdot\vec{w}_{0,2}}&
    (\text{from~(1)},~k=2)\\
    &=&\scal{\vec{w}_{1,1}}{\vec{w}_{1,2}}+
    \lambda\scal{\vec{w}_{1,1}}{\vec{w}_{0,2}}\\
    &=&0+\lambda\scal{\vec{w}_{1,1}}{\vec{w}_{0,2}}&
    (\text{from~(4)},~j=1)\\
    &=&\lambda\scal{\vec{w}_{1,1}+0\cdot\vec{w}_{2,1}}{\vec{w}_{0,2}}\\
    &=&\lambda\scal{\vec{w}_{2,1}+(-\lambda)\cdot\vec{w}_{0,1}}{\vec{w}_{0,2}}&
    (\text{from~(1)},~k=1)\\
    &=&\lambda\scal{\vec{w}_{2,1}}{\vec{w}_{0,2}}
    -|\lambda|^2\scal{\vec{w}_{0,1}}{\vec{w}_{0,2}}\\
    &=&\lambda\scal{\vec{w}_{2,1}}{\vec{w}_{0,2}}-0&
    (\text{from~(4)},~j=0)\\
    &=&\scal{\vec{w}_{2,1}}{\vec{w}_{2,2}+(-1)\cdot\vec{w}_{1,2}}\\
    &=&\scal{\vec{w}_{2,1}}{\vec{w}_{2,2}}-
    \scal{\vec{w}_{2,1}}{\vec{w}_{1,2}}\\
    &=&0-\scal{\vec{w}_{2,1}}{\vec{w}_{1,2}}&
    (\text{from~(4)},~j=2)\\
  \end{array}$$
  Hence
  $\scal{\vec{w}_1}{\vec{w}_2}=
  \scal{\vec{w}_{1,1}}{\vec{w}_{2,2}}=
  -\scal{\vec{w}_{2,1}}{\vec{w}_{1,2}}$.
  Exchanging the indices~$j$ and~$k$ in the above reasoning,
  we also get
  $\scal{\vec{w}_1}{\vec{w}_2}=
  \scal{\vec{w}_{1,1}}{\vec{w}_{2,2}}=
  -\scal{\vec{w}_{1,2}}{\vec{w}_{2,1}}$,
  so that we have
  $\scal{\vec{w}_1}{\vec{w}_2}
  =-\scal{\vec{w}_{2,1}}{\vec{w}_{1,2}}
  =-\overline{\scal{\vec{w}_{2,1}}{\vec{w}_{1,2}}}\in\R$.
  If we now replace $\vec{u}_2\in\sem{\sharp{A}}$ with
  $i\,\vec{u}_2\in\sem{\sharp{A}}$, the very same technique allows us
  to prove that
  $i\scal{\vec{w}_1}{\vec{w}_2}=\scal{\vec{w}_1}{i\vec{w}_2}\in\R$.
  Therefore $\scal{\vec{w}_1}{\vec{w}_2}=0$.
\end{proof}

\begin{lemma}\label{l:EvalOrth2}
  Given a valid typing judgment of the form\ \
  $\TYP{\Delta,x:\sharp{A},y:\sharp{B}}{\vec{s}}{C}$,\ \
  a substitution $\sigma\in\sem{\Delta}$, and
  value distributions $\vec{u}_1,\vec{u}_2\in\sem{\sharp{A}}$
  and $\vec{v}_1,\vec{v}_2\in\sem{\sharp{B}}$,
  there are value distributions
  $\vec{w}_1,\vec{w}_2\in\sem{C}$ such that
  $$\vec{s}\<\sigma,x:=\vec{u}_j,y:=\vec{v}_j\>\eval\vec{w}_j\quad
  (j=1..2)\qquad\text{and}\qquad
  \scal{\vec{w}_1}{\vec{w}_2}=
  \scal{\vec{u}_1}{\vec{u}_2}\scal{\vec{v}_1}{\vec{v}_2}\,.$$
\end{lemma}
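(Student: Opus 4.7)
The plan is to reduce the general case to the orthogonal case already settled by Lemma~\ref{l:EvalOrth2zero}, via a Gram--Schmidt-style decomposition. Set $\alpha := \scal{\vec{u}_1}{\vec{u}_2}$ and $\beta := \scal{\vec{v}_1}{\vec{v}_2}$. Applying Lemma~\ref{l:CombNormalize} twice, I obtain auxiliary unit distributions $\vec{u}_0 \in \sem{\sharp{A}}$, $\vec{v}_0 \in \sem{\sharp{B}}$ and scalars $\lambda, \mu \in \C$ satisfying
\[
\vec{u}_2 + (-\alpha)\cdot\vec{u}_1 ~=~ \lambda\cdot\vec{u}_0
\quad\text{and}\quad
\vec{v}_2 + (-\beta)\cdot\vec{v}_1 ~=~ \mu\cdot\vec{v}_0.
\]
Adding $\alpha\cdot\vec{u}_1$ (resp.\ $\beta\cdot\vec{v}_1$) to both sides yields the weak-vector-space identities $\vec{u}_2 + 0\cdot\vec{u}_1 = \alpha\cdot\vec{u}_1 + \lambda\cdot\vec{u}_0$ and $\vec{v}_2 + 0\cdot\vec{v}_1 = \beta\cdot\vec{v}_1 + \mu\cdot\vec{v}_0$, where the dangling $0$-coefficient ghost terms cannot be absorbed into $\vec{0}$. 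A direct computation of inner products further shows that whenever $\lambda \ne 0$ one has $\vec{u}_1 \perp \vec{u}_0$, and symmetrically $\vec{v}_1 \perp \vec{v}_0$ whenever $\mu \ne 0$.

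I then pick, for each $(j,k) \in \{0,1,2\}^2$, a realizer $\vec{w}_{j,k} \in \sem{C}$ such that $\vec{s}\<\sigma, x:=\vec{u}_j, y:=\vec{v}_k\> \eval \vec{w}_{j,k}$ (guaranteed by the assumed validity of the typing judgment), setting $\vec{w}_1 := \vec{w}_{1,1}$ and $\vec{w}_2 := \vec{w}_{2,2}$ as required. Applying bilinearity of closed substitution in both $x$ and $y$ to the identities of the previous paragraph, then using linearity and confluence of evaluation together with uniqueness of normal forms, I obtain
\[
\vec{w}_{2,2} + 0\cdot\vec{w}_{1,2} + 0\cdot\vec{w}_{2,1} + 0\cdot\vec{w}_{1,1}
~=~ \alpha\beta\cdot\vec{w}_{1,1} + \alpha\mu\cdot\vec{w}_{1,0} + \lambda\beta\cdot\vec{w}_{0,1} + \lambda\mu\cdot\vec{w}_{0,0}.
\]
Taking the inner product of both sides with the unit vector $\vec{w}_{1,1}$, and using that every $0$-weighted summand contributes zero, I reach
\[
\scal{\vec{w}_1}{\vec{w}_2}
~=~ \alpha\beta + \alpha\mu\scal{\vec{w}_{1,1}}{\vec{w}_{1,0}} + \lambda\beta\scal{\vec{w}_{1,1}}{\vec{w}_{0,1}} + \lambda\mu\scal{\vec{w}_{1,1}}{\vec{w}_{0,0}}.
\]

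It remains to argue that the three cross-terms vanish uniformly. For the first, Lemma~\ref{l:EvalOrth} applied to the $y$-variable with $x := \vec{u}_1$ absorbed into the ambient context substitution gives $\scal{\vec{w}_{1,1}}{\vec{w}_{1,0}} = \scal{\vec{v}_1}{\vec{v}_0}$; this equals zero when $\mu \ne 0$ by orthogonality of $\vec{v}_1$ and $\vec{v}_0$, and when $\mu = 0$ the prefactor $\alpha\mu$ annihilates the term directly. The second cross-term is treated symmetrically by applying Lemma~\ref{l:EvalOrth} to the $x$-variable. For the third, whenever $\lambda\mu \ne 0$ both pairs $(\vec{u}_1, \vec{u}_0)$ and $(\vec{v}_1, \vec{v}_0)$ are orthogonal, so Lemma~\ref{l:EvalOrth2zero} yields $\scal{\vec{w}_{1,1}}{\vec{w}_{0,0}} = 0$; otherwise the coefficient $\lambda\mu$ itself is zero. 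In every case, $\scal{\vec{w}_1}{\vec{w}_2} = \alpha\beta = \scal{\vec{u}_1}{\vec{u}_2}\,\scal{\vec{v}_1}{\vec{v}_2}$, as required. The main delicacy I expect is the persistent bookkeeping of the $0$-coefficient ghost terms arising from the weak-vector-space structure: they must be carried through every substitution, evaluation and pairing step, but fortunately always contribute zero to inner products, which is precisely what permits a uniform treatment of the generic and degenerate ($\lambda = 0$ or $\mu = 0$) cases.
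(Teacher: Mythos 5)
Your proof is correct and follows essentially the same route as the paper's: the same Gram--Schmidt decomposition via Lemma~\ref{l:CombNormalize}, the same grid of realizers $\vec{w}_{j,k}$, the same key identity obtained from bilinearity of substitution, linearity of evaluation and uniqueness of normal forms, and the same dispatch of the three cross-terms to Lemmas~\ref{l:EvalOrth} and~\ref{l:EvalOrth2zero} with the degenerate cases $\lambda=0$ or $\mu=0$ absorbed by the prefactors. The only (immaterial) difference is that you expand bilinearly in $x$ and $y$ simultaneously rather than one variable at a time, which changes which $0$-coefficient ghost summands appear on the left-hand side of the key identity but not the resulting inner product.
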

\begin{proof}
  Let $\alpha=\scal{\vec{u}_1}{\vec{u}_2}$ and
  $\beta=\scal{\vec{v}_1}{\vec{v}_2}$.
  We observe that
  $$\scal{\vec{u}_1}{\vec{u}_2+(-\alpha)\cdot\vec{u}_1}
  ~=~\scal{\vec{u}_1}{\vec{u}_2}
  -\alpha\scal{\vec{u}_1}{\vec{u}_1}~=~\alpha-\alpha=0$$
  and, similarly, that
  $\scal{\vec{v}_1}{\vec{v}_2+(-\beta)\cdot\vec{v}_1}=0$.
  From Lemma~\ref{l:CombNormalize}, we know that there are
  $\vec{u}_0\in\sem{\sharp{A}}$,
  $\vec{v}_0\in\sem{\sharp{B}}$ and $\lambda,\mu\in\C$ such that
  $$\vec{u}_2+(-\alpha)\cdot\vec{u}_1~=~\lambda\cdot\vec{u}_0
  \qquad\text{and}\qquad
  \vec{v}_2+(-\beta)\cdot\vec{v}_1~=~\mu\cdot\vec{v}_0\,.$$
  For all $j,k\in\{0,1,2\}$, we have
  $\sigma,x:=\vec{u}_j,y:=\vec{v}_k
  \in\sem{\Delta,x:\sharp{A},y:\sharp{B}}$, hence there is
  $\vec{w}_{j,k}\in\sem{C}$ such that
  $\vec{s}\<\sigma,x:=\vec{u}_j,y:=\vec{v}_k\>\eval\vec{w}_{j,k}$.
  In particular, we can take $\vec{w}_1:=\vec{w}_{1,1}$ and
  $\vec{w}_2:=\vec{w}_{2,2}$.
  Now, we observe that
  \begin{enumerate}
  \item $\lambda\cdot\vec{u}_0+\alpha\cdot\vec{u}_1
    =\vec{u}_2+(-\alpha)\cdot\vec{u}_1+\alpha\cdot\vec{u}_1
    =\vec{u}_2+0\cdot\vec{u}_1$,
    so that from the linearity of substitution, the linearity of
    evaluation and from the uniqueness of normal forms, we get
    $$\lambda\cdot\vec{w}_{0,k}+\alpha\cdot\vec{w}_{1,k}
    =\vec{w}_{2,k}+0\cdot\vec{w}_{1,k}
    \eqno(\text{for all}~k\in\{0,1,2\})$$
  \item $\mu\cdot\vec{v}_0+\beta\cdot\vec{v}_1
    =\vec{v}_2+(-\beta)\cdot\vec{v}_1+\beta\cdot\vec{v}_1
    =\vec{v}_2+0\cdot\vec{v}_1$,
    so that from the linearity of substitution, the linearity of
    evaluation and from the uniqueness of normal forms, we get
    $$\mu\cdot\vec{w}_{j,0}+\beta\cdot\vec{w}_{j,1}
    =\vec{w}_{j,2}+0\cdot\vec{w}_{j,1}
    \eqno(\text{for all}~j\in\{0,1,2\})$$
  \item$\scal{\vec{u}_1}{\lambda\cdot\vec{u}_0}=
    \scal{\vec{u}_1}{\vec{u}_2+(-\alpha)\cdot\vec{u}_1}=0$,
    so that from Lemma~\ref{l:EvalOrth} we get
    $$\scal{\vec{w}_{1,k}}{\lambda\cdot\vec{w}_{0,k}}~=~0
    \eqno(\text{for all}~k\in\{0,1,2\})$$
    (The equality $\scal{\vec{w}_{1,k}}{\lambda\cdot\vec{w}_{0,k}}=0$
    is trivial when $\lambda=0$, and when $\lambda\neq 0$, we deduce
    from the above that $\scal{\vec{u}_1}{\vec{u}_0}=0$, from which we
    get $\scal{\vec{w}_{1,k}}{\vec{w}_{0,k}}=0$ by
    Lemma~\ref{l:EvalOrth}.)
  \item$\scal{\vec{v}_1}{\mu\cdot\vec{v}_0}=
    \scal{\vec{v}_1}{\vec{v}_2+(-\beta)\cdot\vec{v}_1}=0$,
    so that from Lemma~\ref{l:EvalOrth} we get 
    $$\scal{\vec{w}_{j,1}}{\mu\cdot\vec{w}_{j,0}}~=~0
    \eqno(\text{for all}~j\in\{0,1,2\})$$
  \item$\scal{\vec{u}_1}{\lambda\cdot\vec{u}_0}=
    \scal{\vec{v}_1}{\mu\cdot\vec{v}_0}=0$,
    so that from Lemma~\ref{l:EvalOrth2zero} we get
    $$\scal{\vec{w}_{1,1}}{\lambda\mu\cdot\vec{w}_{0,0}}~=~0$$
    (Again, the equality
    $\scal{\vec{w}_{1,1}}{\lambda\mu\cdot\vec{w}_{0,0}}=0$ is trivial
    when $\lambda=0$ or $\mu=0$, and when $\lambda,\mu\neq 0$, we
    deduce from the above that
    $\scal{\vec{u}_1}{\vec{u}_0}=\scal{\vec{v}_1}{\vec{v}_0}=0$, from
    which we get $\scal{\vec{w}_{1,1}}{\vec{w}_{0,0}}=0$ by
    Lemma~\ref{l:EvalOrth2zero}.)
  \end{enumerate}
  From the above, we get
  $$\begin{array}{c@{~{}~}l@{\qquad}r}
    &\vec{w}_{2,2}+0\cdot\vec{w}_{1,2}
    +0\cdot\vec{w}_{0,1}+0\cdot\vec{w}_{1,1}\\
    =&\lambda\cdot\vec{w}_{0,2}+\alpha\cdot\vec{w}_{1,2}
    +0\cdot\vec{w}_{0,1}+0\cdot\vec{w}_{1,1}&
    (\text{from (1), $k=1$})\\
    =&\lambda\cdot(\vec{w}_{0,2}+0\cdot\vec{w}_{0,1})+
    \alpha\cdot(\vec{w}_{1,2}+0\cdot\vec{w}_{1,1})\\
    =&\lambda\cdot(\mu\cdot\vec{w}_{0,0}+\beta\cdot\vec{w}_{0,1})
    +\alpha\cdot(\mu\cdot\vec{w}_{1,0}+\beta\cdot\vec{w}_{1,1})&
    (\text{from (2), $j=0,1$})\\
    =&\lambda\mu\cdot\vec{w}_{0,0}+
    \beta\lambda\cdot\vec{w}_{0,1}+
    \alpha\mu\cdot\vec{w}_{1,0}+
    \alpha\beta\cdot\vec{w}_{1,1}\\
  \end{array}$$
  Therefore:
  $$\begin{array}{r@{~{}~}c@{~{}~}l@{\qquad}r}
    \scal{\vec{w}_1}{\vec{w}_2}
    &=&\scal{\vec{w}_{1,1}}{\vec{w}_{2,2}+0\cdot\vec{w}_{1,2}
      +0\cdot\vec{w}_{0,1}+0\cdot\vec{w}_{1,1}}\\
    &=&\scal{\vec{w}_{1,1}}{\lambda\mu\cdot\vec{w}_{0,0}+
      \beta\lambda\cdot\vec{w}_{0,1}+
      \alpha\mu\cdot\vec{w}_{1,0}+
      \alpha\beta\cdot\vec{w}_{1,1}}\\
    &=&\scal{\vec{w}_{1,1}}{\lambda\mu\cdot\vec{w}_{0,0}}+
    \beta\scal{\vec{w}_{1,1}}{\lambda\cdot\vec{w}_{0,1}}+
    \alpha\scal{\vec{w}_{1,1}}{\mu\cdot\vec{w}_{1,0}}+
    \alpha\beta\scal{\vec{w}_{1,1}}{\vec{w}_{1,1}}\\
    &=&0+0+0+\alpha\beta\cdot 1
    ~=~\scal{\vec{u}_1}{\vec{u}_2}\,\scal{\vec{v}_1}{\vec{v}_2}\\
  \end{array}$$
  from (5), (3) (with $k=1$) and (4) (with $j=1$), and concluding with
  the definition of~$\alpha$ and~$\beta$.
\end{proof}

\begin{lemma}\label{l:BilinSubstProp}
  For all $\vec{t},\vec{s},\vec{s}_1,\vec{s}_2\in\vec\Lambda(\X)$
  and $\vec{v},\vec{v}_1,\vec{v}_2,\vec{w}\in\vec\Val(\X)$:
  \begin{enumerate}
  \item $\Inl{\vec{v}}\<x:=\vec{w}\>~=~
    \Inl{\vec{v}\<x:=\vec{w}\>}$
  \item $\Inr{\vec{v}}\<x:=\vec{w}\>~=~
    \Inr{\vec{v}\<x:=\vec{w}\>}$
  \item If $x\notin\FV(\vec{v}_1)$, then\quad
    $\Pair{\vec{v}_1}{\vec{v}_2}\<x:=\vec{w}\>~=~
    \Pair{\vec{v}_1}{\vec{v}_2\<x:=\vec{w}\>}$
  \item If $x\notin\FV(\vec{v}_2)$, then\quad
    $\Pair{\vec{v}_1}{\vec{v}_2}\<x:=\vec{w}\>~=~
    \Pair{\vec{v}_1\<x:=\vec{w}\>}{\vec{v}_2}$
  \item If $x\notin\FV(\vec{s})$, then\quad
    $(\vec{s}\,\vec{t})\<x:=\vec{w}\>~=~
    \vec{s}~\vec{t}\<x:=\vec{w}\>$
  \item If $x\notin\FV(\vec{t})$, then\quad
    $(\vec{s}\,\vec{t})\<x:=\vec{w}\,\>~=~
    \vec{s}\<x:=\vec{w}\>~\vec{t}$
  \item If $x\notin\FV(\vec{s})$, then\quad
    $(\vec{t};\vec{s})\<x:=\vec{w}\>~=~
    \vec{t}\<x:=\vec{w}\>;\vec{s}$
  \item If $x\notin\FV(\vec{s})$, then\quad
    $(\LetP{x_1}{x_2}{\vec{t}}{\vec{s}})\<x:=\vec{w}\>~=~
    \LetP{x_1}{x_2}{\vec{t}\<x:=\vec{w}\>}{\vec{s}}$
  \item If $x\notin\FV(\vec{s}_1,\vec{s}_2)$, then\quad
    $
    (\Match{\vec{t}}{x_1}{\vec{s}_1}{x_2}{\vec{s}_2})\<x:=\vec{w}\>~={}
    \Match{\vec{t}\<x:=\vec{w}\>}{x_1}{\vec{s}_1}{x_2}{\vec{s}_2}
    $
  \end{enumerate}\vspace{-1.3em}\qed
\end{lemma}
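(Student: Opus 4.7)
The plan is to verify each of the nine identities by expanding both sides according to the definitions of the bilinear extensions of the syntactic constructs (Section~\ref{ss:ExtLin}) and of the bilinear substitution (Section~\ref{ss:BilinSubst}), and then appealing to linearity of the \emph{pure} substitution $[x:=w]$ together with the fact that $t[x:=w]=t$ whenever $x\notin\FV(t)$. Throughout, write the distributions in canonical form $\vec{t}=\sum_k\gamma_k\cdot t_k$, $\vec{s}=\sum_\ell\delta_\ell\cdot s_\ell$, $\vec{v}=\sum_i\alpha_i\cdot v_i$, $\vec{v}_1=\sum_i\alpha_i\cdot v_{1,i}$, $\vec{v}_2=\sum_k\gamma_k\cdot v_{2,k}$, and $\vec{w}=\sum_j\beta_j\cdot w_j$.

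For items (1) and (2), since $\Inl$ and $\Inr$ are unary constructors, their linear extension is already a single sum: $\Inl{\vec{v}}=\sum_i\alpha_i\cdot\Inl{v_i}$. Then $\Inl{\vec{v}}\<x:=\vec{w}\>=\sum_{i,j}\alpha_i\beta_j\cdot\Inl{v_i[x:=w_j]}$, which is exactly $\Inl{\vec{v}\<x:=\vec{w}\>}$ by linearity of pure substitution and the linear extension of $\Inl{\,\cdot\,}$. Items (5)--(9) are entirely analogous once one notes that each of $\vec{t};\vec{s}$, $\LetP{x_1}{x_2}{\vec{t}}{\vec{s}}$, $\Match{\vec{t}}{x_1}{\vec{s}_1}{x_2}{\vec{s}_2}$ is linear in~$\vec{t}$ alone (the other occurrences being inside binders, hence treated as raw terms), and application $\vec{s}\,\vec{t}$ is linear in each of $\vec{s}$ and $\vec{t}$ separately; in each case the side-condition $x\notin\FV(\cdot)$ ensures that the pure substitution $[x:=w_j]$ acts as the identity on the raw component, so the $\beta_j$-weighting is absorbed only by the intended argument.

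The delicate items are (3) and (4) because pairing is \emph{bilinear}: $\Pair{\vec{v}_1}{\vec{v}_2}=\sum_{i,k}\alpha_i\gamma_k\cdot\Pair{v_{1,i}}{v_{2,k}}$. Expanding gives
\[
\Pair{\vec{v}_1}{\vec{v}_2}\<x:=\vec{w}\>
=\sum_{i,k,j}\alpha_i\gamma_k\beta_j\cdot\Pair{v_{1,i}[x:=w_j]}{v_{2,k}[x:=w_j]}.
\]
Under the assumption $x\notin\FV(\vec{v}_1)$, we have $v_{1,i}[x:=w_j]=v_{1,i}$ for every~$i,j$, so the $j$-sum only affects the second component, and we recover $\Pair{\vec{v}_1}{\vec{v}_2\<x:=\vec{w}\>}$ by refolding the linear extension of pairing. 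Item (4) is symmetric. This is also where one sees why the hypothesis cannot be dropped: if $x$ occurred in both $\vec{v}_1$ and $\vec{v}_2$, a single $\beta_j$-weighted summand would substitute the \emph{same} $w_j$ on both sides of the pair, which is strictly stronger than distributing the substitution and thereby pairing $w_j$ with $w_{j'}$ for all $j,j'$.

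The only real obstacle is keeping the multi-indexed summations under control; there is no conceptual difficulty, just a routine bookkeeping exercise of reordering sums and matching them to the definitions. Once the general shape is established on items (1) and (3), the remaining seven items follow the same template, distinguishing only whether the construct in question is genuinely bilinear (pairing, application) or only linear in the evaluated position (sequence, let-pair, match), and appealing to the side-condition to neutralize pure substitution where it is required to be silent.
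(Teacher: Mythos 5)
Your proof is correct and is exactly the intended argument: the paper states Lemma~\ref{l:BilinSubstProp} without proof (marking it as routine with a bare \textsc{qed}), and your unfolding into canonical forms, the linearity of pure substitution, and the identity $t[x:=w]=t$ for $x\notin\FV(t)$ is the same bookkeeping the paper carries out explicitly for the companion Lemma~\ref{l:Substs} on pure substitution. Your explanation of why items (3)--(9) need the freshness side-conditions also matches the paper's own warning in Section~\ref{ss:BilinSubst} that $\Pair{v_1}{v_2}\<x:=\vec{w}\>\neq\Pair{v_1\<x:=\vec{w}\>}{v_2\<x:=\vec{w}\>}$ in general.
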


\recap{Proposition}{p:TypingRulesValid}{The typing rules of Table~\ref{tab:TypingRules} are valid.}
\begin{proof}
  $\rnam{Axiom}$\quad
  It is clear that $\sdom(x:A)\subseteq\{x\}=\dom(x:A)$.
  Moreover, given $\sigma\in\sem{x:A}$, we have
  $\sigma=\{x:=\vec{v}\}$ for some $\vec{v}\in\sem{A}$.
  Therefore $x\<\sigma\>=x\<x:=\vec{v}\>=\vec{v}\real A$.
  \smallbreak\noindent
  $\rnam{Sub}$\quad Obvious since $\semr{A}\subseteq\semr{A'}$.
  \smallbreak\noindent
  $\rnam{App}$\quad Suppose that both judgments
  $\TYP{\Gamma}{\vec{s}}{A\Arr B}$ and $\TYP{\Delta}{\vec{t}}{A}$
  are valid, that is:
  \begin{itemize}
  \item[$\bullet$]
    $\sdom(\Gamma)\subseteq\FV(\vec{s}\,)\subseteq\dom(\Gamma)$\quad
    and\quad $\vec{s}\,\<\sigma\>\real A\Arr B$\ \ for all
    $\sigma\in\sem{\Gamma}$.
  \item[$\bullet$]
    $\sdom(\Delta)\subseteq\FV(\vec{t}\,)\subseteq\dom(\Delta)$\quad
    and\quad $\vec{t}\,\<\sigma\>\real A$\ \ for all
    $\sigma\in\sem{\Delta}$.
  \end{itemize}
  From the above, it is clear that
  $\sdom(\Gamma,\Delta)\subseteq\FV(\vec{s}~\vec{t}\,)
  \subseteq\dom(\Gamma,\Delta)$.
  Now, given $\sigma\in\sem{\Gamma,\Delta}$, we observe that
  $\sigma=\sigma_{\Gamma},\sigma_{\Delta}$ for some
  $\sigma_{\Gamma}\in\sem{\Gamma}$ and
  $\sigma_{\Delta}\in\sem{\Delta}$.
  And since $\FV(\vec{t}\,)\cap\dom(\sigma_{\Gamma})=\varnothing$
  and $\FV(\vec{s}\,)\cap\dom(\sigma_{\Delta})=\varnothing$, we
  deduce from Lemma~\ref{l:BilinSubstProp}~(5), (6)
  p.~\pageref{l:BilinSubstProp} that
  $$(\vec{s}~\vec{t}\,)\<\sigma\>
  ~=~(\vec{s}~\vec{t}\,)\<\sigma_{\Gamma}\>\<\sigma_{\Delta}\>
  ~=~(\vec{s}\<\sigma_{\Gamma}\>~\vec{t}\,)\<\sigma_{\Delta}\>
  ~=~\vec{s}\<\sigma_{\Gamma}\>~\vec{t}\<\sigma_{\Delta}\>\,.$$
  We conclude that\ \
  $(\vec{s}~\vec{t}\,)\<\sigma\>=
  \vec{s}\<\sigma_{\Gamma}\>~\vec{t}\<\sigma_{\Delta}\>\real B$\ \
  from Lemma~\ref{l:App}.
  \smallbreak\noindent
  $\rnam{PureLam}$\quad Given a context
  $\Gamma=x_1:A_1,\ldots,x_{\ell}:A_{\ell}$ such that
  $\EQV{\flat{A_i}}{A_i}$ for all $i=1..\ell$, we suppose that the
  judgment $\TYP{\Gamma,x:A}{\vec{t}}{B}$ is valid, that is:
  \begin{itemize}
  \item[$\bullet$]
    $\sdom(\Gamma,x:A)\subseteq
    \FV(\vec{t}\,)\subseteq\dom(\Gamma,x:A)$\quad
    and\quad $\vec{t}\,\<\sigma\>\real B$\ \ for all
    $\sigma\in\sem{\Gamma,x:A}$.
  \end{itemize}
  From the above, it is clear that
  $\sdom(\Gamma)\subseteq\FV(\Lam{x}{\vec{t}\,})
  \subseteq\dom(\Gamma)$.
  Now, given $\sigma\in\sem{\Gamma}$, we want to prove
  that $(\Lam{x}{\vec{t}\,})\<\sigma\>\real A\arr B$.
  Due to our initial assumption on the context $\Gamma$, it is clear
  that $\sigma=\{x_1:=v_1,\ldots,x_{\ell}:=v_{\ell}\}$ for some closed
  pure values $v_1,\ldots,v_{\ell}$.
  Hence
  $$(\Lam{x}{\vec{t}\,})\<\sigma\>~=~
  (\Lam{x}{\vec{t}\,})[x_1:=v_1]\cdots[x_{\ell}:=v_{\ell}]
  ~=~\Lam{x}{\vec{t}\,[x_1:=v_1]\cdots[x_{\ell}:=v_{\ell}]}$$
  (since the variables $x_1,\ldots,x_{\ell}$ are all distinct
  from~$x$).
  For all $\vec{v}\in\sem{A}$, we observe that
  $$(\vec{t}\,[x_1:=v_1]\cdots[x_{\ell}:=v_{\ell}])\<x:=\vec{v}\,\>
  ~=~\vec{t}\,\<\sigma,\{x:=\vec{v}\,\}\>~\real~B\,,$$
  since $\sigma,\{x:=\vec{v}\,\}\in\sem{\Gamma,x:A}$.
  Therefore $(\Lam{x}{\vec{t}\,})\<\sigma\>\real A\arr B$.
  \smallbreak\noindent
  $\rnam{UnitLam}$\quad Suppose that the judgment
  $\TYP{\Gamma,x:A}{\vec{t}}{B}$ is valid, that is:
  \begin{itemize}
  \item[$\bullet$]
    $\sdom(\Gamma,x:A)\subseteq
    \FV(\vec{t}\,)\subseteq\dom(\Gamma,x:A)$\quad
    and\quad $\vec{t}\,\<\sigma\>\real B$\ \ for all
    $\sigma\in\sem{\Gamma,x:A}$.
  \end{itemize}
  From the above, it is clear that
  $\sdom(\Gamma)\subseteq\FV(\Lam{x}{\vec{t}\,})\subseteq\dom(\Gamma)$.
  Now, given $\sigma\in\sem{\Gamma}$, we want to prove
  that $(\Lam{x}{\vec{t}\,})\<\sigma\>\real A\Arr B$.
  For that, we write:
  \begin{itemize}
  \item $\Gamma=x_1:A_1,\ldots,x_{\ell}:A_{\ell}$
    (where $x_1,\ldots,x_{\ell}$ are all distinct from~$x$);
  \item $\sigma=\{x_1:=\vec{v}_1,\ldots,x_{\ell}:=\vec{v}_{\ell}\}$
    (where $\vec{v}_i\in\sem{A_i}$ for all $i=1..\ell$);
  \item $\vec{v}_i=\sum_{j=1}^{n_i}\alpha_{i,j}\cdot{v_{i,j}}$
    (in canonical form) for all $i=1..\ell$.
  \end{itemize}
  Now we observe that
  $$\begin{array}{rcl}
    (\Lam{x}{t})\<\sigma\>
    &=&\sum_{i_1=1}^{n_1}\cdots\sum_{i_{\ell}=1}^{n_{\ell}}
    \alpha_{1,i_1}\cdots\alpha_{\ell,i_{\ell}}\cdot
    (\Lam{x}{\vec{t}\,})[x_1:=v_{1,i_1}]\cdots[x_{\ell}:=v_{\ell,i_{\ell}}]\\[3pt]
    &=&\sum_{i_1=1}^{n_1}\cdots\sum_{i_{\ell}=1}^{n_{\ell}}
    \alpha_{1,i_1}\cdots\alpha_{\ell,i_{\ell}}\cdot
    \Lam{x}{\vec{t}\,[x_1:=v_{1,i_1}]\cdots[x_{\ell}:=v_{\ell,i_{\ell}}]}\\[3pt]
    &=&\sum_{i\in I}\alpha_i\cdot\Lam{x}{\vec{t}_i}\\
  \end{array}$$
  writing
  \begin{itemize}
  \item $I:=[1..n_1]\times\cdots\times[1..n_{\ell}]$\ \
    the (finite) set of all multi-indices $i=(i_1,\ldots,i_{\ell})$;
  \item $\alpha_i:=\alpha_{1,i_1}\cdots\alpha_{\ell,i_{\ell}}$\ \ and\ \
    $\vec{t}_i:=\vec{t}\,[x_1:=v_{1,i_1}]\cdots[x_{\ell}:=v_{\ell,i_{\ell}}]$\ \
    for each multi-index\\$i=(i_1,\ldots,i_{\ell})\in I$.
  \end{itemize}
  We now want to prove that
  $\bigl(\sum_{i\in I}\alpha_i\cdot\Lam{x}{\vec{t}_i}\bigr)\in\Sph$.
  For that, we first observe that
  $$\textstyle\sum_{i\in I}|\alpha_i|^2~=~
  \sum_{i_1=1}^{n_1}\cdots\sum_{i_{\ell}=1}^{n_{\ell}}
  |\alpha_{1,i_1}\cdots\alpha_{\ell,i_{\ell}}|^2
  ~=~\bigl(\sum_{i_1=1}^{n_1}|\alpha_{1,i_1}|^2\bigr)\times\cdots\times
  \bigl(\sum_{i_{\ell}=1}^{n_{\ell}}|\alpha_{\ell,i_{\ell}}|^2\bigr)~=~1\,.$$
  Then we need to check that the $\lambda$-abstractions\ \
  $\Lam{x}{\vec{t}_i}$\ \ ($i\in I$) are pairwise distinct.
  For that, consider two multi-indices $i=(i_1,\ldots,i_{\ell})$ and
  $i'=(i'_1,\ldots,i'_{\ell})$ such that $i\neq i'$.
  This means that $i_k\neq i'_k$ for some $k\in[1..\ell]$.
  From the latter, we deduce that $n_k\ge 2$, hence
  $\vec{v}_k=\sum_{j=1}^{n_k}\alpha_{k,j}\cdot{v_{k,j}}$ is not a pure
  value, and thus $\sem{A_k}\neq\flat\sem{A_k}$.
  Therefore $x_k\in\sdom(\Gamma)$, from which we deduce that
  $x_k\in\FV(\vec{t}\,)$ from our initial assumption.
  Let us now consider the first occurrence of the variable~$x_k$ in
  the (raw) term distribution $\vec{t}$.
  At this occurrence, the variable $x_k$ is replaced
  \begin{itemize}
  \item by $v_{k,i_k}$ in the multiple substitution
    $\vec{t}\,[x_1:=v_{1,i_1}]\cdots[x_{\ell}:=v_{\ell,i_{\ell}}]
    ~({=}~\vec{t}_i)$, and
  \item by $v_{k,i'_k}$ in the multiple substitution
    $\vec{t}\,[x_1:=v_{1,i'_1}]\cdots[x_{\ell}:=v_{\ell,i'_{\ell}}]
    ~({=}~\vec{t}_{i'})$.
  \end{itemize}
  And since $v_{k,i_k}\neq v_{k,i'_k}$ (recall that
  $\vec{v}_k=\sum_{j=1}^{n_k}\alpha_{k,j}\cdot v_{k,j}$ is in
  canonical form), we deduce that $\vec{t}_i\neq\vec{t}_{i'}$.
  Which concludes the proof that
  $\bigl(\sum_{i\in I}\alpha_i\cdot\Lam{x}{\vec{t}_i}\bigr)\in\Sph$.
  Now, given $\vec{v}\in\sem{A}$, it remains to show that
  $\sum_{i\in I}\alpha_i\cdot\vec{t}_i\<x:=\vec{v}\>\real B$.
  For that, it suffices to observe that:
  $$\begin{array}{r@{~{}~}c@{~{}~}l}
    \sum_{i\in I}\alpha_i\cdot\vec{t}_i\<x:=\vec{v}\>
    &=&\bigl(\sum_{i\in I}\alpha_i\cdot\vec{t}_i\bigr)\<x:=\vec{v}\,\>\\[3pt]
    &=&\bigl(\sum_{i_1=1}^{n_1}\cdots\sum_{i_{\ell}=1}^{n_{\ell}}
    \alpha_{1,i_1}\cdots\alpha_{\ell,i_{\ell}}\cdot
    \vec{t}\,[x_1:=v_{1,i_1}]\cdots[x_{\ell}:=v_{\ell,i_{\ell}}]
    \bigr)\<x:=\vec{v}\,\>\\[3pt]
    &=&\bigl(\vec{t}\<\sigma\>)\<x:=\vec{v}\,\>
    ~=~\vec{t}\<\sigma,\{x:=\vec{v}\,\}\>~\real~B \\
  \end{array}$$
  since $\sigma,\{x:=\vec{v}\,\}\in\sem{\Gamma,x:A}$.
  Therefore
  $(\Lam{x}{\vec{t}\,})\<\sigma\>
  =\sum_{i\in I}\alpha_i\cdot\vec{t}_i
  \in\sem{A\Arr B}\subseteq\semr{A\Arr B}$.
  \smallbreak\noindent
  $\rnam{Void}$\quad Obvious.
  \smallbreak\noindent
  $\rnam{Seq}$\quad Suppose that the judgments
  $\TYP{\Gamma}{\vec{t}}{\Unit}$ and $\TYP{\Delta}{\vec{s}}{A}$ are
  valid, that is:
  \begin{itemize}
  \item[$\bullet$]
    $\sdom(\Gamma)\subseteq
    \FV(\vec{t})\subseteq\dom(\Gamma)$\quad
    and\quad $\vec{t}\<\sigma\>\eval\Void$\ \ for all
    $\sigma\in\sem{\Gamma}$.
  \item[$\bullet$]
    $\sdom(\Delta)\subseteq
    \FV(\vec{s})\subseteq\dom(\Delta)$\quad
    and\quad $\vec{s}\<\sigma\>\real A$\ \ for all
    $\sigma\in\sem{\Delta}$.
  \end{itemize}
  From the above, it is clear that
  $\sdom(\Gamma,\Delta)\subseteq\FV(\vec{t};\vec{s})
  \subseteq\dom(\Gamma,\Delta)$.
  Now, given $\sigma\in\sem{\Gamma,\Delta}$, we observe that
  $\sigma=\sigma_{\Gamma},\sigma_{\Delta}$ for some
  $\sigma_{\Gamma}\in\sem{\Gamma}$ and
  $\sigma_{\Delta}\in\sem{\Delta}$.
  From our initial hypotheses, we get
  $$(\vec{t};\vec{s})\<\sigma\>
  ~=~(\vec{t};\vec{s})\<\sigma_{\Gamma}\>\<\sigma_{\Delta}\>
  ~=~(\vec{t}\<\sigma_{\Gamma}\>;\vec{s})\<\sigma_{\Delta}\>
  ~\eval~(\Void;\vec{s})\<\sigma_{\Delta}\>
  ~\eval~\vec{s}\<\sigma_{\Delta}\>~\real~A$$
  (using Lemma~\ref{l:BilinSubstProp}~(7)
  p.~\pageref{l:BilinSubstProp} and
  Lemma~\ref{l:EvalMacro}~(7), (10) p.~\pageref{l:EvalMacro}).
  \smallbreak\noindent
  $\rnam{SeqSharp}$\quad Suppose that the judgments
  $\TYP{\Gamma}{\vec{t}}{\sharp\Unit}$ and
  $\TYP{\Delta}{\vec{s}}{\sharp{A}}$ are 
  valid, that is:
  \begin{itemize}
  \item[$\bullet$]
    $\sdom(\Gamma)\subseteq
    \FV(\vec{t})\subseteq\dom(\Gamma)$\quad
    and\quad $\vec{t}\<\sigma\>\real\sharp\Unit$\ \ for all
    $\sigma\in\sem{\Gamma}$.
  \item[$\bullet$]
    $\sdom(\Delta)\subseteq
    \FV(\vec{s})\subseteq\dom(\Delta)$\quad
    and\quad $\vec{s}\<\sigma\>\real\sharp{A}$\ \ for all
    $\sigma\in\sem{\Delta}$.
  \end{itemize}
  From the above, it is clear that
  $\sdom(\Gamma,\Delta)\subseteq\FV(\vec{t};\vec{s})
  \subseteq\dom(\Gamma,\Delta)$.
  Now, given $\sigma\in\sem{\Gamma,\Delta}$, we observe that
  $\sigma=\sigma_{\Gamma},\sigma_{\Delta}$ for some
  $\sigma_{\Gamma}\in\sem{\Gamma}$ and
  $\sigma_{\Delta}\in\sem{\Delta}$.
  From our first hypothesis, we get
  $\vec{t}\<\sigma_{\Gamma}\>\eval\alpha\cdot\Void$
  for some $\alpha\in\C$ such that $|\alpha|=1$.
  And from the second hypothesis, we have
  $\vec{s}\<\sigma_{\Delta}\>\real\sharp{A}$, and thus
  $\alpha\cdot\vec{s}\<\sigma_{\Delta}\>\real\sharp{A}$
  (since $|\alpha|=1$).
  Therefore, we get
  $$(\vec{t};\vec{s})\<\sigma\>
  ~=~(\vec{t};\vec{s})\<\sigma_{\Gamma}\>\<\sigma_{\Delta}\>
  ~=~(\vec{t}\<\sigma_{\Gamma}\>;\vec{s})\<\sigma_{\Delta}\>
  ~\eval~(\alpha\cdot\Void;\vec{s})\<\sigma_{\Delta}\>
  ~=~\alpha\cdot(\Void;\vec{s})\<\sigma_{\Delta}\>
  ~\eval~\alpha\cdot\vec{s}\<\sigma_{\Delta}\>~\real~A$$
  (using Lemma~\ref{l:BilinSubstProp}~(7)
  p.~\pageref{l:BilinSubstProp} and
  Lemma~\ref{l:EvalMacro}~(7), (10) p.~\pageref{l:EvalMacro}).
  \smallbreak\noindent
  $\rnam{Pair}$\quad Suppose that the judgments
  $\TYP{\Gamma}{\vec{v}}{A}$ and $\TYP{\Delta}{\vec{w}}{B}$ are valid,
  that is:
  \begin{itemize}
  \item[$\bullet$]
    $\sdom(\Gamma)\subseteq
    \FV(\vec{v})\subseteq\dom(\Gamma)$\quad
    and\quad $\vec{v}\<\sigma\>\real A$\ \ for all
    $\sigma\in\sem{\Gamma}$.
  \item[$\bullet$]
    $\sdom(\Delta)\subseteq
    \FV(\vec{w})\subseteq\dom(\Delta)$\quad
    and\quad $\vec{w}\<\sigma\>\real B$\ \ for all
    $\sigma\in\sem{\Delta}$.
  \end{itemize}
  From the above, it is clear that
  $\sdom(\Gamma,\Delta)\subseteq\FV(\Pair{\vec{v}}{\vec{w}})
  \subseteq\dom(\Gamma,\Delta)$.
  Now, given $\sigma\in\sem{\Gamma,\Delta}$, we observe that
  $\sigma=\sigma_{\Gamma},\sigma_{\Delta}$ for some
  $\sigma_{\Gamma}\in\sem{\Gamma}$ and
  $\sigma_{\Delta}\in\sem{\Delta}$.
  From our initial hypotheses, we deduce that
  $\vec{v}\<\sigma_{\Gamma}\>\real A$ and
  $\vec{w}\<\sigma_{\Delta}\>\real B$, which means that
  $\vec{v}\<\sigma_{\Gamma}\>\in\sem{A}$ and
  $\vec{w}\<\sigma_{\Delta}\>\in\sem{B}$ (from
  Lemma~\ref{l:RealizCapVal}), since $\vec{v}\<\sigma_{\Gamma}\>$ and
  $\vec{w}\<\sigma_{\Delta}\>$ are value distributions.
  And since $\FV(\vec{v})\cap\dom(\sigma_{\Delta})=\varnothing$
  and $\FV(\vec{w})\cap\dom(\sigma_{\Gamma})=\varnothing$, we
  deduce from Lemma~\ref{l:BilinSubstProp}~(3), (4)
  p.~\pageref{l:BilinSubstProp} that
  $$\Pair{\vec{v}}{\vec{w}}\<\sigma\>
  ~=~\Pair{\vec{v}}{\vec{w}}\<\sigma_{\Gamma}\>\<\sigma_{\Delta}\>
  ~=~\Pair{\vec{v}\<\sigma_{\Gamma}\>}{\vec{w}}\<\sigma_{\Delta}\>
  ~=~\Pair{\vec{v}\<\sigma_{\Gamma}\>}{\vec{w}\<\sigma_{\Delta}\>}
  ~\in~\sem{A\times B}$$
  from the definition of $\sem{A\times B}$.
  \smallbreak\noindent
  $\rnam{LetPair}$\quad Suppose that the judgments
  $\TYP{\Gamma}{\vec{t}}{A\times B}$ and
  $\TYP{\Delta,x:A,y:B}{\vec{s}}{C}$ are valid,
  that is:
  \begin{itemize}
  \item[$\bullet$]
    $\sdom(\Gamma)\subseteq
    \FV(\vec{t})\subseteq\dom(\Gamma)$\quad
    and\quad $\vec{t}\<\sigma\>\real A\times B$\ \ for all
    $\sigma\in\sem{\Gamma}$.
  \item[$\bullet$]
    $\sdom(\Delta,x:A,y:B)\subseteq
    \FV(\vec{s})\subseteq\dom(\Delta,x:A,y:B)$\quad and\\
    $\vec{s}\<\sigma\>\real C$\ \ for all
    $\sigma\in\sem{\Delta,x:A,y:B}$.
  \end{itemize}
  From the above, it is clear that
  $\sdom(\Gamma,\Delta)\subseteq\FV(\LetP{x}{y}{\vec{t}}{\vec{s}})
  \subseteq\dom(\Gamma,\Delta)$.
  Now, given $\sigma\in\sem{\Gamma,\Delta}$, we observe that
  $\sigma=\sigma_{\Gamma},\sigma_{\Delta}$ for some
  $\sigma_{\Gamma}\in\sem{\Gamma}$ and
  $\sigma_{\Delta}\in\sem{\Delta}$.
  Since $\sigma_{\Gamma}\in\sem{\Gamma}$, we know from our first
  hypothesis that $\vec{t}\<\sigma_{\Gamma}\>\real A\times B$, which
  means that $\vec{t}\<\sigma_{\Gamma}\>\eval\Pair{\vec{v}}{\vec{w}}$
  for some $\vec{v}\in\sem{A}$ and $\vec{w}\in\sem{B}$.
  So that we get
  $$\begin{array}{r@{~{}~}c@{~{}~}l@{\qquad}r}
    (\LetP{x}{y}{\vec{t}}{\vec{s}})\<\sigma\>
    &=&(\LetP{x}{y}{\vec{t}}{\vec{s}})
    \<\sigma_{\Gamma}\>\<\sigma_{\Delta}\>\\
    &=&(\LetP{x}{y}{\vec{t}\<\sigma_{\Gamma}\>}{\vec{s}})\<\sigma_{\Delta}\>&
    (\text{by Lemma~\ref{l:BilinSubstProp}~(8)})\\
    &\eval&(\LetP{x}{y}{\Pair{\vec{v}}{\vec{w}}}{\vec{s}})
    \<\sigma_{\Delta}\>&
    (\text{by Lemma~\ref{l:EvalMacro}~(8), (10)})\\
    &\eval&(\vec{s}\<x:=\vec{v}\>\<y:=\vec{w}\>)\<\sigma_{\Delta}\>&
    (\text{by Lemma~\ref{l:EvalMacro}~(2), (10)})\\
    &&{=}~{}~\vec{s}\<\sigma_{\Delta},x:=\vec{v},y:=\vec{w}\>
    ~\real~C
  \end{array}$$
  using our second hypothesis with the substitution
  $\sigma_{\Delta},\{x:=\vec{v},y:=\vec{w}\}\in\sem{\Delta,x:A,y:B}$.
  \smallbreak\noindent
  $\rnam{LetTens}$\quad Suppose that the judgments
  $\TYP{\Gamma}{\vec{t}}{A\otimes B}$ and
  $\TYP{\Delta,x:\sharp{A},y:\sharp{B}}{\vec{s}}{\sharp C}$ are valid,
  that is:
  \begin{itemize}
  \item $\sdom(\Gamma)\subseteq\FV(\vec{t})\subseteq\dom(\Gamma)$\quad
    and\quad $\vec{t}\<\sigma\>\real A\otimes B$ for all
    $\sigma\in\sem{\Gamma}$.
  \item $\sdom(\Delta,x:\sharp{A},y:\sharp{B})\subseteq
    \FV(\vec{s})\subseteq\dom(\Delta,x:\sharp{A},y:\sharp{B})$\quad
    and\\$\vec{s}\<\sigma\>\real\sharp{C}$ for all
    $\sigma\in\sem{\Delta,x:\sharp{A},y:\sharp{B}}$
  \end{itemize}
  From the above, it is clear that
  $\sdom(\Gamma,\Delta)\subseteq\FV(\LetP{x}{y}{\vec{t}}{\vec{s}})
  \subseteq\dom(\Gamma,\Delta)$.
  Now, given $\sigma\in\sem{\Gamma,\Delta}$, we observe that
  $\sigma=\sigma_{\Gamma},\sigma_{\Delta}$ for some
  $\sigma_{\Gamma}\in\sem{\Gamma}$ and
  $\sigma_{\Delta}\in\sem{\Delta}$.
  Since $\sigma_{\Gamma}\in\sem{\Gamma}$, we know from our first
  hypothesis that $\vec{t}\<\sigma_{\Gamma}\>\real A\otimes B$, which
  means that
  $\vec{t}\<\sigma_{\Gamma}\>\eval
  \sum_{i=1}^n\alpha_i\cdot\Pair{\vec{u}_i}{\vec{v}_i}$ for some
  $\alpha_1,\ldots,\alpha_n\in\C$,
  $\vec{u}_1,\ldots,\vec{u}_n\in\sem{A}$ and
  $\vec{v}_1,\ldots,\vec{v}_n\in\sem{B}$, with
  $\bigl\|\sum_{i=1}^n\alpha_i\cdot\Pair{\vec{u}_i}{\vec{v}_i}\bigr\|=1$.
  For each $i=1..n$, we also observe that
  $\sigma_{\Delta},x:=\vec{u}_i,y:=\vec{v}_i\in
  \sem{\Delta,x:\sharp{A},y:\sharp{B}}$.
  From our second hypothesis, we get
  $\vec{s}\<\sigma_{\Delta},x:=\vec{u}_i,y:=\vec{v}_i\>\real\sharp{C}$,
  hence there is $\vec{w}_i\in\sem{\sharp{C}}$ such that
  $\vec{s}\<\sigma_{\Delta},x:=\vec{u}_i,y:=\vec{v}_i\>\eval
  \vec{w}_i$.
  Therefore, we have:
  $$\begin{array}{r@{~{}~}c@{~{}~}l}
    (\LetP{x}{y}{\vec{t}}{\vec{s}})\<\sigma\>
    &=&(\LetP{x}{y}{\vec{t}}{\vec{s}})
    \<\sigma_{\Gamma}\>\<\sigma_{\Delta}\>\\
    &=&(\LetP{x}{y}{\vec{t}\<\sigma_{\Gamma}\>}{\vec{s}})
    \<\sigma_{\Delta}\>\\
    &\eval&\bigl(\LetP{x}{y}{
      \sum_{i=1}^n\alpha_i\cdot\Pair{\vec{u}_i}{\vec{v}_i}
    }{\vec{s}}\bigr)\<\sigma_{\Delta}\>\\
    &=&\sum_{i=1}^n\alpha_i\cdot
    (\LetP{x}{y}{\Pair{\vec{u}_i}{\vec{v}_i}}{\vec{s}})
    \<\sigma_{\Delta}\>\\
    &\eval&\sum_{i=1}^n\alpha_i\cdot
    (\vec{s}\<x:=\vec{u}_i,y:=\vec{v}_i\>)\<\sigma_{\Delta}\>\\
    &=&\sum_{i=1}^n\alpha_i\cdot
    \vec{s}\<\sigma_{\Delta},x:=\vec{u}_i,y:=\vec{v}_i\>\\
    &\eval&\sum_{i=1}^n\alpha_i\cdot\vec{w}_i
    ~\in~\Span(\sem{C})\\
  \end{array}$$
  To conclude, it remains to show that
  $\bigl\|\sum_{i=1}^n\alpha_i\cdot\vec{w}_i\bigr\|=1$.
  For that, we observe that:
  $$\begin{array}{r@{~{}~}c@{~{}~}l@{\qquad}r}
    \bigl\|\sum_{i=1}^n\alpha_i\cdot\vec{w}_i\bigr\|^2
    &=&\bigscal{\sum_{i=1}^n\alpha_i\cdot\vec{w}_i}{
      \sum_{j=1}^n\alpha_j\cdot\vec{w}_j}\\
    &=&\sum_{i=1}^n\sum_{j=1}^n\bar\alpha_i\alpha_j\,
    \scal{\vec{w}_i}{\vec{w}_j}\\
    &=&\sum_{i=1}^n\sum_{j=1}^n\bar\alpha_i\alpha_j\,
    \scal{\vec{u}_i}{\vec{u}_j}\scal{\vec{v}_i}{\vec{v}_j}&
    \text{(by Lemma~\ref{l:EvalOrth2})}\\
    &=&\sum_{i=1}^n\sum_{j=1}^n\bar\alpha_i\alpha_j\,
    \scal{\Pair{\vec{u}_i}{\vec{v}_i}}{\Pair{\vec{u}_j}{\vec{v}_j}}&
    \text{(by Prop.~\ref{p:ScalInlInrPair})}\\
    &=&\bigscal{\sum_{i=1}^n\alpha_i\cdot\Pair{\vec{u}_i}{\vec{v}_i}}{
      \sum_{j=1}^n\alpha_j\cdot\Pair{\vec{u}_j}{\vec{v}_j}}\\
    &=&\bigl\|\sum_{i=1}^n\alpha_i\cdot\Pair{\vec{u}_i}{\vec{v}_i}\bigr\|^2
    ~=~1\,.\\
  \end{array}$$
  \smallbreak\noindent
  $\rnam{InL}$\quad Suppose that the judgment
  $\TYP{\Gamma}{\vec{v}}{A}$ is valid, that is:
  \begin{itemize}
  \item[$\bullet$]
    $\sdom(\Gamma)\subseteq
    \FV(\vec{v})\subseteq\dom(\Gamma)$\quad
    and\quad $\vec{v}\<\sigma\>\real A$\ \ for all
    $\sigma\in\sem{\Gamma}$.
  \end{itemize}
  From the above, it is clear that
  $\sdom(\Gamma)\subseteq\FV(\Inl{\vec{v}})
  \subseteq\dom(\Gamma)$.
  Now, given $\sigma\in\sem{\Gamma}$, we know that
  $\vec{v}\<\sigma\>\real A$, which means that
  $\vec{v}\<\sigma\>\in\sem{A}$ (by Lemma~\ref{l:RealizCapVal}),
  since $\vec{v}\<\sigma\>$ is a value distribution.
  So that by Lemma~\ref{l:BilinSubstProp}~(1), we conclude that
  $\Inl{\vec{v}}\<\sigma\>=\Inl{\vec{v}\<\sigma\>}\in\sem{A+B}$.
  \smallbreak\noindent
  $\rnam{InR}$\quad Analogous to $\rnam{InL}$.
  \smallbreak\noindent
  $\rnam{PureMatch}$\quad Suppose that the judgments\ \
  $\TYP{\Gamma}{\vec{t}}{A+B}$,\ \ $\TYP{\Delta,x_1:A}{\vec{s}_1}{C}$\ \
  and\ \ $\TYP{\Delta,x_2:B}{\vec{s}_2}{C}$\ \ are valid, that is:
  \begin{itemize}
  \item $\sdom(\Gamma)\subseteq\FV(\vec{t}\,)\subseteq\dom(\Gamma)$\quad
    and\quad $\vec{t}\<\sigma\>\real A+B$\ \ for all
    $\sigma\in\sem{\Gamma}$.
  \item $\sdom(\Delta,x_1:A)\subseteq\FV(\vec{s}_1)
    \subseteq\dom(\Delta,x_1:A)$\quad and\quad
    $\vec{s}_1\<\sigma\>\real C$\ \ for all
    $\sigma\in\sem{\Delta,x_1:A}$.
  \item $\sdom(\Delta,x_2:B)\subseteq\FV(\vec{s}_2)
    \subseteq\dom(\Delta,x_2:B)$\quad and\quad
    $\vec{s}_2\<\sigma\>\real C$\ \ for all
    $\sigma\in\sem{\Delta,x_2:B}$.
  \end{itemize}
  From the above, it is clear that
  $\sdom(\Gamma,\Delta)\subseteq
  \FV(\Match{\vec{t}}{x_1}{\vec{s}_1}{x_2}{\vec{s}_2})
  \subseteq\dom(\Gamma,\Delta)$.
  Now, given a substitution $\sigma\in\sem{\Gamma,\Delta}$, we observe
  that $\sigma=\sigma_{\Gamma},\sigma_{\Delta}$ for some
  $\sigma_{\Gamma}\in\sem{\Gamma}$ and
  $\sigma_{\Delta}\in\sem{\Delta}$.
  And since
  $\FV(\vec{s}_1,\vec{s}_2)\cap\dom(\sigma_{\Gamma})=\varnothing$, we
  deduce from Lemma~\ref{l:BilinSubstProp}~(9) that
  $$\begin{array}{c@{~{}~}l}
    &(\Match{\vec{t}}{x_1}{\vec{s}_1}{x_2}{\vec{s}_2})\<\sigma\>\\
    =&(\Match{\vec{t}}{x_1}{\vec{s}_1}{x_2}{\vec{s}_2})
    \<\sigma_{\Gamma}\>\<\sigma_{\Delta}\>\\
    =&(\Match{\vec{t}\<\sigma_{\Gamma}\>}{x_1}{\vec{s}_1}{x_2}{\vec{s}_2})
    \<\sigma_{\Delta}\>\,.\\
  \end{array}$$
  Moreover, since $\sigma_{\Gamma}\in\sem{\Gamma}$, we have
  $\vec{t}\<\sigma_{\Gamma}\>\real A+B$ (from our first hypothesis),
  so that we distinguish the following two cases:
  \begin{itemize}
  \item Either $\vec{t}\<\sigma_{\Gamma}\>\eval\Inl{\vec{v}}$
    for some $\vec{v}\in\sem{A}$, so that
    $$\begin{array}{c@{~{}~}l}
      &(\Match{\vec{t}}{x_1}{\vec{s}_1}{x_2}{\vec{s}_2})\<\sigma\>\\
      =&(\Match{\vec{t}\<\sigma_{\Gamma}\>}{x_1}{\vec{s}_1}{x_2}{\vec{s}_2})
      \<\sigma_{\Delta}\>\\
      \eval&(\Match{\Inl{\vec{v}}}{x_1}{\vec{s}_1}{x_2}{\vec{s}_2})
      \<\sigma_{\Delta}\>\\
      \eval&(\vec{s_1}\<x_1:=\vec{v}\>)\<\sigma_{\Delta}\>
      ~=~\vec{s_1}\<\sigma_{\Delta},x_1:=\vec{v}\>~\real~C
    \end{array}$$
    using our second hypothesis with the substitution
    $\sigma_{\Delta},\{x_1:=\vec{v}\}\in\sem{\Delta,x_1:A}$.
  \item Either $\vec{t}\<\sigma_{\Gamma}\>\eval\Inr{\vec{w}}$
    for some $\vec{w}\in\sem{B}$, so that
    $$\begin{array}{c@{~{}~}l}
      &(\Match{\vec{t}}{x_1}{\vec{s}_1}{x_2}{\vec{s}_2})\<\sigma\>\\
      =&(\Match{\vec{t}\<\sigma_{\Gamma}\>}{x_1}{\vec{s}_1}{x_2}{\vec{s}_2})
      \<\sigma_{\Delta}\>\\
      \eval&(\Match{\Inr{\vec{w}}}{x_1}{\vec{s}_1}{x_2}{\vec{s}_2})
      \<\sigma_{\Delta}\>\\
      \eval&(\vec{s_1}\<x_2:=\vec{w}\>)\<\sigma_{\Delta}\>
      ~=~\vec{s_1}\<\sigma_{\Delta},x_2:=\vec{w}\>~\real~C
    \end{array}$$
    using our third hypothesis with the substitution
    $\sigma_{\Delta},\{x_2:=\vec{w}\}\in\sem{\Delta,x_2:B}$.
  \end{itemize}
  \smallbreak\noindent
  $\rnam{Weak}$\quad Suppose that the judgment
  $\TYP{\Gamma}{\vec{t}}{B}$ is valid, that is
  \begin{itemize}
  \item $\sdom(\Gamma)\subseteq\FV(\vec{t}\,)\subseteq\dom(\Gamma)$\quad
    and\quad $\vec{t}\<\sigma\>\real B$\ \ for all
    $\sigma\in\sem{\Gamma}$.
  \end{itemize}
  Given a type $A$ such that $\EQV{\flat{A}}{A}$, it is clear from the
  above that $\sdom(\Gamma,x:A)~({=}~\sdom(\Gamma))
  \subseteq\FV(\vec{t})\subseteq\dom(\Gamma,x:A)$.
  Now, given $\sigma\in\sem{\Gamma,x:A}$, we observe that
  $\sigma=\sigma_0,\{x:=v\}$ for some substitution
  $\sigma_0\in\sem{\Gamma}$ and for some pure value
  $v\in\sem{A}~({=}~\flat\sem{A})$.
  Therefore, we get
  $$\vec{t}\<\sigma\>~=~\vec{t}\<\sigma_0\>[x:=v]
  ~=~\vec{t}\,[x:=v]\<\sigma_0\>~=~\vec{t}\<\sigma_0\>~\real B
  \eqno(\text{since}~x\notin\FV(\vec{t}\,)~\text{and}~
  \sigma_0\in\sem{\Gamma})$$
  $\rnam{Contr}$\quad Given a type~$A$ such that $\EQV{\flat{A}}{A}$,
  suppose that $\TYP{\Gamma,x:A,y:A}{\vec{t}}{B}$, that is:
  \begin{itemize}
  \item $\sdom(\Gamma,x:A,y:A)~({=}~\sdom(\Gamma))
    \subseteq\FV(\vec{t}\,)\subseteq\dom(\Gamma,x:A,y:A)$\\
    and\quad $\vec{t}\<\sigma\>\real B$\ \ for all
    $\sigma\in\sem{\Gamma,x:A,y:A}$.
  \end{itemize}
  From the above, it is clear that
  $\sdom(\Gamma,x:A)~({=}~\sdom(\Gamma))
  \subseteq\FV(\vec{t}\,[y:=x])\subseteq\dom(\Gamma,x:A)$.
  Now, given $\sigma\in\sem{\Gamma,x:A}$, we observe that
  $\sigma=\sigma_0,\{x:=v\}$ for some substitution
  $\sigma_0\in\sem{\Gamma}$ and for some pure value
  $v\in\sem{A}~({=}~\flat\sem{A})$.
  Therefore, we have
  $$\begin{array}{r@{~{}~}c@{~{}~}l}
    (\vec{t}[y:=x])\<\sigma\>
    &=&(\vec{t}\,[y:=x])\<\sigma_0,\{x:=v\}\>
    ~=~\vec{t}\,[y:=x][x:=v]\<\sigma_0\>\\
    &=&\vec{t}\,[x:=v][y:=v]\<\sigma_0\>
    ~=~\vec{t}\<\sigma_0,\{x:=v,y:=v\}\>~\real~B\\
  \end{array}$$
  since $\sigma_0,\{x:=v,y:=v\}\in\sem{\Gamma,x:A,y:A}$.
\end{proof}

\label{a:church}

\begin{fact}\label{f:church}
  For all $n\neq 1$, one has:
  $\bar{n}\not\real(\sharp\Bool\Arr\sharp\Bool)\Arr
  (\sharp\Bool\Arr\sharp\Bool)$.
\end{fact}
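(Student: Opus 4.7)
The plan is to exhibit, for each integer $n\ne 1$, a realizer $\vec f_n\in\sem{\sharp\Bool\Arr\sharp\Bool}$ such that $\bar n\,\vec f_n\not\real\sharp\Bool\Arr\sharp\Bool$; by the very definition of $\sem{A\Arr B}$ this suffices to conclude $\bar n\not\real(\sharp\Bool\Arr\sharp\Bool)\Arr(\sharp\Bool\Arr\sharp\Bool)$. The guiding observation is that for any unitary distribution $\vec f=\sum_i\alpha_i\cdot g_i$ with $g_i=\Lam x{\omega_i\cdot x}$ representing the scaled identity $\omega_iI$, one has $\bar n\,\vec f\eval\sum_i\alpha_i\cdot\Lam x{g_i^n x}$, a distribution representing the operator $\sum_i\alpha_i\omega_i^n\,I$, which generally differs from $(\sum_i\alpha_i\omega_i)^n I$. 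We engineer this difference so that the second quantity is unitary while the first is not.

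Concretely, set $\theta_n:=\pi/(2(n-1))$ and define
\[
\vec f_n~:=~\sqrthalf\cdot(\Lam x x)~+~\sqrthalf\,e^{i(\pi/2-\theta_n)}\cdot(\Lam x{e^{i\theta_n}\!\cdot x}).
\]
The two summands are syntactically distinct abstractions and $|\sqrthalf|^2+|\sqrthalf\,e^{i(\pi/2-\theta_n)}|^2=1$, so $\vec f_n\in\Sph$; its represented operator is $\sqrthalf(1+e^{i\pi/2})I=\sqrthalf(1+i)I$ of modulus $1$, hence $\vec f_n\in\sem{\sharp\Bool\Arr\sharp\Bool}$ by Corollary~\ref{c:ReprUnitary}. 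For $n\ge 2$, applying $\bar n\,\vec f_n$ to $\tt\in\sem{\sharp\Bool}$ evaluates, by bilinearity, to $\sqrthalf(1+e^{i(\pi/2+(n-1)\theta_n)})\cdot\tt=\sqrthalf(1+e^{i\pi})\cdot\tt=0\cdot\tt$, of norm $0$ and hence not in $\sem{\sharp\Bool}$: the mapping condition in the semantics of $\sharp\Bool\Arr\sharp\Bool$ fails. For $n=0$ the bodies $\Lam x{g_i^0 x}=\Lam x x$ coincide syntactically, so $\bar 0\,\vec f_0$ itself collapses to $(\sqrthalf-\sqrthalf)\cdot\Lam x x=0\cdot\Lam x x$, which is not in $\Sph$. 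Either way $\bar n\,\vec f_n\not\real\sharp\Bool\Arr\sharp\Bool$, as desired.

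The delicate point of the argument is to meet two algebraic constraints simultaneously: unitarity of $\vec f_n$ forces the phase relation $\psi+\theta\equiv\pi/2\pmod\pi$ (where $\psi$ is the phase of the second coefficient), whereas the failure of $\bar n\,\vec f_n$ requires $(n-1)\theta\not\equiv 0\pmod\pi$. The choice $\theta_n=\pi/(2(n-1))$ gives $(n-1)\theta_n=\pi/2$, producing the crucial cancellation $1+e^{i\pi}=0$, and this is precisely where the hypothesis $n\ne 1$ enters: otherwise the denominator $2(n-1)$ would vanish.
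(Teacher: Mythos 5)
Your proof is correct and follows essentially the same strategy as the paper's: exhibit a unitary distribution of non-unitarily scaled identities that realizes $\sharp\Bool\Arr\sharp\Bool$, and show that applying $\bar n$ to it breaks unitarity because $\omega\mapsto\omega^n$ does not commute with weak linear combination. The only difference is the choice of witness: the paper uses the single real-coefficient term $\frac35\cdot\bigl(\Lam{x}{\frac56\cdot x}\bigr)+\frac45\cdot\bigl(\Lam{x}{\frac58\cdot x}\bigr)$ for all $n$ and concludes by a norm inequality ($\frac75>1$ for $n=0$, ${<}1$ for $n\ge 2$), whereas you tailor a complex-phase witness to each $n$ so as to force an exact cancellation to $0\cdot\tt$.
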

\begin{proof}
  Let
  $F:=\frac35\cdot\bigl(\Lam{x}{\frac56\cdot x}\bigr)+
  \frac45\cdot\bigl(\Lam{x}{\frac58\cdot x}\bigr)$.
  We observe that
  $\bigl|\frac35\bigr|^2+\bigl|\frac45\bigr|^2=\frac{9+16}{25}=1$.
  Moreover, for all $\vec{v}\in\sem{\Bool}$, we have
  $$\textstyle\frac35\cdot\bigl(\frac56\cdot x\bigr)\<x:=\vec{v}\>+
  \frac45\cdot\bigl(\frac58\cdot x\bigr)\<x:=\vec{v}\>
  ~=~\frac12\cdot\vec{v}+\frac12\cdot\vec{v}
  ~=~\vec{v}~\real~\sharp\Bool\,,$$
  hence $F\real\sharp\Bool\Arr\sharp\Bool$.
  Now, we observe that when $n\neq 1$, we have
  $$\begin{array}{r@{~{}~}c@{~{}~}l}
    \bar{n}~F~\tt&=&
    \frac35\cdot\bar{n}~\bigl(\Lam{x}{\frac56\cdot x}\bigr)~\tt~+~
    \frac45\cdot\bar{n}~\bigl(\Lam{x}{\frac58\cdot x}\bigr)~\tt\\
    &\eval&\frac35\bigl(\frac56\bigr)^n\cdot\tt~+~
    \frac45\bigl(\frac58\bigr)^n\cdot\tt
    ~=~\left(\frac35\bigl(\frac56\bigr)^n+
    \frac45\bigl(\frac58\bigr)^n\right)\cdot\tt
    ~\notin~\sem{\sharp\Bool}\,,\\
  \end{array}$$
  since\ \
  $\frac35\bigl(\frac56\bigr)^n+\frac45\bigl(\frac58\bigr)^n
  =\frac75>1$\ \ when $n=0$\ \ and\ \
  $\frac35\bigl(\frac56\bigr)^n+\frac45\bigl(\frac58\bigr)^n
  <\frac35\cdot\frac56+\frac45\cdot\frac58=1$\ \
  when $n\ge 2$.
  Hence $\bar{n}\,F\,\tt\not\real\sharp\Bool$, and therefore\ \
  $\bar{n}\not\real(\sharp\Bool\Arr\sharp\Bool)\Arr
  (\sharp\Bool\Arr\sharp\Bool)$.
\end{proof}

\recap{Proposition}{p:orthotyp}{
  The rule \rnam{UnitaryMatch} is valid.
}
\begin{proof}
  Suppose that the judgments 
    $\TYP{\Gamma}{\vec{t}}{A_1\oplus A_2}$ and
    $\ORTH{\Delta}{x_1:\sharp A_1}{\vec{s}_1}{x_2:\sharp A_2}{\vec{s}_2}{\sharp C}$
     are valid, that is:
    \begin{itemize}
    \item $\sdom(\Gamma)\subseteq\FV(\vec{t}\,)\subseteq\dom(\Gamma)$\quad
    and\quad $\vec{t}\<\sigma\>\real A_1\oplus A_2$\ for all $\sigma\in\sem{\Gamma}$.
    \item For $i=1,2$, $\sdom(\Delta,x_i:\sharp
      A_i)\subseteq\FV(\vec{s}_i\,)\subseteq\dom(\Delta,x_i:\sharp A_i)$\quad
    and\quad $\vec{s}_i\<\sigma,\sigma_i\>\real \sharp C$\ for all
    $\sigma\in\sem{\Delta}$ and $\sigma_i\in\sem{x_i:\sharp A_i}$.
  \item For $i=1,2$, $\vec s_i\<\sigma,\sigma_i\>\eval\vec v_i$ with $\langle
    \vec v_1|\vec v_2 \rangle=0$.
    \end{itemize}
From the above, it is clear the $\sdom(\Gamma,\Delta)\subseteq\FV(\Match{\vec
  t}{x_1}{\vec s_1}{x_2}{\vec s_2})\subseteq\dom(\Gamma,\Delta)$. Now, given a
substitution $\sigma\in\sem{\Gamma,\Delta}$, we observe that
$\sigma=\sigma_\Gamma,\sigma_\Delta$ for some $\sigma_\Gamma\in\sem{\Gamma}$ and
$\sigma_\Delta\in\sem{\Delta}$. And since $FV(\vec s_1,\vec
s_2)\cap\dom(\sigma_\Gamma)=\varnothing$, we deduce from
Lemma~\ref{l:BilinSubstProp}~(8) that
\begin{align*}
    &(\Match{\vec{t}}{x_1}{\vec{s}_1}{x_2}{\vec{s}_2})\<\sigma\>\\
    &=(\Match{\vec{t}}{x_1}{\vec{s}_1}{x_2}{\vec{s}_2})
    \<\sigma_{\Gamma}\>\<\sigma_{\Delta}\>\\
    &=(\Match{\vec{t}\<\sigma_{\Gamma}\>}{x_1}{\vec{s}_1}{x_2}{\vec{s}_2})
    \<\sigma_{\Delta}\>\,.
\end{align*}
  Moreover, since $\sigma_\Gamma\in\sem{\Gamma}$, we have $\vec
  t\<\sigma_\Gamma\>\real A_1\oplus A_2$ (from our first hypothesis), so that
  we have
    $\vec t\<\sigma_\Gamma\>\eval\alpha\cdot\Inl{\vec v_1}+\beta\cdot\Inr{\vec
      v_2}$ for some $\vec v_1\in\sem{A_1}$ and $\vec v_2\in\sem{A_2}$. Therefore 
    \[
      \begin{array}{c@{~{}~}l}
        &(\Match{\vec t}{x_1}{\vec s_1}{x_2}{\vec s_2})\<\sigma\>\\
        =&(\Match{\vec t\<\sigma_\Gamma\>}{x_1}{\vec s_1}{x_2}{\vec s_2})\<\sigma_\Delta\>\\
        \eval&(\Match{\alpha\cdot\Inl{\vec v_1}+\beta\cdot\Inr{\vec v_2}}{x_1}{\vec s_1}{x_2}{\vec s_2})\<\sigma_\Delta\>\\
        =&\alpha\cdot(\Match{\Inl{\vec v_1}}{x_1}{\vec s_1}{x_2}{\vec s_2})\<\sigma_\Delta\>\\
           &+ \beta\cdot(\Match{\Inr{\vec v_2}}{x_2}{\vec s_1}{x_2}{\vec s_2})\<\sigma_\Delta\> \\
        =&\alpha\cdot\vec s_1\<x_1:=\vec v_1\>\<\sigma_\Delta\>
            +
            \beta\cdot\vec s_2\<x_2:=\vec v_2\>\<\sigma_\Delta\>\real\sharp C
      \end{array}
    \]
    using the last two hypotheses, with the substitution $\sigma_\Delta,\<x_i:=\vec
    v_i\>\in\sem{\Delta,x_i:\sharp A_i}$.
\end{proof}

\subsection{Proofs related to Section~\ref{s:lambdaq}}
  \label{t:TypingQStandard}
\textbf{Typing rules of the standard judgements for $\lambda_Q$}
$$\begin{array}{c}
  \infer{\Delta,x:A\vdash_C x:A}{}
    \qquad
  \infer{\Delta\vdash_C\Void:\Unit}{}\qquad
    \infer{\Delta\vdash_C\lambda x.t:A\to B}{\Delta,x:A\vdash_C t:B}\qquad
  \infer{\Delta\vdash_C tr:B}{\Delta\vdash_C t:A\to B &
    \Delta\vdash_C r:A}\\
  \noalign{\medskip}
  \infer{\Delta\vdash_C\Pair tr:A\times B}{\Delta\vdash_C t:A &
    \Delta\vdash_C r:B}\qquad
    \infer{\Delta\vdash_C\pi_1t:A}{\Delta\vdash_C t:A\times B}\qquad
  \infer{\Delta\vdash_C\pi_2t:B}{\Delta\vdash_C t:A\times B}\quad
  \infer{\Delta\vdash_C\ttrue:\s{bit}}{}\quad
  \infer{\Delta\vdash_C\ffalse:\s{bit}}{}\\
  \noalign{\medskip}
  \infer{\Delta\vdash_C\If trs:A}{
    \Delta\vdash_C t:\s{bit}
    &\Delta\vdash_C r:A
    &\Delta\vdash_C s:A
  }
\end{array}$$

\recap{Lemma}{l:translationFlat}{For all classical types $A$, $\flat\trad{A}\simeq\trad{A}$.}
\begin{proof}
We proceed by structural induction on $A$.
\begin{itemize}
\item $\trad\Unit=\Unit=\EQV{\{\Void\}}{\flat\{\Void\}}=\flat\Unit$.
\item $\trad{A\to B}=\EQV{\trad A\to\trad B}{\flat(\trad A\to\trad B)}$ by rule
\rnam{FlatPureArrow}.
  \item $\trad{A\times B}=\trad A\times\trad B\simeq \flat\trad
A\times\flat\trad B\simeq\flat(\trad A\times\trad B)$, using the induction
hypothesis and rules \rnam{ProdMono} and \rnam{FlatProd}.
  \item
$\trad{\s{bit}}=\B=\Unit+\Unit=\flat\Unit+\flat\Unit=\flat(\Unit+\Unit)=\flat\trad{\s{bit}}$
using rules \rnam{SumMono} and \rnam{FlatSum}.
  \item $\trad{A_Q\multimap
B_Q}=\Unit\to(\trad{A_Q}\Rightarrow\trad{B_Q})\simeq\flat(\Unit\to(\trad{A_Q}\Rightarrow\trad{B_Q}))$
by rule \rnam{FlatPureArrow}. \qedhere
\end{itemize}
\end{proof}

\recap{Lemma}{l:translationSharp}{For all qbit types $A_Q$, $\sharp\trad{A_Q}\simeq\trad{A_Q}$.}
\begin{proof}
First notice that for any $A$ from the unitary linear algebraic lambda-calculs,
we have $\sharp A\simeq\sharp\sharp A$. Indeed, by rule \rnam{SharpIntro}
$\sharp A\leq\sharp\sharp A$, and by rules \rnam{SubRefl} and \rnam{SharpLift},
$\sharp\sharp A\leq\sharp A$. Now we proceed by structural induction on $A_Q$.
\begin{itemize}
\item $\trad{\s{qbit}}=\sharp\B\simeq\sharp\sharp\B=\sharp\trad{\s{qbit}}$.
\item $\trad{A_Q\otimes
B_Q}=\sharp(\trad{A_Q}\otimes\trad{B_Q})\simeq\sharp\sharp(\trad{A_Q}\otimes\trad{B_Q})=\sharp\trad{A_Q\otimes
B_Q}$.\qedhere
\end{itemize}
\end{proof}

\recap{Theorem}{th:translationTypability}{
  Translation preserves typeability:
  \begin{enumerate}
  \item
    If $\Gamma\vdash_Q t:A_Q$ then $\trad\Gamma\vdash\trad t:\trad {A_Q}$.
  \item
    If $\Delta|\Gamma\vdash_C t:A$ then $\trad\Delta,\trad\Gamma\vdash\trad t:\trad A$.
  \item
    If $[Q,L,t]:A$ then $\vdash\trad{[Q,L,t]}:\trad A$.
  \end{enumerate}
}
\begin{proof}
Since $\vdash_Q$ depends on $\vdash_C$, we prove items (1) and (2) at the same
time by induction on the typing derivation.
\begin{itemize}\itemsep1em
  \item $\vcenter{\infer{\Delta,x:A\vdash_C x:A}{}}$
    
      By Lemma~\ref{l:translationFlat},
      $\EQV{\flat\trad{\Delta}}{\trad{\Delta}}$, hence, by rules \rnam{Axiom} and \rnam{Weak}, we have
      $\trad\Delta,x:\trad A\vdash x:\trad A$.
  \item $\vcenter{\infer{\Delta\vdash_C\Void:\Unit}{}}$
    
      By
    Lemma~\ref{l:translationFlat}, $\flat\trad\Delta\simeq\trad\Delta$, hence,
    by rules \rnam{Void} and \rnam{Weak} we conclude $\trad\Delta\vdash\Void:\Unit$.
  \item $\vcenter{\infer{\Delta\vdash_C\lambda x.t:A\to B}{\Delta,x:A\vdash_C
        t:B}}$
      
        By the induction hypothesis, ${\trad\Delta,x:\trad A\vdash\trad t:\trad
          B}$ and by Lemma~\ref{l:translationFlat}, $\flat\trad\Delta\simeq\trad\Delta$, hence, by rule $\rnam{PureLam}$, 
      $\trad\Delta\vdash\lambda x.\trad t:\trad A\to\trad B$.
  \item $\vcenter{\infer{\Delta\vdash_C tr:B}{\Delta\vdash_C t:A\to B &
        \Delta\vdash_C r:A}}$

      By the induction hypothesis, $\trad\Delta\vdash\trad t:\trad
      A\to\trad B$ and $\trad\Delta\vdash\trad r:\trad A$. Hence, by rules
      \rnam{SubArrows} and \rnam{Sub}, we have $\trad\Delta\vdash\trad t:\trad
      A\Rightarrow\trad B$, and also, we have $\trad\Delta[\sigma]\vdash\trad r[\sigma]:\trad A$, where
      $\sigma$ is a substitution of every variable in $\Delta$  by fresh
      variables. Then, by rule
      \rnam{App}
      we can derive, $\trad\Delta,\trad\Delta[\sigma]\vdash\trad t\trad
      r[\sigma]:\trad B$. By Lemma~\ref{l:translationFlat}, we have
      $\flat\trad\Delta\simeq\trad\Delta$, hence, by rule \rnam{Contr}, we get
      $\trad\Delta\vdash\trad t\trad r:\trad B$.
  \item $\vcenter{\infer{\Delta,\Delta\vdash_C\Pair tr:A\times B}{\Delta\vdash_C
        t:A & \Delta\vdash_C r:B}}$

      By the induction hypothesis, $\trad\Delta\vdash\trad t:\trad A$ and
      $\trad\Delta\vdash\trad r:\trad B$. Hence, by rule \rnam{Pair},
      $\trad\Delta,\trad\Delta\vdash\Pair{\trad t}{\trad r}:\trad A\times\trad
      B$.
  \item $\vcenter{\infer{\Delta\vdash_C\pi_i t:A_i}{\Delta\vdash_C t:A_1\times
        A_2}}$

      By the induction hypothesis, $\trad\Delta\vdash\trad
      t:\trad{A_1}\times\trad{A_2}$. By Lemma~\ref{l:translationFlat},
      $\trad{A_i}\simeq\flat\trad{A_i}$ for $i=1,2$, hence, by rules
      \rnam{Axiom} and \rnam{Weak}, we have $x_1:\trad{A_1},x_2:\trad{A_2}\vdash
      x_i:\trad{A_i}$. Therefore, by rule \rnam{LetPair}, we can derive
      $\trad\Delta\vdash\Let{\Pair{x_1}{x_2}}{\trad t}{x_i}:\trad{A_i}$.
  \item $\vcenter{\infer{\Delta\vdash_C\tt:\s{bit}}{}}$

    By Lemma~\ref{l:translationFlat}, $\flat\trad\Delta\simeq\trad\Delta$, so,
    by rules \rnam{Void}, \rnam{InL}, and \rnam{Weak}, we can derive $\trad\Delta\vdash\tt:\B$.
  \item $\vcenter{\infer{\Delta\vdash_C\ff:\s{bit}}{}}$

    By Lemma~\ref{l:translationFlat}, $\flat\trad\Delta\simeq\trad\Delta$, so,
    by rules \rnam{Void}, \rnam{InR}, and \rnam{Weak}, we can derive $\trad\Delta\vdash\ff:\B$.
  \item $\vcenter{\infer{\Delta\vdash_C\If t{r_1}{r_2}:A}{\Delta\vdash_C t:\s{bit} &
        \Delta\vdash_C r_1:A & \Delta\vdash_C r_2:A}}$
    
      By the induction hypothesis, $\trad\Delta\vdash\trad
      t:\B=\Unit+\Unit$ and for $i=1,2$, $\trad\Delta\vdash\trad{r_i}:\trad A$.
By rules \rnam{Axiom} and \rnam{Seq}, we can derive $\trad\Delta,x_i:\Unit\vdash
x_i;\trad{r_i}:\trad A$ we also have $\trad\Delta[\sigma]\vdash\trad
t[\sigma]:\Unit+\Unit$, where $\sigma$ is a substitution of every variable in
$\Delta$ by fresh variables. Then, by rule \rnam{PureMatch},
$\trad\Delta,\trad\Delta[\sigma]\vdash\Match{\trad t[\sigma]}{x_1}{x_1;\trad
r}{x_2}{x_2;\trad s}:\trad A$. By Lemma~\ref{l:translationFlat}, we have
$\flat\trad\Delta\simeq\trad\Delta$, hence, by rule \rnam{Cont}, we conclude
$\trad\Delta\vdash\Match{\trad t}{x_1}{x_1;\trad r}{x_2}{x_2;\trad s}:\trad A$
\item $\vcenter{\infer{\Delta|x:A_Q\vdash x:A_Q}{}}$
  
    By Lemma~\ref{l:translationFlat},
      $\EQV{\flat\trad{\Delta}}{\trad{\Delta}}$, hence, by rules \rnam{Axiom} and \rnam{Weak}, we have
      $\trad\Delta,x:\trad A\vdash x:\trad A$.

  \item $\vcenter{\infer{\Delta|\Gamma_1,\Gamma_2\vdash_Q s\otimes t:A_Q\otimes
        B_Q}{\Delta|\Gamma_1\vdash_Q s:A_Q & \Delta|\Gamma_2\vdash_Q
        t:B_Q}}$
    
      By the induction hypothesis, $\trad\Delta,\trad{\Gamma_1}\vdash\trad
      s:\trad{A_Q}$ and $\trad\Delta,\trad{\Gamma_2}\vdash t:\trad{B_Q}$.
      Then, we can derive $\trad\Delta[\sigma],\trad{\Gamma_1}\vdash\trad
      s[\sigma]:\trad{A_Q}$, where $\sigma$ is a substitution on every variable
      in $\Delta$ by fresh variables. Hence, by rule \rnam{Pair}, we can derive
      $\trad\Delta[\sigma],\trad{\Gamma_1},\trad\Delta,\trad{\Gamma_2}\vdash\Pair{\trad
      s[\sigma]}{\trad t}:\trad{A_Q}\times\trad{B_Q}$. By
      Lemma~\ref{l:translationFlat}, $\flat\trad\Delta\simeq\trad\Delta$, hence,
      by rule~\rnam{Contr}, we have
      $\trad\Delta,\trad{\Gamma_1},\trad{\Gamma_2}\vdash\Pair{\trad s}{\trad
        t}:\trad{A_Q}\times\trad{B_Q}$.
      Finally, by rules \rnam{SharpIntro} and \rnam{Sub}, we have
      $\trad\Delta,\trad{\Gamma_1},\trad{\Gamma_2}\vdash\Pair{\trad s}{\trad
        t}:\trad{A_Q}\otimes\trad{B_Q}$.
    \item $\vcenter{\infer{\Delta|\Gamma\vdash_Q
          U(t):\s{qbit}}{\Delta|\Gamma\vdash_Q t:\s{qbit}}}$

        By the induction hypothesis, $\trad\Delta,\trad\Gamma\vdash\trad
        t:\sharp\B$. By Proposition~\ref{p:CharacSharpBoolEndo}, $\vdash\bar
        U:\sharp\B\to\sharp\B$, hence, by rules \rnam{SubArrows} and \rnam{Sub},
        we have $\vdash\bar U:\sharp\B\Rightarrow\sharp\B$. Therefore, by rule
        \rnam{App}, we can derive  $\trad\Delta,\trad\Gamma\vdash\bar U\trad
        t:\sharp\B$.

  \item $\vcenter{\infer{\Delta|\Gamma_1,\Gamma_2\vdash_Q\Let{x\otimes y}st:C_Q}{\Delta|\Gamma_1\vdash_Q
        s:A_Q\otimes B_Q & \Delta|\Gamma_2,x:A_Q,y:B_Q\vdash_Q t:C_Q}}$

    By the induction hypothesis,
    $\trad\Delta,\trad{\Gamma_1}\vdash\trad s:\trad{A_Q}\otimes\trad{B_Q}$ and
    $\trad\Delta,\trad{\Gamma_2},x:\trad{A_Q},y:\trad{B_Q}\vdash\trad t:\trad{C_Q}$.
    Then, we also have
    $\trad\Delta[\sigma],\trad{\Gamma_1}\vdash\trad
    s[\sigma]:\trad{A_Q}\otimes\trad{B_Q}$, where $\sigma$ is a substitution on every variable
    in $\Delta$ by fresh variables.
    By Lemma~\ref{l:translationSharp}, $\trad{A_Q}\simeq\sharp\trad{A_Q}$,
    $\trad{B_Q}\simeq\sharp\trad{B_Q}$, and $\trad{C_Q}\simeq\sharp\trad{C_Q}$. Hence, 
    $\trad\Delta,\trad{\Gamma_2},x:\sharp\trad{A_Q},y:\sharp\trad{B_Q}\vdash\trad t:\sharp\trad{C_Q}$.
    Therefore, by rule \rnam{LetTens},
    $\trad\Delta[\sigma],\trad{\Gamma_1},\trad\Delta,\trad{\Gamma_2}\vdash\Let{\Pair
      xy}{\trad s[\sigma]}{\trad t}:\sharp\trad{C_Q}$.
    By Lemma~\ref{l:translationFlat}, $\flat\trad\Delta\simeq\trad\Delta$,
    hence, by rule~\rnam{Contr}, we get
    $\trad\Delta,\trad{\Gamma_1},\trad{\Gamma_2}\vdash\Let{\Pair xy}{\trad s}{\trad t}:\sharp\trad{C_Q}$.
    Finally, using the fact that $\sharp\trad{C_Q}\simeq\trad{C_Q}$, we get
    $\trad\Delta,\trad{\Gamma_1},\trad{\Gamma_2}\vdash\Let{\Pair xy}{\trad s}{\trad t}:\trad{C_Q}$.

    Notice that we have used the following unproved rule: If 
    $\Gamma,x:A\vdash t:B$ and $A\simeq C$, then $\Gamma,x:C\vdash t:B$. 
    Hence, we prove that this rule is true. Assume $\Gamma,x:A\vdash t:B$, then,
    $t\<\sigma\>\real\sem B$ for every $\sigma\in\sem{\Gamma,x:A}=\sem{\Gamma,x:C}$,
    and so $\Gamma,x:C\vdash t:B$.
  \item $\vcenter{\infer{\Delta|\emptyset\vdash_Q\s{new}(t):\s{qbit}}{\Delta\vdash_C t:\s{bit}}}$
    
     By the induction hypothesis, $\trad\Delta\vdash\trad t:\B$. We conclude by
     rules \rnam{SharpIntro} and \rnam{Sub} that
     $\trad\Delta\vdash\trad t:\sharp\B$.
     
  \item $\vcenter{\infer{\Delta\vdash_C\lambda^Q x.t:A_Q\multimap
        B_Q}{\Delta|x:A_Q\vdash_Q t:B_Q}}$

    By the induction hypothesis $\trad\Delta,x:\trad{A_Q}\vdash\trad t:\trad{B_Q}$. Since $\Unit\simeq\flat\Unit$, by rule \rnam{Weak}, we have
    $\trad\Delta,z:\Unit,x:\trad{A_Q}\vdash\trad t:\trad{B_Q}$
    Then, by rules \rnam{UnitLam} and \rnam{PureLam}, we can derive
    $\trad\Delta\vdash\lambda zx.\trad t:\Unit\to(\trad{A_Q}\Rightarrow\trad{B_Q})$.
  \item $\vcenter{\infer{\Delta|\Gamma\vdash_Q s@t:B_Q}{\Delta\vdash_C
        s:A_Q\multimap B_Q & \Delta|\Gamma\vdash_Q t:A_Q}}$

    By the induction hypothesis, $\trad\Delta\vdash\trad
    s:\Unit\to(\trad{A_Q}\Rightarrow\trad{B_Q})$ and
    $\trad\Delta,\trad\Gamma\vdash\trad t:\trad{A_Q}$.
    Then, 
    $\trad\Delta[\sigma],\trad\Gamma\vdash\trad t[\sigma]:\trad{A_Q}$, where
    $\sigma$ is a substitution on every variable in $\Delta$ by fresh variables.
    By rules \rnam{SubArrows} and \rnam{Sub}, we have $\trad\Delta\vdash\trad
    s:\Unit\Rightarrow(\trad{A_Q}\Rightarrow\trad{B_Q})$. In addition, by rule
    \rnam{Void}, $\vdash\Void:\Unit$. Hence, by rule \rnam{App} twice, we get
      $\trad\Delta,\trad\Gamma,\trad\Delta[\sigma]\vdash (\trad s\Void)\trad
      t[\sigma]:\trad{B_Q}$.
      By Lemma~\ref{l:translationFlat}, $\flat\trad\Delta\simeq\trad\Delta$,
      hence, by rule \rnam{Contr}, 
      $\trad\Delta,\trad\Gamma\vdash (\trad s\Void)\trad t:\trad{B_Q}$.
  \end{itemize}
  Now we prove item (3).

  Let 
  \[
    [\sum_{i=1}^m\alpha_i\cdot\ket{y_1^i,\dots,y_n^i},\{x_1:=p(1),\dots,x_n:=p(n)\},t]:A_Q
  \]
  that means $\emptyset|\FV(t):\s{qbit}\vdash_Q t:A_Q$. We must show that
  \[
    \vdash\trad{[\sum_{i=1}^m\alpha_i\cdot\ket{y_1^i,\dots,y_n^i},\{x_1:=p(1),\dots,x_n:=p(n)\},t]}:\trad
    A
  \]
  that is 
  \begin{equation}
    \label{eq:toProve}
  \vdash\sum_{i=1}^m\alpha_i\cdot\trad t[x_1:=\bar y_{p(1)}^i,\dots,x_n:=\bar y_{p(n)}^i]:\trad{A_Q}
  \end{equation}
  From item (1) we have $\FV(t):\sharp\B\vdash\trad t:\trad{A_Q}$. Then, by
  definition, we have $\trad t\<\sigma\>\real\trad{A_Q}$ for every
  $\sigma\in\sem{FV(t):\sharp\B}$. In particular, $[\sigma_i]=[x_1:=\bar
  y_{p(1)}^i,\dots,x_n:=\bar y_{p(n)}^i]\in\sem{FV(t):\sharp\B}$, so $\trad
  t\<\sigma_i\>=\trad t[\sigma_i]\real\trad{A_Q}$. By
  Lemma~\ref{l:translationSharp}, $\trad{A_Q}\simeq\sharp\trad{A_Q}$, and so,
  we have
  $\sum_{i=1}^m\alpha_i\cdot\trad t[\sigma_i]\real\trad{A_Q}$, which is, by
  definition, the same as \eqref{eq:toProve}
  \end{proof}
  
  \label{app:proof:th:adequacyQ}
\begin{lemma}\label{l:translationSubstitution}
  For any terms $t$ and $r$, 
  $\trad{t[x:=r]}=\trad t[x:=\trad r]$.
\end{lemma}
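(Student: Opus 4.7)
The plan is to prove the lemma by structural induction on the term $t$ of $\lambda_Q$. The statement is a standard commutation-of-substitution result, and the argument follows the pattern of the usual substitution lemma for translations of lambda-calculi: in each case I verify that the translation commutes with substitution, using the induction hypothesis on subterms.

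First I would dispatch the simple base cases. For variables: if $t=x$, then $\trad{x[x:=r]}=\trad{r}$ and $\trad{x}[x:=\trad r]=x[x:=\trad r]=\trad{r}$; if $t=y\neq x$, then both sides equal $y$; and $\trad{\Void}=\Void$ so $\Void[x:=\trad r]=\Void=\trad{\Void[x:=r]}$. For the purely structural constructs $\lambda y.t$, $tr$, $\Pair tr$, $\pi_i(t)$, $t\otimes r$, $\Let{x\otimes y}st$, $\ttrue$, $\ffalse$, $\lambda^Q y.t$ and $s\MVAt t$, the translation is defined homomorphically (modulo fresh auxiliary binders as in $\lambda^Q y.t \mapsto \lambda zx.\trad{t}$ and the if-construct), so the equality follows from the induction hypothesis after pushing the substitution through the outer constructor. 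As usual, I assume Barendregt's convention so that bound variables ($y$ in $\lambda y.t$, $z$ in $\lambda^Q y.t$, $z_1,z_2$ in the if-translation, the $\Let$-bound variables, etc.) are chosen fresh with respect to $x$ and $\FV(\trad r)$, so no capture occurs.

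Two cases deserve a little more care. For $\If trs$, we have $\trad{\If trs}=\Match{\trad t}{z_1}{z_1;\trad r}{z_2}{z_2;\trad s}$ with $z_1,z_2$ fresh; pushing the substitution through and applying the induction hypothesis on $t$, $r$, $s$ yields exactly $\Match{\trad t[x:=\trad r']}{z_1}{z_1;\trad r[x:=\trad r']}{z_2}{z_2;\trad s[x:=\trad r']}$, which coincides with $\trad{\If trs}[x:=\trad{r'}]$ because $z_1,z_2$ do not occur free in $\trad{r'}$. For $U(t)$, the translation is $\bar U \trad t$ where $\bar U$ is \emph{closed}, so $(\bar U \trad t)[x:=\trad r]=\bar U(\trad t[x:=\trad r])$, and the induction hypothesis on $t$ concludes; similarly $\trad{\s{new}(t)}=\trad t$ and the claim reduces to the induction hypothesis.

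The main (very mild) obstacle is bookkeeping: the translation introduces auxiliary bound variables (the $z$, $z_1$, $z_2$) and rearranges binders (as in $\lambda^Q y.t\mapsto\lambda zx.\trad t$), so I must be explicit that these fresh variables are chosen outside $\FV(\trad r)\cup\{x\}$ in order to apply the substitution lemma of Section~\ref{ss:PureSubst} (parallel substitution commuting with single substitution) cleanly. Once this $\alpha$-renaming convention is fixed, every inductive case is a one-line calculation.
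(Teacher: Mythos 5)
Your proof is correct and follows exactly the route the paper takes: the paper's entire proof is ``by a straightforward structural induction on $t$,'' and your case analysis (homomorphic cases, the freshness bookkeeping for the auxiliary binders $z$, $z_1$, $z_2$, and the closed-$\bar U$ observation) is just that induction spelled out. No gaps.
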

\begin{proof}
  By a straightforward structural induction on $t$.
\end{proof}

\begin{lemma}\label{l:Substs}
  For all value distributions $\vec{v}$ and $\vec{v}$, for all term
  distributions $\vec{t}$, $\vec{s}$, $\vec{s}_1$, $\vec{s}_2$ and for
  all pure values~$w$, we have the equalities:
  \begin{itemize}
  \item $\Pair{\vec{v}}{\vec{v}'}[x:=w] = \Pair{\vec{v}[x:=w]}{\vec{v}'[x:=w]}$
  \item $\Inl{\vec{v}}[x:=w]  = \Inl{\vec{v}[x:=w]}$
  \item $\Inr{\vec{v}}[x:=w]  = Inr{\vec{v}[x:=w]}$
  \item $(\vec{s}\,\vec{t})[x:=w] = \vec{s}[x:=w]\,\vec{t}[x:=w]$
  \item $(\vec{t};\vec{s})[x:=w]  =\vec{t}[x:=w];\vec{s}[x:=w]$
  \item
    $(\LetP{x_1}{x_2}{\vec{t}}{\vec{s}})[x:=w]=\LetP{x_1}{x_2}{\vec{t}[x:=w]}{\vec{s}[x:=w]}$
  (if $x_1,x_2\notin\FV(w)\cup\{x\}$)
  \item $(\Match{\vec{t}}{x_1}{\vec{s}_1}{x_2}{\vec{s}_2})[x:=w]=\Match{\vec{t}[x:=w]}{x_1}{\vec{s}_1[x:=w]}{x_2}{\vec{s}_2[x:=w]}$
  \end{itemize}
\end{lemma}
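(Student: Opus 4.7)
The plan is to derive each equality directly from two facts recalled in Sections~\ref{ss:ExtLin} and~\ref{ss:PureSubst}: (i) pure substitution $\vec t \mapsto \vec t[x:=w]$ is $\C$-linear in~$\vec t$, so $(\sum_i \alpha_i\cdot t_i)[x:=w] = \sum_i \alpha_i \cdot t_i[x:=w]$; and (ii) every extended syntactic construct is defined by linear (or bilinear) extension over the pure constituents of its distribution arguments. Crucially, since $w$ is a \emph{pure} value (not a distribution), none of the pathologies flagged for bilinear substitution $\<x:=\vec{w}\>$ occur: linear substitution with a pure $w$ commutes with all the sum-of-pure-terms unfoldings. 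No induction is needed beyond what is already packaged into the linear-extension definitions; each item is a one-line calculation modulo the congruence~$\equiv$.

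For the representative \emph{Inl} case, write $\vec v = \sum_i \alpha_i \cdot v_i$ and compute
\[
  \Inl{\vec v}[x:=w] \;=\; \bigl(\textstyle\sum_i \alpha_i \cdot \Inl{v_i}\bigr)[x:=w] \;=\; \textstyle\sum_i \alpha_i \cdot \Inl{v_i[x:=w]} \;=\; \Inl{\vec v[x:=w]},
\]
where the middle step combines the linearity of $[x:=w]$ with the defining clause $\Inl{v_i}[x:=w] = \Inl{v_i[x:=w]}$ of substitution on pure values. The \emph{Inr} case is identical. For \emph{Pair} and \emph{application}, one expands a double sum (bilinear extension) and applies the same pattern twice, invoking the pure-constituent identities $\Pair{v_i}{v'_j}[x:=w] = \Pair{v_i[x:=w]}{v'_j[x:=w]}$ and $(t_k\,s_\ell)[x:=w] = (t_k[x:=w])(s_\ell[x:=w])$ respectively.

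The \emph{sequence}, \emph{let-pair}, and \emph{match} cases proceed by the same scheme, but the constructor is linear only in its first argument; the body (or bodies) $\vec s$, $\vec s_1$, $\vec s_2$ is carried along as a block inside each summand of the expansion. Substitution then passes through each pure-term summand $(t_k;\vec s)$, $(\LetP{x_1}{x_2}{t_k}{\vec s})$, or $(\Match{t_k}{x_1}{\vec s_1}{x_2}{\vec s_2})$ using the corresponding raw-distribution clause of $[x:=w]$, which by definition propagates the substitution into the body; collecting the sum back together via linearity produces the right-hand side. In the let case, the hypothesis $x_1, x_2 \notin \FV(w)\cup\{x\}$ is exactly the freshness condition that allows the substitution to cross the binder without any $\alpha$-conversion.

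The only real care required is bookkeeping under the shallow congruence~$\equiv$ of Section~\ref{ss:Distributions}: when the distributions are taken in canonical form and then substituted, pure terms that were distinct before substitution may coincide, so the intermediate sum $\sum_i \alpha_i \cdot t_i[x:=w]$ is in general not canonical. But since all equalities are stated modulo~$\equiv$ (Convention~\ref{convention}) and since pure substitution is compatible with $\equiv$ (cf.~Section~\ref{ss:PureSubst}), this only affects the presentation of intermediate expressions, never the validity of the identity. No further delicacy arises, and the lemma falls out as a routine verification.
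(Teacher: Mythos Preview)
Your proposal is correct and follows essentially the same approach as the paper's proof: unfold the linear (or bilinear) extension of the construct, push the pure substitution through using its linearity and the corresponding clause on pure terms, then fold the extension back. The paper picks the let-pair case as its representative example while you pick $\texttt{inl}$, but the computation scheme is identical.
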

\begin{proof}
  Let us treat the case of the pair destructing let-construct.
  Given term distributions $\vec{t}=\sum_{i=1}^n\alpha_i\cdot{t_i}$
  and $\vec{s}$, and a pure value~$w$ such that
  $x_1,x_2\notin\FV(w)\cup\{x\}$, we observe that
  $$\begin{array}{lll}
    &(\LetP{x_1}{x_2}{\vec{t}}{\vec{s}})[x:=w]\\
    =&\bigl(\sum_{i=1}^n\alpha_i\cdot
    \LetP{x_1}{x_2}{t_i}{\vec{s}}\bigr)[x:=w]&
    (\text{def. of extended let})\\
    =&\sum_{i=1}^n\alpha_i\cdot(\LetP{x_1}{x_2}{t_i}{\vec{s}})[x:=w]&
    \text{(linearity of pure substitution)}\\
    =&\sum_{i=1}^n\alpha_i\cdot\LetP{x_1}{x_2}{t_i[x:=w]}{\vec{s}[x:=w]}&
    \text{(pure substitution in a let-construct)}\\
    =&\LetP{x_1}{x_2}{
      (\sum_{i=1}^n\alpha_i\cdot\vec{t_i}[x:=w])}{\vec{s}[x:=w]}&
    \text{(def. of extended let)}\\
    =&\LetP{x_1}{x_2}{\vec{t}[x:=w]}{\vec{s}[x:=w]}&
    \text{(linearity of pure substitution)}\\
  \end{array}$$
  The other cases are treated similarly.
\end{proof}

\begin{remark}[Parallel substitution]\label{r:ParallelSubst}
  The operation of parallel substitution $[x_1:=w_1,\ldots,x_n:=w_n]$
  (where $x_1,\ldots,x_n$ are pairwise distinct variables) can be
  easily implemented as a sequence of pure substitutions, by
  temporarily replacing the $x_i$'s with fresh names in order to avoid
  undesirable captures between successive pure substitutions.
  For instance, we can let
  \begin{align*}
  &\vec{t}[x_1:=w_1,\ldots,x_n:=w_n]~:=~\\
  &\vec{t}[x_1:=z_1]\cdots[x_n:=z_n][z_1:=w_1]\cdots[z_n:=w_n]
  \end{align*}
  where $z_1,\ldots,z_n$ are fresh names w.r.t.\
  $\vec{t},x_1,\ldots,x_n,w_1,\ldots,w_n$.
  Note that this precaution is useless when the substituands
  $w_1,\ldots,w_n$ are closed, since in this case, parallel
  substitution amounts to the following sequential substitution
  (whose order is irrelevant):
  $$\vec{t}[x_1:=w_1,\ldots,x_n:=w_n]~=~
  \vec{t}[x_1:=w_1]\cdots[x_n:=w_n]\,.$$
\end{remark}

\begin{lemma}\label{l:BilinSubstCommut}
  For all term distributions $\vec{t}$ and for all closed value
  distributions $\vec{v}$ and $\vec{w}$:
  $$\vec{t}\,\<x:=\vec{v}\,\>\<y:=\vec{w}\,\>~=~
  \vec{t}\,\<y:=\vec{w}\,\>\<x:=\vec{v}\,\>
  \eqno(\text{provided}~x\neq y)
  \qed$$
\end{lemma}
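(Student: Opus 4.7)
The plan is to reduce the equality between iterated bilinear substitutions to the already-established behaviour of pure substitution on pure terms. First I will put $\vec{t}$, $\vec{v}$, $\vec{w}$ in canonical form, say $\vec{t}=\sum_i\alpha_i\cdot t_i$, $\vec{v}=\sum_k\gamma_k\cdot v_k$, and $\vec{w}=\sum_l\delta_l\cdot w_l$, and use the definition of bilinear substitution (Section~\ref{ss:BilinSubst}) twice on each side. Since bilinear substitution is linear in the term-distribution argument, this unfolds both sides to triple sums indexed by $(i,k,l)$, with coefficients $\alpha_i\gamma_k\delta_l$, and pure terms $t_i[x:=v_k][y:=w_l]$ on the LHS and $t_i[y:=w_l][x:=v_k]$ on the RHS.

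Then it suffices to prove, for each fixed $i,k,l$, the commutation $t_i[x:=v_k][y:=w_l]=t_i[y:=w_l][x:=v_k]$ of pure substitutions. For this I will invoke the substitution lemma recalled in Section~\ref{ss:PureSubst}, which states that $\vec{t}[x:=v][y:=w]=\vec{t}[y:=w][x:=v[y:=w]]$ whenever $x\neq y$ and $x\notin\FV(w)$. The hypothesis that $\vec{v}$ and $\vec{w}$ are \emph{closed} plays a crucial role at exactly this point: it ensures that $x\notin\FV(w_l)$ (so the substitution lemma applies) and, more importantly, that $v_k[y:=w_l]=v_k$ (because $v_k$ is closed, so the substitution has no effect). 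These two facts together collapse the lemma's conclusion to the desired identity.

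Once the pointwise equality is in place, I will conclude by reindexing the triple sum and using the commutativity of scalar multiplication in~$\C$, so that $\sum_{i,k,l}\alpha_i\gamma_k\delta_l\cdot t_i[x:=v_k][y:=w_l]$ and $\sum_{i,l,k}\alpha_i\delta_l\gamma_k\cdot t_i[y:=w_l][x:=v_k]$ coincide as term distributions modulo $\equiv$.

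The only mildly delicate point is the correct bookkeeping of the closedness hypothesis: without it, one would need to control free-variable captures between the two bilinear substitutions (and correct $v_k$ by substituting $w_l$ inside it, as the pure substitution lemma suggests), which is precisely what would block a general ``commutation'' statement. Everything else is a routine computation on distributions in canonical form, combined with linearity of the bilinear substitution operation.
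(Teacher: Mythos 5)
Your proof is correct: unfolding both sides by bilinearity into the triple sum $\sum_{i,k,l}\alpha_i\gamma_k\delta_l\cdot t_i[\cdots]$ and then reducing the pointwise commutation to the pure substitution lemma of Section~\ref{ss:PureSubst} --- where closedness of $\vec{w}$ gives $x\notin\FV(w_l)$ and closedness of $\vec{v}$ collapses $v_k[y:=w_l]$ to $v_k$ --- is exactly the intended argument. The paper states this lemma without proof, so your write-up simply supplies the routine verification it leaves implicit, and your remark on why closedness is essential is on point.
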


\xrecap{Theorem}{Adequacy}{th:adequacyQ}{If $[Q,L,t]\to[Q',L',r]$, then $\trad{[Q,L,t]}\eval\trad{[Q',L',r]}$.}
\begin{proof}
  We proceed by induction on the rewrite relation of $\lambda_Q$. We
  only give the cases where $C(\cdot)=\{\cdot\}$, as other cases are
  simple calls to the induction hypothesis. In all the cases, we consider
  $Q=\sum_{i=1}^m\alpha_i\ket{u_1^i,\dots,y_n^i}$,
  $L=\{x_1:=p(1),\dots,x_n:=p(n)\}$, and
  $[\sigma_i]=[x_1:=\bar y_{p(1)}^i,\dots,x_n:=\bar y_{p(n)}^i]$.
  \begin{itemize}
  \item $[Q,L,(\lambda x.t)u] \to [Q,L,t[x:=u]]$.
    \begin{align*}
      \trad{[Q,L,(\lambda x.t)u]}
      &=\textstyle\sum_{i=1}^m\alpha_i\cdot((\lambda x.\trad t)\trad u)[\sigma_i]\\
      &=\textstyle\sum_{i=1}^m\alpha_i\cdot((\lambda x.\trad t[\sigma_i])\trad u[\sigma_i]) & \textrm{(Lemma~\ref{l:Substs})}\\
      &\eval\textstyle\sum_{i=1}^m\alpha_i\cdot\trad t[\sigma_i][x:=\trad u[\sigma_i]] \\
      &=\textstyle\sum_{i=1}^m\alpha_i\cdot\trad t[x:=\trad u][\sigma_i] & \textrm{(Lemma~\ref{l:BilinSubstCommut})}\\
      &=\textstyle\sum_{i=1}^m\alpha_i\cdot\trad{t[x:=u]}[\sigma_i] & \textrm{(Lemma~\ref{l:translationSubstitution})}\\
      &=\trad{[Q,L,t[x:=u]]}
    \end{align*}
  \item $[Q,L,(\lambda^Q x.t)@u] \to [Q,L,t[x:=u]]$.
    \begin{align*}
      \trad{[Q,L,(\lambda^Q x.t)@u]}
      &=\textstyle\sum_{i=1}^m\alpha_i\cdot(((\lambda zx.\trad t)\Void)\trad u)[\sigma_i]\\
      &=\textstyle\sum_{i=1}^m\alpha_i\cdot(((\lambda zx.\trad t[\sigma_i])\Void)\trad u[\sigma_i]) & \textrm{(Lemma~\ref{l:Substs})}\\
      &\eval\textstyle\sum_{i=1}^m\alpha_i\cdot((\lambda x.\trad t[\sigma_i])\trad u[\sigma_i]) \\
      &\eval\textstyle\sum_{i=1}^m\alpha_i\cdot\trad t[\sigma_i][x:=\trad u[\sigma_i]] \\
      &=\textstyle\sum_{i=1}^m\alpha_i\cdot\trad t[x:=\trad u][\sigma_i] & \textrm{(Lemma~\ref{l:BilinSubstCommut})}\\
      &=\textstyle\sum_{i=1}^m\alpha_i\cdot\trad{t[x:=u]}[\sigma_i] & \textrm{(Lemma~\ref{l:translationSubstitution})}\\
      &=\trad{[Q,L,t[x:=u]]}
    \end{align*}
  \item $[Q,L,\pi_1(u,v)] \to [Q,L,u]$. 
    \begin{align*}
      \trad{[Q,L,\pi_1(u,v)]}
      &=\textstyle\sum_{i=1}^m\alpha_i\cdot(\Let{\Pair xy}{\Pair{\trad u}{\trad v}}x)[\sigma_i]\\
      &=\textstyle\sum_{i=1}^m\alpha_i\cdot(\Let{\Pair xy}{\Pair{\trad u[\sigma_i]}{\trad v[\sigma_i]}}x) & \textrm{(Lemma~\ref{l:Substs})}\\
      &\eval\textstyle\sum_{i=1}^m\alpha_i\cdot\trad u[\sigma_i]\\
      &=\trad{[Q,L,u]}
    \end{align*}
  \item $[Q,L,\pi_s(u,v)] \to [Q,L,v]$. 
    \begin{align*}
      \trad{[Q,L,\pi_2(u,v)]}
      &=\textstyle\sum_{i=1}^m\alpha_i\cdot(\Let{\Pair xy}{\Pair{\trad u}{\trad v}}y)[\sigma_i]\\
      &=\textstyle\sum_{i=1}^m\alpha_i\cdot(\Let{\Pair xy}{\Pair{\trad u[\sigma_i]}{\trad v[\sigma_i]}}y) & \textrm{(Lemma~\ref{l:Substs})}\\
      &\eval\textstyle\sum_{i=1}^m\alpha_i\cdot\trad v[\sigma_i]\\
      &=\trad{[Q,L,v]}
    \end{align*}
  \item $[Q,L,\If{\tt}tr] \to [Q,L,t]$
    \begin{align*}
      &\trad{[Q,L,\If{\tt}tr]}\\
      &=\textstyle\sum_{i=1}^m\alpha_i\cdot(\Match{\Inl{\Void}}{z_1}{z_1;\trad t}{z_2}{z_2;\trad r})[\sigma_i]\\
      &=\textstyle\sum_{i=1}^m\alpha_i\cdot\Match{\Inl{\Void}}{z_1}{z_1;\trad t[\sigma_i]}{z_2}{z_2;\trad r[\sigma_i]} & \textrm{(Lemma~\ref{l:Substs})}\\
      &\eval\textstyle\sum_{i=1}^m\alpha_i\cdot {\Void;\trad t[\sigma_i]}\\
      &\eval\textstyle\sum_{i=1}^m\alpha_i\cdot {\trad t[\sigma_i]}\\
      &=\trad{[Q,L,t]} 
    \end{align*}
  \item $[Q,L,\If{\ff}tr] \to [Q,L,r]$
    \begin{align*}
      &\trad{[Q,L,\If{\ff}tr]}\\
      &=\textstyle\sum_{i=1}^m\alpha_i\cdot(\Match{\Inr{\Void}}{z_1}{z_1;\trad t}{z_2}{z_2;\trad r})[\sigma_i]\\
      &=\textstyle\sum_{i=1}^m\alpha_i\cdot\Match{\Inr{\Void}}{z_1}{z_1;\trad t[\sigma_i]}{z_2}{z_2;\trad r[\sigma_i]} & \textrm{(Lemma~\ref{l:Substs})}\\
      &\eval\textstyle\sum_{i=1}^m\alpha_i\cdot {\Void;\trad r[\sigma_i]}\\
      &\eval\textstyle\sum_{i=1}^m\alpha_i\cdot {\trad r[\sigma_i]}\\
      &=\trad{[Q,L,r]} 
    \end{align*}
  \item $[Q,L,\Let{x\otimes y}{t\otimes r}s] \to [Q,L,s[x:=t,y:=r]]$. 
    \begin{align*}
      &\trad{[Q,L,\Let{x\otimes y}{t\otimes r}s]}\\
      &=\textstyle\sum_{i=1}^m\alpha_i\cdot(\Let{\Pair xy}{\Pair{\trad t}{\trad r}}{\trad s})[\sigma_i]\\
      &=\textstyle\sum_{i=1}^m\alpha_i\cdot(\Let{\Pair xy}{\Pair{\trad t[\sigma_i]}{\trad r[\sigma_i]}}{\trad s[\sigma_i]}) & \textrm{(Lemma~\ref{l:Substs})}\\
      &\eval\textstyle\sum_{i=1}^m\alpha_i\cdot\trad s[\sigma_i][x:=\trad t[\sigma_i]][y:=\trad r[\sigma_i]]\\
      &=\textstyle\sum_{i=1}^m\alpha_i\cdot(\trad s[x:=\trad t][y:=\trad r])[\sigma_i] & \textrm{(Lemma~\ref{l:BilinSubstCommut})}\\
      &=\textstyle\sum_{i=1}^m\alpha_i\cdot(\trad{s[x:=t,y:=r]})[\sigma_i] & \textrm{(Lemmas~\ref{l:translationSubstitution} and Remark~\ref{r:ParallelSubst})}\\
      &=\trad{[Q,L,s[x:=t,y:=r]]}
    \end{align*}
  \item $[\emptyset,\emptyset,\s{new}(\tt)] \to[\ket 1,\{x\mapsto 1\},x]$
      \[
      \trad{[\emptyset,\emptyset,\s{new}(\tt)]}=\trad{\s{new}(\tt))}=\tt=x[x:=\tt]=\trad{[\ket 1,\{x\mapsto 1\},x]}
      \]
  \item $[\emptyset,\emptyset,\s{new}(\ff)] \to[\ket 0,\{x\mapsto 1\},x]$
    \[
      \trad{[\emptyset,\emptyset,\s{new}(\ff)]}
      =\trad{\s{new}(\ff)}
      =\ff
      =x[x:=\ff]
      =\trad{[\ket 0,\{x\mapsto 1\},x]}
    \]
  \item $[\ket\psi,\{x\mapsto 1\},U(x)] \to [U\ket\psi,\{x\mapsto 1\},x]$.

    Let $U\ket 0=\gamma_0\ket 0+\delta_0\ket 1$ and $U\ket
    1=\gamma_1\ket 0+\delta_1\ket 1$. Then,
    \begin{align*}
      \trad{[\alpha\ket 0+\beta\ket 1,\{x\mapsto 1\},U(x)]}
      &=\alpha\cdot \trad{U(x)}[x:=\tt]+\beta\cdot\trad{U(x)}[x:=\ff]\\
      &=\alpha\cdot \bar U\tt+\beta\cdot\bar U\ff\\
      &\eval\alpha\cdot(\gamma_0\cdot\tt+\delta_0\cdot\ff)+\beta\cdot(\gamma_1\cdot\tt+\delta_1\cdot\ff)\\
      &=(\alpha\gamma_0+\beta\gamma_1)\cdot\tt+(\alpha\delta_0+\beta\delta_1)\cdot\ff\\
      &=(\alpha\gamma_0+\beta\gamma_1)\cdot x[x:=\tt]+(\alpha\delta_0+\beta\delta_1)\cdot x[x:=\ff]\\
      &=\trad{[(\alpha\gamma_0+\beta\gamma_1)\ket 0+(\alpha\delta_0+\beta\delta_1)\ket 1,\{x\mapsto 1\},x]}
        \tag*{\qedhere}
    \end{align*}
  \end{itemize}
\end{proof}


\end{document}